\title{Tight Bounds for Feedback Vertex Set Parameterized by Clique-width} 
\author{Narek Bojikian}{Humboldt-Universität zu Berlin, Germany}{bojikian@hu-berlin.de}{https://orcid.org/0000-0003-1072-4873}{}
\author{Stefan Kratsch}{Humboldt-Universität zu Berlin, Germany}{stefan.kratsch@hu-berlin.de}{https://orcid.org/0000-0002-0193-7239}{}
\authorrunning{N.\ Bojikian, and S.\ Kratsch}
\keywords{Feedback Vertex Set, Treewidth, Clique-width} 
\begin{document}

\maketitle

\begin{abstract}
    We introduce a new notion of acyclicity representation in labeled graphs, and present three applications thereof. Our main result is an algorithm that, given a graph $G$ and a $k$-clique expression of $G$, in time $\Oh(6^kn^c)$ counts modulo $2$ the number of feedback vertex sets of $G$ of each size.
    We achieve this through an involved subroutine for merging partial solutions at union nodes in the expression.
    In the usual way this results in a one-sided error Monte-Carlo algorithm for solving the decision problem in the same time. We complement these by a matching lower bound under the Strong Exponential-Time Hypothesis (SETH). This closes an open question that appeared multiple times in the literature [ESA 23, ICALP 24, IPEC 25].
    
    We also present an algorithm that, given a graph $G$ and a tree decomposition of width $k$ of $G$, in time $\Oh(3^kn^c)$ counts modulo $2$ the number of feedback vertex sets of $G$ of each size. This matches the known SETH-tight bound for the decision version, which was obtained using the celebrated cut-and-count technique [FOCS 11, TALG 22].
    Unlike other applications of cut-and-count, which use the isolation lemma to reduce a decision problem to counting solutions modulo $2$, this bound was obtained via counting other objects, leaving the complexity of counting solutions modulo $2$ open.

    Finally, we present a one-sided error Monte-Carlo algorithm that, given a graph $G$ and a $k$-clique expression of $G$, in time $\Oh(18^kn^c)$ decides the existence of a connected feedback vertex set of size $\target$ in $G$. We provide a matching lower bound under SETH.
\end{abstract}

\newpage
\setcounter{page}{1}
\section{Introduction}

Given a graph $G$ and an integer $k$, the \Fvsp problem asks whether there is a set of at most $k$ vertices that intersects all cycles in $G$. This well-known NP-hard problem has been studied intensely in a variety of algorithmic paradigms. In particular, it is an important benchmark problem in the study of optimal (dynamic programming) algorithms on graph decompositions. Continuing this line of work, we study the parameterized complexity of \Fvsp, and its connected variant, where we require the solution set to induce a connected subgraph, relative to the treewidth and clique-width of the graph.

Parameterized complexity extends classical complexity by allowing complexity bounds to depend not only on the input size but also on one or more parameters, which may quantify various input properties. In particular, one aims to find so-called \emph{fixed-parameter tractable} algorithms whose running time is polynomial in the input size but (usually) exponential in the parameter value, i.e., time $f(k)n^c$ for size $n$ and parameter $k$. In recent years, there has been much success in obtaining fine-grained bounds for the dependency $f(k)$ on the parameter that is possible for such algorithms, which includes (almost) matching lower bounds conditioned on the (Strong) Exponential-Time Hypothesis (ETH resp.\ SETH).\footnote{ETH posits that there is $c>0$ such that no algorithm solves \textsc{$3$-SAT} in time $\Oh(2^{cn})$ where $n$ is the number of variables. SETH posits that for each $c<1$ there is a clause size $q$ such that no algorithm solves \textsc{$q$-SAT} in time $\Oh(2^{cn})$.}

For graph problems, the most studied parameter is the \emph{treewidth} of the input graph: Intuitively, it is equal to the smallest value $k$ such that the graph can be completely decomposed along non-crossing separators of size at most $k$. A similar parameter, aimed at dense graphs, is \emph{clique-width}: Roughly, it is the smallest number $k$ of labels, such that the graph can be recursively constructed using creation of a single labeled vertex, relabeling, adding edges between label classes, and disjoint union.

For many fundamental problems, the complexity relative to treewidth is quite well understood.
This was initiated by Lokshtanov et al.~\cite{DBLP:journals/talg/LokshtanovMS18}, who showed for many well-studied problems, such as \ISp, \DSp, and \MCp parameterized by treewidth that known dynamic programming (DP) algorithms with time $\Oh(\alpha^kn^c)$ are optimal under SETH, for problem-specific $\alpha$. For many other problems, however, including connectivity problems like \Cvcp and \Fvsp, it remained open whether the known algorithms with time $\Oh(\alpha^{k\log k}n^c)$ were optimal (by showing matching lower bounds), or whether they could be significantly improved upon. The crux was that the natural approaches for DP would have to maintain connectivity patterns of partial solutions, which leads to a $k^k$ factor in number of solutions and (hence) time.
This was resolved by Cygan et al.~\cite{DBLP:journals/talg/CyganNPPRW22} by establishing $\Oh(\alpha^kn^c)$ time algorithms and matching lower bounds modulo SETH for most of these problems. Their \cnc technique has two key ingredients: (1) The isolation lemma probabilistically reduces to the special case where we have a unique weighted solution of minimum weight (if one exists), so that we can count (weighted) solutions modulo $2$ to decide existence. (2) The main insight is that to count modulo $2$ the connected solutions, one can instead count modulo $2$ the number of ways that relaxed (possibly disconnected) solutions can be cut, because all disconnected solutions will cancel out modulo $2$. Notably, for \Fvsp this idea does not seem to apply directly, so they instead count the number of solutions for each specified number of vertices, edges, and connected components. This suffices to decide existence of a solution but does not include an algorithm for counting solutions modulo $2$.

After these two pioneering works, there was much activity for establishing SETH-tight bounds for a variety of problems under different parameters such as cutwidth~\cite{DBLP:conf/stacs/BojikianCHK23,DBLP:conf/stacs/GroenlandMNS22,DBLP:journals/tcs/JansenN19,DBLP:journals/jgaa/GeffenJKM20}, pathwidth~\cite{DBLP:journals/corr/bojikianfghs25,DBLP:conf/soda/CurticapeanLN18,DBLP:journals/jacm/CyganKN18}, and modular treewidth~\cite{DBLP:conf/wg/HegerfeldK23}. The first to study connectivity problems parameterized by clique-width were Bergougnoux and Kant\'e~\cite{DBLP:journals/tcs/BergougnouxK19}. They provided deterministic single-exponential time algorithms for many connectivity problems including the \Fvsp problem. Since they use the rank-based approach of Bodlaender et al.~\cite{DBLP:journals/iandc/BodlaenderCKN15}, they do not obtain SETH-tight bounds. Instead, there is an additional factor of $\omega$, the matrix multiplication exponent, in the exponent of the running times, which leaves a gap even if $\omega=2$. The first SETH-tight algorithms for connectivity problems parameterized by clique-width were introduced by Hegerfeld and Kratsch~\cite{DBLP:conf/wg/HegerfeldK23}, who provided tight bounds for both the \Cvcp and the \Cdsp problems using the \cnc technique. They left open to get such tight bounds also for \Stp, \Coctp, and \Fvsp, but pointed out obstructions with doing so via \cnc: For the first two, there remained a gap between a \cnc-based algorithm and the seemingly best possible lower bound. For the latter, already Bergougnoux and Kanté had pointed out that the approach of counting solutions separately for each number of vertices, edges, and connected components would incur a factor of $n^k$ when working with clique-width. Two recent works of Bojikian and Kratsch~\cite{DBLP:conf/icalp/BojikianK24, DBLP:journals/corr/bojikiank24} closed the gap for \Stp and \Cdsp by finding faster algorithms based on a new approach via connectivity patterns called ``isolating a representative'' and avoiding \cnc. Only \Fvsp parameterized by clique-width remained open.

\subparagraph{Our contribution.} 
We present a dynamic programming algorithm that counts (modulo $2$) the number of feedback vertex sets of a fixed size $\target$ in a graph in time $\Oh(6^{k}n^c)$, given that this graph is provided together with a $k$-clique expression. 
We achieve this running time by introducing a new representation of acyclicity in graphs, which enables us to bypass counting edges induced by a partial solution. We prove the following theorem:

\begin{theorem}\label{theo:count-cw}
    There exists an algorithm that given a graph $G$ together with a $k$-clique expression of $G$ for some integer value $k$ and a positive integer $\target$, counts (modulo $2$) the number of feedback vertex sets of size $\target$ in $G$ in time $\Oh(6^kn^c)$.
\end{theorem}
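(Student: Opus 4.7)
The plan is a dynamic programming along the given $k$-clique expression $\mu$ of $G$. For each subexpression node $t$ let $G_t$ denote the graph it builds and $\lambda_t\colon V(G_t)\to[k]$ its labeling. I would maintain, for each $t$ and each tentative forest size $s$, a table $A_t[s,\cdot]$ indexed by patterns $\pi\in\Sigma^{[k]}$ with $|\Sigma|=6$, where $A_t[s,\pi]$ equals, modulo $2$, the number of vertex sets $F\subseteq V(G_t)$ of size $s$ such that $G_t[F]$ is a partial forest whose per-label acyclicity fingerprint is $\pi$. At the root, summing the entries that certify ``$F$ is a genuine induced forest'' and reading off $s=n-\target$ returns the desired count.

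The technical heart is the design of $\Sigma$, i.e.\ the new acyclicity representation advertised in the introduction. Two structural observations drive the choice. First, a join operation $\eta_{i,j}$ inserts the complete bipartite graph between all currently $i$- and $j$-labeled vertices, so a partial forest $F$ can survive the join only if at most one of the two classes contains more than one $F$-vertex; this collapses the ``number of $F$-vertices of label $\ell$'' coordinate from $\Theta(n)$ possibilities down to a constant-size alphabet. Second, to avoid the $n^k$ blowup that would come from tracking $|E(G_t[F])|$ explicitly---the precise obstruction identified by Bergougnoux and Kant\'e~\cite{DBLP:journals/tcs/BergougnouxK19} for plain \cnc on \Fvsp relative to clique-width---I would encode the forest identity $|E|-|V|+c=0$ implicitly, via a parity bit on the number of components meeting each label that cancels modulo $2$ exactly when the partial graph is acyclic. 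Together these ideas should land at $|\Sigma|=6$, with the six symbols distinguishing ``no $F$-vertex / one isolated $F$-vertex / one non-isolated $F$-vertex / many $F$-vertices of this label'' further refined by a parity marker.

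With $\Sigma$ fixed, the transitions at the introduce-vertex, relabel, and join nodes are local and reduce to precomputed constant-size tables per label, each costing $\Oh(6^k\poly(n))$ per node. The hard part, and in my view the main obstacle of the whole proof, is the union node: given $A_{t_1}$ and $A_{t_2}$ for two subgraphs on disjoint vertex sets over the same $k$-label alphabet, produce $A_t$ for their disjoint union. A naive merge would need to recover the partition of $F$ into trees across the boundary labels and incur an unavoidable $k^k$ factor---this is exactly why a SETH-tight algorithm for \Fvsp under clique-width resisted both the rank-based machinery of~\cite{DBLP:journals/tcs/BergougnouxK19} and the \cnc framework of~\cite{DBLP:conf/wg/HegerfeldK23}. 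The way out should be to express the union as a linear operator on $\mathbb{F}_2^{\Sigma^{[k]}}$ that becomes diagonal (or block-diagonal of constant block size) in an explicitly constructed basis, in the spirit of Möbius/zeta-type transforms, so that $A_t$ is obtained from $A_{t_1},A_{t_2}$ pointwise in $\Oh(6^k\poly(k,n))$ time. Showing that such a basis exists and that the corresponding transform faithfully merges partial acyclicity patterns modulo $2$ is where the ``involved subroutine for merging partial solutions at union nodes'' announced in the abstract has to do its real work.

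Correctness then follows by induction over the expression, using the additivity of the parity invariant under each operation together with the fact that, at the root, the kept patterns isolate precisely the induced forests of $G$ of size $n-\target$. The overall running time is $\Oh(6^k n^c)$ since the expression has $\Oh(n)$ nodes and each contributes a transform on the $\Oh(6^k)$-sized table with only polynomial overhead, the union node being the dominating term.
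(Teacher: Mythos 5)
Your high-level skeleton (bottom-up DP over the expression, tables of size $6^k$ per node, a transform-based union step) matches the paper's, but both of the genuinely hard ingredients are either missing or proposed in a form that would not work.

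First, the state space. Your six symbols (``no $F$-vertex / one isolated / one non-isolated / many, refined by a parity marker'') do not carry enough information, and the idea of encoding acyclicity ``via a parity bit on the number of components meeting each label that cancels modulo $2$ exactly when the partial graph is acyclic'' is not a workable invariant: acyclicity is not a parity condition, and the Euler identity $|E|-|V|+c=0$ is exactly the route that forces edge-counting and the $n^k$ blowup. What the paper actually does is represent a partial forest by a multiset of per-component label-count vectors with all multiplicities capped at $2$ (justified by a cycle-rerouting argument, \cref{lem:ub-cycle-structure} and \cref{lem:forest-equiv-pat}), and then proves a count-preserving replacement rule (\cref{cor:rep-vec}): a pattern with a non-unit component is mod-$2$ equivalent to the symmetric difference of three patterns in which that component has been partially merged into a distinguished ``zero component'' anchored at an auxiliary vertex $v_0$. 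Iterating this rule collapses everything to the $6^k$ family $\CSP$, whose six states per label are ``appears $0$/$1$/$2$ times as a singleton component'' crossed with ``appears $0$/$1$/$2$ times in the zero component.'' Without the replacement lemma (or some substitute for it) there is no argument that $6^k$ states suffice, and your proposal contains no such argument.

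Second, the union node. You propose to find a basis in which the union operator becomes diagonal or block-diagonal of constant block size, in the spirit of a single zeta/M\"obius change of basis. This cannot work as stated: the per-label merge operation $\simplejoin$ is not a semilattice join (e.g.\ $\stdisc\simplejoin\stdisc=\stddisc$ and $\stconn\simplejoin\stconn=\stdconn$ track multiplicities), so the convolution is not a lattice join-product and does not diagonalize under any pointwise change of basis. The paper instead decomposes the ordering into three lattices and interleaves the zeta transforms with cardinality \emph{filters} on the number of labels in certain state classes (a three-level ``count and filter''), and the correctness of undoing the transforms after the pointwise product is itself a nontrivial argument (\cref{lem:conv-h2-form}, \cref{lem:conv-h1-form}) precisely because the last step does not invert to a closed form. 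Identifying the union node as the bottleneck is right, but the mechanism you propose for it would fail.
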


Our algorithm actually counts weighted feedback vertex sets, for any polynomially bounded vertex-weight function. This allows us to use the isolation lemma to reduce the decision version of the problem to the counting modulo $2$ version with high probability. As a result we get the following:

\begin{theorem}\label{theo:ub}
There exists a one-sided error Monte-Carlo algorithm that given a graph $G$ together with a $k$-clique expression of $G$ for some value $k\in\mathbb{N}$, and a positive integer $\target\in\mathbb{N}$, decides whether $G$ contains a feedback vertex set of size $\target$ in time $\Oh(6^kn^c)$. Errors are limited to false negatives, and the correct answer is output with probability at least $1/2$.
\end{theorem}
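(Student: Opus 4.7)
The plan is to combine the weighted variant of \Cref{theo:count-cw} with the isolation lemma of Mulmuley, Vazirani, and Vazirani. Let $n = |V(G)|$ and fix the family $\mathcal{F}$ of feedback vertex sets of size exactly $\target$ in $G$, viewed as a set system over the universe $V(G)$. I would draw, independently and uniformly at random, a weight $w(v) \in \{1, 2, \ldots, 2n\}$ for every vertex $v \in V(G)$, and extend $w$ to subsets additively. The isolation lemma then guarantees that, whenever $\mathcal{F} \neq \emptyset$, with probability at least $1/2$ the minimum-weight element of $\mathcal{F}$ is unique.

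Next, for each integer $W \in \{0, 1, \ldots, 2n^2\}$, I would invoke the weighted version of \Cref{theo:count-cw} to compute, modulo $2$, the number of feedback vertex sets of $G$ of size exactly $\target$ and total $w$-weight exactly $W$. Since vertex weights are bounded by $2n$, the weight of any subset lies in this range, so all relevant counts are covered. The decision procedure outputs YES if at least one of these counts is odd, and NO otherwise.

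For correctness, observe that if $G$ has no feedback vertex set of size $\target$, then every count is $0$, so the algorithm answers NO with certainty; this rules out false positives. Conversely, if such a solution exists, then with probability at least $1/2$ the isolation lemma produces a unique minimum-weight element, whose weight $W^\ast$ yields a count of exactly $1 \pmod 2$, causing the algorithm to answer YES. The running time is dominated by the $\Oh(n^2)$ invocations of \Cref{theo:count-cw}, each costing $\Oh(6^k n^c)$, which fits into $\Oh(6^k n^{c'})$ after adjusting the polynomial factor.

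The only non-routine ingredient is the weighted counting subroutine, which is promised by the statement preceding the theorem (``Our algorithm actually counts weighted feedback vertex sets, for any polynomially bounded vertex-weight function''); once this is available, the reduction via isolation is standard and presents no further obstacle.
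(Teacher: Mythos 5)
Your proposal is correct and matches the paper's proof essentially exactly: both fix $\W = 2|V|$, draw vertex weights independently and uniformly at random, use the isolation lemma to guarantee a unique minimum-weight solution with probability at least $1/2$, and accept iff the mod-$2$ count of solutions of size $\target$ is odd for some weight value. The only cosmetic difference is that the paper runs the dynamic program once and reads off all weight classes from the tables, while you describe re-invoking the counting routine per weight value; this is absorbed into the polynomial factor.
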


We also provide a lower bound that excludes algorithms with running time $\Oh\big((6-\varepsilon)^{k}n^c\big)$ for \Fvsp assuming SETH. As is the case for similar lower bounds for other problems parameterized by clique-width and treewidth (with its linear variant pathwidth), our lower bound holds even when parameterized by the linear clique-width~\cite{DBLP:journals/tcs/AdlerK15,DBLP:journals/tcs/GurskiW05,DBLP:journals/dam/HeggernesMP12}; where each union operation requires that one of its operands is a single vertex.

\begin{theorem}\label{theo:lower-bound}
    Assuming SETH, the \Fvsp problem cannot be solved in time $\Oh\big((6-\varepsilon)^kn^c\big)$ for any $\varepsilon > 0$, even when the input graph $G$ is provided with a linear $k$-expression of $G$.
\end{theorem}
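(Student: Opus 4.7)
The plan is to adapt the Lokshtanov--Marx--Saurabh framework for SETH-tight lower bounds to the linear clique-width setting, in the same spirit as the matching lower bounds for \Cvcp and \Cdsp by Hegerfeld and Kratsch and for \Stp and \Cdsp by Bojikian and Kratsch. Concretely, I would reduce $d$-SAT on $N$ variables and $M$ clauses, for a sufficiently large $d=d(\varepsilon)$, to an instance $(G,\target)$ of \Fvsp where $G$ is delivered together with a linear $k$-expression satisfying $6^{k}\le 2^{N(1+o(1))}$. An algorithm running in time $\Oh((6-\varepsilon)^{k}n^{c})$ would then solve $d$-SAT in time $\Oh(2^{N(1-\delta)})$ for some $\delta>0$, contradicting SETH.

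The heart of the construction is a \emph{state gadget} realizing the six DP states underlying the upper bound of \cref{theo:ub}: informally, ``vertex in the solution'' together with five distinguishable acyclic behaviours of a kept vertex with respect to the forest built so far (capturing whether the vertex is a root, its degree into the partial forest, and how it will be allowed to merge with further components). I would partition the $N$ variables into $t$ groups of size $s\approx k\log_{2}6$, and build one ``layer'' per group. Each layer carries the $k$ labels of the expression, and on each label I attach a small gadget (built from pendants, short paths, and triangles together with the all-edges operation of the clique-width model) whose acyclic completions are in bijection with the six intended states. The $k$-tuple of states at a layer thus encodes up to $6^{k}$ values, and hence one of the $2^{s}$ assignments of the corresponding variable group.

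Layers are then chained linearly by transition gadgets that force the state entering layer $i+1$ to equal the state leaving layer $i$, using only single-vertex introductions, relabel operations, and joins, so that the overall expression is a \emph{linear} $k$-expression. On top of this state backbone I would install clause gadgets: each clause is realised by a small subgraph joined to the label classes used to encode its literals in the layers it inspects, and is designed so that it admits an acyclic completion iff at least one of its literals is set to true by the encoded assignment. Choosing $\target$ globally as the sum of the contributions of all gadgets, $(G,\target)$ becomes a yes-instance of \Fvsp iff the $d$-SAT instance is satisfiable.

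The main obstacle I expect is the state gadget itself. It must simultaneously (i) admit exactly six and not fewer valid acyclic completions per label (losing a single state would weaken the base of the exponent below $6$), (ii) couple correctly to both the transition gadgets and the clause gadgets through the label-wise joins, without creating unintended cycles inside or across layers that would silently collapse states, and (iii) be realisable inside a linear clique-width expression of width exactly $k$, in particular without needing to interleave construction of different layers. Once such a gadget is engineered and paired with matching transition and clause gadgets, the remaining bookkeeping (counting vertices and edges contributed by each gadget to pin down $\target$, and choosing $d$ large enough so that the loss from partitioning into groups of size $\lfloor k\log_{2}6\rfloor$ is absorbed) is standard.
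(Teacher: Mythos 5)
Your overall framework is the right one (an LMS-style reduction with six acyclicity states per label and width roughly $N\log_6 2$, so that a $\Oh((6-\varepsilon)^k n^c)$ algorithm yields a $2^{(1-\delta)N}$ algorithm for SAT), and you correctly identify that the crux is a gadget whose acyclic completions realize exactly six distinguishable states per label. The paper takes a cleaner entry point: it reduces from $q$-CSP-$6$ using Lampis's theorem (\cref{theo:lampis-csp-lb}), so that each domain-$6$ variable corresponds to exactly one label (one ``path sequence''), and the base-$6$ grouping you do by hand is absorbed into the CSP lower bound. That difference is stylistic. However, your construction as described has two genuine gaps.

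First, the architecture is inverted. You chain ``one layer per variable group'' sequentially and force the state of layer $i+1$ to equal that of layer $i$; but consecutive layers encode \emph{different} groups, so forcing equality across them is meaningless, and with all $k$ labels spent on a single layer the width does not scale correctly with $N$ unless $t=1$, in which case there is nothing to chain and no way for a clause gadget to read label classes from earlier layers (their vertices must already have been relabeled away for the expression to be linear of width $k$). The correct layout is a two-dimensional grid: the variable groups run in \emph{parallel} as long rows (these are what consume the width), the \emph{columns} are copies of the clause/constraint checks, and the state is propagated along each row from column to column. Second, and more fundamentally, exact state preservation between consecutive gadgets on a row cannot be enforced through the constant-size (here $2\times 2$ biclique) cuts that keep the width down to one label per row; any such cut only enforces a partial order on the six states. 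The paper proves only a monotonicity statement (\cref{lem:lb-states-transition-direction}: the state can never increase along a row) and then uses a pigeonhole argument over $5n+1$ sections of $m$ columns each to find one section where every row is constant, and only that section is used to extract a satisfying assignment. Without this monotonicity-plus-repetition mechanism (and the accompanying tight-budget accounting via deletion-edge triangles that forces every solution to pick exactly one of the six states in each gadget), the reduction does not go through; these are not ``standard bookkeeping'' but the core of the correctness proof.
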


We show that our representation technique also yields an independent treewidth-based algorithm: by restricting the indices of the dynamic programming tables to the states realized by a tree decomposition, one obtains an algorithm that runs in time $\Oh(3^{\tw}n^c)$ on a graph $G$ of treewidth $\tw$, and counts (modulo $2$) the number of feedback vertex sets of size $\target$ in $G$. Again this is SETH-tight, since an algorithm with running time $\Oh((3-\varepsilon)^{\tw}n^c)$ would imply---using the isolation lemma---an algorithm for the decision version with the same running time, which contradicts SETH~\cite{DBLP:journals/talg/CyganNPPRW22}.

\begin{theorem}\label{theo:tw-count}
    There exists an algorithm that given a graph $G$ together with a tree decomposition of $G$ of width $\tw$ and a positive integer $\target$, counts (modulo $2$) the number of feedback vertex sets of size $\target$ in $G$ in time $\Oh(3^{\tw}n^c)$.
\end{theorem}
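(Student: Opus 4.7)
The plan is to transport the dynamic programming algorithm underlying Theorem \ref{theo:count-cw} to the tree-decomposition setting, observing that a nice tree decomposition induces a clique-expression in which every label class contains at most a single vertex throughout. Concretely, I would first translate a nice tree decomposition of width $\tw$ into such an expression by assigning each bag vertex its own unique label, so that introduce-vertex nodes correspond to singleton introductions, introduce-edge transitions to joins between singleton labels, forget nodes to relabelings into a ``finished'' sink label that is subsequently forgotten, and the tree structure to unions with shared bags.

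Second, I would analyze which states of the acyclicity representation underlying Theorem \ref{theo:count-cw} are realizable when a label class contains exactly one vertex. The central claim is that only three out of the six states are ever realized in this singleton regime, and that these three states are closed under all DP transitions whose operands are singleton label classes. This reduction shrinks the per-bag state space from $6^{\tw+1}$ to $3^{\tw+1}$, as required.

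Third, with the reduced state space in place, I would specialize each DP transition. Introduce-vertex, introduce-edge, and forget nodes inherit their transitions directly from the clique-expression algorithm restricted to singleton classes. The join (union) node is the only potentially nontrivial point; however, since both children of a join share the same bag and each bag vertex is its own label, the involved merging subroutine required at a general union node of a clique expression collapses to a straightforward pointwise combination of the two child tables, executable in time $\Oh(3^{\tw}\target)$. Summing over all nodes of the nice tree decomposition then yields the claimed $\Oh(3^{\tw}n^c)$ bound.

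The main obstacle is rigorously verifying the state-reduction claim: namely that (i) singleton label classes realize exactly three of the six states of the acyclicity representation, and (ii) this restricted set is preserved by all DP transitions when all operand label classes are singletons. This requires a careful case analysis against the definition of the acyclicity representation and the transition rules developed for Theorem \ref{theo:count-cw}. Once established, correctness and the running time are inherited essentially verbatim from the clique-width algorithm, and the isolation-lemma reduction noted in the excerpt confirms that the $3^{\tw}$ base is SETH-tight.
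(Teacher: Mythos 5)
Your core idea---that in the treewidth setting each boundary vertex occurs only once, so only the three ``simple'' states $\stnone,\stdisc,\stconn$ of the acyclicity representation are realizable and the table size drops from $6^{k}$ to $3^{k}$---is exactly the reduction the paper performs (it even remarks that its treewidth state set is the restriction of the six clique-width states to the simple ones). However, your plan has two genuine gaps, both located at the join nodes of the tree decomposition.

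First, the translation of a tree decomposition into a clique expression breaks down precisely there: the union operation of a clique expression is a \emph{disjoint} union, whereas the two children of a join node share the entire bag. You cannot encode ``union with shared bags'' as a clique-expression operation, so the correctness of the clique-width algorithm cannot be inherited ``essentially verbatim''; one has to re-develop the representation argument for a glueing operation that identifies boundary vertices (the paper does this via boundaried forests and a separate reduce/representation analysis for treewidth patterns). Second, even granting the three-state reduction, the combination at a join node is \emph{not} a pointwise product of the two child tables. Two partial solutions agreeing on the bag may assign different connectivity states to the same bag vertex: a vertex can be $\stdisc$ (not yet attached to the $v_0$-component) in one child and $\stconn$ in the other, and the combined state is $\stconn$; two $\stconn$'s create a cycle and must be rejected. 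This is a convolution over the coordinatewise state-join, and even the structured enumeration over compatible pairs costs $\sum_s\prod_i|\{(a,b):a\ast b=s_i\}|=(1+1+2)^k=4^k$, which is not $\Oh(3^{k}n^c)$. Achieving $\ostar(3^{k})$ requires the fast lattice join-product of Hegerfeld--Kratsch/Bj\"orklund et al.\ combined with a counting filter on the number of $\stconn$-coordinates to exclude the bad $\stconn\lor\stconn$ joins; this is the content of the paper's ``Fast join operation'' subsection and is missing from your proposal.
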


We note that, to the best of our knowledge, these are the first SETH-tight algorithms that correctly count (modulo $2$) the number of feedback vertex sets when parameterized by a structural parameter, as other known methods either count different objects (cut and count~\cite{DBLP:journals/talg/CyganNPPRW22}), or are based on the ``Gaussian elimination'' technique or on the ``squared determinant'' technique, and hence, are not tight under SETH~\cite{DBLP:journals/iandc/BodlaenderCKN15,DBLP:conf/stacs/BergougnouxKN23}.

As an additional application of our representation technique, we also show that acyclicity representation can be combined with the ``isolating a representative'' technique of Bojikian and Kratsch~\cite{DBLP:conf/icalp/BojikianK24} resulting in an algorithm for the \Cfvsp problem with running time $\Oh(18^{\cw}n^c)$ proving the following theorem:

\begin{theorem}\label{theo:cfvs-ub}
There exists a one-sided error Monte-Carlo algorithm that given a graph $G$ together with a $k$-clique expression of $G$ for some value $k\in\mathbb{N}$, and a positive integer $\target\in\mathbb{N}$, decides whether $G$ contains a connected feedback vertex set of size $\target$ in time $\Oh(18^kn^c)$. Errors are limited to false negatives, and the correct answer is output with probability at least $1/2$.
\end{theorem}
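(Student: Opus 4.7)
The plan is to extend the $\Oh(6^{\cw}n^c)$ algorithm of \cref{theo:count-cw} by overlaying the ``isolating a representative'' technique from \cite{DBLP:conf/icalp/BojikianK24} on top of the new acyclicity representation. First I would apply the isolation lemma: sample polynomially bounded vertex weights so that, with constant probability, a \emph{unique} minimum-weight connected feedback vertex set exists whenever any CFVS of size $\target$ exists. It then suffices to count, modulo $2$, for each target total weight $W$ and each candidate representative $v^\star\in V(G)$, the number of (not necessarily connected) vertex sets $S$ of size $\target$ and weight $W$ such that $G-S$ is acyclic, $v^\star\in S$, and $v^\star$ lies in the same $G[S]$-component as every other vertex of $S$. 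Summing these contributions modulo $2$ over $v^\star$ and $W$ decides the existence with one-sided error in the usual way.

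Next I would set up the dynamic program along the given $k$-clique expression. For each node, the DP table is indexed by a joint state assigning to each label one of $18$ tokens, obtained as the product of a \emph{$6$-valued acyclicity token} (exactly as introduced in the proof of \cref{theo:count-cw}) and a \emph{$3$-valued connectivity token} recording, for the solution vertices carrying this label, whether there are none, whether they all lie in $G[S]$-components that already contain $v^\star$, or whether they lie in components not yet merged with $v^\star$. The size and weight of the partial solution are tracked as extra polynomial-size coordinates. Introduce, relabel, and join (edge addition between two label classes) nodes are handled by updating both tokens simultaneously: the acyclicity token is updated exactly as in the FVS algorithm, while the connectivity token merges according to whether the newly added edges fuse a ``has-$v^\star$'' class with ``without-$v^\star$'' classes; in that case all affected label states switch to the ``has-$v^\star$'' value.

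The main obstacle, as in the FVS algorithm, is the union node, where two partial-solution tables must be combined while keeping the running time $\Oh(18^{\cw}n^c)$ rather than $\Oh(18^{2\cw}n^c)$. Here I would re-use the union subroutine from \cref{theo:count-cw} to merge the acyclicity tokens (already in time $6^k\,\poly(n)$ per fixed connectivity pattern) and tensor it with a light-weight convolution on the connectivity tokens that enforces: (i) the representative $v^\star$ lies on exactly one side of the union, and (ii) at each label, a ``has-$v^\star$'' state on one side absorbs any non-empty state on the other side, while two ``without-$v^\star$'' states simply add. Because the connectivity tensor factor has a constant-size transition rule per label and per pair of input tokens, the merger inherits the $\Oh(6^{\cw}n^c)$ bound of the FVS union subroutine multiplied by the $3^{\cw}$ factor from connectivity, yielding the claimed $\Oh(18^{\cw}n^c)$.

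Finally I would argue correctness via the standard cancellation pattern of representative-based counting: solutions whose solution graph is not connected contribute an even number to the top-level count (any spurious vertex not reachable from $v^\star$ forces symmetric over-counting that vanishes modulo $2$), whereas the unique isolated optimum contributes a single $1$. Combining this with the isolation lemma gives the claimed one-sided error Monte-Carlo algorithm, completing \cref{theo:cfvs-ub}.
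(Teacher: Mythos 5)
Your high-level plan (overlay a connectivity representation on the $6^k$ acyclicity representation, handle unions by tensoring the convolutions) matches the paper's, but the connectivity half of your argument has a genuine gap. Your $3$-valued connectivity token per label (``none'', ``all in components containing $v^\star$'', ``in components not yet merged with $v^\star$'') is not a valid count-preserving representation under clique-width operations: after a join or union, a label can simultaneously occur in the component of $v^\star$ \emph{and} in other components, i.e.\ the state $\stplus$ is unavoidable, and collapsing it into one of your three tokens changes the parity of compatible extensions. A purely count-preserving reduction of connectivity patterns bottoms out at \emph{four} states per label ($\stnone,\stdisc,\stconn,\stplus$), which combined with the six acyclicity states gives $24^k$, not $18^k$. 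The paper reaches three connectivity states only via the two-step ``isolating a representative'' machinery of Bojikian and Kratsch: local \emph{actions} that existentially decide, for each $\stconn$-label, whether it will later be promoted to $\stplus$ or dropped to $\stnone$. That first step is only an \emph{existential} representation and does not preserve parity, which is precisely why the paper applies the isolation lemma \emph{twice} --- once with vertex weights to isolate the solution and once with weights on (node, action) pairs to isolate the action sequence generating the representative. Your proposal uses a single isolation of the solution, so the parity computed by your DP need not equal the parity of connected solutions.

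Relatedly, your closing correctness argument --- ``solutions whose solution graph is not connected contribute an even number'' --- is asserted without a mechanism. That cancellation is the hallmark of cut-and-count, but you never count cuts (and the paper deliberately avoids cut-and-count here, since it is exactly what fails to give tight bounds for these problems under clique-width); nor is it implied by your absorption rule for connectivity tokens. To repair the proof you would need to either (i) adopt the action-sequence/double-isolation framework and index the tables by pairs of a very nice acyclicity pattern and a nice connectivity pattern, proving that the reduced tables \emph{represent} the full ones in the parity sense, or (ii) settle for the $4$-state connectivity representation and a weaker $\ostar(24^{\cw})$ bound. The union-node discussion is otherwise in the right spirit: the paper indeed first applies a Zeta transform over the connectivity ordering and then runs the three-level acyclicity convolution, giving a four-level scheme in time $\ostar(18^k)$.
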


We also prove the tightness of our running time, by ruling out algorithms with running time $\Oh((18-\varepsilon)^{\lcw}n^c)$ under SETH, for all $\varepsilon > 0$, where $\lcw$ denotes the linear clique-width of the graph. 
To the best of our knowledge, this is the largest known single-exponential lower bound for a natural (non-generalized) problem parameterized by a structural parameter. This poses a challenge in the construction of the lower bound.

\begin{theorem}\label{theo:cfvs-lb}
    Assuming SETH, the \Cfvsp problem cannot be solved in time $\Oh\big((18-\varepsilon)^kn^c\big)$ for any $\varepsilon > 0$, even when the input graph $G$ is provided with a linear $k$-expression of $G$.
\end{theorem}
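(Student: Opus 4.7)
The plan is to follow the now-standard framework for SETH-tight lower bounds on clique-width/pathwidth: reduce from $q$-SAT for sufficiently large $q$, and construct a graph of bounded linear clique-width that encodes the SAT instance using a grid of ``selector'' gadgets, each capable of realizing exactly $18$ distinct states in any valid connected feedback vertex set. The exponent $18 = 6 \cdot 3$ strongly suggests decomposing each gadget's state space into a ``feedback'' component mirroring the base of $6$ in Theorem~\ref{theo:lower-bound}, and a ``connectivity'' component of size $3$ that tracks the connected-component pattern of the solution set across the active cut.

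I would first design a block gadget with a bounded number of interface vertices whose $18$ valid local configurations, under any connected feedback vertex set, biject with a pair consisting of (i) one of $6$ FVS/acyclicity patterns at the interface and (ii) one of $3$ connectivity patterns recording how the solution inside the gadget is currently linked to the outside (for example, connected only to the ``left'', only to the ``right'', or to both). Each group of $\lfloor \log_2 18 \rfloor$ variables of the SAT formula is then encoded by a single block; blocks are stacked vertically into a column, and columns are repeated horizontally, once per clause, with propagation paths between adjacent columns that force the chosen state of every block to be consistent across columns. A clause gadget attached to each column inspects the relevant states and destroys every assignment that falsifies the clause (by creating an unavoidable cycle, or by disconnecting any candidate solution).

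Next, I would give an explicit linear $k$-expression for the resulting graph in which each block contributes only a constant number of fresh labels, and labels used to interact with the clause gadget are released after the clause is processed. A careful accounting should yield $\lcw \le n/\lfloor \log_2 18 \rfloor + h(q)$ for some function $h$ depending only on $q$. Combined with a hypothetical $\Oh((18-\varepsilon)^{\lcw} n^c)$ algorithm, this contradicts SETH for $q$ chosen large enough as a function of $\varepsilon$, exactly as in the FVS lower bound.

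The main obstacle will be designing the block gadget so that it realizes \emph{exactly} $18$ states, no more and no less, while remaining compatible with the global connectivity requirement of \Cfvsp. Unlike the FVS setting, where only local acyclicity of the complement has to be enforced, here the chosen vertex set must additionally form a single connected subgraph of $G$, so any ``local'' state of a gadget must faithfully record which portions of the current solution are already linked internally and which still demand an external connection. Ruling out spurious states then requires a delicate case analysis: every intersection of the gadget with a candidate solution, together with every possible partition of that intersection into connected components, must either correspond to one of the intended $18$ states or be forbidden by an induced cycle in the complement or by an irreparable disconnection. Managing this blow-up of cases is exactly what makes $18$ the largest base currently known for a single-exponential lower bound on a natural structural parameterization.
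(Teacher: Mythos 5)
There is a concrete gap in the parameter accounting that breaks the reduction. You propose to reduce from $q$-SAT, encoding each group of $\lfloor\log_2 18\rfloor = 4$ variables by one block, aiming for $\lcw \le n/4 + h(q)$. But $2^4 = 16 < 18$, so $\log_2(18-\varepsilon)/4 > 1$ for every $\varepsilon < 2$, and a hypothetical $\Oh((18-\varepsilon)^{\lcw}n^c)$ algorithm would only give running time $(18-\varepsilon)^{n/4} \ge 2^n$ --- no contradiction with SETH. (Your weaker claim that each block contributes ``a constant number of fresh labels'' makes the arithmetic worse still.) To make a SAT-based reduction work you would need groups of $\beta$ variables spread over $\gamma$ labels with $2^\beta \le 18^\gamma$ and $\beta/\gamma \to \log_2 18$, which requires a single gadget realizing $18^\gamma$ states across $\gamma$ labels. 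The paper sidesteps this entirely by reducing from $q$-CSP-$B$ with $B=18$, invoking Lampis's theorem that $q$-CSP-$18$ cannot be solved in $\Oh((18-\delta)^n)$ time under SETH; then one CSP variable corresponds to exactly one label and one $18$-state path gadget, and the construction achieves $k = n + k_0$ for a constant $k_0$, which immediately yields the contradiction.

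Two further points where the proposal underestimates the difficulty. First, ``propagation paths between adjacent columns that force the chosen state of every block to be consistent across columns'' is not achievable with the constant-size cuts needed to keep the clique-width down; the paper instead only enforces that states are \emph{monotone non-increasing} along each path sequence and repeats the column structure $17n+1$ times, so that by pigeonhole some section has constant states. Second, the decomposition $18 = 6\cdot 3$ is not a free product of an acyclicity gadget and a connectivity gadget: several of the naive $6\times 3$ combinations are not realizable against the upper bound's state space (e.g.\ pairing ``absent'' acyclicity behaviour with ``absent'' connectivity behaviour), and the paper has to exploit the redundant fourth connectivity state of the ``isolating a representative'' technique to choose, for each acyclicity state, \emph{which} three connectivity states to pair with it. Your proposal correctly identifies that realizing exactly $18$ states is the crux, but the fix is this specific pairing argument rather than a generic case analysis.
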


In doing so, we close the gap for two additional connectivity problems relative to clique-width. This leaves only one problem unresolved among those whose tight complexity relative to treewidth was determined in the original work of Cygan et al.~\cite{DBLP:journals/talg/CyganNPPRW22}, which introduced the \cnc technique, namely the \textsc{Exact $k$-Leaf Spanning Tree} problem. However, this problem is at least $W[1]$-hard when parameterized by clique-width, since it generalizes the $\textsc{Hamiltonian Path}$ problem by setting $k=2$~\cite{DBLP:journals/siamcomp/FominGLS10}.

\subparagraph{Organization.} 
In \cref{sec:techrev} we provide an overview of our techniques.
In \cref{sec:preliminaries} we provide preliminaries and some notation. We present the main algorithm of this work in \cref{sec:ub} proving \cref{theo:count-cw} and \cref{theo:ub}. We present the lower bound in \cref{sec:lb}. We prove \cref{theo:tw-count} in \cref{sec:tw}. Finally, we prove the \cref{theo:cfvs-ub} and \cref{theo:cfvs-lb} in \cref{sec:cfvs} and conclude with final remarks in \cref{sec:conclusion}.

\section{Technique overview.}\label{sec:techrev}

\subsection{Upper bounds}

As is typical with structural parameters, we assume that the input graph $G$ is provided together with a $k$-clique expression $\mu$ of $G$. The recursive definition of clique-expressions (see \cref{sec:preliminaries}) induces a syntax tree $\syntaxtree$ of $\mu$ in a natural way, where each subtree $\syntaxtree_x$ rooted at a node $x$ of $\syntaxtree$ corresponds to a subexpression $\mu_x$ of $\mu$.
Since each subexpression $\mu_x$ defines a subgraph $G_x$ of $G$,
we build our algorithms as bottom-up dynamic programming schemes over $\syntaxtree$, where for each node $x$, we keep tables $T_x$ that count (modulo $2$) some notion of partial solutions in the graph $G_x$.

Similar to other algorithms for \Fvsp~\cite{DBLP:journals/tcs/BergougnouxK19,DBLP:journals/iandc/BodlaenderCKN15, DBLP:journals/talg/CyganNPPRW22}, our algorithms count rather induced forests in the graph, which are in one-to-one correspondence with feedback vertex sets. We first show that this can be done by representing labeled graphs as multisets of vectors, that count the number of times a label appears in each connected component of the induced forest. We call these multisets \emph{patterns}. 

However, the total number of different patterns is super-exponential in $n$, and hence, is too large to be used in an efficient dynamic programming algorithm. Therefore, we aim to reduce the families of patterns into a simpler representation thereof, introducing the notion of acyclicity representation. First, we add an isolated vertex $v_0$ of a new label $0$ to the graph. This vertex will allow for a cleaner ``reduced'' representation of patterns. While the vertex $v_0$ is isolated in the original graph, we will use some ``rerouting technique'' in the representation, creating graphs where $v_0$ is connected to other vertices. We call labeled graphs with this added vertex $v_0$ \emph{extended labeled graphs}. Since the vertex $v_0$ is the only vertex having label $0$, we assume that each pattern contains a unique vector $z$ with $z_0 = 1$ and $x_0 = 0$ for all other vectors $x$ in the pattern. we call $z$ the zero vector of the pattern.

Essentially, we prove first that, in order to represent acyclicity in a labeled graph, it suffices to upper bound the multiplicity of each label by $2$, both in each connected component and globally, eventually by removing isolated vertices, whose labels appear more than twice in the graph. This results in multisets over vectors, where each index is upper bounded by $2$, and where each vector has multiplicity at most two in the pattern.
See \cref{fig:technical-pats} for an example of an extended labeled forest and its corresponding pattern.

\begin{figure}[ht]
    \centering%
    \begin{subfigure}[b]{.3\textwidth}
        \centering
        \includegraphics[width=.7\textwidth]{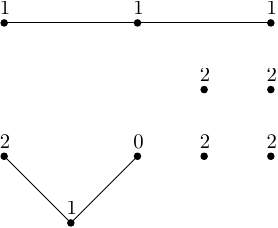}
        \caption{A $2$-labeled forest $F$}
    \end{subfigure}
    \hspace{1cm}%
    \begin{subfigure}[b]{.35\textwidth}
        \centering
        $\left\langle
        \begin{pmatrix}
            0 \\
            2 \\
            0
        \end{pmatrix},
        \begin{pmatrix}
            1 \\
            1 \\
            1
        \end{pmatrix},
        \begin{pmatrix}
            0 \\
            0 \\
            1
        \end{pmatrix},
        \begin{pmatrix}
            0 \\
            0 \\
            1
        \end{pmatrix}
        \right\rangle
        $
        \vspace{1em}
        \caption{The pattern $p:=\pat_F$}
    \end{subfigure}

    \caption{A $k$-labeled forest $F$ ($k=2$) and its corresponding pattern. Note that we upper bound the multiplicity of each label in each connected component by $2$, and the multiplicity of each connected component by $2$ as well. Removing additional unit-vectors happens in a later stage.}
    \label{fig:technical-pats}
\end{figure}

In the second step, we present a compact family of patterns $\CSP$ of size $6^k$. We show that one can replace any family of partial solutions in $G_x$, for a node $x$ of $\syntaxtree$, with a subset of $\CSP$, such that for any extension of a partial solution to a solution in the whole graph, this extension is compatible with an odd number of partial solutions from the original family, if and only if it is compatible with an odd number of patterns in the latter family.

More formally, the family $\CSP$ consists of all patterns $p$ where the only non-unit vector of $p$ is the zero vector. The size of the family follows then, since for each label $i$, we can assign one of the six states: $\stnone$ if it does not appear in the pattern, $\stdisc$ if it appears only once in a unit vector, $\stddisc$ if it appears twice in unit vectors, $\stconn$ if it appears only once in the zero vector, $\stplus$ if it appears once in a unit vector and once in the zero vector, and $\stdconn$ if it appears twice in the zero vector. This builds a bijection between the patterns in $\CSP$ and the set of mappings that assign to each label in $[k]$ one of these six states.

In order to represent each family of patterns with a subset of $\CSP$, we introduce a count-preserving replacement rule: First, we show for a pattern $p$ and three vectors $x,y,z$, such that $x$ and $y+z$ are both in $p$, and for $p_0 = p\setminus \{x,y+z\}$, that the three patterns $p_0\cup\langle x+z, y\rangle$,$p_0\cup\langle x+y, z\rangle$,$p_0\cup\langle x+z+y\rangle$ together represent $p$.
Hence, our reduction rule replaces a pattern $p$ with these three patterns, by choosing $x$ to be the zero vector of $p$, and $y+z$ to be any other non-unit vector different from $x$ in $p$. This results in three patterns, each having a smaller total sum of non-unit vectors different from the zero vector than the same sum in $p$. Therefore, by applying this rule a finite number of times we get a family of patterns $R$ that represents $p$, such that for each pattern $r \in R$, each vector different from the zero vector of $r$ is a unit vector. This implies that $R \subseteq \CSP$. See \cref{fig:reduce} for an example of this reduction rule.

\begin{figure}[ht]
    \centering
    \[
    \left\langle
    \begin{pmatrix}
        1 \\
        1 \\
        0
    \end{pmatrix},
    \begin{pmatrix}
        0 \\
        2 \\
        1
    \end{pmatrix}
    \right\rangle
    \xrightarrow
        [
    y = \begin{pmatrix}
        0 \\
        2 \\
        0 
    \end{pmatrix}\ 
    z = \begin{pmatrix}
        0 \\
        0 \\
        1
    \end{pmatrix}
    ]
    {
    x = \begin{pmatrix}
        1 \\
        1 \\
        0
    \end{pmatrix}}
    \left\langle
        \begin{pmatrix}
            1 \\
            1\\
            1
        \end{pmatrix},
        \begin{pmatrix}
            0 \\
            2 \\
            0
        \end{pmatrix}
    \right\rangle,
    \left\langle
        \begin{pmatrix}
            1 \\
            2 \\
            0
        \end{pmatrix},
        \begin{pmatrix}
            0 \\
            0 \\
            1
        \end{pmatrix}
        \right\rangle,
        \left\langle
        \begin{pmatrix}
            1 \\
            2 \\
            1
        \end{pmatrix}
        \right\rangle
        \]
    \caption{The three patterns resulting from applying the reduction rule to the pattern $p$. We choose $x$ to be the zero vector as specified by the reduction rule. Note that the total sum of non-unit vectors different from the zero vector decreases from $3$ in $p$ to $2$, $0$ and $0$ in the three resulting patterns, respectively. Note also that we keep the upper bound $2$ on the sums in each index.}
    \label{fig:reduce}
\end{figure}

This count-preserving replacement argument allows us to restrict the dynamic programming tables to the family $\CSP$ only, preserving the parity of the number of extensions of partial solutions in $G_x$ to solutions in the whole graph $G$.
However, a bottleneck of this approach is to process union nodes in the clique expression efficiently, combining all pairs of partial solutions in the unified graphs. While this can be trivially done in time $\Oh(36^kn^c)$, we show that this can be done in (optimal) time $\Oh(6^kn^c)$ by an involved convolution operation.

In general, fast convolution techniques have proven essential to process a join node in a tree decomposition or a union node in a clique expression efficiently. Among others, covering product, fast subset convolution, fast lattice convolution over power lattices, and multidimensional fast fourier transformation have been used to process such nodes more efficiently. Van Rooij~\cite{DBLP:conf/birthday/Rooij20} combined some of these techniques to develop a novel convolution for the $(\sigma,\rho)$-\DSp problems, introducing the ``Count and Filter'' technique, inspired by the infamous fast subset convolution of Björklund et al.~\cite{DBLP:conf/stoc/BjorklundHKK07}. The author states that combinations of these convolution techniques cover a vast spectrum of convolution requirements. However, essential for their proof was the fact, that their case only spanned a convolution of depth two, where the filter was only applied in the later phase. Hence, by reverting the second step, one gets a closed form transformation (Zeta) that can be directly inverted. In this paper, however, we encounter a more sophisticated $3$-level convolution ($4$-level for the connected variant) with two different filters. As one can see in the proof, reverting the last step does not yield a closed form transformation. Therefore, a more involved reversing process is needed. We show, somewhat surprisingly, that retracting the convolution steps after applying the product still yields the right answer. We believe that this convolution technique can be generalized to an arbitrary depth, and hence, can be of independent interest.

We note that our convolution is tight, and it improves on a trivial convolution with time $\Oh(36^{\cw}n^c)$, and, to the best of our knowledge, on the best known algorithm for general convolutions~\cite{DBLP:journals/corr/BrandCLP25} with running time $\Oh(6^{(2\omega/3) \cw}n^c)$, where $\omega$ is the matrix multiplication exponent, even for $\omega = 2$.

Back to our algorithm, this results in an algorithm that correctly counts (modulo $2$) the number of weighted feedback vertex sets of size $\target$ in time $\Oh(6^{\cw}n^c)$, when the weights are bounded polynomially in $n$. Using the isolation lemma~\cite{DBLP:journals/combinatorica/MulmuleyVV87}, we show that by choosing the weights large enough, independently and uniformly at random, this results in an algorithm that solves the decision version with high probability.

\subparagraph{Connected feedback vertex set.}

In order to solve the \Cfvsp problem parameterized by clique-width, we need to preserve connectivity of the solution set in addition to the acyclicity of the remaining graph. We achieve this by combining our acyclicity representation technique with a connectivity representation technique, that was introduced recently by Bojikian and Kratsch~\cite{DBLP:conf/icalp/BojikianK24}, called ``isolating a representative''. Surprisingly, this results in a tight algorithm with running time $\Oh(18^{\cw}n^c)$. In this technique, each partial solution $X$ is first represented by its \emph{connectivity pattern}, defined as a set of subsets of labels, where we add a set for each connected component $C$ of $G_x[X]$, that contains the labels of all vertices of $C$. Compared to acyclicity representation, where we needed to distinguish whether a label appears once or at least twice in a connected component, for connectivity, it suffices to track existence, and hence, simple sets suffice.

However, a count-preserving replacement argument, similar to the one above, as the authors suggest, would only reduce the size of the family to $4^k$ (where each label has a state $\stnone$, $\stdisc$, $\stconn$, or $\stplus$). Instead, they first introduce a notion of ``existential representation'', where the representing family does not preserve the parity of the number of compatible partial solutions, but only preserves the existence of a compatible partial solution. In order to achieve this, they define local operations, called \emph{actions} over the nodes of $\syntaxtree$, that decide for a given label (with state $\stconn$), whether it will be used in the future to connect different components of a partial solution (and hence, it can be added as a singleton safely, turning its state to $\stplus$), or will not be used anymore (and hence, it can be safely removed from all sets, turning its state to $\stnone$). The resulting family can then be reduced to size $3^k$ by a count-preserving reduction argument similar to the one above.

A problem however arises, since the first step is only an existential representation, that might not preserve the parity of the number of compatible partial solutions, while the second step only preserves the parity, by possibly creating new compatible partial solutions. They solve this problem by a double usage of the isolation lemma, where in addition to isolating the solution itself, they also isolate an existential representative thereof---and hence, the name ``isolating a representative''. They achieve this by assigning different weights to the actions taken to create each representative of a partial solution. They call the combinations of actions taken to create a single representative an \emph{action sequence}.

Therefore, our partial solutions are pairs of a vertex set $X$ and an action sequence $\pi$, where we index the dynamic programming tables by the acyclicity-pattern of $G\setminus X$ and by the connectivity-pattern of the representative created by $\pi$ from the partial solution $X$. This results in $6\cdot 3 = 18$ states per label.
Again, we need to efficiently process union nodes in the clique expression. We build upon the convolution technique introduced above, extending it to a $4$-level convolution, where we start by processing the connectivity part of the pattern. We show that this can be done in time $\Oh(18^{k}n^c)$, resulting in the claimed running time.

\subparagraph{Treewidth algorithm.}

Finally, in order to count the number of feedback vertex sets of a specific size when parameterized by treewidth, we make use of the same acyclicity representation technique. Essentially, in a tree decomposition, each vertex appears in a bag at most once, compared to labels in a labeled graph, while the size of the bags is bounded by the parameter. We start by defining a mapping $\phi$ that assigning to each vertex in the graph a label $i\in [k+1]$ such that the restriction of $\phi$ to each bag is injective. Using these labels, we can restrict the family of all patterns to patterns where each label appears at most once. This turns our definition of patterns, as multisets of vectors into sets over subsets of $[k]$. Essentially, it reduces the six states defined above into the ``single'' states $\stnone$, $\stdisc$ and $\stconn$ resulting in the family of \emph{nice treewidth patterns} $\CTP$ of size $3^{k+1}$. We use this family to index the dynamic programming tables.

Similar to the algorithm when parameterized by clique-width, join nodes form a bottleneck in this dynamic programming routine, where we need to efficiently combine all pairs of partial solutions in the two children bags. We show that this can be done in time $\Oh(3^{k}n^c)$ by introducing an ordering over these states, and showing that the required convolution corresponds to the join product of the $k$th power of the corresponding lattice. We refer to \cref{sec:preliminaries} for a brief introduction of these terms. The running time then follows from a result by Hegerfeld and Kratsch~\cite{DBLP:conf/esa/HegerfeldK23} that shows how to compute such convolutions efficiently.

\subsection{Lower bounds}

\subparagraph{Feedback Vertex Set.}

In our lower bounds, we follow the general framework of SETH based reductions for structural parameters debuted by Lokshtanov et al.~\cite{DBLP:journals/talg/LokshtanovMS18}. However, instead of reducing from the $d$-SAT problem, we reduce from the $q$-CSP-$B$ problem defined as follows: Given is a set of $n$ variables for some integer $n$, and a set of $m$ constraints, where each constraint is defined over $q$ variables, and specifies which assignments of values from the set $[B]$ to these variables are allowed. The goal is to decide whether there exists an assignment of values from $[B]$ to all variables, such that all constraints are satisfied.

We base our lower bound on a result by Lampis~\cite{DBLP:journals/siamdm/Lampis20}, that states, informally, that for each fixed value of $B$, and each $\delta > 0$, there exists a value of $q$, such that the $q$-CSP-$B$ problem cannot be solved in time $\Oh((B-\delta)^n)$, unless SETH fails.
Hence, for our lower bound for \Fvsp, we fix $B= 6$, and provide a reduction for each value of $q$, from the $q$-CSP-$6$ problem to \Fvsp. 

Intuitively, we translate each variable into a gadget of constant size, called \emph{path gadget}. We make use of the six states $\stnone$, $\stdisc$, $\stconn$, $\stddisc$, $\stplus$, and $\stdconn$ defined above and translate the intersection of a solution with this gadget into one of these states. We also fix a bijective mapping from these states to the different assignments of value from $[6]$ to the corresponding variable. We also define a so called \emph{constraint gadget} corresponding to each constraint $C$. We define the adjacencies between a constraint gadget and the path gadgets in such a way, that ensures that the states defined in the path gadgets by a solution corresponds to an assignment that satisfies the constraint $C$. 

In order to find an assignment that satisfies all constraints, we will add $m$ copies of a path gadget for each variable $v$, each corresponding to a different constraint, where consecutive copies are connected by bilciques of size $2$, spanning two ``exit vertices'' of each gadget, and two ``entry vertices'' of the following gadget, resulting in a \emph{path sequence} corresponding to each variable $v$.
This results in a grid structure, where each row is a sequence of path gadgets corresponding to a variable $v$, and each column corresponds to a constraint $C$. We attach a corresponding constraint gadget to each column.
  
The symmetric ``biclique'' cuts between consecutive path gadgets ensure that we can build all these sequences by ``spending'' a single label (unit of clique-width) for each path sequence, and a constant number of labels for the rest of the graph, building the whole graph column by column, which bounds the clique-width of the resulting graph. 
We show that these cuts are exactly enough to transition these six states. 

In fact, we add $5n+1$ copies of these sequences of path gadgets for each variable, making each such sequence $(5n+1)m$ long. This ensures that in any solution, there is one copy, where all gadgets on each sequence are assigned the same state. Building on this, we show that the given instance is a satisfiable, if and only if the resulting graph admits a feedback vertex set of some fixed size.

\subparagraph{Connected Feedback Vertex Set.}
For the \Cfvsp problem, we follow the same general schema. However, in order to achieve the base $18$, we combine an acyclicity state with a connectivity state, resulting in $6\cdot 3 = 18$ states per label, where an acyclicity state defines how a label appears in an induced forest, while a connectivity state defines the connectivity of a label in the solution set itself. However, a challenge arises, since states like $(\stnone, \stnone)$ are non-realizable, while the upper bound realizes them though the existential representation technique mentioned above, and hence, cannot be trimmed.

In order to overcome this challenge, we make use of the fact, that connectivity representation from \cite{DBLP:conf/icalp/BojikianK24} is a two-step existential representation, that turns a set of $4$ states into $3$ required states for connectivity. Hence, we make use of the ``fourth'' redundant state, and make careful choice of which $3$ connectivity states to combine with each acyclicity state. This does not only results in $18$ realizable states, but also results in combinations that allow for simpler realization, allowing to realize each state using three ``boundary'' vertices instead of four in a more straightforward approach.

\section{Preliminaries}\label{sec:preliminaries}

For a computable function $f$ and a parameter $k$, we denote by $\ostar(f(k))$ a running time $\Oh(f(k) \cdot n^c)$ for a constant value $c$ that does not depend on $f$, where $n$ is the size of input.
For a positive integer $k$, we define $[k]:=\{1,\dots, k\}$ and $[k]_0 := [k] \cup \{0\}$. We also denote by the square brackets $[P]$ the Iverson bracket, which evaluates to $1$ if the predicate $P$ is true, and to $0$ otherwise.
Finally, for a mapping $\alpha :U\rightarrow V$, and a pair of elements $(a,b)$ with $b\in V$, we define the \emph{extension} $\alpha[a\mapsto b]:U\cup\{a\}\rightarrow V$ using the square brackets as well, where $\alpha[a\mapsto b](a) = b$ and $\alpha[a\mapsto b](u) = \alpha(u)$ for all $u\in U\setminus\{a\}$. For a subset $S\subseteq U$, we denote by $\alpha|_S$ the restriction of $\alpha$ to $S$.

Given a mapping $\weightf\colon U\rightarrow \mathbb{F}$ for some set $U$ and a ring $\mathbb{F}$, that is explicitly defined as a weight function, we define the weight of a set $S\subseteq U$ as $\weightf(S) = \sum_{u\in S} \weightf(u)$.

\subparagraph{Graphs and clique-width.}

In this work we deal with undirected graphs only. We assume that the inputs are simple undirected graphs. However, in the upper bound, we will extend the definition of a clique-expression into mutligraphs for ease of representation.
When we say we \emph{identify} two vertices of a graph, we mean we remove both vertices and add a single vertex adjacent to all their neighbors. In particular, in a multigraph, this might create loops or multiedges between vertices.

A \emph{labeled graph} is a graph $G=(V, E)$ together with a \emph{labeling function} $\lab\colon V\rightarrow \mathbb{N}$. We usually omit the function $\lab$ and assume that it is implicitly given with $G$. We say that $G$ is \emph{$k$-labeled}, if it holds that $\lab(v)\leq k$ for all $v\in V$. A \emph{labeled forest} is a labeled graph that is a forest.
We define a \emph{clique expression} $\mu$ as a well-formed expression defined by the following operations on labeled graphs:
\begin{itemize}
    \item \emph{Introduce vertex} $i(v)$ for $i\in\mathbb{N}^+$. This operation constructs a graph containing a single vertex and assigns label $i$ to this vertex.
    \item The \emph{relabel} operation $\relabel{i}{j}(G)$ for $i,j \in \mathbb{N}^+$, $i\neq j$. This operation changes the labels of all vertices in $G$ labeled $i$ to the label $j$.
    \item The \emph{join} operation $\clqadd i j (G)$ for $i,j\in\mathbb{N}^+, i\neq j$. The constructed graph results from $G$ by adding all edges between the vertices labeled $i$ and the vertices labeled $j$, i.e.
    \[\clqadd{i}{j}(G) = (V, E \cup \{\{u, v\}\colon\lab(u)=i \land \lab(v)=j\}).\]
    \item The \emph{union} operation $G_1 \clqunion G_2$. The resulting graph is the disjoint union of $G_1$ and $G_2$.
\end{itemize}

We denote the graph resulting from a clique expression $\mu$ by $G_{\mu}$, and the constructed labeling function by $\lab_{\mu}$. We associate with a clique expression $\mu$ a syntax tree $\syntaxtree_{\mu}$ (we omit $\mu$ when clear from context) in the natural way, and associate with each node $x\in V(\syntaxtree)$ the corresponding operation. For $x\in V(\syntaxtree)$, the subtree $\syntaxtree_x$ rooted at $x$ induces a subexpression $\mu_x$. We define $G_x = G_{\mu_x}$, $V_x = V(G_x)$, $E_x=E(G_x)$ and $\lab_x = \lab_{\mu_x}$. Given a set of vertices $S\subseteq V$ and a node $x\in \nodes$, we denote $S_x = S\cap V_x$.

We say that a clique expression $\mu$ is a \emph{$k$-expression} if $G_x$ is a $k$-labeled graph for all $x\in V(\syntaxtree)$. We define the clique-width of a graph $G$ (denoted by $\cw(G)$) as the smallest value $k$ such that there exists a $k$-expression $\mu$ with $G_{\mu}$ isomorphic to $G$.
We can assume without loss of generality, that any given $k$-expression defining a graph $G=(V,E)$ uses at most $O(|V|)$ union operations, and at most $O(|V|k^2)$ unary operations~\cite{DBLP:journals/tcs/BergougnouxK19, CourcelleO00}. A linear $k$-expression $\mu$ is a $k$-expression, such that $\syntaxtree_{\mu}$ is a caterpillar, i.e.\ each union node has at least one child that corresponds to an introduce vertex operation.

Finally, we call a clique expression \emph{irredundant}, if for any join operation $\clqadd{i}{j}(\mu')$ in $\mu$ it holds that there are no edges between the vertices labeled $i$ and the vertices labeled $j$ in $G_{\mu'}$.

\subparagraph{Vectors and multisets.}
For two sets $U,R$, a vectors $v\in \mathbb{R}^U$, and an element $x\in U$, we use both notations $v_x$ and $v[x]$ to index the element $x$ in $v$ depending on the context. We use the square-bracket notation $v[x]$ primarily when $R$ is some ring ($\bin$) and $v$ is some table indexed by footprints of our algorithm. We also consider vectors as implicit mappings. Hence, for a set $S\subseteq R$, $v^{-1}(S)$ denotes the set of all indices $x\in U$ such that $v_x \in S$.
Finally, when $R$ is a ring with zero element $0$, we define the \emph{support} of $v$ as the set $U\setminus v^{-1}(0)$, i.e.\ the set of all indices that map to non-zero elements of $R$.

In this work we deal with multisets. Let $U$ be some ground set. We denote by $\mset(U)$ the family of all multisets over $U$. We define the multiplicity function $\#_S : U\rightarrow \mathbb{N}$ for a multiset $S$, that assigns to each element $u\in U$ its multiplicity in $S$. Each multiset is uniquely defined by its multiplicity function. We say that $p$ belongs to $S$, i.e.\ $p\in S$ if it holds that $\#_S(p)\geq 1$.
We define the \emph{union} of two multisets, given by the sum of the multiplicities of their elements, i.e.\ for $R := S\cup T$ it holds that $\#_R(u) = \#_S(u) + \#_T(u)$ for each $u\in U$.
We define the \emph{difference} of two multiset $R = S \setminus T$, where for each $u\in U$ it holds that
$\#_R(u) = \max\{0, \#_S(u) - \#_T(u)\}$.

\subparagraph{Convolutions, lattices and join product.}
Let $\statejoin:S\times S\rightarrow S\cup\{\perp\}$ be a binary operation over a set $S$, where $\perp$ does not belong to $S$, and represents a \emph{bad join}. Given two tables $A,B\in \mathbb{F}^S$ for some ring $\mathbb{F}$, the convolution of $A$ and $B$ over $\statejoin$ (or over $S$ if $\statejoin$ is clear from context) is defined as the table $C = A\stateconv B$, where 
for all $x\in S$ it holds that 
\[C[x] = \sumstack{y,z\in S\\y\statejoin z = x} A[y]\cdot B[z].\]

\begin{observation}\label{obs:iso-same-time}
    Let $U$ and $V$ be two sets with a join operation defined over each of them, and let $\phi$ be an isomorphism between $U$ and $V$ that preserves the join operation, such that both $\phi$ and its inverse can be computed in polynomial time. If the convolution over $V$ can be computed in time $f$, then the convolution over $U$ can be computed in time $\ostar(f+|U|)$.
\end{observation}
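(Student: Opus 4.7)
The plan is to transport the convolution across the isomorphism $\phi$. Given input tables $A, B \in \mathbb{F}^U$, I would first form the pushed-forward tables $A', B' \in \mathbb{F}^V$ by setting $A'[\phi(u)] := A[u]$ and $B'[\phi(u)] := B[u]$ for each $u \in U$. Since $\phi$ is polynomial-time computable and the tables each have $|U| = |V|$ entries, this step runs in time $\ostar(|U|)$. Next, I would invoke the assumed algorithm for convolution over $V$ to compute $C' = A' \stateconv B'$ in time $f$, and finally pull the result back using $\phi^{-1}$ by setting $C[u] := C'[\phi(u)]$ for each $u \in U$, again in time $\ostar(|U|)$. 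Summing the three phases yields the claimed bound $\ostar(f + |U|)$.

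For correctness, it suffices to verify that the table $C$ produced this way coincides with the true convolution $A \stateconv B$ over $U$. Fix $x \in U$; by construction, $C[x] = C'[\phi(x)]$ equals the sum of $A'[y'] \cdot B'[z']$ over all pairs $(y', z') \in V \times V$ with $y' \statejoin z' = \phi(x)$. Reindexing through the bijection $\phi$ by writing $y' = \phi(y)$ and $z' = \phi(z)$, and using that $\phi$ preserves the join, the constraint $y' \statejoin z' = \phi(x)$ becomes exactly $y \statejoin z = x$, and each summand becomes $A[y] \cdot B[z]$, which recovers the defining sum of $(A \stateconv B)[x]$. The only minor subtlety concerns bad joins: since the paper's definition restricts the convolution sum to pairs whose join lies in $S$ (excluding $\perp$), pairs producing $\perp$ are discarded automatically on both sides, and join-preservation of $\phi$ ensures that this discarding is consistent. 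I do not anticipate any real obstacle here; the statement only asserts that the algorithm carries across an isomorphism, and the whole argument amounts to routine bookkeeping.
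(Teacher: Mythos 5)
Your proposal is correct and is exactly the routine transport argument the paper intends (the observation is stated without proof precisely because it reduces to this bookkeeping): push the tables forward along $\phi$, convolve over $V$, and pull back along $\phi^{-1}$, with join-preservation guaranteeing that the reindexed sum — including the discarding of bad joins — matches the defining sum of the convolution over $U$. No issues.
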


We call an ordering $\preceq$ over a set $S$ a join-semilattice, if for all elements $a,b\in S$ there exists a unique least upper bound $a\lor b$ (called the join of $a$ and $b$). Given two tables $T_1,T_2\in \mathbb{F}^{S}$ for some ring $\mathbb{F}$, we define the join product of $T_1$ and $T_2$ as the table
\[
T_1 \joinprod T_2 [s] = \sumstack{a,b\in S\\a\lor b = s} T_1[a]\cdot T_2[b].
\]

\begin{remark}
    The join product of a lattice is the convolution defined by its join operation.
\end{remark}

\section{Clique-width Upper bound}\label{sec:ub}

\subsection{Fusion clique expressions}

We note that, in comparison to the typical settings where clique expressions are defined over simple labeled graphs, we deal with labeled multigraphs along this upper bound. Hence, a join might create a double edge between two vertices, if an edge already exists, resulting in a cycle of length two. However, as mentioned earlier, we can assume that the given expression is irredundant, and therefore, each edge is added only once. Hence, this results in the same graph independent of whether we deal with simple graphs or multigraphs.

In fact, we will use a slightly modified version of clique expressions that will allow deterministic representation at join nodes and fast convolution operations at union nodes.
We call these expressions \emph{fusion clique expressions}, defined over the so called extended labeled graphs, where a $k$-extended labeled graph is a $k$-labeled graph with additionally a single vertex labeled $0$ (called the \emph{zero-vertex}). The operation $i(v)$ create two isolated vertices, $v$ labeled $i$ and $v_0$ labeled $0$. Join and relabel are defined in the same way, whereas for the union operation, after unifying the two extended labeled graphs we identify the two vertices labeled $0$ into a single vertex. 
Therefore, the graph resulting from a fusion clique expression is simply the graph resulting from adding $v_0$ as an isolated vertex to the graph resulting from the original clique expression without any modifications.

Fusion clique expressions are in fact a special case of fusion-tree expressions introduced by Fürer~\cite{DBLP:conf/latin/Furer14}, where we only apply the fusion operation to the label $0$, directly after each union operation, and we never include this label in any join or relabel operations.
The reason we define fusion clique expressions is that they simplify representation allowing to create an additional connected component where we accumulate adjacencies in a representation of a partial solution. This will become clear in the next sections.

Given a $k$-labeled graph $G$, we define the \emph{canonical clique expression} $\mu_G$ of $G$ as the linear clique expression defined as follows: Let $v_1,\dots v_n$ be the vertices of $G$. First, we introduce the vertices $v_1,\dots v_n$ assigning the label $k+i$ to $v_i$, and we unify each vertex $v_i$ with the part introduced so far. After that, for each edge $\{i,j\}\in E(G)$, we add this edge by the join $\clqadd{k+i}{k+j}$. Finally, we relabel each vertex $v_i$ to its final label by calling $\relabel{k+i}{\lab(v_i)}$. Note that the resulting expression is a linear $(n+k)$-expression.

In the rest of this upper bound, we exclusively consider extended labeled graphs and fusion clique expressions. Therefore, we might omit the words extended and fusion, and refer to them as labeled forests and clique expressions respectively.

Let $G = (V, E)$ be the input graph with $n$ vertices and $m$ edges, and $\target$ be the target size of a feedback vertex set. Let $\mu$ be a $k$-expression of $G$, for some value $k\in\mathbb{N}$, and let $\syntaxtree$ be the corresponding syntax tree of $\mu$. Let $\nodes := V(\syntaxtree)$ and $r\in\nodes$ be the root of $\syntaxtree$. We define the ground set $U = [k]$, and call it the set of labels.
Let us also fix an integer $\W$, and a weight function $\weightf:V\rightarrow [\W]$ that both will be chosen later, where $\W$ is bounded polynomially in $n$. We extend $\weightf$ with $\weightf(v_0) = 0$.
Our goal is count (modulo $2$) the number of feedback vertex sets of size $\target$ and weight $\weight$ for each value of $\weight \in [\W n]_0$ in $G_r$ that exclude the vertex $v_0$, which are exactly the feedback vertex sets of size $\target$ and weight $\weight$ in $G$, since $v_0$ is isolated in $G_r$.
This will allow us to apply the isolation lemma, by choose $\W$ large enough, and choosing $\weightf$ uniformly at random from $[\W]$ for each vertex independently.

In fact, instead of counting feedback vertex sets in $G_r$, we will count the number of induced forests of size $\budget$ and weight $\weight$ for all values of $\budget\in[n]_0$ and $\weight \in [n\cdot W]_0$ that contain the vertex $v_0$. Clearly, a set $X\subseteq V$ is a feedback vertex set of size $\target$ and weight $\weight$ in $G_r$ if and only if $V_r\setminus X$ is a forest of size $|V_r| - \target$. Therefore, the number of feedback vertex sets of size $\target$ and weight $\weight$ in $G_r$ that exclude the vertex $v_0$ is equal to the number of forests of size $|V_r| - \target = n+1-\target$ and weight $\weightf(V_r\setminus X)$ in $G_r$ that contain the vertex $v_0$. This motivates the following definition of a partial solution.

\begin{definition}
A \emph{partial solution} $S\subseteq V_x$ at a node $x\in\nodes$ is a vertex set that contains the zero-vertex, such that $S$ induces a forest in $G_x$. Hence, a partial solution is a labeled forest.
\end{definition}

Along this upper bound, we will use the letter $\budget$ to index values in $[n]_0$, and $\weight$ to index values in $[n\cdot \W]_0$. We skip repeating their definitions to avoid redundancy.

\subsection{Connectivity patterns}

Since the number of partial solutions is huge, we cannot keep track of each partial solution individually. Instead, we will keep footprints defined by these partial solutions, and count for each footprints the number of partial solutions of each size and weight having this footprint. We prove that this suffices to count all partial solutions correctly along $\syntaxtree$ using dynamic programming. 

Intuitively, one can represent connectivity, by representing each connected component by its labels, and by representing a partial solution by the set of all representations of all its connected components. However, for acyclicity representation the situation is different, as two connected components with different number of vertices labeled a specific label $i$ are not equivalent, and two partial solutions with different number of copies of the same component are also not compatible. For example let $S,S'$ be two solutions both inducing two connected components in $G_x$, one of them is a single vertex labeled $i$ and the other in $S$ is a single vertex labeled $j$ while in $S'$ is a single edge whose endpoints are both labeled $j$. Then by applying $\clqadd{i}{j}(\mu)$ one gets a valid partial solution for $X$ but not for $X'$. Hence, it seems that keeping multiplicity is necessary for acyclicity representation. However, we show that it suffices to keep multiplicity upper-bounded by two both for the labels appearing in each connected component, and for the connected components themselves. We will use vectors for counting the former, and multisets over these vectors for the latter. We first extend our notation for multisets to capture this multiplicity upper bound.

\begin{definition}
Let $U$ be some ground set. We define $\tset(U) \subseteq \mset(U)$ as the family of multisets with maximal multiplicity 2.
Given a multiset $R\in\mset(U)$ we define the multiset $R\tdown\in\tset(U)$ as the multiset resulting from $R$ by upper bounding the multiplicity of each element $u\in U$ by $2$, i.e.\ $\#_{R\tdown}(u) = \min\{2, \#_R(u)\}$.
We also define the \emph{two-union} operation $\tcup$ over multisets, where $S\tcup T := (S\cup T)\tdown$.
\end{definition}

\begin{definition}\label{def:two-sum}
    Given two vectors $u,v\in Z$, we define the \emph{two-sum} of $u$ and $v$ (denoted $u\tplus v$), as the vector $w\in Z$, where for $j\in[k]_0$ it holds that $w_j = \min\{2, u_j + v_j\}$.
\end{definition}

\begin{definition}\label{def:pattern}
    Let $Z = ([k]_0)^{\{0,1,2\}}$ be the family of all vectors assigning a value $0$, $1$ or $2$ to each label $i\in[k]_0$. A \emph{pattern} $p$ is a multiset of elements from $Z$, with multiplicity at most $2$, such that there exists exactly one vector $v\in Z$ with $v_0 = 1$ and $\#_p(v) = 1$, and it holds that $w_0 = 0$ for all other vectors $w\in p$ different from $v$. 
    We call $v$ the \emph{zero-vector} of $p$ and denoted by $z(p)$.
    We denote by $\Pat \subseteq \mset(Z)$ the family of all patterns.
    We also define the operator $\tot_i: \Pat \rightarrow \mathbb{N}$ for $i\in[k]_0$ as $\tot_i(p) := \sum_{v\in p} v_i$.
    We define the set $\lbs(p) := \{i\in[k]\colon \tot_i(p) > 0\}$.

    Given a labeled graph $H$, we define the \emph{type} of $H$ (denoted $\labv(H)\in Z$) as the vector that assigns to each index $j\in[k]_0$ the number of vertices labeled $j$ in $H$, upper bounded by $2$, i.e.
    \[
        \big(\labv(H)\big)_j := \min\big\{2, \big|\lab_H^{-1}(j)\big|\big\}.
    \]

    Given a labeled forest $F$,
    let $C_1,\dots C_{\ell}$ be the connected components of $F$.
    We define $\pat(F)$ the pattern corresponding to $F$ as 
    $
    \big\langle \labv(C_1), \ldots, \labv(C_\ell)\big\rangle\tdown
    $.
    The condition on the zero-vector is preserved by the assumption that there exists a single vertex labeled $0$ in any (extended) labeled graph $G$.
    Given a family of labeled forests $\mathcal{F}$, we define
    $
    \pat(\mathcal{F}) := \bigdelta_{F\in\mathcal{F}} \big\{\pat(F)\big\}
    $.

    Given a pattern $p$, we define the \emph{canonical forest} $F_{p}$ of $p$ as the labeled forest containing a simple path $C_i$ for each vector $v^{(i)}$ of $p$, where for each $j\in[k]_0$, let $c:=v^{(i)}_j$, then we add $c$ vertices labeled $j$ to $C_i$.
\end{definition}

\begin{figure}[ht]
    \centering
    \begin{subfigure}[b]{.3\textwidth}
        \centering
        \includegraphics[width=.7\textwidth]{media/labeled-graphs.pdf}
        \caption{A $2$-labeled forest $F$}
    \end{subfigure}
    \hfill
    \begin{subfigure}[b]{.35\textwidth}
        \centering
        $\left\langle
        \begin{pmatrix}
            0 \\
            2 \\
            0
        \end{pmatrix},
        \begin{pmatrix}
            1 \\
            1 \\
            1
        \end{pmatrix},
        \begin{pmatrix}
            0 \\
            0 \\
            1
        \end{pmatrix},
        \begin{pmatrix}
            0 \\
            0 \\
            1
        \end{pmatrix}
        \right\rangle
        $

        \vspace{1em}
        \caption{The pattern $p:=\pat_F$}
    \end{subfigure}
    \hfill
    \begin{subfigure}[b]{.3\textwidth}
        \centering
        \includegraphics[width=.7\textwidth]{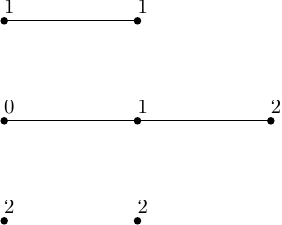}
        \caption{The canonical forest $F_p$.}
    \end{subfigure}
    \hfill
    \caption{A $k$-labeled forest $F$ ($k=2$), its corresponding pattern and the corresponding canonical forest. Note that we upper bound the multiplicity of each label in each connected component by $2$, and the multiplicity of each connected component by $2$ as well.}
    \label{fig:three-subfigs}
\end{figure}

In the following, we show that these patterns carry enough information about partial solutions to correctly count the number of partial solutions using dynamic programming over $\syntaxtree$. However, the number of all patterns is still very large, and hence, it is infeasible to index our dynamic programming scheme with all patterns. Therefore, we will even reduce the family of all patterns into a smaller representative family, that preserves the parity of the number of partial solutions over $\syntaxtree$.

\begin{definition}
    A \emph{clique extension} $\cext$ is a clique expression such that there exists a single clique-operation in $\cext$ with a variable operand $x$. Hence, $\cext$ can be seen as a mapping that takes a clique expression $\mu$ as input, and outputs the clique expression $\cext(\mu)$ obtained by replacing the variable operand $x$ in $\nu$ with $\mu$. Hence, $\syntaxtree_{\cext(\mu)}$ results from $\syntaxtree_{\cext}$ by plugging the syntax tree $\syntaxtree_{\mu}$ at the node corresponding to $x$, i.e.\ $\syntaxtree_{\mu}$ is a subtree of $\syntaxtree_{\cext(\mu)}$.

    Given a clique extension $\tau$, we define $G_{\tau'}$ and $V_{\tau'}$ for a subextension $\tau'$ of $\tau$ in a similar way to clique expressions, where for $v_x$ the node corresponding to the variable $x$ we define $G_{v_x}$ as the empty labeled graph.
\end{definition}

\begin{definition}\label{def:partial-compat-equiv}
    Let $\cext$ be some clique extension and $F'$ be a partial solution of $G_{\cext}$. For a labeled forest $F$, we say that $F$ is \emph{compatible} with $F'$ under $\cext$ (denoted $F\cmptb_{\cext}F'$) if $F\cup F'$ induces an acyclic subgraph in $G_{\cext(\mu_F)}$ where $\mu_F$ is the canonical clique expression of $F$. We say that $F'$ is compatible with a pattern $p$ under $\cext$ if $F_p\cmptb_{\cext}F'$, where $F_p$ is the canonical forest of $p$.

    Given two families of labeled forests $\mathcal{F}_1$, $\mathcal{F}_2$, we say that $\mathcal{F}_1$ is \emph{equivalent} to $\mathcal{F}_2$ ($\mathcal{F}_1\pquiv \mathcal{F}_2$) if it holds for each clique extension $\cext$ and each partial solution $F'$ of $G_{\cext}$ that
    \[
    |\big\{F_1\in\mathcal{F}_1\colon F_1\cmptb_{\cext}F'\big\}|\bquiv 
    |\big\{F_2\in\mathcal{F}_2\colon F_2\cmptb_{\cext}F'\big\}|.
    \]
    We say two labeled forests $F_1, F_2$, are \emph{equivalent} ($F_1\pquiv F_2$), if it holds that $\{F_1\}\pquiv \{F_2\}$, i.e.\ if $F_1$ is compatible with a partial solution $F'$ of an extension $\cext$ if and only if $F_2$ is compatible with $F'$ under $\cext$.
    Finally, we say that two families of patterns $P_1, P_2$ are equivalent, if their families of canonical forests are equivalent, i.e.\ if it holds that 
    $\{F_p\colon p\in P_1\}\pquiv \{F_p\colon p\in P_2\}$.
\end{definition}

It is not hard to see that $\pquiv$ is an equivalence relation over labeled forests. Now we aim to prove some equivalence relations over labeled forests. In order to achieve this, we start by establishing some technical lemmas.

\begin{lemma}\label{lem:ub-cycle-structure}
    Let $\cext$ be a clique extension and $F'$ be a partial solution of $G_{\cext}$. Let $F$ be a labeled forest and $C$ be a simple cycle in $\tilde{G} := G_{\cext(\mu_F)}[F\cup F']$. Then There exists a simple cycle $C'$ in $\tilde{G}$ such that for each connected component $D$ of $C'$ it holds that the vertices of $D\cap C'$ appear as a continuous segment on $C'$, and for each vector $v\in Z$ there exist at most two connected components $D,D'$ of $\tilde{G}$ intersecting $C'$ with $\labv(D) = \labv(D') = v$.
\end{lemma}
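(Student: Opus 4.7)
My plan is to construct $C'$ from $C$ in two stages, each enforcing one of the two required properties via an exchange argument. I interpret ``connected component $D$'' in the lemma as a tree of the underlying forest $F \cup F'$, since the literal reading would render the statement vacuous (any cycle in $\tilde{G}$ lies within a single component of $\tilde{G}$).

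First I address property~1. Among all simple cycles in $\tilde{G}$, I pick $C^{(1)}$ minimizing the total number of maximal tree-segments, summed over all components of $F \cup F'$. If some tree $D$ hosts two segments $P_1, P_2$ of $C^{(1)}$, write $C^{(1)} = P_1 \cdot \gamma \cdot P_2 \cdot \delta$ where $\gamma, \delta$ are the two outside arcs. Since $D$ is connected, there is a path $\rho$ inside $D$ from the end of $P_1$ to the start of $P_2$. The closed walk $W := P_1 \cdot \rho \cdot P_2 \cdot \delta$ lies inside $D$ except for $\delta$. Every internal vertex of $\delta$ has exactly two $W$-neighbors (both in $\delta$), so any simple cycle extracted from $W$ either avoids $\delta$ entirely---in which case it lies inside $D$---or traverses all of $\delta$; either way, it visits $D$ in a single contiguous segment, contradicting the minimality of $C^{(1)}$.

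Second, among cycles satisfying property~1, I pick $C'$ additionally minimizing the number of distinct trees intersected. Suppose three trees $D_1, D_2, D_3$ of the same type $v$ all intersect $C'$. The idea is to reroute $C'$ around $D_3$ through $D_1$. Since $D_1$ and $D_3$ share the same cap-$2$ label multiset, I fix a label-preserving matching between (subsets of) their vertices. The critical subclaim is that the two join-edges of $C'$ incident to $D_3$ admit ``twin'' join-edges in $\tilde{G}$ incident to the matched vertices of $D_1$. When $D_1, D_3 \subseteq V(F)$, this follows from the canonical structure of $\mu_F$: any two $V(F)$-vertices sharing an $F$-label follow the same label trajectory through $\cext$ (since relabels are global) and so receive identical $\cext$-join-edges. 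Given the twin edges, replacing the $D_3$-arc of $C'$ by a detour from the matched entry vertex through $D_1$ to the matched exit vertex produces a closed walk avoiding $D_3$; extracting a simple cycle and reapplying Stage~1 then yields a cycle intersecting strictly fewer same-type trees, contradicting minimality.

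The main obstacle is the twin-edge subclaim beyond the symmetric case---in particular, when the three same-type trees do not all lie on the same side of the plug-in (e.g.\ $D_1 \subseteq V(F)$ and $D_3 \subseteq F'$). Then their label histories in $\cext(\mu_F)$ may diverge substantially, so the global-relabel argument no longer applies directly, and a more careful case analysis on the interleaving of introduce, relabel, and join operations in $\cext$ will be required to secure the twin edges in full generality.
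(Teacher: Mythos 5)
Your two-stage exchange strategy is essentially the paper's proof. Stage~1 (reroute through the tree to merge two segments of the same component) is the paper's first paragraph, phrased as an extremal argument rather than an iterative one; both work. Stage~2 rests on exactly the observation the paper uses: a join edge of $\cext$ incident to a vertex $a$ of one component is also present at any vertex $a'$ of another same-type component carrying the same label, because joins and relabels in a clique expression act on label classes. The only real difference is cosmetic: the paper uses a \emph{single} twin edge to shortcut from the vertex just before the second same-type component directly into the last one, deleting all intermediate components in one step, whereas you bypass one component at a time using two twin edges and a detour through $D_1$; since that detour may collide with the segment of $C'$ already inside $D_1$, your ``extract a simple cycle from the closed walk'' step deserves a word of justification (the paper's shortcut only deletes vertices and so avoids this), but this is a technicality, not a flaw in the approach.

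The ``main obstacle'' you flag at the end is not actually present, and here your reading of the statement is the issue. The lemma's wording (``component of $C'$'', ``components of $\tilde G$'') is admittedly sloppy, but every application of it --- Lemmas~\ref{lem:forest-equiv-pat}, \ref{lem:rep-vec-2-of-3}, \ref{lem:rep-vec-4-weak}, and \ref{lem:clean-pattern} --- invokes it for connected components of $F$, the plugged-in labeled forest, and indeed $\labv(\cdot)$ is only meaningful for subgraphs carrying $\lab_F$. So all same-type components $D_1,D_2,D_3$ lie in $V(F)$, your ``symmetric case'' is the only case, and the global-relabel argument you already gave closes the proof. The asymmetric case $D_3\subseteq F'$ that you (rightly) could not handle never arises. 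As submitted the proposal is incomplete because you stopped there, but the missing case is vacuous under the intended reading, and what remains coincides with the paper's argument.
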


\begin{proof}
Let $D$ be a connected component, whose intersection with $C$ does not appear as a continuous segment. We can assume that the first vertex of $C$ belongs to $D$ but not the last by rotating $C$. Let $C = v_1,\dots v_n$, and let $v_2$ be the last vertex of $D$ appearing on $C$. Then we can replace the part of the cycle between $v_1$ and $v_2$ by a path inside $D$, as $D$ is a connected component, whereas $v_2,\dots,v_n,v_1$ is a path from $v_2$ to $v_1$ that intersects $D$ only in its endpoints. Hence, we get a cycle where vertices of $D$ appear as a continuous segment. Note that we have replaced a whole part of $C$ between two vertices of $D$. Therefore, for some other connected component $D'$ different from $D$ whose vertices already appear as a continuous segment, we have either removed all vertices of $D'$ from $C$, or kept them untouched. Hence, we can repeat this process until all connected components appear as continuous segments.

Now let $v\in Z$, and let $D_1,\dots D_{\ell}$ be the connected components of $\tilde{G}$ that intersect $C$ such that $\labv(D_i) = v$ for all $i\in[\ell]$, given in the order they appear on $C$. Assume that $\ell\geq 3$. Let $w$ be the first vertex of $D_2$ appearing on $C$, and $u$ be the vertex preceding $w$ on $C$. Let $i$ be the label of $w$ in $F$. Note that $\{u,w\}\notin E(F)$, since $D_2$ is a connected component of $F$. Moreover, since $\labv(D_2)=\labv(D_{\ell})$, it must hold that $D_{\ell}$ contains a vertex $w'$ labeled $i$ in $F$. Hence, the edge $\{u,w'\}$ must exist in $\tilde{G}$. We can then replace the part of $C$ between $u$ and $w$ with the edge $\{u,w'\}$, which results in a cycle $C'$ that intersects at most two connected components of $\tilde{G}$ of type $v$. Since we only delete vertices from the cycle, the resulting cycle must preserve both continuous segments and the upper bound on the number of components of each type intersecting the cycle. Hence, we can apply this process for each type $v\in Z$ consecutively, until we get a cycle $C'$ that intersects at most two connected components of $\tilde{G}$ of each type.
\end{proof}

\begin{lemma}\label{lem:forest-equiv-pat}
    It holds for each labeled forest $F$ and $p := \pat(F)$ that $F \pquiv F_p$.
\end{lemma}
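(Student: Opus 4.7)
The plan is to fix an arbitrary clique extension $\cext$ and partial solution $F'$ of $G_{\cext}$, and to prove the biconditional $F \cmptb_{\cext} F' \iff F_p \cmptb_{\cext} F'$ by translating a hypothetical simple cycle on one side into one on the other. The starting point on each side will be \cref{lem:ub-cycle-structure}, which lets us replace an arbitrary cycle by a ``nice'' one whose intersection with each connected component of the ambient induced subgraph is contiguous, and which visits at most two components of any given type $v \in Z$.

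The engine of the translation is the observation that all operations in $\cext$ above the variable node (joins and relabels) act only on labels. Hence, for any label-preserving partial matching between vertices of $V(F)$ and $V(F_p)$, the ``external'' edges incident to matched vertices (those to $V(F')$, or those added between two $V(F)$-vertices by joins of $\cext$) agree on both sides. To apply this, I would match the components of $F$ and $F_p$ visited by the nice cycle by type, which is possible precisely because $\pat(F) = \pat(F_p) = p$: if the nice cycle uses one component of type $v$ on one side, the other side has at least one such component (since $v \in p$); if it uses two, the other side has at least two as well, since the multiplicity cap of two in $p$ exactly encodes that $F$ has at least two components of that type. Within each matched pair, the contiguous segment is translated endpoint by endpoint; the label-count cap of two in $v$ together with connectedness of the target component provides all the flexibility needed to pick distinct endpoint vertices of the required labels, and segment interiors are filled in by arbitrary paths inside the target component.

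Once all segments are translated, external edges persist by the label-preservation observation, so the concatenation yields a closed walk in the other ambient graph; since such a walk contains a simple cycle as a subgraph, the translation is complete. Combining this with \cref{lem:ub-cycle-structure} in both directions proves the biconditional, and hence $F \pquiv F_p$.

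The main obstacle I anticipate is the bookkeeping of the correspondence. Specifically, when an $F$-component of type $v$ has more than two vertices of some label $\ell$ (so $v_\ell = 2$) but the cycle's segment in it uses only one such vertex, one must still pick a corresponding endpoint in the $F_p$-path consistently, without ever reusing the same $F_p$-vertex as endpoint of two disjoint segments of the cycle; the same care is needed when two distinct $F$-components of the same type are matched to two distinct $F_p$-paths. This is exactly why the multiplicity cap in patterns is two rather than one, both for vector entries and for vector multiplicity: two copies always suffice to accommodate the ``enter'' and ``exit'' of a contiguous segment and the at-most-two components of each type allowed by \cref{lem:ub-cycle-structure}. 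Threading these choices carefully through the nice cycle, segment by segment, is the delicate part of the argument, while the rest is a routine label-tracking verification.
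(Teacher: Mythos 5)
Your proposal is correct and follows essentially the same route as the paper's proof: normalize the cycle via \cref{lem:ub-cycle-structure}, fix an injective type-preserving map between the visited components of $F$ and those of $F_p$ (which exists precisely because the multiplicities in $p$ are capped at two, matching the at-most-two components per type guaranteed by that lemma), and replace each maximal contiguous segment by a path in the matched component with same-labeled endpoints, noting that all external edges depend only on labels. The bookkeeping concern you raise is resolved exactly as you suspect: contiguity gives one segment per component, injectivity of the matching keeps the replacement paths disjoint, and the cap of two on vector entries supplies distinct endpoint vertices when a segment enters and exits on the same label.
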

\begin{proof}
    Let $\cext$ be some clique extension, and let $F'$ be a partial solution of $G_{\cext}$. Let $\tilde{G} = G_{\cext(\mu_{F})}$, and $\tilde{G}_p = G_{\cext(\mu_{F_p})}$.
    
    First assume that $F_p \not\cmptb F'$, then there exists a cycle $C$ in $\tilde{G}_p$.
    Let us fix an injective mapping $\iota$ from the connected components of $F_p$ to the connected components of $F$ that maps each connected component of $F_p$ to a connected component of the same type in $F$. This is possible by the definition fo $\pat(F)$. Then one can turn $C$ into a cycle in $\tilde{G}$ by replacing each maximal segment $s$ of $C$ intersecting a connected component $D$ of $F_p$ with a path in $\iota(D)$, whose endpoints have the same labels as the endpoints of $s$. This implies that $F\not\cmptb F'$.

    Now assume that $F \not \cmptb F'$, and let $C$ be a simple cycle in $\tilde G$. By \cref{lem:ub-cycle-structure}, we can assume that the vertices of each connected component of $F$ intersecting $C$ appear as a continuous segment on $C$, and that for each type $v\in Z$ there are at most two connected components of $F$ with $\labv(D) = v$. 
    Let $\mathcal{D}$ be the family of the connected components of $F$ that intersect $C$. Again we define the injective mapping $\iota$ between $\mathcal{D}$ and the connected components of $F_p$ that maps each component in $\mathcal{D}$ to a component of the same type in $F_p$. This is possible, since for each $v \in Z$, $F_p$ contains two components of type $v$ if $F$ contains at least two, and $F_p$ contains one component of this type if $F$ contains one, and since $\mathcal{D}$ contains at most two components of each type.

    Hence, one can turn $C$ into a cycle in $\tilde{G}_p$ by replacing each maximal segment $s$ of $C$ intersecting a component $D\in\mathcal{D}$ with a path in $\iota(D)$, whose endpoints have the same labels as the endpoints of $s$. This implies that $F_p\not\cmptb F'$.
\end{proof}

\begin{corollary}\label{cor:equiv-if-equiv-pat}
    Given two labeled forests $F_1, F_2$ such that $\pat(F_1) \pquiv \pat(F_2)$, then it holds that $F_1$ is equivalent to $F_2$.
\end{corollary}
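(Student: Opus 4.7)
The plan is to derive the corollary directly from Lemma~\ref{lem:forest-equiv-pat} by chaining equivalences through the canonical forests of the two patterns, exploiting the fact that $\pquiv$ is an equivalence relation on labeled forests.

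First, I would unfold the hypothesis. The statement $\pat(F_1) \pquiv \pat(F_2)$, where each side is a single pattern, should be read as the equivalence of the singleton pattern families $\{\pat(F_1)\}$ and $\{\pat(F_2)\}$. By the last part of Definition~\ref{def:partial-compat-equiv}, this unfolds to the equivalence of the corresponding singleton families of canonical forests, namely $\{F_{\pat(F_1)}\} \pquiv \{F_{\pat(F_2)}\}$, which is the same as $F_{\pat(F_1)} \pquiv F_{\pat(F_2)}$ under the singleton-as-forest reading.

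Next I would invoke Lemma~\ref{lem:forest-equiv-pat} on each $F_i$ individually, which yields $F_1 \pquiv F_{\pat(F_1)}$ and $F_2 \pquiv F_{\pat(F_2)}$. Since $\pquiv$ is an equivalence relation on labeled forests (as remarked immediately after Definition~\ref{def:partial-compat-equiv}), transitivity gives the chain
\[
F_1 \;\pquiv\; F_{\pat(F_1)} \;\pquiv\; F_{\pat(F_2)} \;\pquiv\; F_2,
\]
and hence $F_1 \pquiv F_2$, as desired.

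There is really no substantive obstacle here: the entire content of the corollary lies in Lemma~\ref{lem:forest-equiv-pat}, and the remaining work is a formal bookkeeping step that matches the definition of equivalence between singleton families of patterns with the definition for labeled forests. The only point worth being careful about is the mild abuse of notation identifying a single pattern (resp.\ forest) with the singleton family containing it; once this identification is stated explicitly, the three-step chain above completes the argument.
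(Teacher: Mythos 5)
Your proposal is correct and matches the paper's proof essentially verbatim: both unfold the hypothesis $\pat(F_1)\pquiv\pat(F_2)$ into the equivalence of the canonical forests $F_{\pat(F_1)}\pquiv F_{\pat(F_2)}$, apply \cref{lem:forest-equiv-pat} to each $F_i$, and conclude by transitivity of $\pquiv$. No gaps.
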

\begin{proof}
    Let $p_1= \pat(F_1)$, and $p_2=\pat(F_2)$. It holds by definition that $F_{p_1} \pquiv F_{p_2}$, and by \cref{lem:forest-equiv-pat} that $F_{p_1} \pquiv F_1$ and $F_{p_2} \pquiv F_2$. It follows by transitivity of $\pquiv$ that $F_1 \pquiv F_2$.
\end{proof}

\subsection{Nice patterns and Pattern representation}

So far we have shown that patterns already carry enough information to represent partial solutions over $\syntaxtree$. However, since the number of patterns is very large for our algorithm, we define a special family of patterns, called the family of \emph{very nice patterns} $\CSP$ of size $6^k$. We show that one can find for any family of patterns representing a family of partial solutions in $G_x$ for some node $x\in\nodes$, a family of very nice patterns that preserves the parity of the number extensions of these partial solutions into solutions in the whole graph $G$. This allows us to 
restrict the dynamic programming tables to patterns in $\CSP$.

\begin{definition}
    A pattern $p$ is \emph{nice}, if every vector of $p$ different from the zero-vector is a unit vector.
    A pattern $p$ is \emph{very nice}, if it is nice and it holds in addition for each $i\in[k]$ that 
    $\sum_{v\in p} v_i \leq 2$.
    We denote by $\CSP\subseteq \Pat$ the family of all very nice patterns.
\end{definition}

\begin{definition}\label{def:csp-states-bijection}
    We define the set of states $\simpstates=\{\stnone, \stdisc,\stddisc, \stconn, \stplus, \stdconn\}$, where $\stconn$ stands for Connectivity and $\stdisc$ stands for disconnectivity. The star stands for double (dis-)connectivity, $C^+$ stands for the combination of both, and $\stnone$ stands for neither.

    Let $\mathcal{S} = \simpstates^{k}$ be the family of all vectors that assign a state $s\in S$ to each label, and let $\phi_{CS}$ be the canonical bijection from $\CSP$ to $\states$ that assigns to a pattern $p\in \CSP$ the vector $s:=\phi_{CS}(p)$, where for $i\in[k]$ the state $s_i$ as defined as follows:
    \begin{itemize}
        \item $s_i = \stnone$ if $v_i = 0$ for all $v\in p$,
        \item $s_i = \stdisc$ if $\#_p(\langle\idv_i\rangle) = 1$, and $z(p)_i = 0$,
        \item $s_i = \stddisc$ if $\#_p(\langle\idv_i\rangle) = 2$, and $z(p)_i = 0$,
        \item $s_i = \stplus$ if $\#_p(\langle\idv_i\rangle) = 1$, and $z(p)_i = 1$,
        \item $s_i = \stconn$ if $\#_p(\langle\idv_i\rangle) = 0$, and $z(p)_i = 1$,
        \item $s_i = \stdconn$ if $\#_p(\langle\idv_i\rangle) = 0$, and $z(p)_i = 2$.
    \end{itemize}
\end{definition}

\begin{observation}\label{obs:csp-states}
    It follows from the bijection $\phi_{CS}$ that $|\CSP| = 6^k$.
\end{observation}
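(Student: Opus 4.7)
The plan is to verify directly that the map $\phi_{CS}$ defined in \cref{def:csp-states-bijection} is a bijection between $\CSP$ and $\simpstates^k$, from which $|\CSP| = 6^k$ follows immediately since $|\simpstates| = 6$. The key structural observation is that a very nice pattern $p$ is completely determined by two pieces of data: the zero-vector $z(p)$ and, for each label $i\in[k]$, the multiplicity $\#_p(\langle\idv_i\rangle)$ of the unit vector at position $i$. Indeed, by very-niceness, every vector of $p$ other than $z(p)$ is a unit vector $\idv_i$ for some $i\in[k]$ (note that $i=0$ is excluded because only the zero-vector has a nonzero entry at index $0$), so listing multiplicities of the unit vectors at positions $1,\ldots,k$ together with $z(p)$ reconstructs $p$ completely.

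Next I would argue that for each label $i\in[k]$, the pair $(a_i, b_i) := (z(p)_i, \#_p(\langle\idv_i\rangle))$ ranges independently over exactly six values. Since $p$ is a multiset with multiplicity at most $2$, we have $b_i\in\{0,1,2\}$, and since $z(p)_i\in\{0,1,2\}$ by definition of the vectors in $Z$, both values lie in $\{0,1,2\}$. The very-niceness condition $\sum_{v\in p}v_i \leq 2$ reads exactly as $a_i + b_i \leq 2$, so the admissible pairs are $(0,0),(0,1),(0,2),(1,0),(1,1),(2,0)$, and these correspond one-to-one to the six states $\stnone, \stdisc, \stddisc, \stconn, \stplus, \stdconn$ according to the case analysis in \cref{def:csp-states-bijection}. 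Since the constraints decouple across labels, choices at distinct $i\in[k]$ are independent.

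From this, injectivity of $\phi_{CS}$ is immediate: $\phi_{CS}(p)$ records $(a_i, b_i)$ for every $i\in[k]$, and those determine $z(p)$ (whose $0$-th entry is fixed to $1$) and all multiplicities $\#_p(\langle\idv_i\rangle)$, hence $p$ itself. For surjectivity, given any $s\in\simpstates^k$ I would construct $p$ by setting $z(p)_0 := 1$, $z(p)_i := a_i$ and $\#_p(\langle\idv_i\rangle) := b_i$ according to the table above, verify that $p$ so defined is a valid pattern (only $z(p)$ has a nonzero entry at index $0$, multiplicities are at most $2$, and the sum bound $\sum_{v\in p}v_i = a_i+b_i \leq 2$ holds), and observe that $p$ is very nice by construction. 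The conclusion $|\CSP| = 6^k$ then follows.

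I do not foresee any real obstacle here; the only minor care point is to make sure the $i=0$ coordinate is handled correctly, namely that no non-zero-vector of a very nice pattern can equal $\idv_0$, so the six cases indexed by a label $i\in[k]$ truly exhaust the possibilities without overcounting contributions from the zero coordinate.
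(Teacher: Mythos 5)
Your proposal is correct and matches the paper's (implicit) argument: the paper simply asserts that $\phi_{CS}$ is a bijection in \cref{def:csp-states-bijection} and reads off $|\CSP|=6^k$, while you spell out the verification that a very nice pattern is determined by the pairs $(z(p)_i,\#_p(\langle\idv_i\rangle))$ and that the constraint $z(p)_i+\#_p(\langle\idv_i\rangle)\leq 2$ yields exactly the six states per label. No gaps; this is the same counting, just with the bijectivity check made explicit.
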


Even though we index the dynamic programming tables by the family $\CSP$, we will use the bijection $\phi_{CS}$ implicitly, by calling $s_i$ the state of the label $i$ in $p$. We also use this bijection to define a fast convolution scheme that will allow us to process a union node more efficiently.

Now we show that for any family of patterns, one can efficiently find an \emph{equivalent} family of very nice patterns.
In the rest of this subsection, let $p_1, p_2, p_3, p_4 \in \Pat$ be four patterns, and let $v_1, v_2, v_3 \in Z$, and $R\subseteq Z$ such that
\begin{itemize}
    \item $p_1 = \langle v_1, v_2 \tplus v_3\rangle\tcup R$,
    \item $p_2 = \langle v_2, v_1 \tplus v_3\rangle\tcup R$,
    \item $p_3 = \langle v_3, v_1 \tplus v_2\rangle\tcup R$, and
    \item $p_4 = \langle v_1 \tplus v_2 \tplus v_3\rangle\tcup R$.
\end{itemize}

\begin{lemma}\label{lem:rep-vec-2-of-3}
    It holds for each clique extension $\cext$ and each partial solution $F'$ of $G_{\cext}$ that if $F'$ is compatible with one of $p_1,p_2,p_3$, then it is compatible with at least two of them.
\end{lemma}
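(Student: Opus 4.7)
My plan is to reduce the three compatibility questions to a single cycle analysis in a common auxiliary graph. By \cref{cor:equiv-if-equiv-pat}, compatibility depends only on the pattern, so I may replace the canonical forests $F_{p_j}$ by any surrogate forests $F^*_j$ satisfying $\pat(F^*_j) = p_j$. The plan is to construct three such forests on a \emph{common} labeled vertex set that differ by exactly one edge, which will make the extension graphs $G_{\cext(\mu_{F^*_j})}$ agree on vertices, labels, and all edges contributed by $\cext$.

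Concretely, I will take disjoint vertex sets $V_1, V_2, V_3$ with $(v_i)_\ell$ vertices of label $\ell$ in $V_i$ (using the uncapped counts), arbitrary Hamilton paths $P_1, P_2, P_3$ on these sets, and a forest $F_R$ on further fresh vertices with $\pat(F_R) = R$. Then I set
\[
F^*_j := P_1 \cup P_2 \cup P_3 \cup F_R \cup \{e_j\},
\]
where $e_1, e_2, e_3$ are single edges connecting the pairs $(P_2, P_3)$, $(P_1, P_3)$, $(P_1, P_2)$ respectively. A short verification, using that $\labv$ caps at $2$ and tracking where the label-$0$ vertex resides in each pattern, shows $\pat(F^*_j) = p_j$ as required.

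Let $H$ be the graph of $F'$ together with all edges that $\cext$ adds on our common labeled vertex set, and set $H^* := H \cup P_1 \cup P_2 \cup P_3 \cup F_R$. Since $H \cup F^*_j = H^* \cup \{e_j\}$, compatibility of $p_j$ with $F'$ is equivalent to acyclicity of $H^* \cup \{e_j\}$. If $H^*$ is itself cyclic, none of the three patterns is compatible and the statement holds trivially. Otherwise $H^*$ is a forest, and letting $A, B, C$ be the components of $H^*$ containing $P_1, P_2, P_3$ respectively, pattern $p_1$ is incompatible iff $B = C$, $p_2$ iff $A = C$, and $p_3$ iff $A = B$.

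The main combinatorial observation---and the only non-routine step---is the transitivity of equality on $\{A, B, C\}$: the number of true equalities among $\{A = B,\ A = C,\ B = C\}$ is always $0$, $1$, or $3$, never exactly $2$. Consequently the number of compatible patterns in $\{p_1, p_2, p_3\}$ is $3$, $2$, or $0$, and in particular never equal to $1$, which is precisely the claim.
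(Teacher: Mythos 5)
Your proof is correct, and it takes a genuinely different route from the paper's. The paper argues by contradiction: assuming $F'$ is compatible with $p_1$ but with neither $p_2$ nor $p_3$, it normalizes the two witnessing cycles via \cref{lem:ub-cycle-structure} and reroutes their segments through the components of $G_1$ to manufacture a cycle there. You instead realize all three patterns by forests on one \emph{common} labeled vertex set that differ in a single edge $e_j$, so that all three extension graphs become $H^*\cup\{e_j\}$ for one fixed $H^*$; compatibility of $p_j$ then reduces to whether the endpoints of $e_j$ lie in distinct components of the forest $H^*$, and the claim is just the transitivity of ``same component'' on the three components $A,B,C$ (the number of equalities among the three pairs is never exactly two). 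The argument hinges on two facts you correctly identify but should state explicitly: first, \cref{lem:forest-equiv-pat} licenses replacing the canonical forests by any surrogates with the same pattern; second, every edge contributed by $\cext$ (or by the canonical expression's joins) depends only on vertex labels, never on the edges of the forest plugged in at the variable, so $H^*$ really is independent of $j$ — this also remains valid in the multigraph setting, where a parallel edge simply witnesses that the two endpoints already share a component. Your approach buys a shorter and more transparent proof that dispenses with the cycle surgery of \cref{lem:ub-cycle-structure}, and the same picture yields \cref{lem:rep-vec-4-weak} for free (adding two of the three edges $e_j$ is acyclic iff $A,B,C$ are pairwise distinct). One small caveat: the edges $e_j$ need nonempty $V_1,V_2,V_3$ to attach to, i.e., $v_1,v_2,v_3$ must be nonzero vectors; this holds in every invocation of the lemma through $\redind$ (the paper's proof tacitly assumes the same), but deserves a sentence.
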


\begin{proof}
    For $i\in[3]$, let $F_i$ be the canonical forest of $p_i$, and let $G_i = G_{\cext(\mu_{F_i})}[F_i\cup F']$. 
    Finally, Let $D_i$ be the component of $G_i$ corresponding to $v_i$, and $D'_i$ be the component corresponding to $v_{i'} \tplus v_{i''}$, where $i'$ and $i''$ are the two different values in $[3]$ different from $i$.
    
    For the sake of contradiction, assume that $F'$ is compatible with $p_1$, but it is not compatible with $p_2$ or with $p_3$. We will show that this leads to a contradiction.
    The cases where $F'$ is compatible with $p_2$ only, or with $p_3$ only are symmetric, which would prove the lemma.

    Hence, we assume that $G_2$ and $G_3$ contain cycles, but $G_1$ doesn't. Let $C_2, C_3$ be simple cycles in $G_2$ and $G_3$ respectively. If $C_2$ does not intersect $D'_2$, then it can be turned into a cycle in $G_1$, since each connected component of $G_2$ other than $D'_2$ is a subgraph of a connected component of $G_1$. Formally speaking, such a cycle can be turned into a cycle in $G_1$ by replacing each segment of the cycle, intersecting the component corresponding to $v_2$ by the same part in the component corresponding to $v_2 \tplus v_3$ in $G_1$. 
    Hence, we can assume that $C_2$ intersects $D'_2$. By \cref{lem:ub-cycle-structure}, we assume that $C_2$ starts with a vertex of $D'_2$ and only leaves $D'_2$ once, i.e.\ the vertices of the intersection form a prefix of $C_2$.
    Let $u_2$ and $u'_2$ be the first and the last vertices on $C_2$ that belong to $D'_2$, and let $i_2 = \lab(u_2)$ and $j_2 = \lab(u'_2)$.
    Hence, the rest of $C_2$ builds a path from $u_2$ to $u'_2$ that intersects $D_2$ only in its endpoints.
    If $(v_1)_{i_2}$ and $(v_1)_{j_2}$ are both positive (or $(v_3)_{i_2}$ and $(v_3)_{j_2}$ are both positive), then we can replace the path between $u_2$ and $u'_2$ with a path in $D_1$ (or $D'_1$ resp.) getting a cycle in $G_1$. Hence, we can assume that $(v_1)_{i_2} > 0$ and $(v_3)_{j_2} > 0$ (as the other case is symmetric).
    Analogously, we assume that $C_3$ intersects $D'_3$, and that all vertices of $D'_3$ in $C_3$ appear as a prefix of $C_3$ starting with $u_3$ and ending in $u'_3$ labeled $i_3$ and $j_3$ respectively. This already implies that $G_3$ contains a path from $u_3$ to $u'_3$ that intersects $D'_3$ only in its endpoints. Moreover we can assume that $(v_1)_{i_3} > 0$ and $(v_2)_{j_3} > 0$.

    Note that $u'_2$ and $u'_3$ correspond to two different vertices $w_2,w_3\in D'_1$ respectively, where $w_2$ has the same label as $u'_2$ and $w_3$ has the same label as $u'_3$. Moreover, $G_1$ contains paths $P'_2$ and $P'_3$ from $D_1$ to $w_2$ and $w_3$ respectively. Since $D'_1$ is a connected component, the concatenation of these two paths with the path between $w_2$ and $w'_2$ in $D'_1$ yields a cycle in $G_1$ which contradicts our assumption that $F'$ is compatible with $p_1$ but not with $p_2$ or $p_3$.
\end{proof}

\begin{lemma}\label{lem:rep-vec-4-weak}
    It holds for each clique extension $\cext$ and each partial solution $F'$ of $G_{\cext}$ that $F'$ is compatible with all patterns $p_1, p_2$ and $p_3$, if and only if it is compatible with $p_4$ as well.
\end{lemma}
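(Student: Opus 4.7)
My plan is to prove both implications of the biconditional by contrapositive, via cycle analysis in the spirit of the proof of \cref{lem:rep-vec-2-of-3} and relying on \cref{lem:ub-cycle-structure}. For $i\in\{1,2,3,4\}$, let $\tilde{G}_i := G_{\cext(\mu_{F_i})}[F_i\cup F']$, where $F_i$ is the canonical forest of $p_i$. The non-shared components of $F_m$ (for $m\in\{1,2,3\}$) are $D_m$ of type $v_m$ and $D'_m$ of type $v_{m'}\tplus v_{m''}$ (with $\{m,m',m''\}=\{1,2,3\}$), while $F_4$ has the unique non-shared component $D_4$ of type $v_1\tplus v_2\tplus v_3$; the $R$-components occur identically in all four forests.

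For the direction that compatibility with $p_1,p_2,p_3$ implies compatibility with $p_4$, I argue the contrapositive. Given a cycle $C$ in $\tilde{G}_4$, I first normalize it through \cref{lem:ub-cycle-structure} so that $C$ meets each component of $F_4$ in a contiguous segment. Since $D_4$ is the only component of its type, $C$ contains at most one segment in $D_4$. If $C$ avoids $D_4$, it lives in the shared parts and is already a cycle of $\tilde{G}_1$. Otherwise $C$ contains a unique $D_4$-segment $s$ with endpoints of labels $i$ and $j$, both in $\operatorname{supp}(v_1+v_2+v_3)$, say $i\in\operatorname{supp}(v_a)$ and $j\in\operatorname{supp}(v_b)$. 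A pigeonhole choice of $m\in\{1,2,3\}$—take $m\neq a$ when $a=b$, and the third index when $a\neq b$—makes $D'_m$ contain both labels $i$ and $j$, so replacing $s$ by a path in $D'_m$ between a label-$i$ and a label-$j$ vertex and keeping the external portion $C\setminus s$ (which lives in shared parts of $\tilde{G}_4$ and $\tilde{G}_m$) produces a cycle in $\tilde{G}_m$, contradicting compatibility with $p_m$.

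For the converse direction, compatibility with $p_4$ implies compatibility with each $p_m$. Again I take the contrapositive: given a cycle $C$ in some $\tilde{G}_m$, WLOG $m=1$, normalized as above. If $C$ avoids both $D_1$ and $D'_1$, it transfers verbatim to $\tilde{G}_4$. If $C$ meets exactly one of $D_1,D'_1$, its unique segment has endpoint labels in $\operatorname{supp}(v_1)\cup\operatorname{supp}(v_2\tplus v_3)\subseteq\operatorname{supp}(v_1\tplus v_2\tplus v_3)$, so a matching path exists in $D_4$, and together with the external portion of $C$ it produces a cycle in $\tilde{G}_4$.

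The delicate remaining case is when $C$ intersects both $D_1$ and $D'_1$ in single segments $s_1,s_2$ with junction labels $i_1,j_1$ and $i_2,j_2$, connected by external sub-paths $Q_1, Q_2$. Here $D_4$ may not provide enough label copies to accommodate both $s_1$ and $s_2$ as vertex-disjoint paths inside $D_4$. To sidestep this obstruction, I plan to exploit that all external adjacencies in $\tilde{G}_m$ are determined purely by labels, not by vertex identities: the first edge $u_2-x$ of $Q_1$ already implies that $x$ is adjacent to every label-$j_1$ vertex in $\tilde{G}_1$, and analogous statements hold at each of the other three junctions of $C$. A case analysis on whether the four junction labels coincide among themselves and on whether each lies in $\operatorname{supp}(v_1)$ or in $\operatorname{supp}(v_2\tplus v_3)$ will either exhibit a shorter cycle in $\tilde{G}_1$ that uses at most one of $D_1, D'_1$ (reducing to the previous case) or build a cycle in $\tilde{G}_4$ directly by joining a single $D_4$-segment with one of $Q_1, Q_2$ through such a label-preserving shortcut edge. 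I expect that verifying at least one of these reductions succeeds in every configuration of junction labels will be the technical crux of the argument.
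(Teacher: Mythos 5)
Your overall strategy coincides with the paper's: both implications by contraposition, normalizing cycles via \cref{lem:ub-cycle-structure} and transplanting forest segments between components whose types dominate the relevant labels. Your treatment of the direction ``a cycle in $\tilde G_4$ yields a cycle in some $\tilde G_m$'' is complete and is essentially the paper's argument for that direction (the paper routes the $D_4$-segment either into some $D_j$ or into $D'_{j''}$, while your pigeonhole always lands in some $D'_m$; both work).

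The one genuine gap is that you leave the two-segment case of the converse direction --- a cycle $C$ in $\tilde G_1$ meeting both $D_1$ and $D'_1$ --- as an explicitly unverified plan. That case does close, and more simply than the case analysis you anticipate: you need neither to fit both segments into $D_4$ nor to compare all four junction labels. Take just one external arc, say $Q_1$, whose first vertex $w$ is adjacent (adjacency being determined by labels, as you note) to \emph{every} vertex of the plugged-in forest carrying label $j_1$, and whose last vertex $w'$ is adjacent to every vertex carrying label $i_2$. Since $(v_1)_{j_1}\ge 1$ and $(v_2\tplus v_3)_{i_2}\ge 1$, if $j_1=i_2=\ell$ then $(v_1\tplus v_2\tplus v_3)_{\ell}\ge 2$; hence $D_4$ always contains two \emph{distinct} vertices $a$ of label $j_1$ and $b$ of label $i_2$. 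Then $a,Q_1,b$, closed up by the $b$--$a$ path inside the connected tree component $D_4$, is a cycle of $\tilde G_4$, and you are done. It is worth noting that the paper's own write-up of this direction is a one-sentence segment-replacement claim that silently passes over exactly the difficulty you isolated, so your instinct that something needs to be said here is sound; it just needs this single observation rather than an exhaustive case distinction.
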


\begin{proof}
    For $i\in[4]$, let $G_i := G_{\cext(\mu_{F_i})}[F_i\cup F']$. For $i\in[3]$, let $D_i$ be the connected component of $G_i$ corresponding to $v_i$, and $D'_i$ be the component corresponding to $v_{i'} \tplus v_{i''}$, where $i'$ and $i''$ are the two different values in $[3]$ different from $i$. Let $D_4$ be the connected component of $G_4$ corresponding to $v_1 + v_2 + v_3$.
    First assume that there exists $i\in[3]$ such that $F'$ is not compatible with $p_i$, and let $C$ be a cycle in $G_i$. Then we can turn $C$ into a cycle in $G_4$ by replacing any segment of $C$ in $D_i$ or $D'_i$ with a corresponding segment of $D_4$ that starts and ends with vertices of the same label as the original segment.

    For the other direction, assume that $F'$ is compatible with all $p_1$, $p_2$ and $p_3$, and for the sake of contradiction, assume that $F'$ is not compatible with $p_4$. Let $C$ be a cycle in $G_4$. If $C$ does not intersect $D_4$, then it is not hard to see that this can be turned into a cycle in $G_i$ for all $i\in[3]$ disjoint from $D_i$ and $D'_i$, since all four graphs agree outside of these components. Hence, we can assume that $C$ intersects $D_4$. By \cref{lem:ub-cycle-structure}, we can assume that the vertices of $D_4$ in $C$ appear as a prefix of $C$. Let $u$ be the first vertex of this segment and $u'$ be the last vertex, and let $\ell$ and $\ell'$ be the labels of $u$ and $u'$ respectively. If there exists $j\in[3]$ with $v_j\geq \idv_{\ell} + \idv_{\ell'}$, then we can replace the path between $u$ and $u'$ with a path in $D_j$ between two vertices of the same labels as $u$ and $u'$ getting a cycle in $G_j$. Otherwise, there exist two different values $j,j'\in[3]$, such that $(v_j)_{\ell} > 0$ and $(v_{j'})_{\ell'} > 0$. Let $j''$ be the unique value in $[3]\setminus\{j,j'\}$. Then $D'_{j''}$ contains two different vertices $w$ and $w'$ labeled $\ell$ and $\ell'$ respectively. Hence, we can replace the path between $u$ and $u'$ with a path in $D'_{j''}$ between $w$ and $w'$, which results in a cycle in $G_{j''}$. This shows that $G_{j''}$ contains a cycle, which contradicts the assumption that $F'$ is compatible with $p_1, p_2$ and $p_3$.
\end{proof}

\begin{corollary}\label{cor:rep-vec}
    It holds that $\{p_1\}\pquiv \{p_2\} \Delta \{p_3\} \Delta \{p_4\}$.
\end{corollary}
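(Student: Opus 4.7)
The plan is to reduce the desired equivalence to a concrete parity identity driven by the two preceding lemmas. Fix an arbitrary clique extension $\cext$ and an arbitrary partial solution $F'$ of $G_{\cext}$, and set $c_i \in \{0,1\}$ to be the indicator of whether $F'$ is compatible with $p_i$ under $\cext$, for $i\in[4]$. Unfolding the definitions, the parity of the number of patterns in the multiset symmetric difference $\{p_2\}\Delta\{p_3\}\Delta\{p_4\}$ that are compatible with $F'$ is exactly $c_2 + c_3 + c_4$ modulo $2$, even if some of $p_2, p_3, p_4$ happen to coincide, since coinciding patterns share the same indicator and cancel out in pairs. Hence proving $\{p_1\}\pquiv\{p_2\}\Delta\{p_3\}\Delta\{p_4\}$ amounts to establishing the single congruence
\[
c_1 \equiv c_2 + c_3 + c_4 \pmod{2}
\]
for every choice of $\cext$ and $F'$.

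The first ingredient is \cref{lem:rep-vec-2-of-3}. Although its statement singles out $p_1$, its proof is symmetric under permutations of the triple $(v_1,v_2,v_3)$, so I would invoke it in the stronger form: if any one of $p_1, p_2, p_3$ is compatible with $F'$, then at least two of them are. This forces $c_1 + c_2 + c_3 \in \{0,2,3\}$. The second ingredient is \cref{lem:rep-vec-4-weak}, whose forward direction shows that any incompatibility among $p_1, p_2, p_3$ propagates to $p_4$, while its converse covers the all-compatible case; together these yield $c_4 = c_1 c_2 c_3$.

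With these two facts in hand, the remaining work is a short three-case analysis on the value of $c_1+c_2+c_3$. When $c_1 = c_2 = c_3 = 0$, we get $c_4 = 0$, and both sides of the congruence vanish. When $c_1 + c_2 + c_3 = 2$, we again get $c_4 = 0$; since exactly two of $c_1, c_2, c_3$ equal $1$, the identity $c_2 + c_3 \equiv c_1 \pmod 2$ is immediate. When $c_1 = c_2 = c_3 = 1$, we get $c_4 = 1$, and $1 + 1 + 1 \equiv 1 \pmod 2$ matches $c_1 = 1$.

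I do not expect a genuine obstacle in this step, as the corollary is purely a bookkeeping consequence of the two lemmas. The only point worth stating explicitly in the write-up is the symmetry extension of \cref{lem:rep-vec-2-of-3} to all three of $p_1, p_2, p_3$, which is transparent from the exchangeability of the roles of $v_1, v_2, v_3$ in the hypothesis of that lemma, so no additional argument is needed beyond observing it.
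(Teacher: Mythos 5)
Your proposal is correct and follows essentially the same route as the paper: both arguments amount to showing that $F'$ is compatible with an even number of the four patterns, by combining \cref{lem:rep-vec-4-weak} (which gives $c_4=c_1c_2c_3$) with \cref{lem:rep-vec-2-of-3} (which rules out exactly one compatible pattern among $p_1,p_2,p_3$), and both note that coinciding patterns cancel in pairs under the symmetric difference. Your only superfluous worry is the ``symmetry extension'' of \cref{lem:rep-vec-2-of-3}: its statement is already symmetric in $p_1,p_2,p_3$, so no extension is needed.
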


\begin{proof}
    Let $\cext$ be some clique extension, and let $F'$ be a partial solution of $G_{\cext}$. We show that $F'$ is compatible with an even number of patterns among $p_1, p_2, p_3, p_4$. The proof follows from the definition of representation.

    If $F'$ is compatible with none of the patterns, or all of them, then we are done. Otherwise, $F'$ must miss $p_4$ by \cref{lem:rep-vec-4-weak}, and by the same lemma, at least one of $p_1, p_2$ and $p_3$. But since $F'$ is compatible with at least one of them, it holds by \cref{lem:rep-vec-2-of-3} that $F'$ is compatible with at least two of them. Hence, $F'$ is compatible with exactly two of $p_1, p_2, p_3$. It follows in all cases that $F'$ is compatible with an even number of patterns out of $p_1, p_2, p_3, p_4$. Therefore, $F'$ is compatible with $p_1$ if and only if it is compatible with an odd number of patterns out of $p_2, p_3, p_4$. Note that the symmetric difference does not change the parity of the number of compatible patterns, since we only remove pairs of identical patterns.
\end{proof}

\begin{procedure}
    Given a pattern $p\in \Pat$, a vector $v\in p$ different from $z(p)$ and an index $j\in[k]$, we define the operation $\redind(p,v,j)$ as follows: 
    if $v_j = 0$, or $v$ is a unit vector, then we return $\{p\}$.
    Otherwise, let $z := z(p)$, $v_0 := v - \idv_j$, and $R := p\setminus\langle v, z\rangle$. 
    Then the operation returns the family $\redind(p_2, v_0, j)\Delta\{p_3\}\Delta\{p_4\}$, where $p_2, p_3, p_4$ are defined as follows:
    \begin{itemize}
        \item $p_2 = \langle z \tplus \idv_j, v_0\rangle\tcup R$,
        \item $p_3 = \langle z \tplus v_0, \idv_j\rangle\tcup R$, and
        \item $p_4 = \langle z \tplus v_0 \tplus \idv_j\rangle\tcup R$.
    \end{itemize}
\end{procedure}

\begin{lemma}\label{lem:reduce-index}
    The operation $\redind(p,v,j)$ runs in polynomial time and outputs a family of patterns $P$ equivalent to $p$, where $|P|\leq 5$. Moreover, it holds for each $p'\in P$ that $p'$ results from $p$ by replacing $v$ with either the singleton $\idv_j$ or with a vector $v'$ that results from $v$ by setting $v'_j = 0$, and by possibly adding positive values to $z(p)$.
\end{lemma}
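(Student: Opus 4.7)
The plan is to argue by induction on $v_j$, which strictly decreases (by one) with each recursive call while the non-recursive work is a single application of \cref{cor:rep-vec}. Since $v_j \in \{0,1,2\}$ for every $v \in Z$, the recursion bottoms out within at most two levels; this immediately yields polynomial running time. Writing $T(c)$ for the maximum output size when $v_j = c$ and $v$ is not a unit vector, the recurrence $T(c) \leq T(c-1) + 2$ with $T(0) \leq 1$ gives $|P| \leq T(v_j) \leq 5$.

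For the equivalence claim, the key syntactic observation is that in the recursive branch (where $v_j \geq 1$ and $v$ is not a unit vector), we have $(v_0)_j = v_j - 1 \leq 1$, so the two-sum $v_0 \tplus \idv_j$ is not saturated and equals $v$. Consequently, \cref{cor:rep-vec} applied with $v_1 := z$, $v_2 := v_0$, $v_3 := \idv_j$, and the multiset $R$ chosen by the procedure has its distinguished pattern $p_1 = \langle v_1, v_2 \tplus v_3 \rangle \tcup R$ coinciding with $p$ itself. Hence $\{p\} \pquiv \{p_2\} \Delta \{p_3\} \Delta \{p_4\}$. Invoking the inductive hypothesis on the strictly smaller call $\redind(p_2, v_0, j)$ (which produces a family equivalent to $\{p_2\}$) and using that $\pquiv$ respects symmetric differences, since it is defined modulo $2$, we conclude that the returned family is equivalent to $\{p\}$.

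The structural claim is proved in the same induction. In the one-step transformation, $p_3$ arises from $p$ by replacing $v$ with the singleton $\idv_j$ while augmenting the zero-vector by the positive vector $v - \idv_j$; $p_4$ arises from $p$ by deleting $v$ (formally, replacing it by the zero vector, which trivially satisfies $v'_j = 0$) while augmenting the zero-vector by $v$. For each pattern $q$ returned by $\redind(p_2, v_0, j)$, the inductive hypothesis guarantees that $q$ is obtained from $p_2$ by replacing $v_0$ with either $\idv_j$ or some $v''$ satisfying $v''_j = 0$, together with further positive additions to the zero-vector of $p_2$. Composing with the single-step modification $p \to p_2$ (which replaces $v$ by $v_0$ and adds $\idv_j$ to $z$), and noting that $v_0 = v - \idv_j$ and $v$ agree outside coordinate $j$, so that zeroing coordinate $j$ of $v_0$ produces the same vector as zeroing coordinate $j$ of $v$, gives the required description for $q$.

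The main hurdle is essentially bookkeeping: verifying that the pattern $p$ handed to the procedure really coincides with the $p_1$ of \cref{cor:rep-vec}, which relies on the non-saturation of the two-sum that the procedure's base-case check precisely guarantees; and carefully tracking how the structural description of the recursive output, stated with respect to $p_2$ and $v_0$, pulls back through the single-step modification to become the required description with respect to $p$ and $v$.
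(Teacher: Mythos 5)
Your proof is correct and follows essentially the same route as the paper's: apply \cref{cor:rep-vec} with $v_1=z(p)$, $v_2=v_0$, $v_3=\idv_j$ so that $p_1=p$, and bound the recursion by noting $v_j\leq 2$ so the depth is at most $2$ and $|P|\leq 1+2+2=5$. You merely make explicit two points the paper leaves implicit — the verification that $v_0\tplus\idv_j$ is unsaturated and hence equals $v$, and the formal induction transporting the structural description through the recursive call — both of which are correct.
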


\begin{proof}
    The equivalence holds from \cref{cor:rep-vec}, by choosing $v_1 = z(p)$, $v_2 = v_0$, $v_3 = \idv_j$, and $R = p\setminus\langle v, z\rangle$, as it follows for $p_1:=p$ that $p_1, p_2, p_3, p_4$ match the definitions provided in the beginning of this subsection. The running time and the output size follow from the observation that $p_3$ and $p_4$ already admit the desired form, and in $p_2$ we decrease $v_j$ by $1$, where $v_j$ is upper bounded by $2$. Hence, the recursion has depth at most $2$.
\end{proof}

\begin{procedure}
    Given a family of patterns $P\subseteq \Pat$, let $\rednice(P)$ be the operation defined by the following procedure:
    Let $P_0 := P$, and for $i\in[k]$, we define $P_i$ from $P_{i-1}$ as follows:
    First, we define $P_i^0 = P_{i-1}$. Then we define $P_i^j$ from $P_i^{j-1}$, where for $p\in P_i^{j-1}$, let $R_p = \redind(p, v, i)$ for some non-unit vector $v \in p$ different from $z(p)$ with $v_i \geq 1$ if such a pair $v,i$ exists, or $R_p = \{p\}$ otherwise. Then $P_i^j := \bigdelta_{p\in P_i^{j-1}} R_p$.
    Let $i^*$ be the smallest index such that $P_i^{i^*} = P_{i}^{i^*+1}$. Then we define $P_i = P_i^{i^*}$. The operation outputs $P_k$.
\end{procedure}

\begin{lemma}\label{lem:reduce-to-nice}
    The operation $\rednice(P)$ is finite, and it outputs a nice family of patterns $P'$ equivalent to $P$.
\end{lemma}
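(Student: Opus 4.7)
The plan has three pieces: equivalence of $\rednice(P)$ with $P$, termination of the nested loops, and niceness of the output.

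Equivalence is immediate from \cref{lem:reduce-index}, which says $\redind(p,v,i)\pquiv \{p\}$. Since the update rule $P_i^j := \bigdelta_{p \in P_i^{j-1}} R_p$ is a mod-$2$ symmetric-difference sum and $\pquiv$ is itself defined modulo $2$, an induction on $j$ together with transitivity of $\pquiv$ yields $P_i^j \pquiv P_i^{j-1}$ for every inner step; concatenating over the outer loop gives $\rednice(P) = P_k \pquiv P_0 = P$.

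For termination, I would track the \emph{bad mass} $\Phi_i(q) := \sum_{v\in q,\; v\neq z(q),\; v\text{ non-unit}} v_i$ at coordinate $i$, and its pattern-wise maximum $M_i^j := \max_{p \in P_i^j} \Phi_i(p)$. \Cref{lem:reduce-index} guarantees that whenever $\redind(p,v,i)$ is invoked on a non-unit $v \neq z(p)$ with $v_i \geq 1$, every output pattern $p'$ arises from $p$ by replacing $v$ either with the unit vector $\idv_i$ (contributing $0$ to $\Phi_i$) or with a vector whose $i$-coordinate is $0$, while the other non-unit vectors are untouched and $z(p)$ does not enter $\Phi_i$; hence $\Phi_i(p') < \Phi_i(p)$. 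Because each inner step reduces every reducible pattern in parallel while leaving irreducible ones (for which $\Phi_i = 0$) untouched, we obtain $M_i^{j+1} \leq M_i^j - 1$ whenever $M_i^j \geq 1$, so $M_i^j$ drops to $0$ after polynomially many iterations; at that point no pattern is reducible and $P_i^{j+1} = \bigdelta_p \{p\} = P_i^j$ is a fixed point. Crucially, the invariant ``$\Phi_i \equiv 0$ on the whole family'' survives the outer loop: a later reduction $\redind(\cdot, v', i')$ with $i' > i$ modifies only $v'$ (to $\idv_{i'}$ or to $v' - \idv_{i'}$, both having $i$-coordinate $0$ by the invariant on $v'$) and possibly $z(p)$, which is never counted in $\Phi_i$.

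Niceness of $P' := P_k$ is then forced. Suppose for contradiction that some $p \in P_k$ contains a non-unit vector $v \neq z(p)$. By the invariants accumulated through the outer loop, $v_i = 0$ for every $i \in [k]$; and $v_0 = 0$ by the definition of a pattern, since only $z(p)$ carries a positive $0$-coordinate. So $v$ would be the all-zero vector. But the all-zero vector does not appear in any starting pattern (each connected component of a labeled forest contains at least one vertex and hence contributes a non-zero type), and it is never produced by $\redind$, whose only candidate replacements for $v$ are $\idv_i$ and $v - \idv_i$; the latter is zero only when $v = \idv_i$ is already a unit vector, a case the base case of $\redind$ explicitly returns as $\{p\}$ without modification. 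This contradicts the existence of such $v$, so $P'$ is nice. The main subtlety is choosing the right termination potential, because the naive sum $\sum_p \Phi_i(p)$ can grow when $\redind$ fans out into up to five patterns; passing to the pattern-wise maximum circumvents this issue.
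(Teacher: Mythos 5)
Your proof is correct and follows essentially the same route as the paper: equivalence via \cref{lem:reduce-index} and the fact that symmetric difference preserves $\pquiv$, termination via a pattern-wise maximum potential that strictly decreases at each inner step (the paper uses the maximum \emph{count} of offending vectors where you use the maximum \emph{coordinate sum}, but these are interchangeable), and niceness via the cumulative invariant that processed coordinates vanish on all non-unit vectors other than the zero-vector. Your explicit treatment of the all-zero vector and of why the maximum (rather than the total sum over the family) is the right potential is if anything slightly more careful than the paper's own argument.
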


\begin{proof}
    We show by induction over $i\in[k]_0$ that $P_i$ is a family of patterns equivalent to $P_{i-1}$ (and hence to $P$), such that for all $j\leq i$ and for all $p\in P_i$ and all $v\in p$ where $v_j > 0$ we have $v_j = z(p)$ or $v_j$ is a singleton. The base case holds trivially, since $P_0 = P$.
    For the induction step, assume that the claim holds for $P_{i-1}$.
    Then by \cref{lem:reduce-index} it holds that $P_i^j$ results from $P_{i-1}$ by replacing some vectors $v$ with either $\idv_i$ or by a vector $v'$ that results from $v$ setting $v'_i = 0$. Since we do not change any other index in these vectors, and since we apply this operation exhaustively, the claim must hold for $P_i$. 
    
    The process is finite, since we deal with finite patterns only, and it holds that the maximum number of non-identity vectors $v$ in a pattern $p \in P_i^j$ with $v_i > 0$ is strictly smaller than the maximum number of such vectors in a pattern $p \in P_i^{j-1}$ for each $j\geq 1$.
\end{proof}

\begin{procedure}
    We define the operation $\operatorname{CleanPattern}(p)$ that takes as input a nice pattern $p$ and returns a pattern $p'$, where $p'$ results from $p$ by reducing the multiplicity of $\idv_i$ by $\max\{0, \tot_i(p) -2\}$ for each $i\in[k]$, i.e.\ $p'$ is a nice pattern with $z(p) = z(p')$, and $\tot_i(p') = \min\{2, \tot_i(p)\}$ for all $i\in[k]$.
\end{procedure}

Intuitively, this operation reduces the total number of times each label appear in a nice pattern to at most $2$ by removing identity vectors only.

\begin{lemma}\label{lem:clean-pattern}
    The operation $\operatorname{CleanPattern}(p)$ runs in polynomial time and outputs a very nice pattern $p'$ equivalent to $p$.
\end{lemma}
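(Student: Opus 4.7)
The plan is to dispatch the easy claims first and reduce the lemma to a single key rerouting statement. The running time is trivially polynomial since we only iterate over $i \in [k]$ and adjust the multiplicity of $\idv_i$. Because $p$ is nice, only unit vectors are decremented, so the resulting $p'$ remains nice; by construction $\tot_i(p') \le 2$ for every $i$, hence $p'$ is very nice.

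The substantive content is the equivalence $\{F_p\} \pquiv \{F_{p'}\}$, i.e.\ that for every clique extension $\cext$ and every partial solution $F'$ of $G_\cext$ one has $F_p \cmptb_{\cext} F'$ iff $F_{p'} \cmptb_{\cext} F'$. The forward direction is immediate: $F_{p'}$ is obtained from $F_p$ by deleting isolated vertices, so the induced subgraph on $F_{p'} \cup F'$ in $G_{\cext(\mu_{F_{p'}})}$ is a vertex-induced subgraph of the corresponding one in $G_{\cext(\mu_{F_p})}$, and acyclicity is inherited. For the converse I plan to decompose $\operatorname{CleanPattern}$ into a sequence of single-vertex removals and show each preserves equivalence. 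Concretely, as long as the current pattern $q$ satisfies $\tot_i(q) \ge 3$ for some $i$, I remove one isolated $i$-vertex $u$ from $F_q$ to obtain $q^*$. Since $\#_q(\idv_i) \le 2$, the condition $\tot_i(q) \ge 3$ forces $z(q)_i \ge 1$, so the zero-vector component $D_0$ of $F_q$ contains at least one label-$i$ vertex $d$.

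The crux is a join-twin observation inside $\tilde G := G_{\cext(\mu_{F_q})}[F_q \cup F']$. The canonical expression $\mu_{F_q}$ relabels both $u$ and $d$ to $i$ at its end, and every subsequent relabel or join in $\cext$ acts purely through the current label, so $u$ and $d$ stay co-labeled throughout $\cext$ and hence receive exactly the same join-edges added by $\cext$. Combined with the fact that $u$ has no $\mu_{F_q}$-edges while $d$'s $\mu_{F_q}$-edges lie inside $D_0$, this yields $N_{\tilde G}(u) \subseteq N_{\tilde G}(d)$ (note also that $(u,d)$ is not an edge, since joins require distinct labels). Given any simple cycle $C$ in $\tilde G$: if $u \notin V(C)$ then $C$ already lives in $\tilde G - \{u\} = G_{\cext(\mu_{F_{q^*}})}[F_{q^*} \cup F']$; otherwise $C = u, x, P, y, u$ with $x, y \in N_{\tilde G}(u) \subseteq N_{\tilde G}(d)$, and replacing $u$ by $d$ yields a closed walk $W = d, x, P, y, d$ in $\tilde G - \{u\}$. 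If $d \notin V(P)$ then $W$ is itself a simple cycle; otherwise $W$ visits $d$ multiple times and I would split it at repeated occurrences of $d$ into shorter closed subwalks and invoke simpleness of the graph (no length-$2$ closed walk is a cycle) to extract a genuine simple cycle avoiding $u$.

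The main obstacle I anticipate is the borderline case where every splitting of $W$ collapses to length-$2$ walks. To handle it I plan to exploit that after the removal one still has $\tot_i(q^*) \ge 2$, so a second label-$i$ vertex $d' \ne d$ is available that is also a join-twin of $u$; since $x$ and $y$ are then adjacent to both $d$ and $d'$, one can write down the $4$-cycle $x, d, y, d', x$ inside $\tilde G - \{u\}$, sidestepping the degeneracy. Iterating this single-removal equivalence across all removals carried out by $\operatorname{CleanPattern}$ and applying transitivity of $\pquiv$ then yields $\{F_p\} \pquiv \{F_{p'}\}$, completing the lemma.
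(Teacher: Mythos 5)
Your proof is correct, but for the substantive (converse) direction it takes a genuinely different route from the paper's. The paper argues globally on one cycle $C$ in $G_{\cext(\mu_{F_p})}[F_p\cup F']$: after normalizing $C$ via \cref{lem:ub-cycle-structure} so that its intersection with the zero component is a prefix, it repeatedly \emph{shortcuts} $C$ whenever some label $i$ is visited three or more times (using the fact that co-labeled vertices of $F_p$ receive identical join edges from $\cext$ --- exactly the join-twin observation you isolate), until every label appears at most twice, and then transfers the shortened cycle to the cleaned forest. You instead factor $\operatorname{CleanPattern}$ into single deletions of an isolated $i$-vertex $u$, performed only while $\tot_i\geq 3$, note that this forces a zero-component vertex $d$ of label $i$ which dominates $u$ (so $N(u)\subseteq N(d)$ and $d\notin N(u)$), and \emph{reroute} any cycle through $u$ via $d$, splitting the closed walk at the unique repeated occurrence of $d$ and falling back on a second co-labeled survivor $d'$ in the one degenerate configuration (the $4$-cycle $u,x,d,y,u$, which is indeed the only case in which both sub-walks collapse to length $2$). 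Both arguments hinge on the same key fact about co-labeled vertices, but yours is more modular: each elementary step produces a cycle that literally lives in the smaller graph, so the final transfer is immediate and transitivity of $\pquiv$ finishes the proof, whereas the paper's version leaves the realizability of the shortened cycle in the cleaned forest as a terse concluding step. Your case analysis is exhaustive and the degenerate case is correctly covered by $\tot_i(q^*)\geq 2$, so I see no gap.
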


\begin{proof}
    The running time follows from the fact that we only access and modify the multiplicity of the identity vectors of the pattern, and check the values of $z(p)_i$ for some values $i\in[k]$. Now we show that $p$ and $p'$ are equivalent.
    
    Let $\cext$ be some clique extension, and let $F'$ be a partial solution of $G_{\cext}$.
    Let $G_p := G_{\cext(\mu_{F_p})}[F'\cup F_p]$, and $G_q := G_{\cext(\mu_{F_q})}[F'\cup F_{q}]$
    If $F'$ is compatible with $F_p$, then it is clearly compatible with $F_q$, since $G_q$ is a subgraph of $G_p$.
    
    Now assume that $F'$ is not compatible with $F_p$, and let $C$ be a cycle in $G_p$.
    Since $p$ is a nice pattern, each connected component of $F_p$ except for the one corresponding to $z(p)$ consists of a single vertex.
    By \cref{lem:ub-cycle-structure} we can assume that the intersection of $C$ with the component $D$ corresponding to $z(p)$ forms a continuous segment on $C$. Moreover, if this segment is not empty, we can rotate the cycle and assume that it starts with this segment. If there exists $i\in[k]$ such that $C$ visits at least three vertices of $F_p$ labeled $i$. Let $u,u'$ be the first two occurrences and $w$ the last one. Then it must also hold that $w$ is an isolated vertex in $F_p$, since $p$ is a nice pattern, and $z(p)_i \leq 2$.
 
    Let $w'$ be the vertex following $w$ on the cycle. Then both edges $\{u, w'\}$ and $\{u',w'\}$ exist in $G_p$, as $\{w,w'\}$ is added by the clique extension, and all vertices $u, u', w$ have the same label in $F_p$. Let $x,x'$ be the vertices following $u$ and $u'$ on the cycle respectively.
    Let $C'$ be the cycle resulting from $C$ by removing the part of the cycle between $x$ and $w$ if $\{u,w'\}$ is not the edge preceding $u$ on the cycle, or the part between $x'$ and $w$ otherwise. Then $C'$ is a cycle resulting from $C$ by removing vertices only, and it contains at most two vertices of $D$ labeled $i$. We repeat this process for all such labels $i$ getting a cycle $C'$ where each label appears at most twice. The cycle $C'$ corresponds to a cycle in $G_q$ since $z(p) = z(q)$.
\end{proof}

\begin{procedure}
    Given a pattern $p\in \Pat$, we define the operation $\redpat(p)$, where for $P:=\rednice(\{p\})$, the operation outputs the family
    $\bigdelta_{p\in P} \operatorname{CleanPattern}(p)$.
\end{procedure}

\begin{corollary}\label{cor:reduce}
    Given a pattern $p\in \Pat$, the operation $\redpat(p)$ outputs a very nice family of patterns equivalent to $p$.
\end{corollary}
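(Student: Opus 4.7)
The plan is to present $\redpat$ as a straightforward two-stage composition and to combine the guarantees of its two constituent subroutines, using the observation that the equivalence relation $\pquiv$ is additive under symmetric differences of families. First, I would apply \cref{lem:reduce-to-nice} to the singleton $\{p\}$ to obtain $P := \rednice(\{p\})$, which is a nice family satisfying $P \pquiv \{p\}$. Next, for every $q \in P$, \cref{lem:clean-pattern} produces a very nice pattern $q' := \operatorname{CleanPattern}(q)$ with $\{q'\} \pquiv \{q\}$. Since $\redpat(p) = \bigdelta_{q \in P} \{q'\}$ and every $q'$ is very nice, the output is a family consisting entirely of very nice patterns.

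The remaining step is to verify $\bigdelta_{q \in P} \{q'\} \pquiv \{p\}$. The key property I would invoke is that $\pquiv$, as defined via parity counts of compatible forests in \cref{def:partial-compat-equiv}, is linear with respect to symmetric difference: for any clique extension $\cext$, any partial solution $F'$ of $G_{\cext}$, and any two families of patterns $P_1, P_2$, the number of elements of $P_1 \Delta P_2$ whose canonical forest is compatible with $F'$ has the same parity as the sum of the analogous counts for $P_1$ and $P_2$. Iterating this pointwise replacement shows that substituting each singleton $\{q\}$ by the $\pquiv$-equivalent singleton $\{q'\}$ preserves, for every choice of $\cext$ and $F'$, the parity of compatible patterns, so $\bigdelta_{q \in P} \{q'\} \pquiv \bigdelta_{q \in P} \{q\}$. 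Combining this with $\bigdelta_{q \in P} \{q\} \pquiv P \pquiv \{p\}$ via transitivity of $\pquiv$ then closes the argument.

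The only minor subtlety I anticipate concerns bookkeeping around multiplicities, since the intermediate families constructed along $\rednice$ are built through repeated symmetric differences and may a priori contain patterns with non-trivial multiplicity, and the final $\operatorname{CleanPattern}$ step may map distinct nice patterns onto the same very nice pattern. However, because $\pquiv$ only senses parities of compatibility counts, duplicated patterns simply cancel in pairs, and the additivity principle above absorbs this concern uniformly. Hence the composition is correct and $\redpat(p)$ is indeed a very nice family equivalent to $p$.
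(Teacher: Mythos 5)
Your proposal is correct and matches the paper's argument, which likewise just composes \cref{lem:reduce-to-nice} with \cref{lem:clean-pattern} and invokes transitivity of $\pquiv$; you merely spell out the $\Delta$-additivity of the equivalence relation that the paper leaves implicit. No gaps.
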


\begin{proof}
    This follows from \cref{lem:reduce-to-nice} and \cref{lem:clean-pattern}, and by transitivity of representation.
\end{proof}

Now we define pattern operations that will allow us to modify a pattern corresponding to a partial solution in a way that corresponds to applying a clique-operation to that partial solution. These operations form the backbone of our dynamic programming algorithm.

\begin{definition}\label{def:pat-join}
    Given $i,j\in[k]$ with $i\neq j$, we define the predicate $\acyc_{i,j}(p)$ over all patterns $p\in\Pat$, where $\acyc_{i,j}(p)$ is true, if $\tot_i(p) = 0$, $\tot_j(p) = 0$, or the following two conditions hold:
    \begin{itemize}
        \item $v_i+v_j \leq 1$ for all $v\in p$,
        \item $\min\{\tot_i(p), \tot_j(p)\} \leq 1$.
    \end{itemize}
    Now we define the \emph{join} operation $\patadd_{i,j}:\TP\rightarrow \Pat\cup\{\badpat\}$, where $\badpat$ stands for a ``bad pattern'', as follows:
    \[
        \patadd_{i,j}(p)=
        \begin{cases}
            \badpat &\colon \lnot \acyc_{i,j}(p),\\
            p  &\colon \{i,j\}\not\subseteq \lbs(p),\\
            p' &\colon \text{Otherwise}.
        \end{cases}
    \]
    where $p'$ results from $p$ as follows: let $S_i = \langle v\in p\colon v_i > 0\rangle$ and $S_j = \langle v\in p\colon v_j > 0\rangle$, then $p'$ results from $p$ by removing all vectors of $S_i\cup S_j$ from $p$ and adding their two-sum, i.e.\
    \[
    p' := \Big(p\setminus \big(S_i\cup S_j\big)\Big) \tcup \Big\langle \sum_{v\in S_i} v \tplus \sum_{v\in S_j} v \Big\rangle.
    \]
\end{definition}

\begin{lemma}\label{lem:pat-join-acyclic}
    Let $F$ be a labeled forest, and let $p = \pat(F)$. Let $p'=\patadd_{i,j}(p)$, and $F' = \clqadd{i}{j}(F)$. Then it holds that $F'$ is acyclic if and only if $\acyc_{i,j}(p)$ holds. Moreover, assuming that $\acyc_{i,j}(p)$ holds, then it holds that $p' = \pat(F')$.
\end{lemma}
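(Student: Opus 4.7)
My plan is to run a direct structural analysis on $F' = \clqadd{i}{j}(F)$ and then translate the result to the pattern $p = \pat(F)$ via the definitions of $\acyc_{i,j}$ and $\patadd_{i,j}$. Write $X_i, X_j$ for the vertices of $F$ labeled $i$ and $j$, of sizes $k_i, k_j$. I would first dispose of the two boundary cases $k_i = 0$ or $k_j = 0$: no edges are added, so $F' = F$ is acyclic; $\tot_i(p) = 0$ (resp.\ $\tot_j(p) = 0$) makes $\acyc_{i,j}(p)$ trivially true; and $\{i,j\} \not\subseteq \lbs(p)$ forces $p' = p = \pat(F) = \pat(F')$.

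Assume henceforth $k_i, k_j \geq 1$. By a short case analysis I would show that $F'$ contains a cycle if and only if at least one of the following obstructions is present: (a) some component of $F$ contains both an $i$- and a $j$-vertex; (b) some component contains $\geq 2$ vertices of label $i$, or $\geq 2$ of label $j$; or (c) both $k_i \geq 2$ and $k_j \geq 2$. The forward direction gives an explicit short cycle in each case: a triangle through the new edge for (a); a two-edge path through a common $j$-vertex closed by the tree-path between the two $i$-vertices for (b); a $K_{2,2}$ inside the biclique for (c). The converse is obtained by noting that if none of (a)--(c) holds, then WLOG $k_i = 1$, every $i$- or $j$-vertex sits in a distinct component, and the new edges form a star centered at the single $i$-vertex, attaching otherwise-disjoint trees. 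Translating to $p$: (a) and (b) together amount to the existence of $v \in p$ with $v_i + v_j \geq 2$, which is exactly the failure of the first bullet of $\acyc_{i,j}(p)$ (using that $v_i = 2$ witnesses a component with at least two $i$-labeled vertices); when (a) and (b) fail, condition (c) corresponds to $\min(\tot_i(p), \tot_j(p)) \geq 2$, since in this regime the multiplicity cap in the definition of $\pat$ yields $\tot_i(p) = \min(2, k_i)$ and symmetrically for $j$.

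For the second conclusion, assuming $\acyc_{i,j}(p)$ and $\{i,j\} \subseteq \lbs(p)$, the components of $F'$ split into those components of $F$ whose label-vector lies in $p \setminus (S_i \cup S_j)$, which are untouched, and a single new component $C^\star$ obtained by fusing via the biclique every component of $F$ that contributes a vector to $S_i \cup S_j$; these all merge because $X_i, X_j$ are nonempty, so every $i$-component links to every $j$-component and any two $i$-components link through a common $j$-vertex. The label-vector of $C^\star$ is the coordinate-wise capped sum of the label-vectors of the merged components, i.e.\ $\sum_{v \in S_i} v \tplus \sum_{v \in S_j} v$ (with $S_i \cap S_j = \emptyset$ by the first bullet of $\acyc_{i,j}(p)$), and the outer $\tcup$ matches the multiplicity cap built into $\pat(F')$; the zero-vertex tracks its own component, so the zero-vector condition is preserved. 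The main obstacle I anticipate is the bookkeeping around the double capping in the definition of $\pat$: $\tot_i(p)$ may undercount $k_i$ in general, and the argument for (c) hinges on the observation that under the first bullet of $\acyc_{i,j}(p)$ the undercount saturates at $2$, so $k_i \leq 1 \Leftrightarrow \tot_i(p) \leq 1$.
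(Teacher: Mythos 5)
Your proposal is correct and takes essentially the same route as the paper's proof: a direct case analysis showing that $\clqadd{i}{j}$ creates a cycle exactly when some component contains two vertices labeled in $\{i,j\}$ (your obstructions (a)/(b), the paper's condition $v_i+v_j\geq 2$) or both labels occur at least twice across components (your (c), the paper's $\min\{\tot_i,\tot_j\}\geq 2$), exhibiting the same witnesses (triangle, path through a shared $j$-vertex, $K_{2,2}$). Your explicit handling of the double capping ($k_i\leq 1\Leftrightarrow \tot_i(p)\leq 1$) and of the merged component's capped label-vector is a welcome bit of bookkeeping that the paper leaves implicit, but the substance is identical.
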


\begin{proof}
    First, assume that $\acyc_{i,j}(p)$ is false. We show that $F'$ must contain a cycle. Hence, it must hold that $\tot_i(p)\geq 1$ and $\tot_j(p)\geq 1$. If there exists $v\in p$ with $v_i + v_j \geq 2$, then there exists a connected component $D$ of $F$ with $\big|\big\{v\in D\colon \lab(v) \in \{i,j\}\big\}\big|\geq 2$. If $D$ contains a vertex $u$ labeled $i$, and a vertex $u'$ labeled $j$, then the edge $\{u,u'\}$ together with the path between $u$ and $u'$ in $D$ form a cycle in $F'$. Otherwise there must exist two distinct vertices $u$ and $u'$ in $D$ assigned the same label. Without loss of generality, let this label be $i$. Since $\tot_j(p)\geq 1$, there must exist a vertex $w$ in $F$ labeled $j$, but then the path $u,w,u'$ together with the path between $u$ and $u'$ in $D$ form a cycle in $F'$. Hence, we assume that $v_i+v_j\leq 1$ for all $v\in p$. Then it must hold that $\min\{\tot_i(p),\tot_j(p)\}\geq 2$. Let $u,u'$ be two distinct vertices labeled $i$, and $w,w'$ be two distinct vertices labeled $j$ in $F$. Then $u,w,u',w'$ is a cycle in $F'$.

    Now assume that $\acyc_{i,j}(p)$ holds. For the sake of contradiction, assume that $F'$ contains a cycle $C$. Since $F$ is a forest, $C$ must use an edge $\{u, w\}$ added by the operation $\clqadd{i}{j}(F)$, where $\lab(u) = i$ and $\lab(w) = j$. If $u$ and $w$ belong to the same connected component, then there must exist a vector $v\in p$ with $v_i + v_j \geq 2$, which contradicts our assumption. So we assume that $u$ and $w$ belong to two different components. Hence, there must exist a simple path from $u$ to $w$ in $F'$ not using the edge $\{u,w\}$. But since $v_i+v_j\leq 1$ the components of $u$ and $w$ do not contain any other vertices labeled $u$ or $v$. Hence, the path from $u$ to $w$ cannot visit any other vertices of their components in $F$, since otherwise these components would contain other vertices labeled $i$ or $j$. Hence, $u$ must have another neighbor labeled $j$ and $w$ must have another neighbor labeled $i$. But this contradicts the assumption that $\min\{\tot_i(p), \tot_j(p)\} \leq 1$. Hence, $F'$ must be acyclic.

    Assuming that $\acyc_{i,j}(p)$ holds, $p'=\pat(F')$ follows from observing that $\clqadd{i}{j}(F)$ combines all connected components of $F$ containing vertices labeled $i$ and $j$ into a single component, if both labels appear in the image of $\lab_F$, and keeps $F$ as it is otherwise.
\end{proof}

\begin{lemma}\label{lem:pat-join-time}
    Let $p \in \CSP$ be a very nice pattern, and let $i,j\in[k]$ with $i\neq j$ such that $\acyc_{i,j}(p)$ holds. Then it holds for $p' := \patadd_{i,j}(p)$ that $\redpat(p')$ runs in polynomial time and outputs at most $7$ patterns.
\end{lemma}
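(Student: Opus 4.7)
The plan is a case analysis on $p' := \patadd_{i,j}(p)$ according to which original vectors of $p$ get absorbed into the merged vector $w := \sum_{v\in S_i} v \tplus \sum_{v\in S_j} v$. Since $p$ is very nice, every vector of $p$ other than $z(p)$ is a unit vector, so $S_i$ consists of the copies of $\idv_i$ in $p$ together with $z(p)$ whenever $z(p)_i\geq 1$, and symmetrically for $S_j$. Moreover, because $z(p)$ itself is a vector of $p$, the hypothesis $\acyc_{i,j}(p)$ forces $z(p)_i + z(p)_j \leq 1$, so $z(p)$ can lie in at most one of $S_i, S_j$. If $\{i,j\}\not\subseteq\lbs(p)$, then $p'=p$ is already very nice and $\redpat(p')=\{p\}$.

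Otherwise I split on whether $z(p)\in S_i\cup S_j$. In the positive case, $w$ absorbs $z(p)$ and becomes the new zero-vector of $p'$, and all remaining vectors of $p'$ are unit vectors inherited from $p$; the cap built into $\tplus$, together with the very-niceness of $p$, yields $\tot_\ell(p')\leq\tot_\ell(p)\leq 2$ for every label $\ell$, so $p'$ is already very nice and $\redpat(p')=\{p'\}$, i.e.\ one pattern. In the negative case, $z(p)$ remains the zero-vector of $p'$ and the unique non-unit vector of $p'$ different from $z(p)$ is $w = a\idv_i + b\idv_j$ with $(a,b)\in\{(1,1),(1,2),(2,1)\}$ (the forbidden $(2,2)$ is excluded by $\min\{\tot_i(p),\tot_j(p)\}\leq 1$). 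In particular $w$ is supported on $\{i,j\}$, so the loop in $\rednice$ performs nontrivial work only at iterations $i$ and $j$.

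By \cref{lem:reduce-index}, the call $\redind(p', w, i)$ runs in polynomial time and returns at most $5$ patterns; inspecting these outputs shows that the only possibly remaining non-unit non-zero vector is of the form $c\idv_j$ with $c\leq 2$, on which a follow-up call $\redind(\cdot, c\idv_j, j)$ contributes at most $3$ further patterns, again by \cref{lem:reduce-index}. Tracking the symmetric-difference accumulation in $\rednice$, together with the coincidences between the two cascading layers, bounds the resulting nice family by $7$, and the final $\operatorname{CleanPattern}$ step is polynomial by \cref{lem:clean-pattern} and does not increase the count. The running time bound then follows since $\rednice$ performs $O(k)$ outer iterations but nontrivial work only on a constant number of patterns at the two relevant indices, and both $\redind$ and $\operatorname{CleanPattern}$ are polynomial.

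The main obstacle is the accounting in the last step: the naïve product $5\cdot 3 = 15$ overcounts significantly because many of the nominal outputs of the second-pass $\redind$ coincide with patterns already produced in the first pass and cancel modulo $2$. Making this cancellation explicit requires writing out $w$ in each of the three forms $(1,1)$, $(1,2)$, $(2,1)$ and checking the output families pattern-by-pattern, but the uniform bound $|\redpat(p')|\leq 7$ emerges from this finite case check.
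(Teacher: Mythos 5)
Your case split coincides with the paper's: the trivial case $\{i,j\}\not\subseteq\lbs(p)$, the case where $z(p)$ lies in $S_i\cup S_j$ (so $p'$ is already very nice), and the case $z(p)_i=z(p)_j=0$, where the unique offending vector is $w=a\idv_i+b\idv_j$ with $(a,b)\in\{(1,1),(1,2),(2,1)\}$. The first two cases are handled exactly as in the paper and are fine.

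In the third case, however, the decisive step --- the actual bound of $7$ --- is not established. You reduce it to ``a finite case check'' over the three forms of $w$ and assert that the bound ``emerges,'' which defers precisely the content of the lemma. Moreover, the mechanism you invoke to get from the na\"ive $5\cdot 3=15$ down to $7$, namely mod-$2$ cancellation between the two passes, is not what actually delivers the bound. The point is rather that the two worst cases are mutually exclusive: by \cref{lem:reduce-index} every output of the first pass $\redind(p',w,i)$ replaces $w$ by $\idv_i$ or by $b\idv_j$, so a second pass is needed only when $b=2$, i.e.\ $(a,b)=(1,2)$; but then $v_0=2\idv_j$ has $i$-entry $0$, the first pass does not recurse and yields only $3$ patterns, exactly one of which carries the non-unit vector $2\idv_j$, so the total is at most $2+3=5$. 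Conversely, when the first pass can produce $5$ patterns (namely $a=2$, $b=1$), all of its outputs are already nice and no second pass occurs. So a careful count gives at most $5\le 7$ with no appeal to cancellation at all (cancellation does occur and shrinks the families further, but it is not needed). The paper avoids this bookkeeping entirely by bounding the nested recursion tree of $\redind$ calls in one shot: since $\acyc_{i,j}(p)$ and very-niceness give $\tot_i(p)+\tot_j(p)\le 3$, the tree has depth at most $3$, each internal node has $3$ children of which the two non-recursive ones ($p_3$, $p_4$) are leaves, hence at most $2+2+3=7$ leaves. Either route works; yours just needs the count actually carried out, and with the correct justification.
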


\begin{proof}
    If $\tot_i(p) = 0$ or $\tot_j(p) = 0$, then $p' = p$, and the claim follows trivially. Otherwise, if $z(p)_i \geq 1$ or $z(p)_j \geq 1$, then it holds that $p'$ results from $p$ by removing all identity vectors $\idv_i$ and $\idv_j$ from $p$, and adding them to $z(p)$. In this case $p'$ is also a very nice pattern.
    
    Now we assume that $z(p)_i = z(p)_j = 0$. Hence, $p'$ contains a single non-identity vector $v$ different from $z(p')$ with $v_k = 0$ for all $k\neq \{i,j\}$. Hence, $\rednice(p')$ only makes calls $\redind(q, w, k)$ with $k\in\{i,j\}$ and $q\setminus \langle w\rangle$ is a very nice pattern.
    This implies that all calls of $\redind(q, w, k)$ produce at least two very nice patterns.
    It also holds by $\acyc_{i,j}(p)$ that $\tot_i(p)+\tot_j(p)\leq 3$.
    It follows that the recursion tree of the $\redind(q,w,k)$ calls, made by $\rednice(p')$, has depth at most $3$, width $3$, and each node has at last two children that are leaves. Hence, the operation runs in polynomial time and outputs at most $7$ patterns.
\end{proof}

\begin{definition}\label{def:pat-relabel}
    Given a vector $v\in Z$, and $i,j\in[k]$, $i\neq j$, we define the vector $v_{i\rightarrow j}$ with:
    \[
        (v_{i\rightarrow j})_{\ell} = \begin{cases}
            0 & \colon \ell = i,\\
            v_i + v_j & \colon \ell = j,\\
            v_{\ell} & \colon \text{otherwise}.
        \end{cases}
    \]
    Given a pattern $p\in\Pat$, we define the relabel operator $p_{i\rightarrow j}$ as $\langle v_{i\rightarrow j} \mid v \in p \rangle\tdown$.
\end{definition}

\begin{definition}\label{def:pat-union}
    Given two patterns $p, q\in\Pat$, we define the operator $\patunion$ where $p\patunion q$ results from the union $p\tcup q$ by replacing the vectors $z(p)$ and $z(q)$ with $z(p) \tplus z(q) - \idv_0$, i.e.\
    \[
        p\patunion q = \Big((p\cup q)\setminus \langle z(p), z(q)\rangle\Big) \tcup \langle z(p) + z(q) - \idv_0\rangle.
    \]
\end{definition}

\begin{observation}\label{obs:pat-union-relabel}
    Let $F,F'$ be labeled forests, and let $p = \pat(F)$, $p' = \pat(F')$.
    Then it holds that $p\patunion p' = \pat(F\clqunion F')$, and $p_{i\rightarrow j} = \pat(\relabel{i}{j}(F))$ for all $i,j\in[k]$ with $i\neq j$.
\end{observation}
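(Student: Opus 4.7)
Both identities are proved by directly matching the pattern operations to the effect of the corresponding clique operations on the underlying labeled forests, together with the elementary commutativity of capping with sums.

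For the relabel identity $p_{i\to j} = \pat(\relabel{i}{j}(F))$: the operation $\relabel{i}{j}$ modifies only the labeling function and leaves the graph structure intact, so the connected components of $F$ and $\relabel{i}{j}(F)$ coincide as subgraphs. For a component $C$ of $F$, write $c_\ell(C) = |\lab_F^{-1}(\ell) \cap V(C)|$ for its true label multiplicities and $v = \labv(C)$, so $v_\ell = \min\{2, c_\ell(C)\}$. In $\relabel{i}{j}(F)$ the corresponding component has true multiplicities $0$ at $i$, $c_i(C) + c_j(C)$ at $j$, and unchanged elsewhere. Hence its type $v'$ satisfies $v'_i = 0$ and $v'_j = \min\{2, c_i(C) + c_j(C)\}$. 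The elementary identity
\[
\min\{2, a+b\} = \min\{2, \min\{2,a\} + \min\{2,b\}\} \quad (a,b \in \mathbb{N})
\]
gives $v'_j = \min\{2, v_i + v_j\}$, which is exactly the definition of $v_{i\to j}$. Applying this to every component and taking $\tdown$ of the resulting multiset yields $\pat(\relabel{i}{j}(F)) = p_{i\to j}$.

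For the union identity $p \patunion p' = \pat(F \clqunion F')$: recall that the fusion union operation first takes the disjoint union and then identifies the two zero-vertices. Hence the connected components of $F \clqunion F'$ are precisely (i) the non-zero components of $F$, (ii) the non-zero components of $F'$, and (iii) a single merged component $C^\star$ obtained by gluing the zero-component of $F$ and the zero-component of $F'$ at their shared zero-vertex. Since $z(p), z(p')$ are by definition the unique vectors in $p,p'$ with a positive $0$-th coordinate, the contributions to $\pat(F \clqunion F')$ from (i) and (ii) are captured exactly by the multiset $(p \cup p') \setminus \langle z(p), z(p')\rangle$; here the min-sum identity above shows that the capped multiplicities compose correctly (i.e., $\min\{2, \#_p(v) + \#_{p'}(v)\} = \min\{2, n_F(v) + n_{F'}(v)\}$ for vectors $v$ with $v_0 = 0$). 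For $C^\star$, writing $c, c'$ for the true label-count vectors of the two zero-components of $F$ and $F'$, its true count vector is $c + c' - \idv_0$ (the shared $v_0$ is counted only once), and applying $\min\{2, \cdot\}$ coordinate-wise while using the same identity shows that $\labv(C^\star)$ equals $z(p) + z(p') - \idv_0$ (capped). Thus combining (i)--(iii) via $\tcup$ exactly recovers the definition of $p \patunion p'$.

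The only real technical point is careful bookkeeping of the $\tdown$/capping in both arguments, and this reduces in each case to the single identity $\min\{2, a+b\} = \min\{2, \min\{2,a\} + \min\{2,b\}\}$; I do not anticipate any further obstacle.
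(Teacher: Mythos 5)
Your proof is correct and is exactly the direct verification the paper intends; the paper states this as an observation without proof, and your argument — matching components of the relabeled/unified forest to the transformed vectors and reducing all capping issues to the identity $\min\{2,a+b\}=\min\{2,\min\{2,a\}+\min\{2,b\}\}$ — fills in the details faithfully (including the point that the merged zero-component cannot collide with any non-zero component since only it has a positive $0$-coordinate).
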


\subsection{Algorithm}

Now we present the algorithm, given as the recursive formulas of our dynamic programming tables. We follow with correctness proof, and we show that these tables can be computed in the claimed time.

\begin{algorithm}\label{algorithm}
We define the tables $T_x^{\budget,\weight} : \CSP \rightarrow \mathbb{N}$ for each $x\in\nodes$ and values $\budget$, $\weight$, recursively over $\nodes$ as follows:
\begin{itemize}
    \item Introduce vertex node $\mu_x = i(v)$: We set $T_{x}^{0,0}[\langle\idv_0\rangle]$ and $T_{x}^{1,\weightf(v)}[\langle \idv_0, \idv_i\rangle]$ to $1$, and $T_{x}^{\budget, \weight}[p]$ to $0$ for all other patterns $p\in \CSP$ and values $\budget,\weight$.
    
    \item Relabel node $\mu_x = \relabel{i}{j}(\mu_{x'})$: We define 
    $T_x^{\budget,\weight}[p] = \sumstack{p' \in \CSP\\ p'_{i\rightarrow j} = p} T_{x'}^{\budget,\weight}[p']$.

    \item Join node $\mu_x = \clqadd{i}{j}(\mu_{x'})$: We define
    $T_x^{\budget,\weight}[p] = \sumstack{p'\in\Pat\\p\in\redpat(p')} \sumstack{p'' \in \CSP\\ \patadd_{i,j}(p'') = p'} T_{x'}^{\budget,\weight}[p'']$.

    \item Union node $\mu_x = \mu_{x_1} \clqunion \mu_{x_2}$: We define
    \begin{equation}\label{eq:alg-union}
    T_x^{\budget,\weight}[p] = \sumstack{\budget_1+\budget_2 = \budget\\ \weight_1+\weight_2 = \weight}\sumstack{p_1,p_2\in\CSP\\ p = p_1\patunion p_2} T_{x_1}^{\budget_1,\weight_1}[p_1] \cdot T_{x_2}^{\budget_2,\weight_2}[p_2].
    \end{equation}
\end{itemize}
\end{algorithm}

\begin{definition}
    Let $\tilde{T}_x^{\budget,\weight}:\Pat\rightarrow \mathbb{N}$ be the tables defined over $\Pat$ that follow the same recursive definitions of $T_x^{\budget,\weight}$, but sum over patterns $p'\in\Pat$ instead of $\CSP$, and for join nodes skip the outer sum, i.e.\ for join nodes we sum over patterns $p'\in\Pat$ such that $\patadd_{i,j}(p') = p$.
\end{definition}

\begin{lemma}\label{lem:alg-counts-sol}
    It holds that $\tilde{T}_x^{\budget,\weight}[p]$ is congruent (modulo $2$) to the number of partial solutions $F$ in $G_x$ of size $\budget + 1$ and weight $\weight$ such that $\pat(F) = p$ for each pattern $p\in \Pat$.
\end{lemma}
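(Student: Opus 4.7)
The plan is to proceed by structural induction on the node $x$ of the syntax tree $\syntaxtree$, handling the four cases of the recursive definition of $\tilde{T}_x^{\budget,\weight}$ in turn. For the base case of an introduce-vertex node $\mu_x = i(v)$, I would just observe that $G_x$ consists of the two isolated vertices $v_0$ and $v$, so the only partial solutions are $\{v_0\}$ with pattern $\langle\idv_0\rangle$ and $\{v_0,v\}$ with pattern $\langle\idv_0,\idv_i\rangle$; a direct check against the initialization gives the claim. For the relabel case $\mu_x = \relabel{i}{j}(\mu_{x'})$, partial solutions of $G_x$ are in size- and weight-preserving bijection with partial solutions of $G_{x'}$, and by \cref{obs:pat-union-relabel} the pattern transforms as $p \mapsto p_{i\rightarrow j}$, so the inductive step reduces to grouping the counts of $\tilde{T}_{x'}$ by their relabeled pattern.

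For the join case $\mu_x = \clqadd{i}{j}(\mu_{x'})$, vertex sets agree and sizes and weights are preserved, but a partial solution $F$ of $G_{x'}$ survives as a partial solution of $G_x$ only if $\clqadd{i}{j}(F)$ remains acyclic. By \cref{lem:pat-join-acyclic}, this is equivalent to $\acyc_{i,j}(\pat(F))$, and in that case $\pat(\clqadd{i}{j}(F)) = \patadd_{i,j}(\pat(F))$; since $\patadd_{i,j}(p') = p \in \Pat$ forces $p' \neq \badpat$ and hence the acyclicity condition, the sum in $\tilde{T}_x$ exactly aggregates those surviving solutions grouped by their new pattern. For the union case $\mu_x = \mu_{x_1} \clqunion \mu_{x_2}$, I would exploit that every partial solution $F$ of $G_x$ decomposes uniquely as $F_1 \clqunion F_2$ with $F_i := F \cap V_{x_i}$, since $V_{x_1} \cap V_{x_2} = \{v_0\}$ and $v_0 \in F$; each $F_i$ inherits the property of being a partial solution of $G_{x_i}$ as a subforest. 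Sizes satisfy $(\budget_1+1) + (\budget_2+1) - 1 = \budget + 1$, weights add (using $\weightf(v_0) = 0$), and patterns combine via $\patunion$ by \cref{obs:pat-union-relabel}, so the decomposition biject partial solutions of the required type with the tuples $(F_1, F_2, \budget_1, \weight_1)$ indexed by the convolution formula.

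The main conceptual point, beyond straightforward accounting, is the join case: one must check that the pattern-level predicate $\acyc_{i,j}$ faithfully captures the creation of a cycle when the edges between label-$i$ and label-$j$ vertices are added, and this is exactly the content of \cref{lem:pat-join-acyclic}, which isolates the substantive combinatorial work. The remaining verifications that extensions of partial solutions, pattern transformations, and size/weight conservation all line up under each recursive rule are routine, and the congruence modulo $2$ is preserved throughout simply because each recursive formula is an exact integer identity.
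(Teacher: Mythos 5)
Your proposal is correct and follows essentially the same route as the paper's proof: structural induction over the syntax tree, a direct check at introduce nodes, \cref{obs:pat-union-relabel} for relabel and union nodes (with the size bookkeeping $|F| = |F_1| + |F_2| - 1$ and $\weightf(v_0)=0$ for weights), and \cref{lem:pat-join-acyclic} to handle both survival and pattern transformation at join nodes. No gaps.
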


\begin{proof}
    We prove this lemma by induction over $\nodes$.
    Note that the additional unit in $\budget+1$ counts for the vertex $v_0$, as the tables above do not count for this vertex, which allows a cleaner formula for a union node. For a leaf node $x$, it holds that there exists two partial solutions, the one containing $v_0$ only of size $1$ and weight $0$, and the one containing both $v$ and $v_0$ of size two and weight $\weightf(v)$. This corresponds exactly to the definition of $\tilde{T}_x$ at an introduce vertex node.

    Now let $x$ be some non-leaf node, and assume that the claim holds for all children of $x$. Let $F\subseteq V_x$, and let $F_x = G_x[F]$.
    For a relabel node $x$, $F$ is a partial solution at $x$ if and only if it is a partial solution at $x'$. Moreover, it holds by \cref{obs:pat-union-relabel} that $\pat(F_x) = \pat(\relabel{i}{j}(F_{x'}))$. Hence, the claim follows by the induction hypothesis.

    For a join node, it holds that $F$ is a partial solution in $G_{x}$ if and only if it is a partial solution in $G_{x'}$ and 
    $\clqadd{i}{j}(F_x)$ is acyclic. It hold by \cref{lem:pat-join-acyclic} that the later is the case if and only if $\acyc_{i,j}(\pat(F_{x'}))$ holds. Moreover, it holds by the same lemma, assuming that $F_x$ is acyclic, that $\pat(F_x) = \patadd_{i,j}(\pat(F_{x'}))$. Hence, the claim follows by the induction hypothesis.

    Finally, for a union node $x$, since $v_0$ is an isolated vertex in $G_x$, it holds that $G_x$ results from the disjoint union of $G_{x_1}\setminus v_0$ and $G_{x_2} \setminus v_0$ by adding $v_0$ as an isolated vertex.
    Hence, every partial solution $F$ at $x$ is the disjoint union of two partial solutions $F_1$ and $F_2$ at $x_1$ and $x_2$ respectively, with $|F| = |F_1| + |F_2| - 1$, as we count $v_0$ twice. Hence, it holds that $|F|-1 = (|F_1|-1) + (|F_2|-1)$. It also holds that $\weightf(F) = \weightf(F_1) + \weightf(F_2)$, since $\weightf(v_0) = 0$. The claim follows by applying the induction hypothesis at $x_1$ and $x_2$, since it holds from \cref{obs:pat-union-relabel} that $\pat(F) = \pat(F_1)\patunion \pat(F_2)$.
\end{proof}

\begin{lemma}\label{lem:alg-counts-ext}
    For each clique extension $\cext$ and each partial solution $F'$ of $G_{\cext}$, let $S_{\cext, F'}\subseteq \CSP$ be the set of all very nice patterns $p$ such that $F_p$ is compatible with $F'$. Then it holds that the number of partial solutions $F$ of $G_x$ of size $\budget + 1$ and weight $\weight$ such that $F\cup F'$ induces a forest in $G_{\cext(\mu_F)}$ is congruent (modulo $2$) to the sum $\sum\limits_{p\in S_{{\cext,F'}}}T_x^{\budget, \weight}[p]$.
\end{lemma}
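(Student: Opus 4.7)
The plan is to prove, by induction on $\nodes$, the following stronger inductive claim: for every node $y\in\nodes$, every $\budget,\weight$, every clique extension $\cext$ of $\mu_y$, and every partial solution $F'$ of $G_{\cext}$,
\[
\sum_{p\in\CSP}\bigl[F_p\cmptb_{\cext}F'\bigr]\cdot T_y^{\budget,\weight}[p] \;\bquiv\; \sum_{q\in\Pat}\bigl[F_q\cmptb_{\cext}F'\bigr]\cdot \tilde{T}_y^{\budget,\weight}[q].
\]
Taking $y=x$ and combining with \cref{lem:alg-counts-sol} (which identifies the right-hand side as the count, modulo $2$, of partial solutions $F$ of size $\budget+1$ and weight $\weight$ with $\pat(F)=q$) together with \cref{lem:forest-equiv-pat} and \cref{cor:equiv-if-equiv-pat} (which identify compatibility of $F_{\pat(F)}$ with $F'$ as compatibility of $F$ itself) yields the statement of the lemma.

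The base case (introduce-vertex node) is immediate, as both tables share the same two nonzero entries, both already in $\CSP$. For a relabel node $\mu_y=\relabel{i}{j}(\mu_{y'})$, I would let $\cext'$ be the extension obtained from $\cext$ by prepending $\relabel{i}{j}$ at the slot; \cref{obs:pat-union-relabel} combined with \cref{cor:equiv-if-equiv-pat} then gives $[F_{p_{i\to j}}\cmptb_{\cext}F'] = [F_p\cmptb_{\cext'}F']$ for every pattern $p$, and the claim at $y$ reduces directly to the IH at $y'$ applied to $(\cext',F')$. For a union node $\mu_y=\mu_{y_1}\clqunion\mu_{y_2}$, I would fix a summand $(p_2,\budget_2,\weight_2)$ with $p_2\in\CSP$ and let $\cext_{p_2}$ be the extension obtained from $\cext$ by inserting a union with $\mu_{F_{p_2}}$ immediately above the slot; via \cref{obs:pat-union-relabel} and \cref{cor:equiv-if-equiv-pat} the identity $[F_{p_1\patunion p_2}\cmptb_{\cext}F'] = [F_{p_1}\cmptb_{\cext_{p_2}}(F'\cup F_{p_2})]$ holds for every $p_1$ (and in the edge case where $F'\cup F_{p_2}$ fails to be a partial solution of $G_{\cext_{p_2}}$, both sides vanish for all $p_1$). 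The IH at $y_1$ applied to $(\cext_{p_2},F'\cup F_{p_2})$ then replaces the inner $T_{y_1}$-sum by the corresponding $\tilde{T}_{y_1}$-sum over $\Pat$, and an analogous replacement on the $y_2$-side closes this case.

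The main obstacle is the join node $\mu_y=\clqadd{i}{j}(\mu_{y'})$, where the reduction $\redpat$ enters the picture. Let $\cext'$ be the extension obtained from $\cext$ by prepending $\clqadd{i}{j}$ at the slot; by \cref{lem:pat-join-acyclic} combined with \cref{cor:equiv-if-equiv-pat}, $[F_{\patadd_{i,j}(p'')}\cmptb_{\cext}F'] = [F_{p''}\cmptb_{\cext'}F']$ for every $p''\in\CSP$ (with both sides being $0$ whenever $\patadd_{i,j}(p'')=\badpat$). Moreover, \cref{cor:reduce} gives $\sum_{p\in\redpat(p')}[F_p\cmptb_{\cext}F']\bquiv [F_{p'}\cmptb_{\cext}F']$ for every $p'\in\Pat$. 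Expanding the algorithmic definition of $T_y^{\budget,\weight}[p]$ and interchanging the order of summation, the left-hand side of the inductive claim becomes
\[
\sum_{p''\in\CSP}\bigl[F_{p''}\cmptb_{\cext'}F'\bigr]\cdot T_{y'}^{\budget,\weight}[p''],
\]
which by the IH at $y'$ applied to $(\cext',F')$ is congruent modulo $2$ to $\sum_{q\in\Pat}[F_q\cmptb_{\cext'}F']\cdot \tilde{T}_{y'}^{\budget,\weight}[q]$; applying \cref{lem:pat-join-acyclic} once more to undo the prepended join recovers the right-hand side of the inductive claim, closing the induction.
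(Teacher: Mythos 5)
Your proposal is correct and follows essentially the same route as the paper: the same strengthened inductive claim relating the $T$-sum over $\CSP$ to the $\tilde T$-sum over $\Pat$, the same trick of absorbing relabel/join operations (resp.\ a union with $\mu_{F_{p_2}}$) into the clique extension so that the induction hypothesis applies at the child, and the same use of \cref{lem:pat-join-acyclic} and \cref{cor:reduce} at join nodes. Your bookkeeping at union nodes (passing to the partial solution $F'\cup F_{p_2}$ of $G_{\cext_{p_2}}$, and noting the degenerate case) is if anything slightly more careful than the paper's, but the argument is the same.
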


\begin{proof}
    Let $\tilde{S}_{\cext, F'}$ be the extension of $S_{\cext, F'}$ to $\Pat$, i.e.\ $\tilde{S}_{\cext, F'}=\{p\in\Pat\colon F_p\cmptb_{\cext} F'\}$.
    It holds by \cref{cor:equiv-if-equiv-pat} that $F_p$ is compatible with $F'$ if and only if each partial solution $F$ with $\pat(F)=p$ is compatible with $F'$. Hence, it follows from \cref{lem:alg-counts-sol} that the number of partial solutions of size $\budget +1$ and weight $\weight$ compatible with $F'$ is congruent to the sum
    $\sum_{p\in \tilde{S}_{{\cext,F'}}}\tilde{T}_x^{\budget, \weight}[p]$.
    We claim that for each $x\in\nodes$, for all values $\budget, \weight$, for each clique extension $\cext$, and each partial solution $F'$ of $\cext$ the following congruence holds:
    \[
    \sumstack{p \in \CSP\\F_p \cmptb_{\cext} F'}T_x^{\budget,\weight}[p]
    \bquiv
    \sumstack{p \in \CSP\\F_p \cmptb_{\cext} F'}\tilde{T}_x^{\budget,\weight}[p].
    \]

    We prove the claim by induction over $x\in \nodes$. 
    Let $T=T_x^{\budget, \weight}$ and $\tilde{T} = \tilde{T}_x^{\budget, \weight}$. 
    The claim holds trivially for an introduce node, as $\tilde{T}[p]$ equals $T[p]$ for all patterns $p\in\CSP$, and equals $0$ for all other patterns $p\in \Pat\setminus \CSP$. Now let $x$ be a non-leaf node, and assume that the claim holds for the children of $x$.
    For a relabel node ($\mu_x = \relabel{i}{j}(\mu_{x'})$), 
    let $T_0 = T_{x'}^{\budget, \weight}$, and $\tilde{T}_0 = T_{x'}^{\budget,\weight}$.
    Let $\cext'$ be the clique extension that results from $\cext$ by replacing the variable $x$ with $\relabel{i}{j}(x)$. Then the following must holds:
    \begin{claim}\label{claim:rep-manip-expr}
    It holds for each labeled forest $F_0$ that $F := \relabel{i}{j}(F_0)$ is compatible with $F'$ under $\cext$ if and only if $F_0$ is compatible with $F'$ under $\cext'$.
    \end{claim}
    This follows from the fact that $\cext(F)$ and $\cext'(F_0)$ are the exact same expressions.
    It follows that
    \[
    \sumstack{p \in \CSP\\F_p \cmptb_{\cext} F'}T[p]
    \bquiv
    \sumstack{p' \in \CSP\\F_{p'_{i\rightarrow j}} \cmptb_{\cext} F'}T_0[p']
    \bquiv
    \sumstack{p' \in \CSP\\\relabel{i}{j}(F_{p'}) \cmptb_{\cext} F'}T_0[p']
    \bquiv
    \sumstack{p' \in \CSP\\F_{p'} \cmptb_{\cext'} F'}T_0[p'],
    \]
    where the first congruence holds by the definition of the tables $T^{\budget, \weight}$, the second holds by \cref{obs:pat-union-relabel}, and the last one holds by \cref{claim:rep-manip-expr}.
    Analogously, it holds that
    \[\sumstack{p \in \CSP\\F_p \cmptb_{\cext} F'}\tilde{T}[p]
    \bquiv
    \sumstack{p' \in \CSP\\F_{p'} \cmptb_{\cext'} F'}\tilde{T}_0[p'].\]
    It also holds by induction hypothesis, that
    \[
    \sumstack{p' \in \CSP\\F_{p'} \cmptb_{\cext'} F'}\tilde{T}_0[p']
    \bquiv
    \sumstack{p' \in \CSP\\F_{p'} \cmptb_{\cext'} F'}T_0[p'].
    \]
    It follows that
    \[
    \sumstack{p \in \CSP\\F_p \cmptb_{\cext} F'}T[p]
    \bquiv
    \sumstack{p \in \CSP\\F_p \cmptb_{\cext} F'}\tilde{T}[p].
    \]
        
    For a join node ($\mu_x = \clqadd{i}{j}(\mu_{x'})$),
    let $T_0 = T_{x'}^{\budget, \weight}$, and $\tilde{T}_0 = T_{x'}^{\budget,\weight}$. We define the table $\hat{T}\colon \Pat\rightarrow\bin$, where for $p\in\Pat$ we define
    $\hat{T}[p] = \sum\limits_{\patadd_{i,j}(p')=p} T_0[p']$.
    Let $\cext'$ be the clique extension that results from $\cext$ by replacing the variable $x$ with $\clqadd{i}{j}(x)$. Again, it holds for a labeled forest $F_0$ that $F := \clqadd{i}{j}(F_0)$ is compatible with $F'$ under $\cext$ if and only if $F_0$ is compatible with $F'$ under $\cext'$. If follows, in an analogous way to a relabel node, and by using\cref{lem:pat-join-acyclic}, that
    \[
    \sumstack{p \in \CSP\\F_p \cmptb_{\cext} F'}\hat{T}[p]
    \bquiv
    \sumstack{p' \in \CSP\\F_{\patadd_{i, j}(p')} \cmptb_{\cext} F'}T_0[p']
    \bquiv
    \sumstack{p' \in \CSP\\\clqadd{i}{j}(F_{p'}) \cmptb_{\cext} F'}T_0[p']
    \bquiv
    \sumstack{p' \in \CSP\\F_{p'} \cmptb_{\cext'} F'}T_0[p'].
    \]
    Analogously, it holds that
    \[\sumstack{p \in \CSP\\F_p \cmptb_{\cext} F'}\tilde{T}[p]
    \bquiv
    \sumstack{p' \in \CSP\\F_{p'} \cmptb_{\cext'} F'}\tilde{T}_0[p'].\]
    It also holds by induction hypothesis, that
    \[
    \sumstack{p' \in \CSP\\F_{p'} \cmptb_{\cext'} F'}\tilde{T}_0[p']
    \bquiv
    \sumstack{p' \in \CSP\\F_{p'} \cmptb_{\cext'} F'}T_0[p'].
    \]
    It follows that
    \[
    \sumstack{p \in \CSP\\F_p \cmptb_{\cext} F'}\hat{T}[p]
    \bquiv
    \sumstack{p \in \CSP\\F_p \cmptb_{\cext} F'}\tilde{T}[p].
    \]
    On the other hand, it holds by \cref{cor:reduce} that the support of $T$ represents the support of $\hat{T}$, and hence, that 
    \[
    \sumstack{p \in \CSP\\F_p \cmptb_{\cext} F'}T[p]
    \bquiv
    \sumstack{p \in \CSP\\F_p \cmptb_{\cext} F'}\hat{T}[p].
    \]
    Hence, it holds that
    \[
    \sumstack{p \in \CSP\\F_p \cmptb_{\cext} F'}T[p]
    \bquiv
    \sumstack{p \in \CSP\\F_p \cmptb_{\cext} F'}\tilde{T}[p].
    \]

    Finally, for union node, let $x_1$, $x_2$ be the children of $x$.
    We define $\tau_p(x) := \tau(x \clqunion F_{p})$. Clearly, it holds that $F\clqunion F_p \cmptb_{\tau} F'$ if and only if $F\cmptb_{\tau_p} F'$.
    It follows that
    \begin{align*}
        \sumstack{p\in\CSP\\F_p\cmptb F'}T[p]
        &\bquiv \sumstack{\budget_1+\budget_2 =\budget\\\weight_1+\weight_2=\weight}
        \sumstack{p_1,p_2\in\CSP\\F_{p_1 \patunion p_2}\cmptb F'}
        T_{x_1}^{\budget_1,\weight_1}[p_1]\cdot
        T_{x_2}^{\budget_2,\weight_2}[p_2]
        \\
        &\bquiv \sumstack{\budget_1+\budget_2 =\budget\\\weight_1+\weight_2=\weight}
        \sumstack{p_2\in\CSP}T_{x_2}^{\budget_2,\weight_2}[p_2]
        \sumstack{p_1\in\CSP\\F_{p_1\patunion p_2} \cmptb F'}
        T_{x_1}^{\budget_1,\weight_1}
        \\
        &\bquiv \sumstack{\budget_1+\budget_2 =\budget\\\weight_1+\weight_2=\weight}
        \sumstack{p_2\in\CSP}T_{x_2}^{\budget_2,\weight_2}[p_2]
        \sumstack{p_1\in\CSP\\F_{p_1}\clqunion F_{p_2} \cmptb F'}
        T_{x_1}^{\budget_1,\weight_1}
        \\
        &\bquiv \sumstack{\budget_1+\budget_2 =\budget\\\weight_1+\weight_2=\weight}
        \sumstack{p_2\in\CSP}T_{x_2}^{\budget_2,\weight_2}[p_2]
        \sumstack{p_1\in\CSP\\F_{p_1}\cmptb_{\tau_{p_2}} F'}
        T_{x_1}^{\budget_1,\weight_1}
        \\
        &\bquiv\sumstack{\budget_1+\budget_2 =\budget\\\weight_1+\weight_2=\weight}
        \sumstack{p_2\in\CSP}T_{x_2}^{\budget_2,\weight_2}[p_2]
        \sumstack{p_1\in\Pat\\F_{p_1}\cmptb_{\tau_{p_2}} F'}
        \tilde{T}_{x_1}^{\budget_1,\weight_1}
        \\
        &\bquiv\sumstack{\budget_1+\budget_2 =\budget\\\weight_1+\weight_2=\weight}
        \sumstack{p_2\in\CSP}T_{x_2}^{\budget_2,\weight_2}[p_2]
        \sumstack{p_1\in\CSP\\F_{p_1}\clqunion F_{p_2} \cmptb F'}
        \tilde{T}_{x_1}^{\budget_1,\weight_1}
        \\
        &\bquiv\sumstack{\budget_1+\budget_2 =\budget\\\weight_1+\weight_2=\weight}
        \sumstack{p_1\in\CSP}\tilde{T}_{x_1}^{\budget_1,\weight_1}[p_1]
        \sumstack{p_2\in\CSP\\F_{p_1}\clqunion F_{p_2} \cmptb F'}
        T_{x_2}^{\budget_2,\weight_2}
        \\
        &\bquiv\sumstack{\budget_1+\budget_2 =\budget\\\weight_1+\weight_2=\weight}
        \sumstack{p_1\in\CSP}\tilde{T}_{x_1}^{\budget_1,\weight_1}[p_1]
        \sumstack{p_2\in\CSP\\F_{p_1} \cmptb_{\tau_{p_1}} F'}
        T_{x_2}^{\budget_2,\weight_2}
        \\
        &\bquiv\sumstack{\budget_1+\budget_2 =\budget\\\weight_1+\weight_2=\weight}
        \sumstack{p_1\in\CSP}\tilde{T}_{x_1}^{\budget_1,\weight_1}[p_1]
        \sumstack{p_2\in\CSP\\F_{p_1} \cmptb_{\tau_{p_1}} F'}
        \tilde{T}_{x_2}^{\budget_2,\weight_2}
        \\
        &\bquiv \sumstack{\budget_1+\budget_2 =\budget\\\weight_1+\weight_2=\weight}
        \sumstack{p_1,p_2\in\Pat\\F_{p_1}\clqunion F_{p_2} \cmptb F'}
        \tilde{T}_{x_1}^{\budget_1,\weight_1}[p_1]\cdot
        \tilde{T}_{x_2}^{\budget_2,\weight_2}[p_2]
        \\
        &\bquiv \sumstack{\budget_1+\budget_2 =\budget\\\weight_1+\weight_2=\weight}
        \sumstack{p_1,p_2\in\Pat\\F_{p_1 \patunion p_2} \cmptb F'}
        \tilde{T}_{x_1}^{\budget_1,\weight_1}[p_1]\cdot
        \tilde{T}_{x_2}^{\budget_2,\weight_2}[p_2]
        \bquiv \sumstack{p\in\Pat\\F_p\cmptb F'}\tilde{T}[p],
    \end{align*}
    where the first and the last congruences hold by the definitions of the tables $T$ and $\tilde{T}$ respectively, and the third and the second to last by \cref{obs:pat-union-relabel}.
\end{proof}

\begin{corollary}\label{cor:alg-counts-fvs}
    The graph $G$ admits an odd number of feedback vertex sets of size $\budget$ and weight $\weight$ if and only if it holds that $\sum\limits_{p\in \CSP} T_r^{n-\budget,\weightf(V)-\weight}[p] \bquiv 1$.
\end{corollary}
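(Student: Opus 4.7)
The plan is to specialize Lemma~\ref{lem:alg-counts-ext} to the root $r$ with the trivial clique extension $\tau$ consisting of only the variable node (so $G_{\tau(\mu)} = G_\mu$ for every expression $\mu$), and with the empty partial solution $F' = \emptyset$. Under this choice, every very nice pattern $p \in \CSP$ satisfies $F_p \cmptb_\tau F'$, because $G_{\tau(\mu_{F_p})}[F_p \cup F'] = F_p$ is itself a forest. Thus $S_{\tau, F'} = \CSP$, and the lemma gives
\[
\sum_{p \in \CSP} T_r^{\budget, \weight}[p] \bquiv \#\{\, F \subseteq V_r : F \text{ is a partial solution of } G_r,\ |F| = \budget + 1,\ \weightf(F) = \weight\,\},
\]
where a partial solution of $G_r$ is exactly an induced forest containing $v_0$.

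Next I would exhibit the standard complementation bijection: a set $X \subseteq V$ is a feedback vertex set of $G$ of size $\budget$ and weight $\weight$ if and only if $V_r \setminus X$ induces a forest in $G_r$ containing $v_0$. The ``contains $v_0$'' part is automatic since $X \subseteq V$; the acyclicity equivalence uses that $v_0$ is isolated in $G_r$, so $G_r[V_r \setminus X]$ is acyclic iff $G - X$ is. Cardinality and weight track directly: $|V_r \setminus X| = n + 1 - \budget$ and $\weightf(V_r \setminus X) = \weightf(V) - \weight$, using the extension $\weightf(v_0) = 0$.

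Finally I would substitute $\budget \leftarrow n - \budget$ and $\weight \leftarrow \weightf(V) - \weight$ in the congruence above (so that $\budget + 1$ becomes $n + 1 - \budget$, matching the size of $V_r \setminus X$), which immediately gives the claimed identity. There is no real obstacle here, as the corollary is simply the specialization of Lemma~\ref{lem:alg-counts-ext} at the root with empty extension, composed with the classical forest--feedback-vertex-set duality.
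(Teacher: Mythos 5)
Your proposal is correct and matches the paper's own argument: the paper likewise specializes Lemma~\ref{lem:alg-counts-ext} to the trivial extension $\cext(x)=x$ with the trivial partial solution (so that $S_{\cext,\emptyset}=\CSP$, since every canonical forest of a very nice pattern is itself acyclic) and composes this with the complementation duality between feedback vertex sets of size $\budget$ and weight $\weight$ and induced forests of size $n+1-\budget$ and weight $\weightf(V)-\weight$ containing $v_0$. Your write-up is just a more explicit rendering of the same two steps.
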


\begin{proof}
    It holds that the number of feedback vertex sets of size $\budget$ and weight $\weight$ in $G$ is equal to the number of partial solutions $F$ in $G_r$ of size $n + 1 - \budget$ and weight $\weightf(V) - \weight$.
    The corollary follows then from \cref{lem:alg-counts-ext} by observing that each partial solution is compatible with the partial solution $\{v_0\}$, and hence, the number of these solutions is congruent (modulo $2$) to the sum over all patterns in $S_{\cext, \emptyset} = \CSP$, where $\cext(x) = x$.
\end{proof}

\subsection{Running time}

Now we bound the running time of the algorithm.
First we show how to efficiently process a union node $x\in\nodes$ using a fast convolution technique, assuming that all tables $T_{x'}^{\budget,\weight}$ are given for all children $x'$ of $x$. We start by formally introducing the notion of convolution, and presenting a specific binary operation $\statejoin$ over the set of states (\cref{def:csp-states-bijection}), whose convolution over $\bin$ intuitively corresponds to the union formula of our algorithm.

\begin{definition}
    We define the binary operation $\simplejoin:S\times S\rightarrow S$ given by the following table:
    \begin{center}
    \begin{tabular}{|c|c|c|c|c|c|c|}
    \hline
    \simplejoin &\stnone&\stdisc&\stddisc&\stconn&\stplus&\stdconn\\
    \hline
    \stnone&\stnone&\stdisc&\stddisc&\stconn&\stplus&\stdconn\\
    \hline
    \stdisc&\stdisc&\stddisc&\stddisc&\stplus&\stplus&\stdconn\\
    \hline
    \stddisc&\stddisc&\stddisc&\stddisc&\stplus&\stplus&\stdconn\\
    \hline
    \stconn&\stconn&\stplus&\stplus&\stdconn&\stdconn&\stdconn\\
    \hline
    \stplus&\stplus&\stplus&\stplus&\stdconn&\stdconn&\stdconn\\
    \hline
    \stdconn&\stdconn&\stdconn&\stdconn&\stdconn&\stdconn&\stdconn\\
    \hline
\end{tabular}
\end{center}

We define the binary operation $\statejoin$ over $\mathcal{S}$ as the $k$th power of $\simplejoin$.
\end{definition}

The following result follows from observing that the operation $\patunion$ over $\CSP$ acts locally on each label, and that for every single label it coincides with the operation $\simplejoin$.

\begin{observation}\label{obs:csp-states-iso}
    The bijection $\phi_{CS}$ defined in \cref{def:csp-states-bijection} is an isomorphism between the operation $\patunion$ restricted to $\CSP$ and the operation $\statejoin$ over $\states$. That is, for all $p_1, p_2 \in \CSP$, it holds that
    $\phi_{CS}(p_1 \patunion p_2) = \phi_{CS}(p_1) \statejoin \phi_{\CSP}(p_2)$.
\end{observation}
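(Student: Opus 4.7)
The plan is to exploit the fact that $\patunion$ interacts with each label coordinate independently and then to match the resulting per-label rule against the $\simplejoin$ table by a finite case analysis. Since $\statejoin$ is by definition $\simplejoin^k$ acting coordinatewise on $\states$, it suffices to check for each fixed label $i\in[k]$ that the state $(\phi_{CS}(p_1\patunion p_2))_i$ depends only on the pair $\bigl((\phi_{CS}(p_1))_i,(\phi_{CS}(p_2))_i\bigr)$ and equals their $\simplejoin$.

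The first step is to unpack the bijection on a per-label basis. From \cref{def:csp-states-bijection}, the state $(\phi_{CS}(p))_i$ of any $p\in\CSP$ at label $i$ is completely determined by the pair $(a,b):=\bigl(z(p)_i,\,\#_p(\langle\idv_i\rangle)\bigr)$, and this pair takes exactly the six values $(0,0),(0,1),(0,2),(1,0),(1,1),(2,0)$ (the very-nice constraint $\tot_i(p)\leq 2$ rules out the remaining combinations), which correspond bijectively to the elements of $\simpstates$. Unpacking \cref{def:pat-union}, the new zero vector in $p_1\patunion p_2$ has $i$-coordinate obtained by capping $z(p_1)_i+z(p_2)_i$ at $2$, while $\#_{p_1\patunion p_2}(\langle\idv_i\rangle)=\min\{2,\#_{p_1}(\langle\idv_i\rangle)+\#_{p_2}(\langle\idv_i\rangle)\}$ by the $\tcup$ operation; whenever the resulting $\tot_i$ then exceeds $2$, applying \cref{lem:clean-pattern} removes excess copies of $\idv_i$ to restore very-niceness without affecting equivalence. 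All of these operations read and write only the $i$-th coordinate, which is the required locality.

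With locality established, the proof reduces to a $36$-entry (or $21$ using symmetry of the $\simplejoin$ table) finite check. For each input pair of states $(s_1,s_2)$ one reads off $(a_j,b_j):=\phi_{CS}^{-1}(s_j)$ at coordinate $i$, computes $a:=\min\{2,a_1+a_2\}$ and $b:=\min\{2,b_1+b_2\}$, then subtracts $\max\{0,a+b-2\}$ from $b$ to reflect the cleaning step, and translates the resulting pair back through $\phi_{CS}$. The only step where there is any real content is the CleanPattern reconciliation, needed precisely when the raw total $a_1+b_1+a_2+b_2$ exceeds $2$, as in $(\stddisc,\stplus)$, $(\stdconn,\stddisc)$, $(\stplus,\stplus)$, and $(\stdconn,\stplus)$; in every such case the reduction collapses the pair onto the entry listed in the $\simplejoin$ table. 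I do not expect a real obstacle: the argument is pure bookkeeping once one recognizes that the table was constructed as exactly the compilation of these $36$ coordinatewise computations.
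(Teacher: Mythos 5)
Your proposal is correct in substance and follows exactly the route the paper intends: the paper offers no real proof of this observation beyond the one-sentence remark that $\patunion$ acts locally on each label and coincides per label with $\simplejoin$, and your argument is precisely that remark made explicit — per-label locality plus a finite check of the $36$ (or $21$) entries, which your arithmetic $(a,b)\mapsto\bigl(\min\{2,a_1+a_2\},\,\min\{2,b_1+b_2\}\bigr)$ followed by subtracting $\max\{0,a+b-2\}$ from $b$ does reproduce correctly. The one place where you go beyond the paper is also the one place that deserves a caveat: as literally defined, $\patunion$ does \emph{not} map $\CSP\times\CSP$ into $\CSP$ (e.g., $\stddisc$ joined with $\stplus$ at label $i$ yields $\tot_i=3$, a nice but not very nice pattern), so the left-hand side $\phi_{CS}(p_1\patunion p_2)$ of the stated identity is, strictly speaking, undefined for such pairs. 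Your insertion of the $\operatorname{CleanPattern}$ step is the right repair and is exactly what the $\simplejoin$ table encodes, but be aware that you are then proving $\phi_{CS}\bigl(\operatorname{CleanPattern}(p_1\patunion p_2)\bigr)=\phi_{CS}(p_1)\statejoin\phi_{CS}(p_2)$ rather than the identity as written; the justification "without affecting equivalence" is why this normalization is harmless downstream, but it is not needed for the table identity itself, where only the arithmetic reconciliation matters. This is a defect of the statement (and of the union-node formula, which literally drops such pairs while the fast convolution over $\statejoin$ routes them to the cleaned entry) rather than of your proof.
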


Now we show that the convolution over this operation corresponds to the recursive formula defined at a union node in our algorithm. 

\begin{lemma}\label{lem:conv-implies-union}
    If the convolution $\stateconv$ of tables in $\states^{\bin}$ can be computed in time $f$, then all tables $T_x^{\budget,\weight}$ for all values $\budget$ and $\weight$ at a union node $x\in \nodes$ can be computed in time $\ostar(f + 6^k)$.
\end{lemma}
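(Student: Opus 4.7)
The plan is to reduce the union-node computation to polynomially many invocations of $\stateconv$ via the isomorphism $\phi_{CS}$ from \cref{def:csp-states-bijection}, handling the outer sum over sizes and weights by standard polynomial convolution. Both $\budget \in [n]_0$ and $\weight \in [n\cdot\W]_0$ are polynomially bounded (since $\W$ is polynomially bounded by assumption), so the number of admissible quadruples $(\budget_1, \weight_1, \budget_2, \weight_2)$ with $\budget_1 + \budget_2 = \budget$ and $\weight_1 + \weight_2 = \weight$ summed over all $(\budget, \weight)$ is $n^{O(1)}$.

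Concretely, first I would translate each child table $T_{x_i}^{\budget_i, \weight_i}$ from being indexed by $\CSP$ to being indexed by $\states$ via $\phi_{CS}$. Since $\phi_{CS}$ and its inverse are computable in polynomial time per pattern and $|\CSP| = |\states| = 6^k$, this preprocessing (together with the symmetric postprocessing below) costs $\ostar(6^k)$ in total across all $(\budget_i, \weight_i)$ pairs at both children. Next, for each quadruple $(\budget_1, \weight_1, \budget_2, \weight_2)$ I would invoke the assumed routine once to compute $T_{x_1}^{\budget_1, \weight_1} \stateconv T_{x_2}^{\budget_2, \weight_2}$ in time $f$, and add the resulting table into the accumulator indexed by $(\budget_1 + \budget_2, \weight_1 + \weight_2)$. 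By \cref{obs:csp-states-iso}, the operation $\patunion$ on $\CSP$ agrees pointwise with $\statejoin$ on $\states$ under $\phi_{CS}$, so each such $\stateconv$ computes exactly the inner $p_1 \patunion p_2 = p$ sum in \cref{eq:alg-union}; summing contributions over quadruples with a fixed target $(\budget, \weight)$ realises the outer convolution over the size and weight coordinates. The cost of this phase is $\ostar(f)$.

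Finally, I would apply $\phi_{CS}^{-1}$ to each accumulated table to return the output in the $\CSP$-indexed format, which again costs $\ostar(6^k)$. Summing the three phases yields total running time $\ostar(f + 6^k)$, as required. There is no genuine obstacle here beyond bookkeeping: the content of the lemma is exactly the combination of the local-to-global isomorphism \cref{obs:csp-states-iso} (in the spirit of \cref{obs:iso-same-time}) with the polynomial bounds on the size and weight parameters.
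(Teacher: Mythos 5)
Your proposal is correct and matches the paper's proof essentially verbatim: both reduce the union-node formula to polynomially many invocations of $\stateconv$ over the size/weight quadruples, using the isomorphism $\phi_{CS}$ from \cref{obs:csp-states-iso} together with \cref{obs:iso-same-time} to transfer the convolution from $\patunion$ on $\CSP$ to $\statejoin$ on $\states$. No substantive differences.
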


\begin{proof}
    In order to compute the tables $T_x^{\budget, \weight}$ at a union node (\cref{eq:alg-union} of \cref{algorithm}), we iterate over all values $\budget, \weight$. For each, we iterate over all values $\budget_1,\budget_2, \weight_1, \weight_2$ with $\budget_1 + \budget_2 = \budget$ and $\weight_1 + \weight_2 = \weight$, and we compute the convolution of $T_{x_1}^{\budget_1,\weight_1}$ and $T_{x_2}^{\budget_2,\weight_2}$ over $\patunion$. We add the resulting table to $T_x^{\budget,\weight}$. The correctness follows directly by the definition of the tables $T_x^{\budget,\weight}$.
    
    It holds by \cref{obs:csp-states-iso}, that $\phi_{CS}$ is an isomorphism between the operation $\patunion$ restricted to $\CSP$ and the operation $\statejoin$ over $\states$. Hence, it follows by \cref{obs:iso-same-time} that the convolution of $T_{x_1}^{\budget_1,\weight_1}$ and $T_{x_2}^{\budget_2,\weight_2}$ over $\patunion$ can be computed in time $\ostar(f+6^k)$. Since we only iterate over a polynomial number of values of $\budget$, $\weight$, $\budget_1$, $\budget_2$, $\weight_1$, and $\weight_2$, we get a total running time of $\ostar(f + 6^k)$ to compute all tables $T_x^{\budget,\weight}$ at the union node $x$.
\end{proof}

At the end of this section we show that this convolution can be computed in time $\ostar(6^k)$ indeed proving the following lemma:

\begin{lemma}\label{lem:conv-time}
    The convolution $\stateconv$ of two tables in $\bin^{\states}$ can be computed in time $\ostar(6^k)$.
\end{lemma}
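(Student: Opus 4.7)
The plan is to mimic the fast subset-convolution paradigm of Björklund et al.: design an invertible transform $\zeta: \bin^{\states} \to \bin^{\states}$ under which $\stateconv$ becomes pointwise multiplication, so that $A \stateconv B$ can be recovered by evaluating $\zeta^{-1}(\zeta(A) \cdot \zeta(B))$. To guide the design, I first observe that the six local states of $\simpstates$ admit a clean re-encoding as pairs $(t, c)$ with $0 \le c \le t \le 2$, obtained by letting $c$ be the number of occurrences of the label in the zero-vector and $t$ be its total multiplicity in the pattern; in these coordinates a finite enumeration against the $6 \times 6$ join table of \simplejoin verifies that the operation reduces to coordinate-wise saturated addition on $\{0, 1, 2\}^2$ restricted to the triangle $c \le t$.

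Mirroring the ``3-level convolution with two filters'' announced in the technique overview, I would construct $\zeta$ as a composition $\zeta_3 \circ \zeta_2 \circ \zeta_1$ of three operators, each acting locally on the 6-state space at every label so that a single pass applies a constant-size linear map over $\bin$ at each of the $k$ labels and costs $\Oh(k \cdot 6^k)$ in total. The three operators correspond to the three qualitatively different merges that \simplejoin can trigger---zero-zero merges (raising $c$), unit-unit merges (raising $t$ without raising $c$), and the cross merge that couples the two coordinates---with two of them implementing \emph{filters} whose effect cannot be captured by a pure Möbius-invertible prefix sum. A case analysis on the $36$ pairs of local states then establishes the diagonalization identity $\zeta(A \stateconv B) = \zeta(A) \cdot \zeta(B)$ pointwise over $\states$, and the pointwise product itself is computable in $\Oh(6^k)$ time.

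The main obstacle is inverting $\zeta$. Because two of the three operators are filters rather than pure Zeta transforms, the inverse does not admit a closed-form expression that can be applied in one shot to the pointwise product, as one would do for classical subset convolution; this is precisely the difficulty flagged in the overview when it says that ``reverting the last step does not yield a closed form transformation''. The plan is instead to retrace the three forward passes in reverse, applying $\zeta_1^{-1}, \zeta_2^{-1}, \zeta_3^{-1}$ successively to the pointwise product, and to prove---by induction on the number of retracted layers---that after the $i$-th retraction the intermediate table equals the convolution of $A$ and $B$ with respect to a truncated version of \simplejoin in which the top $i$ merge types have been progressively reintroduced. The heart of the argument is this ``somewhat surprising'' correctness claim that the filter layers cancel correctly under the reverse pass, and it is where the bulk of the proof effort lies; once it is established, the total running time follows at once from counting three forward passes, one pointwise product and three reverse passes, each of size $\Oh(k \cdot 6^k) = \ostar(6^k)$.
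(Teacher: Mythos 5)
There is a genuine gap at the heart of your plan: the claimed diagonalization identity $\zeta(A\stateconv B)=\zeta(A)\cdot\zeta(B)$, for an invertible $\zeta\colon\bin^{\states}\to\bin^{\states}$ built from local maps on the six states, cannot hold for \emph{any} choice of $\zeta$. The local operation $\simplejoin$ is not idempotent ($\stdisc\simplejoin\stdisc=\stddisc$ and $\stconn\simplejoin\stconn=\stdconn$), and its convolution algebra over $\bin$ contains nonzero nilpotents: writing $e_s$ for the indicator of a state $s$, one checks $(e_{\stdisc}+e_{\stddisc})\stateconv(e_{\stdisc}+e_{\stddisc})=4\,e_{\stddisc}=0$ over $\bin$, while an algebra isomorphic (via any invertible linear map) to $\bin^{6}$ with pointwise multiplication has no nonzero nilpotents. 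Since $\stnone$ is a neutral element, this nilpotent embeds into the $k$-label algebra as well, so no tensor-power transform diagonalizes $\stateconv$. This impossibility is exactly what forces the count-and-filter detour, so the inversion-by-retraction you describe has no correct identity to start from; moreover, making the filters part of $\zeta$ itself would already destroy invertibility, since filtering zeroes out entries.

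What the paper does instead is structurally different precisely at the product step. The forward pass produces not one transformed table but a \emph{family} $f^{(3)}_{i,j}$, obtained by interleaving the three Zeta transforms $\zeta_1,\zeta_2,\zeta_3$ with filters on the counts $i$ and $j$ of coordinates carrying states in $\classC=\{\stconn,\stplus\}$ and $\classD=\{\stdisc\}$; the product is then the convolution $h^{(3)}_{i,j}=\sum_{i_1+i_2=i,\ j_1+j_2=j} f^{(3)}_{i_1,j_1}\cdot g^{(3)}_{i_2,j_2}$ over these count indices, not a pointwise product of full transforms. It is the additivity of the counts that detects the saturating joins $\stdisc\simplejoin\stdisc=\stddisc$ and $\stconn\simplejoin\stconn=\stdconn$, which no lattice join (hence no Zeta/M\"obius pair alone) can express. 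The reverse pass then alternates the M\"obius transforms $\mu_3,\mu_2,\mu_1$ with steps that select, at each point $x$, the index equal to the actual number of $\classD$- resp.\ $\classC$-coordinates of $x$; the substance of the correctness proof (\cref{lem:conv-h2-form} and \cref{lem:conv-h1-form}) is showing that each M\"obius inversion still recovers a partial convolution despite the intervening filters. Your proposal gestures at this last difficulty but, lacking the count-indexed family, the count-convolution in the product, and the count-selection in the reversal, it does not compute $\stateconv$. (Your $(t,c)$ re-encoding of the six states as saturated addition on a triangle is correct and is a nice way to see why the counts are needed, but it does not rescue the diagonalization claim.)
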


Now we bound the running time of the entire algorithm by proving the following lemma:

\begin{lemma}\label{lem:time-nodes}
All tables $T_x^{\budget,\weight}$ for $x\in\nodes$ and all values $\budget, \weight$ can be computed in time $\ostar(6^k)$.
\end{lemma}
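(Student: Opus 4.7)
The plan is to bound the time to process each node of $\syntaxtree$ by $\ostar(6^k)$ and then use the fact that $|\nodes|$ is polynomial in $n$, together with the polynomial ranges of $\budget$ and $\weight$. Since the number of unary operations in $\mu$ is $\Oh(nk^2)$ and the number of union operations is $\Oh(n)$, we have $|\nodes|=\Oh(nk^2)$, and we have $\budget\in[n]_0$ and $\weight\in[n\W]_0$ with $\W$ polynomial in $n$, so it suffices to show that each single table $T_x^{\budget,\weight}$ can be filled in time $\ostar(6^k)$ assuming the tables for the children are available (with the mild caveat that at a union node we process all $(\budget_1,\weight_1),(\budget_2,\weight_2)$ at once).

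First I handle the easy cases. For an \emph{introduce vertex} node, only two entries of the tables are nonzero and they are assigned in constant time. For a \emph{relabel} node, I iterate over all $p'\in\CSP$, compute $p'_{i\rightarrow j}\in \Pat$, which by \cref{def:pat-relabel} is again very nice (relabeling a unit vector yields a unit vector and the two-sum keeps entries bounded by $2$), and add $T_{x'}^{\budget,\weight}[p']$ to the corresponding entry of $T_x^{\budget,\weight}[p'_{i\rightarrow j}]$; this runs in time $\ostar(6^k)$. For a \emph{join} node, I iterate over all $p''\in\CSP$, compute $p'=\patadd_{i,j}(p'')$ (which is $\badpat$ when $\lnot\acyc_{i,j}(p'')$, in which case I skip), then compute $\redpat(p')$. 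By \cref{lem:pat-join-time}, $\redpat(p')$ produces at most $7$ very nice patterns in polynomial time, so I add $T_{x'}^{\budget,\weight}[p'']$ to $T_x^{\budget,\weight}[p]$ for each $p\in\redpat(p')$. The total work is $\ostar(6^k)$.

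For a \emph{union} node, I invoke \cref{lem:conv-implies-union}, which states that if the convolution $\stateconv$ over $\states$ can be computed in time $f$, then all tables at a union node can be computed in time $\ostar(f+6^k)$, summing over polynomially many splits of $\budget$ and $\weight$. By \cref{lem:conv-time}, $\stateconv$ can be computed in time $\ostar(6^k)$, so the union node is processed in time $\ostar(6^k)$ as well.

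Combining these bounds, every node of $\syntaxtree$ is processed in time $\ostar(6^k)$, and since the total number of tables we compute across all nodes, all values of $\budget\in[n]_0$, and all values of $\weight\in[n\W]_0$ is polynomial in $n$, the overall running time is $\ostar(6^k)$. The main non-trivial ingredient is the union node, whose efficient processing is entirely deferred to \cref{lem:conv-implies-union} and \cref{lem:conv-time}; the present lemma is essentially a bookkeeping step that collects the per-node bounds, so the only genuine obstacle left is the convolution bound proved in \cref{lem:conv-time}.
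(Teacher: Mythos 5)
Your proof is correct and follows essentially the same route as the paper's: per-node bounds of $\ostar(6^k)$ for introduce, relabel, and join nodes by direct iteration over $\CSP$ (using \cref{lem:pat-join-time} to bound $|\redpat(p')|$ at join nodes), deferring union nodes entirely to \cref{lem:conv-implies-union} and \cref{lem:conv-time}, and multiplying by the polynomially many nodes and $(\budget,\weight)$ values. One small inaccuracy that does not affect the bound: $p'_{i\rightarrow j}$ need not be very nice (e.g.\ if label $i$ has state $\stddisc$ and label $j$ has state $\stdconn$ in $p'$, then $\tot_j(p'_{i\rightarrow j})=4$), so at a relabel node one should simply test in polynomial time whether the image lies in $\CSP$ and handle the remaining $p'$ accordingly, which costs nothing extra.
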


\begin{proof}
    We achieve this by computing the tables $T_x^{\budget, \weight}$ as defined in \cref{algorithm} by a bottom up dynamic programming over $\syntaxtree$. Therefore, we can assume for each node $x\in\nodes$, that all tables $T_{x'}^{\budget, \weight}$ are already given for all children $x'$ of $x$. Since we assume that $|\nodes|$ is bounded polynomially in $n$, it suffices to prove this bound individually for each node $x\in\nodes$.
    We start by bounding the running time for introduce, relabel and join nodes.

    For an introduce node $x$, we iterate over all values $\budget, \weight$ and all patterns $p\in\CSP$ and set the value of $T_x^{\budget,\weight}[p]$ individually.
    For a relabel node $\mu_x = \relabel{i}{j}(\mu_{x'})$, we iterate over all patterns $p'\in\CSP$ and all values $\budget,\weight$, we compute the pattern $p = p'_{i\rightarrow j}$, and we add $T_x'^{\budget,\weight}[p']$ to $T_x^{\budget,\weight}[p]$.
    For a join node $\mu_x = \clqadd{i}{j}(\mu_{x'})$, we iterate over all patterns $p''\in\Pat$ and all $\budget,\weight$, and compute the pattern $p' = \patadd_{i,j}(p'')$. Then we add $T_{x''}^{\budget,\weight}[p'']$ to all entries $T_x^{\budget,\weight}[p]$ for all $p\in \redpat(p')$.

    In total, for each node $x$ we iterate over a polynomial number of values $\budget, \weight$, and it holds by \cref{obs:csp-states} that $|\CSP| = 6^k$. Finally, it holds by \cref{lem:pat-join-time} that $|\redpat(p')| \leq 7$, and that $\redpat(p')$ can be computed in polynomial time for a pattern $p'$ that results from a very nice patterns $p''\in\CSP$ by a join operation. Hence, all tables $T_x^{\budget,\weight}$ at introduce, relabel or join nodes can be computed in time $\ostar(6^k)$.
    Finally, for a union node $x\in\nodes$, the running time follows by inserting the bound $\ostar(6^k)$ from \cref{lem:conv-time} into \cref{lem:conv-implies-union}.
\end{proof}

Before we prove \cref{lem:conv-time}, let us show how this already implies the claimed algorithms in \cref{theo:count-cw} and \cref{theo:ub}.

\begin{proof}[Proof of~\cref{theo:count-cw}]
    We fixes $\W = 1$ and $\weightf(v) = 1$ for all $v\in V$. Hence a feedback vertex set of size $\budget$ has weight $\budget$ as well. The algorithm computes all tables $T_x^{\budget,\weight}$ for each $x\in\nodes$ and all values $\budget, \weight$ as defined in \cref{algorithm}, and outputs the value $s:= \sum_{p\in \CSP} T_r^{n-\target,n-\target}[p]$, where $r$ is the root node of $\syntaxtree$.
    It holds by \cref{cor:alg-counts-fvs} that $s$ is equal to the number of feedback vertex sets of size $\target$ in $G$ modulo $2$. It follows from \cref{lem:time-nodes} that this algorithm runs in time $\ostar(6^k)$.
\end{proof}

\subsection{Isolation Lemma}\label{sec:isolation}

So far, we have shown that the algorithm counts (modulo $2$) the number of solutions of size $\budget$ and weight $\weight$. Now we reduce the decision version of \Fvsp to the counting (modulo $2$) version using the isolation lemma~\cite{DBLP:journals/combinatorica/MulmuleyVV87}. We show that by a careful choice of $\W$, and by choosing the weight function $\weightf$ uniformly at random, we can isolate a unique minimum weight solution of a specific size $\target$ with high probability, if such a solution exists.

\begin{definition}\label{def:isolation}
    A weight function $\weightf:U\rightarrow \mathbb{Z}$ \emph{isolates} a set family $\mathcal{F}\subseteq 2^U$, if there exists a unique $S'\subseteq \mathcal{F}$ with $\weightf(S') = \min_{S\in\mathcal{F}}\weightf(S)$.
\end{definition}

\begin{lemma}[\cite{DBLP:journals/combinatorica/MulmuleyVV87}]\label{lem:iso}
    Let $\mathcal{F}\subseteq 2^U$ be a set family over a universe $U$ with $|\mathcal{F}|>0$, and let $N>|U|$ be an integer. For each $u\in U$, choose a weight $\weightf(u)\in \{1,2,\dots N\}$ uniformly and independently at random. Then it holds that $\pr[\weightf \text{ isolates } \mathcal{F}]\geq 1-|U|/N$.
\end{lemma}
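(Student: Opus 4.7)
The plan is to use the classical ``principle of deferred decisions'' argument. For each element $u \in U$, I would define two quantities that depend only on the weights of elements other than $u$: let
\[
\alpha(u) := \min\{\weightf(S\setminus\{u\}) \mid S \in \mathcal{F},\ u \in S\} \quad \text{and} \quad \beta(u) := \min\{\weightf(S) \mid S \in \mathcal{F},\ u \notin S\},
\]
with each set to $+\infty$ if the corresponding subfamily is empty. I will call $u$ \emph{pivotal} (for the given $\weightf$) if $\weightf(u) = \beta(u) - \alpha(u)$; crucially, only the left-hand side of this equation depends on $\weightf(u)$.

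The heart of the argument is the structural claim that if $\weightf$ fails to isolate $\mathcal{F}$, then some $u \in U$ must be pivotal. To establish it, I would take two distinct minimum-weight sets $S_1, S_2 \in \mathcal{F}$, both of common weight $W^\ast$, and pick any $u \in S_1 \triangle S_2$; after possibly swapping, assume $u \in S_1 \setminus S_2$. The sets $S_1$ and $S_2$ directly witness the inequalities $\alpha(u) \leq W^\ast - \weightf(u)$ and $\beta(u) \leq W^\ast$, while the global optimality of $W^\ast$ over $\mathcal{F}$ supplies the matching lower bounds. This forces $\alpha(u) = W^\ast - \weightf(u)$ and $\beta(u) = W^\ast$, and hence $u$ is pivotal. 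I expect this to be the main conceptual step; the rest is routine probability.

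Given the structural claim, I would finish by a union bound combined with the deferred-decisions principle. Fix $u \in U$ and first reveal $\weightf(v)$ for every $v \neq u$; this determines the single ``bad'' value $\beta(u) - \alpha(u)$ without referring to $\weightf(u)$. Since $\weightf(u)$ is then drawn uniformly from $\{1, \dots, N\}$ independently of the already revealed weights, the probability that it coincides with this specific value is at most $1/N$. Summing over the $|U|$ choices of $u$ bounds the probability that some element is pivotal by $|U|/N$, and the contrapositive of the structural claim yields $\Pr[\weightf \text{ isolates } \mathcal{F}] \geq 1 - |U|/N$, as required.
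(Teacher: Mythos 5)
Your proof is correct: it is the standard argument for the isolation lemma of Mulmuley, Vazirani, and Vazirani, which the paper invokes as a black box (citing the original reference) and does not reprove. The structural claim (a failure of isolation forces some $u\in S_1\triangle S_2$ to be pivotal, with $\alpha(u)=W^\ast-\weightf(u)$ and $\beta(u)=W^\ast$ pinned down by the two witnesses plus global optimality), the deferred-decisions step, and the union bound are all exactly as in the classical proof, and your handling of the $+\infty$ cases is harmless since both quantities are finite whenever the claim is actually applied.
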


\begin{lemma}\label{lem:iso-imply-single-sol}
    Assume that $G$ contains a feedback vertex set of size $\target$.
    Let $\W = 2 |V|$, and let us fix $\weightf$ by choosing $\weightf(v)$ independently and uniformly at random in $[\W]$ for each vertex $v \in V$. Let $Q$ be a minimum-weight feedback vertex set of $G$ of size $\target$. Then $Q$ is unique with probability at least $1/2$.
\end{lemma}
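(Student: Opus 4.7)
The plan is to apply the Isolation Lemma (\cref{lem:iso}) directly to the family of feedback vertex sets of the required size. Concretely, I would set the universe $U := V$ and define the set family $\mathcal{F} := \{S \subseteq V : S \text{ is a feedback vertex set of } G \text{ with } |S| = \target\}$. By the hypothesis that $G$ admits a feedback vertex set of size $\target$, we have $|\mathcal{F}| \geq 1$, so the assumption of \cref{lem:iso} is met.

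Next, I would verify that the parameters chosen in the statement fit the lemma. We pick $N := \W = 2|V|$, and since $|V| \geq 1$ we get $N > |U| = |V|$ as required. The weight function $\weightf$ is sampled by choosing $\weightf(v) \in [\W]$ independently and uniformly at random for each $v \in V$, exactly as \cref{lem:iso} prescribes. Applying the lemma yields
\[
\pr[\weightf \text{ isolates } \mathcal{F}] \;\geq\; 1 - \frac{|U|}{N} \;=\; 1 - \frac{|V|}{2|V|} \;=\; \frac{1}{2}.
\]

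Finally, I would unfold the definition of isolation (\cref{def:isolation}) to obtain the claim. If $\weightf$ isolates $\mathcal{F}$, then there is a unique $S' \in \mathcal{F}$ attaining the minimum value of $\weightf$ over $\mathcal{F}$. Since $Q$ is by assumption a minimum-weight feedback vertex set of $G$ of size $\target$, i.e.\ a minimum-weight element of $\mathcal{F}$, we must have $Q = S'$, and therefore $Q$ is the unique such set. This occurs with probability at least $1/2$, as desired. There is no real obstacle here beyond matching the parameters of the Isolation Lemma to the definition of $\mathcal{F}$; the whole argument is a direct invocation.
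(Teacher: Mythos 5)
Your proposal is correct and follows exactly the paper's own argument: apply the Isolation Lemma to the family of all feedback vertex sets of size $\target$ with $N = \W = 2|V|$, yielding isolation probability at least $1 - |V|/(2|V|) = 1/2$. The only (welcome) addition is that you explicitly verify $N > |U|$, which the paper leaves implicit.
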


\begin{proof}
    Let $\mathcal{Q}\subseteq 2^V$ be the family of all feedback vertex sets of size $\target$ in $G$. Since we choose $\weightf(v)$ independently and uniformly at random for each $v\in V$, it follows by \cref{lem:iso} that $\weightf$ isolates $\mathcal{Q}$ with probability at least $1 - |V| / \W = 1 - |V| / (2|V|) = 1/2$.
\end{proof}

\begin{proof}[Proof of~\cref{theo:ub}]
    The algorithm first fixes $\W = 2\cdot |V|$ as chosen in \cref{lem:iso-imply-single-sol} and fixes $\weightf$ by choosing the weight of each vertex independently and uniformly at random in $[\W]$. Then it computes all tables $T_x^{\budget,\weight}$ for each $x\in\nodes$ and all values $\budget, \weight$ as defined in \cref{algorithm}. The algorithm then accepts if $s_{\weight} := \sum_{p\in \CSP} T_r^{\target,\weight}[p] \bquiv 1$ for any value $\weight$ where $r$ is the root node of $\syntaxtree$, or it rejects otherwise.

    If $G$ does not admit a feedback vertex set of size $\target$, then it follows from \cref{cor:alg-counts-fvs} that the sum $s_{\weight}$ is even for all values $\weight$, and hence the algorithm rejects. Otherwise, we assume that $G$ admits a feedback vertex set of size $\target$, and let $\weight_0$ be the minimum weight of such a set.
    Then it follows from \cref{lem:iso-imply-single-sol} that there exists a unique feedback vertex set of size $\target$ and weight $\weight_0$ with probability at least $1/2$. It follows in this case that $s_{\weight_0} \bquiv 1$ and hence, the algorithm accepts. The running time of the algorithm follows from \cref{lem:time-nodes}.
\end{proof}

\subsection{Fast union operation}

All that is left is to prove \cref{lem:conv-time}, i.e., to show that the convolution over $\statejoin$ can be computed in time $\ostar(6^k)$. In order to do so, we apply a multilayered version of the fast Zeta transform~\cite{DBLP:conf/stoc/BjorklundHKK07} combined with the ``Count and Filter'' technique of van Rooij~\cite{DBLP:conf/birthday/Rooij20}. We start with formal definitions:

\begin{definition}
    Let $(S, \leq)$ be a partially ordered set, and $\mathbb{F}$ be some ring. We define the Zeta transformation $\zeta_{\leq}\colon \mathbb{F}^S \rightarrow \mathbb{F}^S$ over $\leq$, where for $f: S\rightarrow \mathbb{F}$ we define
    $\zeta_{\leq}f(a) = \sum_{b\leq a} f(b)$.
    Similarly, we define the Mobius transformation $\mu_{\leq}$ over $\leq$ as 
    $\mu_{\leq}f(a) = \sum_{b\leq a} \mu_{\leq}(b,a) f(a)$,
    where $\mu_{\leq}(b,a) = 1$ if $b=a$, or $\sum_{b < c \leq a}\mu(c,a)$.
\end{definition}

We cite the following results from van Rooij~\cite{DBLP:conf/birthday/Rooij20}:

\begin{lemma}[{\cite[Proposition 5]{DBLP:conf/birthday/Rooij20}}]
    \label{lem:mobius-zeta-time}
    Let $\leq$ be a join-semilattice over a finite set $S$ of size $L$, and let $\preceq$ be the $k$th power of $\leq$. Then both Zeta and Mobius transformations over $\preceq$ can be computed in time $\ostar(L^k)$.
\end{lemma}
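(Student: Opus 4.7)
The plan is to compute the convolution $\stateconv$ by a three-layer transform scheme that generalizes van Rooij's Count-and-Filter technique from $(\sigma,\rho)$-domination. I will identify each label's state $s \in \simpstates$ with a pair $(a,b)$, where $a \in \{0,1,2\}$ counts the occurrences of the singleton vector $\idv_i$ in the pattern and $b \in \{0,1,2\}$ is the $i$-th entry of the zero-vector, subject to $a + b \leq 2$. Under this encoding, $\simplejoin$ acts as capped addition with the $b$-coordinate taking priority over the $a$-coordinate: first compute $b = \min(b_1+b_2, 2)$, then set $a = \min(a_1+a_2, 2-b)$. This non-idempotent structure is why \cref{lem:mobius-zeta-time} does not apply directly to $\simpstates$: for instance, $\stdisc \simplejoin \stdisc = \stddisc$, so $\simplejoin$ is not a semilattice join.

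The first layer is a Zeta-like per-label transform over the $b$-coordinate that lifts each per-label table into a ``counting'' domain, allowing one to later read off sums $b_1 + b_2$ before they are capped at $2$; implemented via \cref{lem:mobius-zeta-time} applied to a $k$-th power of a $6$-element base order, this costs $\ostar(6^k)$ in total. The second layer does the same for the $a$-coordinate. After both transforms, a per-cell multiplication yields a table that tracks the raw sums $a_1+a_2$ and $b_1+b_2$ correctly, but not yet the priority interaction whereby $b_1+b_2 \geq 2$ absorbs excess $a$-contributions. The third layer is a small per-label transform encoding this priority pushing. Two filters are then interleaved with the inverse transforms: one filter removes the overcounted contributions where $b_1+b_2$ would have been truncated to $2$, and the other removes contributions where the excess $a$-mass should have been absorbed into the $b$-coordinate.

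The main obstacle, and what the paper calls the surprising ingredient, is that after the pointwise product the three transforms cannot be inverted in closed form. In the standard two-layer Count-and-Filter scheme, reverting the last transform yields a single Möbius-type inversion, because there is only one filter and it can be applied at the very end. In our three-layer scheme the filters are intertwined with the transforms, so we must retract the transforms in the reverse of the order in which they were applied, performing the appropriate filter at each stage. The technical heart of the proof is to verify that this step-by-step retraction indeed recovers $f \stateconv g$: this amounts to a finite but intricate case analysis on the six per-label target states $(a,b)$, checking that the net contribution after all transforms, filters, and retractions counts each source pair $(a_1,b_1),(a_2,b_2)$ with $(a_1,b_1) \simplejoin (a_2,b_2) = (a,b)$ exactly once.

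Since each individual transform and each filter can be implemented in $\ostar(6^k)$ time (either by \cref{lem:mobius-zeta-time} applied to a $k$-fold product of a $6$-element base poset, or by a direct Yates'-style per-label application with a constant-size linear map), and since only a constant number of such transforms and filters are performed, the overall running time of the convolution is $\ostar(6^k)$, as required.
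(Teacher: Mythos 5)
Your proposal does not prove the statement at hand. The statement (\cref{lem:mobius-zeta-time}) is a self-contained claim about transforms on product posets: given a join-semilattice $\leq$ on a set $S$ of constant size $L$ and its $k$-th power $\preceq$ on $S^k$, the Zeta transform $\zeta_\preceq f(a)=\sum_{b\preceq a}f(b)$ and the corresponding M\"obius transform can be evaluated in time $\ostar(L^k)$. What you wrote instead is an outline of the fast convolution for $\statejoin$ (i.e.\ of \cref{lem:conv-time}), and it explicitly invokes \cref{lem:mobius-zeta-time} as a black box ("implemented via \cref{lem:mobius-zeta-time} applied to a $k$-th power of a $6$-element base order"). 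As a proof of \cref{lem:mobius-zeta-time} this is circular: you assume the very running-time bound you are asked to establish. None of the material about the $(a,b)$-encoding of $\simpstates$, the capped addition, the filters, or the retraction of transforms is relevant to the target statement, which makes no reference to $\simpstates$, $\simplejoin$, or patterns at all. (In the paper this lemma is not proved but cited from van Rooij; still, a blind proof must argue the bound itself, not reuse it.)

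The missing argument is the standard coordinate-wise (Yates-style) dynamic programming over the product order. For the Zeta transform, define $f_0=f$ and, for $j=1,\dots,k$,
\[
f_j(x) \;=\; \sum_{\substack{y\in S^k\\ y_j\leq x_j,\; y_i=x_i \text{ for } i\neq j}} f_{j-1}(y),
\]
so that $f_k=\zeta_\preceq f$, since $y\preceq x$ iff $y_j\leq x_j$ holds in every coordinate and the $k$ passes telescope the sum one coordinate at a time. Each pass touches all $L^k$ entries and, per entry, sums over at most $L$ values (a constant), so the total cost is $\ostar(L^k)$. The M\"obius transform is obtained the same way, replacing each coordinate-wise summation by the corresponding one-dimensional M\"obius inversion over $\leq$ (with coefficients $\mu_\leq(b,a)$ precomputed once on the constant-size poset $S$), again in $k$ passes of cost $\ostar(L^k)$ each; one then checks that the coordinate-wise M\"obius function multiplies out to the M\"obius function of the product order, which is what makes the pass-by-pass inversion valid. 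Without some version of this argument, your proposal establishes nothing about \cref{lem:mobius-zeta-time}.
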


\begin{lemma}[{\cite[Proof of Lemma 4]{DBLP:conf/birthday/Rooij20}}]
    \label{lem:mobius-zero-one}
    It holds for all $x,y \in S$ that $\sum_{x\leq z\leq y} \mu_{\leq}(z,y) = 1$ if $x=y$ and $0$ otherwise.
\end{lemma}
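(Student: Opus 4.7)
The plan is to reduce the identity directly to the recursive definition of $\mu_{\leq}$ given in the excerpt, namely $\mu_{\leq}(b,a) = 1$ when $b = a$, and $\mu_{\leq}(b,a) = \sum_{b < c \leq a} \mu_{\leq}(c,a)$ otherwise. (This is the characteristic-$2$ form of the classical Möbius recursion, which is natural here since the algorithm ultimately evaluates everything over $\bin$.) No induction on chain length or clever combinatorial argument is needed; a one-step unfolding of the recursion already produces the cancellation.

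For the first case, $x = y$, I would observe that the only $z$ satisfying $x \leq z \leq y$ is $z = y$ itself, so the sum collapses to $\mu_{\leq}(y,y) = 1$ by the base case of the definition, giving exactly the claimed value.

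For the second case, $x < y$, I would split off the $z = x$ term and write
\[
\sum_{x \leq z \leq y} \mu_{\leq}(z,y) \;=\; \mu_{\leq}(x,y) \;+\; \sum_{x < z \leq y} \mu_{\leq}(z,y).
\]
Since $x \neq y$, the non-base clause of the recursion applied at $b = x$, $a = y$ gives $\mu_{\leq}(x,y) = \sum_{x < c \leq y} \mu_{\leq}(c,y)$, which is literally identical to the second sum on the right-hand side. Hence the whole expression equals $2 \cdot \sum_{x < z \leq y} \mu_{\leq}(z,y) = 0$ in $\bin$, as desired. (If one wished to work over $\mathbb{Z}$, the same argument applies with the classical signed recursion $\mu_{\leq}(x,y) = -\sum_{x < c \leq y}\mu_{\leq}(c,y)$, producing the cancellation directly.)

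The only ``obstacle'' is really a matter of convention: the paper's recursion is written without a minus sign because it is tailored to computations in $\bin$, so the work to be done is just to notice that this defining recursion already encodes the vanishing identity once the $z = x$ term has been peeled off. Everything else is immediate.
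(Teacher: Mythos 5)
Your proof is correct: the paper itself gives no proof of this lemma (it is imported from van Rooij's work), and your argument --- peeling off the $z=x$ term and invoking the defining recursion $\mu_{\leq}(x,y)=\sum_{x<c\leq y}\mu_{\leq}(c,y)$ so that the two halves cancel over $\bin$ --- is exactly the standard derivation underlying that citation. The only cosmetic points are that for $x\not\leq y$ the sum is empty and the claim is trivially $0$, and, as you correctly observe, the sign-free recursion is the characteristic-$2$ convention the paper uses throughout.
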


\begin{corollary}[{\cite[Lemma 4]{DBLP:conf/birthday/Rooij20}}]
    \label{cor:mobius-zeta}
    It holds for each mapping $f\colon S\rightarrow \mathbb{F}$ that $\mu(\zeta(f)) = f$.
\end{corollary}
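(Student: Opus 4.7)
The plan is to prove this by directly expanding the two transformations and then invoking \cref{lem:mobius-zero-one} to collapse the resulting double sum. The identity $\mu(\zeta(f)) = f$ is the classical Möbius inversion formula, and the given statement of \cref{lem:mobius-zero-one} is essentially the defining ``orthogonality'' property of the Möbius function phrased as a telescoping identity, so the derivation should be a short computation once the order of summation is swapped correctly.

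Concretely, I would fix an arbitrary $a \in S$ and unfold $\mu(\zeta(f))(a)$ by first applying the definition of the Möbius transform and then that of the Zeta transform:
\[
\mu(\zeta(f))(a) = \sum_{b \leq a} \mu_{\leq}(b, a)\, \zeta(f)(b) = \sum_{b \leq a} \mu_{\leq}(b, a) \sum_{c \leq b} f(c).
\]
Next I would interchange the two summations, pulling $f(c)$ outside, which is valid since the index set $\{(b,c) : c \leq b \leq a\}$ is finite. This yields
\[
\mu(\zeta(f))(a) = \sum_{c \leq a} f(c) \sum_{c \leq b \leq a} \mu_{\leq}(b, a).
\]
At this point the inner sum is exactly the quantity handled by \cref{lem:mobius-zero-one}, which tells us that $\sum_{c \leq b \leq a} \mu_{\leq}(b, a)$ is $1$ when $c = a$ and $0$ otherwise. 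Substituting this back, every term with $c \neq a$ vanishes and only $f(a)$ survives, giving $\mu(\zeta(f))(a) = f(a)$. Since $a$ was arbitrary, this proves $\mu(\zeta(f)) = f$.

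The only real subtlety I expect is bookkeeping of the indices when swapping the summation order, in particular ensuring that the range ``$c \leq b \leq a$'' is correctly obtained from the original nested ranges ``$b \leq a$'' and ``$c \leq b$''; this uses only transitivity of $\leq$. Everything else is a mechanical substitution, so no combinatorial or structural obstruction is expected.
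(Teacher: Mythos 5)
Your proof is correct and is exactly the standard Möbius-inversion computation that the paper implicitly relies on by citing van Rooij (note that \cref{lem:mobius-zero-one} is itself cited as ``Proof of Lemma 4'' there, i.e.\ it is precisely the orthogonality relation needed for your final collapse). The only caveat is that you implicitly use the intended definition $\mu_{\leq}f(a)=\sum_{b\leq a}\mu_{\leq}(b,a)f(b)$ rather than the paper's displayed formula, which has a typo ($f(a)$ in place of $f(b)$); with the corrected definition your argument goes through verbatim.
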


Now we define a lattice $\preceq$ that reflects the ordering structure of our union operation.

\begin{definition}
    We define the lattice $\leq$ over $S_0$ as given by the following Hasse diagram:

    \begin{center}
    \begin{tikzpicture}[y=.6cm, x=.6cm]
        \pic{ord-o};
    \end{tikzpicture}
    \end{center}

    Let $\preceq$ be the $k$th power of this lattice.
\end{definition}

Note that the operation $\lor_{\leq}$ is quite close to the join operation $\simplejoin$, as they only differ on the pair $(\stdisc, \stdisc)$ and the pairs in $\{\stconn, \stplus\}\times\{\stconn, \stplus\}$. The reason for this is is that lattices' joins fall short of capturing the multiplicity of specific states without introducing additional states. Therefore, we will aim to decompose this lattice into three separate lattices, such that, if applied in the right order, one can filter out ``bad'' combinations between them, correctly simulating the join operation $\statejoin$.

\begin{definition}\label{def:three-lattices}
    We define the orderings $\leq_1, \leq_2, \leq_3$ over $S_0$ given by the following Hasse diagrams:
    \begin{center}
    \begin{minipage}[t]{0.3\textwidth}
    \centering
    \begin{tikzpicture}[y=.6cm, x=.6cm]
        \pic{ord-1};
    \end{tikzpicture}
    \end{minipage}%
    ~
    \begin{minipage}[t]{0.3\textwidth}
    \centering
    \begin{tikzpicture}[y=.6cm, x=.6cm]
        \pic{ord-2};
    \end{tikzpicture}
    \end{minipage}%
    ~
    \begin{minipage}[t]{0.3\textwidth}
    \centering
    \begin{tikzpicture}[y=.6cm, x=.6cm]
        \pic{ord-3};
    \end{tikzpicture}
    \end{minipage}%
    \end{center}

    We denote by $\preceq_{1}, \preceq_{2}, \preceq_{3}$ the $k$th powers of the lattices $\leq_1, \leq_2, \leq_3$ respectively. For $i\in\{1,2,3\}$, let $\lor_i = \lor_{\preceq_i}$, $\zeta_i := \zeta_{\preceq_i}$ and $\mu_i := \mu_{\preceq_i}$ be the join, Zeta and Mobius transformations of these orderings respectively.
\end{definition}

Now we are ready to present the main part of this section, where we show that by applying the Zeta transformations over these three orderings in the right order, and by correctly filtering these transformations, we transform the convolution problem into a pointwise product of two tables. The crux of this proof is to show that by transforming this product back in the reverse order of these transformations, we get exactly the convolution of the two original tables.

\begin{definition}
    We define the sets $\classC := \{\stconn, \stplus\}$, $\classDC := \{\stconn^*\}$ and $\classD := \{\stdisc\}$.
    We also define $\classA := S_0\setminus \{\stdisc, \stdconn\}$.
    Given a mapping $f\colon \states\rightarrow \bin$ and a set $X\subseteq S_0$ (mainly one of the sets $\classC, \classDC, \classD, \classA$), we define the mapping $f|^X_i$, where for $x\in \states$ we define
    \[f|^X_i(x) := f(x) \cdot \big[|f^{-1}(X)|=i\big].\]
    Let $f^{(1)} = \zeta_1 f$, $f^{(1)}_i = f^{(1)}|^{\classC}_i$, $f^{(2)}_i = \zeta_2 f^{(1)}_i$, $f^{(2)}_{i,j} = f^{(2)}_i|^{\classD}_j$ and $f^{(3)}_{i,j} = \zeta_3 f^{(2)}_{i,j}$.
\end{definition}

Intuitively, we will use $\classC$ and $\classD$ to bound the number of times these states appear on both sides of a join operation, while the partition of $S_0$ into $\classDC, \classA, \classD$ corresponds to the decomposition of $\leq$ into the orderings $\leq_1, \leq_2$ and $\leq_3$ respectively.

Before we proceed with the next lemma, we present some more notation and provide some intuition that aids understanding.

\begin{definition}
    Given a state $x\in\states$, we define $C_x = x^{-1}(\classC)$, $D_x = x^{-1}(\classD)$, $C^*_x = x^{-1}(\classDC)$, and $A_x = x^{-1}(\classA)$ as the sets of indices assigned the corresponding states.
    Given a state $y\in\states$ and a set $X\subseteq S_0$, we define $y_X$ as the restriction of $y$ to the preimage of $X$.
\end{definition}

\begin{observation}
    Let $X$ be either $\classC$ or $\classD$, and $y,z\in \states$ be two states. Then $X_z$ are the set of indices of $z$ assigned a state in $X$, and $y_{X_z}$ is the restriction of $y$ to these indices. Since $X_{y_{X_z}}$ is the set of indices of $y_{X_z}$, assigned states in $X$, it follows that $X_{y_{X_z}} = X_y \cap X_x$.
\end{observation}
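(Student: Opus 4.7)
The plan is to verify the observation by unwinding the definitions in sequence; there is no real obstacle here, as every step is purely syntactic. First I would recall that, by definition, $X_z = z^{-1}(X) = \{i : z_i \in X\}$, that is, the set of indices on which $z$ takes a value in $X$. Then $y_{X_z}$ denotes the restriction of the mapping $y$ to the domain $X_z$, so $(y_{X_z})_i = y_i$ for every $i \in X_z$, and $y_{X_z}$ is undefined elsewhere.

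Next I would apply the definition of $X_{(\cdot)}$ once more to the restricted mapping $y_{X_z}$. By definition, $X_{y_{X_z}}$ is the set of indices in the domain of $y_{X_z}$ whose image under $y_{X_z}$ lies in $X$. Since the domain of $y_{X_z}$ is precisely $X_z$ and $(y_{X_z})_i = y_i$ there, this set equals
\[
\{i \in X_z : y_i \in X\} = \{i : z_i \in X\} \cap \{i : y_i \in X\} = X_z \cap X_y,
\]
which is exactly the claimed identity $X_{y_{X_z}} = X_y \cap X_z$ (correcting the apparent typo $X_x$ in the statement). The argument does not depend on whether $X$ is $\classC$ or $\classD$, so the observation holds uniformly in both cases.

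The only ``conceptual'' point, if any, is to be careful about the two different roles the subscript $X_{(\cdot)}$ plays: first as the preimage of $X$ under $z$, and then as the preimage of $X$ under the restricted mapping $y_{X_z}$. Once the notation is unpacked, the equality becomes a trivial intersection of two preimages, so I would expect the proof to take only a couple of lines in the paper.
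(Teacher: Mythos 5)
Your unwinding of the definitions is correct and matches what the paper intends: the observation is stated without proof precisely because it reduces to the intersection of two preimages, exactly as you compute. You are also right that the $X_x$ in the statement is a typo for $X_z$.
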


\begin{lemma}\label{lem:conv-zeta-form}
    It holds for all $x\in\states$ that
    $f^{(3)}_{i,j}(x) = \sum\limits_{y\preceq x} \Big[\big|C_{y_{C_x}}\big| =i\Big]\Big[\big|D_{y_{D_x}}\big| = j\Big] f(y)$.
\end{lemma}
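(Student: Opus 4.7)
The plan is to prove the identity by unfolding the definition of $f^{(3)}_{i,j}(x)$ from the outside in. Applying in turn the Zeta transform over $\preceq_3$, then the $\classD$-filter, then the Zeta over $\preceq_2$, then the $\classC$-filter, and finally the Zeta over $\preceq_1$, one obtains
\[
f^{(3)}_{i,j}(x) \;=\; \sum_{y,w,z \in \states}[z \preceq_3 x]\,[w \preceq_2 z]\,[y \preceq_1 w]\;[|z^{-1}(\classD)|=j]\;[|w^{-1}(\classC)|=i]\; f(y).
\]
Grouping by $y$ and factoring $f(y)$ out, the lemma reduces to the $f$-free identity that for every $x,y\in\states$,
\[
\#\{(w,z):y\preceq_1 w\preceq_2 z\preceq_3 x,\; |w^{-1}(\classC)|=i,\; |z^{-1}(\classD)|=j\} \;=\; [y\preceq x]\cdot[|C_y\cap C_x|=i]\cdot[|D_y\cap D_x|=j].
\]

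Next, I would reduce this combinatorial identity to a single-coordinate version. Since $\preceq$ and each $\preceq_\ell$ is the $k$-fold product of a lattice on $S_0$ (\cref{def:three-lattices}), the chain condition $y\preceq_1 w\preceq_2 z\preceq_3 x$ factors coordinate-wise. Writing the $\classC$-filter as $\sum_{I\subseteq[k],\,|I|=i}\prod_{\alpha\in I}[w(\alpha)\in \classC]\prod_{\alpha\notin I}[w(\alpha)\notin \classC]$, and the $\classD$-filter analogously, the left-hand side becomes a sum over pairs $(I,J)$ with $|I|=i$, $|J|=j$ of a product of per-coordinate counts. It therefore suffices to check that for every pair $a,b\in S_0$ the number of triples $(c,d)$ with $a\leq_1 c\leq_2 d\leq_3 b$ equals $[a\leq b]$, and that in this (at most one) chain $c\in\classC$ iff $a,b\in \classC$, while $d\in\classD$ iff $a,b\in\classD$; summing over index subsets of sizes $i$ and $j$ then recovers the desired Iverson-bracketed counts $[|C_y\cap C_x|=i]$ and $[|D_y\cap D_x|=j]$.

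The final step is a finite verification on the six-element set $S_0$, directly from the Hasse diagrams of $\leq_1,\leq_2,\leq_3$ in \cref{def:three-lattices}. The three lattices are designed to form a layered decomposition of $\leq$: each comparable pair $a\leq b$ should admit a unique factoring $a\leq_1 c\leq_2 d\leq_3 b$, and along this chain the intermediate element $c$ lies in $\classC$ exactly when both $a$ and $b$ do (and analogously for $d$ and $\classD$). I expect the main obstacle to be the case analysis around states that could a priori be reached through more than one route -- in particular those involving $\classDC=\{\stdconn\}$, and the transitions $\stdisc\to\stddisc$ versus $\stconn\to\stdconn$, which are precisely the pairs where $\simplejoin$ disagrees with the lattice join of $\leq$: one must verify that the combination of the chain constraint with the two filters isolates exactly the single chain promised above. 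Once that per-coordinate check is done, reassembling the product over coordinates yields the identity.
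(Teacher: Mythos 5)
Your proposal is correct and follows essentially the same route as the paper: unfold the three Zeta transforms and the two filters into a triple sum over chains $y\preceq_1 w\preceq_2 z\preceq_3 x$, then collapse it using the fact that $\preceq$ factors through $\preceq_1,\preceq_2,\preceq_3$ with a unique intermediate chain whose filter values recover $|C_y\cap C_x|$ and $|D_y\cap D_x|$. The paper organizes this collapse via the coordinate blocks $C^*_x$, $A_x$, $D_x$ rather than fully coordinate-wise, but the per-coordinate uniqueness and membership facts you defer to a finite check on $S_0$ do hold, so your plan goes through.
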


\begin{proof}
    It holds that $f^{(1)}(x) = \sum\limits_{y\preceq_1 x}f(y)$, i.e.\ $f(y)$ appears in the sum, if $y$ only differs from $x$ at positions in $C^*_x$. Therefore, it holds that
    \[
    f^{(2)}_i(x) = 
    \sum_{\substack{
    y_{C^*_x}\preceq_1 x_{C^*_x}, y_{A_x} \preceq_2 x_{A_x}\\
    y_{D_x} = x_{D_x}}} 
    \big[\big|C_{y_{C_x}}\big| =i\big] f(y),
    \]
    and hence,
    \[
    f^{(2)}_{i,j}(x) = 
    \sum_{\substack{
    y_{C^*_x}\preceq_1 x_{C^*_x}, y_{A_x} \preceq_2 x_{A_x}\\
    y_{D_x} = x_{D_x}}} 
    \big[\big|C_{y_{C_x}}\big| =i\big]
    \big[\big|D_{y_{D_x}}\big| =j\big] f(y).
    \]
    The claim follows from the fact that $x\preceq y$ holds if and only if $y_{C^*_x}\preceq_1 x_{C^*_x}$, $y_{A_x} \preceq_2 x_{A_x}$ and $y_{D_x} \preceq_3 x_{D_x}$ hold.
\end{proof}

For two mappings $f,g\colon \states\rightarrow \bin$, we denote by $f\cdot g$ the pointwise product of the two mappings, i.e.\ $\big(f\cdot g\big)(x) = f(x) \cdot g(x)$ for all $x\in\states$.
From now on, we fix two arbitrary mappings $f,g\colon \states\rightarrow \bin$, and an arbitrary state $x\in \states$. For $i,j\in[k]$, we define
\[h^{(3)}_{i,j} = \sumstack{i_1+i_2=i\\j_1+j_2=j} f^{(3)}_{i_1,j_1} \cdot g^{(3)}_{i_2,j_2}.\]
Now we show that by reversing the sequence of applications of $\zeta$ on the resulting product, one obtains exactly the convolution $\stateconv$.
We start by defining this reversing process.

\begin{definition}\label{def:reverse-conv}
    Given a mapping $h^{(3)}_{i,j}\colon \states\rightarrow \bin$, we define $h^{(2)}_{i,j} := \mu_3 h^{(3)}_{i,j}$. We define $h^{(2)}_i$ for each values $i\in[k]$, where $h^{(2)}_i(z) := h^{(2)}_{i,j}(z)$ for $j=\big|D_z\big|$. We define $h^{(1)}_i := \mu_2 h^{(2)}_i$ and $h^{(1)}$, where we define $h^{(1)}(z) := h^{(1)}_{i}(z)$ for $i=\big|C_z\big|$. Finally, we define $h = \mu_1 h^{(1)}$.
\end{definition}

The following observation is a direct consequence of \cref{lem:conv-zeta-form}.

\begin{observation}\label{obs:conv-prod-form}
    It holds that 
    \[h^{(3)}_{i,j}(x) = \sum\limits_{y,z\preceq x} 
    \Big[\big|C_{y_{C_x}}\big|+\big|C_{z_{C_x}}\big| = i\Big]
    \Big[\big|D_{y_{D_x}}\big|+\big|D_{z_{D_x}}\big| = j\Big] 
     f(y)\cdot g(z).\]
\end{observation}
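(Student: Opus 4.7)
The plan is to unfold the two Zeta-transformed functions via \cref{lem:conv-zeta-form}, distribute the product, and collapse the sum over the indices $i_1,i_2,j_1,j_2$ using the Iverson brackets. This is essentially a bookkeeping calculation: the observation says nothing more than that a pointwise product of two "parametrized Zeta'd" functions at coordinate $x$ equals, after summing over the compatible index splits, the combined double Zeta where the parameters count totals across both $y$ and $z$.

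Concretely, I first substitute \cref{lem:conv-zeta-form} applied to both $f$ and $g$ into the definition of $h^{(3)}_{i,j}(x)$, obtaining
\[
h^{(3)}_{i,j}(x) = \sumstack{i_1+i_2=i\\j_1+j_2=j} \Big(\sum_{y\preceq x} \big[|C_{y_{C_x}}|=i_1\big]\big[|D_{y_{D_x}}|=j_1\big] f(y)\Big) \Big(\sum_{z\preceq x} \big[|C_{z_{C_x}}|=i_2\big]\big[|D_{z_{D_x}}|=j_2\big] g(z)\Big).
\]
Then I would interchange the inner and outer sums, pulling the sum over $y,z\preceq x$ to the outside, so that the factor $f(y)g(z)$ sits in front of
\[
\sumstack{i_1+i_2=i\\j_1+j_2=j} \big[|C_{y_{C_x}}|=i_1\big]\big[|D_{y_{D_x}}|=j_1\big]\big[|C_{z_{C_x}}|=i_2\big]\big[|D_{z_{D_x}}|=j_2\big].
\]
For fixed $y$ and $z$ the quadruple $(i_1,i_2,j_1,j_2)$ contributing a nonzero summand is uniquely determined by $i_1=|C_{y_{C_x}}|$, $i_2=|C_{z_{C_x}}|$, $j_1=|D_{y_{D_x}}|$, $j_2=|D_{z_{D_x}}|$, and its contribution is $1$ iff these values satisfy the constraints $i_1+i_2=i$ and $j_1+j_2=j$. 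Thus the inner sum collapses to $\big[|C_{y_{C_x}}|+|C_{z_{C_x}}|=i\big]\big[|D_{y_{D_x}}|+|D_{z_{D_x}}|=j\big]$, which yields exactly the claimed formula.

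There is essentially no obstacle here; the statement is flagged as a direct consequence and the only thing to verify is that the algebra of Iverson brackets behaves as expected under the swap of sums, which is immediate since all quantities live in $\bin$ (or $\mathbb{Z}$) and the bracket merger is an identity of nonnegative integers.
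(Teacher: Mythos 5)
Your proposal is correct and is exactly the unfolding the paper intends: the paper states the observation as a direct consequence of \cref{lem:conv-zeta-form} and the definition of $h^{(3)}_{i,j}$, and your substitute-swap-collapse argument is the standard (and only natural) way to make that explicit. The collapse of the inner sum over $(i_1,i_2,j_1,j_2)$ via the uniqueness of the nonzero quadruple is the right justification.
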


\begin{lemma}\label{lem:conv-h2-form}
    It holds that
    \[
    h^{(2)}_i(x) = \sumstack{y,z\preceq x\\y_D\statejoin z_D = x_D} \Big[\big|C_{y_{C_x}}\big| + \big|C_{z_{C_x}}\big| =i\Big] f(y)\cdot g(z).
    \]
\end{lemma}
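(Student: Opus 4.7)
The plan is to apply the Mobius transform $\mu_3$ to the expression for $h^{(3)}_{i,|D_x|}$ obtained in \cref{obs:conv-prod-form}, turning the pointwise product of $\zeta_3$-transforms into a convolution over $\lor_3$, and then unpacking the transforms $f^{(2)}$ and $g^{(2)}$ to recover $f(y)\,g(z)$ in the claimed sum.

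By \cref{def:reverse-conv}, we have $h^{(2)}_i(x) = h^{(2)}_{i,|D_x|}(x) = \mu_3 h^{(3)}_{i,|D_x|}(x)$. Since $h^{(3)}_{i,j}$ is a sum of pointwise products $\zeta_3 f^{(2)}_{i_1,j_1} \cdot \zeta_3 g^{(2)}_{i_2,j_2}$, the main tool is the subset-convolution identity
\[
\mu_3\!\left(\zeta_3 \phi \cdot \zeta_3 \psi\right)(x) \;=\; \sumstack{y,z\in \states\\ y \lor_3 z = x} \phi(y)\,\psi(z),
\]
which follows from a short calculation using \cref{lem:mobius-zero-one} (equivalently, $\zeta_3 \phi \cdot \zeta_3 \psi$ equals $\zeta_3$ of the $\lor_3$-convolution of $\phi$ and $\psi$, and then \cref{cor:mobius-zeta} applies). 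Plugging in $f^{(2)}_{i_1,j_1}(y) = [|D_y|=j_1]\, f^{(2)}_{i_1}(y)$ and its analogue for $g$, the inner sum over $j_1+j_2 = |D_x|$ collapses to the single filter $[|D_y|+|D_z|=|D_x|]$, giving
\[
h^{(2)}_i(x) \;=\; \sum_{i_1+i_2=i}\; \sumstack{y,z\in \states,\, y \lor_3 z = x\\ |D_y|+|D_z|=|D_x|} f^{(2)}_{i_1}(y)\, g^{(2)}_{i_2}(z).
\]

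Next I would use the Hasse structure of $\leq_3$ to simplify the combined condition. By inspection of the diagram, positions $\ell$ with $x_\ell \notin \classD$ force $y_\ell = z_\ell = x_\ell$, while positions $\ell \in D_x$ (with $x_\ell = \stdisc$) admit $(y_\ell,z_\ell) \in \{(\stnone,\stdisc),(\stdisc,\stnone),(\stdisc,\stdisc)\}$; the count equality eliminates the last case (which would contribute $2$ to $|D_y|+|D_z|$ at a single $D_x$-index) and forbids any stray $\stdisc$-values outside $D_x$. The surviving pairs are exactly those with $y,z$ agreeing with $x$ outside $D_x$ and satisfying $y_D \statejoin z_D = x_D$ coordinatewise on $D_x$. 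To recover $f(y'),g(z')$, I would then unpack $f^{(2)}_{i_1}(y)$ using the intermediate expression derived in the proof of \cref{lem:conv-zeta-form}:
\[
f^{(2)}_{i_1}(y) = \sumstack{y'_{C^*_y} \preceq_1 y_{C^*_y},\, y'_{A_y} \preceq_2 y_{A_y}\\ y'_{D_y} = y_{D_y}} [|C_{y'_{C_y}}|=i_1]\,f(y'),
\]
and similarly for $g^{(2)}_{i_2}(z)$. Swapping the order of summation, for each $y'$ there is a unique outer $y$ compatible with the constraints: $y$ agrees with $x$ outside $D_x$, takes values in $\{\stnone,\stdisc\}$ on $D_x$, and has $y_\ell = \stdisc$ exactly at those $\ell \in D_x$ with $y'_\ell = \stdisc$; analogously for $z'$ and $z$. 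Since $\classC$ and $\classD$ are disjoint and $y|_{D_x} \notin \classC$, one verifies $C_y = C_x$, hence $|C_{y'_{C_y}}| = |C_{y'_{C_x}}|$; summing over $i_1+i_2=i$ yields the filter $[|C_{y'_{C_x}}|+|C_{z'_{C_x}}|=i]$ and the claimed identity.

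The main obstacle is the second step: carefully reading off from the Hasse diagram of $\leq_3$ that $y \lor_3 z = x$ combined with the count equality $|D_y|+|D_z|=|D_x|$ collapses exactly to agreement with $x$ outside $D_x$ together with the componentwise $\statejoin$ on $D_x$. The count equality is precisely the filter designed to cut off $(\stdisc,\stdisc)$ overlaps on $D_x$ and stray $\stdisc$-values elsewhere, so a small case analysis over the six states in $\simpstates$ is needed to verify the equivalence.
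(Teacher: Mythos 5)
Your proposal is correct and follows essentially the same route as the paper's proof: both hinge on the fact that applying $\mu_3$ to the pointwise product of $\zeta_3$-transforms yields a $\lor_3$-join convolution (the paper carries this out explicitly via the collapse $\sum_{y'\preceq y\preceq_3 x}\mu_3(y,x)$ using \cref{lem:mobius-zero-one}, which is exactly the ``short calculation'' you invoke), and both then use the cardinality filter $\big[|D_y|+|D_z|=|D_x|\big]$ to upgrade the lattice join on the $\classD$-coordinates to $\statejoin$. The only difference is bookkeeping: you keep $f^{(2)},g^{(2)}$ packed until after the inversion and unpack to $f,g$ at the end, whereas the paper works with the fully expanded sums over the original arguments throughout; the structural facts needed (that $y\preceq_3 x$ fixes $y$ outside $D_x$, hence $C_y=C_x$, mirroring the paper's \cref{eq:conv-h2-y-implies-x}) are identical.
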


\begin{proof}
    In order to prove the claim we first show the correctness of following equality:
    It holds for $z\preceq y \preceq_3 x$, and $X\in\{\classC,\classD\}$ that
    \begin{equation}\label{eq:conv-h2-y-implies-x}
        z^{-1}(X)\cap y^{-1}(X)= z^{-1}(X) \cap x^{-1}(X).
    \end{equation}

    \noindent ``$\subseteq$'': It follows from $y\preceq_3 x$ that $y^{-1}(X) \subseteq x^{-1}(X)$, since all states $\stdisc, \stconn$ and $\stplus$ are maximal in the ordering $\preceq_3$. Hence, $z^{-1}(X)\cap y^{-1}(X) \subseteq z^{-1}(X) \cap x^{-1}(X)$.

    \noindent ``$\supseteq$'': For $X=\classC$, the claim follows from $y\preceq_3 x$, as the elements of $\classC$ are minimal in the ordering $\preceq_3$. For $X=\classD$, it holds for $\ell\in x^{-1}(\classD) \cap z^{-1}(\classD)$ that $y_{\ell} \in\{\stnone,\stdisc\}$. But then it follows from $z\preceq y$ and from $z_{\ell} = \stdisc$ that $y_{\ell} = \stdisc$. Hence, $z^{-1}(X)\cap x^{-1}(X)\subseteq y^{-1}(X)$. The claim follows.

    Note that \cref{eq:conv-h2-y-implies-x} can also be written as
    \begin{equation}\label{eq:conv-h2-y-implies-x-2}
        D_{z_{D_y}} = D_{z_D} \quad \text{and} \quad C_{z_{C_y}} = C_{z_C}.
    \end{equation}

    Now we are ready to prove the lemma. It holds that
    \begin{align*}
        &h^{(2)}_{i,j}(x) = \mu_3 h^{(3)}_{i,j}(x)\\
        &= \sum\limits_{y\preceq_3 x} \mu_3(y, x)
        \sum\limits_{z_1, z_2 \preceq y}
        \Big[ \big|C_{(z_1)_{C_y}}\big| + \big|C_{(z_2)_{C_y}}\big| = i \Big] 
        \Big[ \big|D_{(z_1)_{D_y}}\big| + \big|D_{(z_2)_{D_y}}\big| = j \Big]
        f(z_1) g(z_2) \\
        &= \sum\limits_{y\preceq_3 x} \mu_3(y, x)
        \sum\limits_{z_1\lor z_2 = y' \preceq y}
        \Big[ \big|C_{(z_1)_{C}}\big| + \big|C_{(z_2)_{C}}\big| = i \Big] 
        \Big[ \big|D_{(z_1)_{D}}\big| + \big|D_{(z_2)_{D}}\big| = j \Big]
        f(z_1) g(z_2) \\
        &\begin{aligned}
        = \sumstack{y'\preceq x\\ z_1\lor z_2 = y'}
        \Big[ \big|C_{(z_1)_{C}}\big| + \big|C_{(z_2)_{C}}\big| = i \Big] 
        &\Big[ \big|D_{(z_1)_{D}}\big| + \big|D_{(z_2)_{D}}\big| = j \Big]\\
        &\cdot f(z_1) \cdot g(z_2)
        \sum\limits_{y'\preceq y \preceq_3 x}\mu_3(y, x)
        \end{aligned}
        \\
        &\begin{aligned}
        =\sumstack{y'\preceq x\\ z_1\lor z_2 = y'}
        \Big[ \big|C_{(z_1)_{C}}\big| + \big|C_{(z_2)_{C}}\big| = i \Big] 
        &\Big[ \big|D_{(z_1)_{D}}\big| + \big|D_{(z_2)_{D}}\big| = j \Big]\\
        &\cdot f(z_1) \cdot g(z_2) \cdot
        \sumstack{y'_{C^*}\preceq y_{C^*} = x_{C^*}\\y'_{A}\preceq y_{A}=x_{A}}\sum\limits_{y'_{D}\preceq_3 y_{D} \preceq_3 x_{D}}\mu_3(y_D, x_D)
        \end{aligned}
        \\
        &= \sumstack{y'\preceq x\\ y'_D = x_D}\sum\limits_{z_1\lor z_2 = y'}
        \Big[ \big|C_{(z_1)_{C}}\big| + \big|C_{(z_2)_{C}}\big| = i \Big] 
        \Big[ \big|D_{(z_1)_{D}}\big| + \big|D_{(z_2)_{D}}\big| = j \Big]
        f(z_1) \cdot g(z_2)
        \\
        &= \sumstack{z_1,z_2\preceq x\\(z_1)_D\lor_3 (z_2)_D = x_D}
        \Big[ \big|C_{(z_1)_{C}}\big| + \big|C_{(z_2)_{C}}\big| = i \Big] 
        \Big[ \big|D_{(z_1)_{D}}\big| + \big|D_{(z_2)_{D}}\big| = j \Big]
        f(z_1) \cdot g(z_2).
    \end{align*}
    The third equality holds by \cref{eq:conv-h2-y-implies-x-2}.
    In the fifth equality, we partition $y',y$ and $x$ 
    into their restrictions to the sets $C^*, A, D$, since $y_{C^*} = x_{C^*}$ and $y_A = x_A$ must hold for $y\preceq_3 x$.
    The equality then holds by noting that $\mu_3(y, x) = \mu_3(y_D, x_D)$ for $y\preceq_3 x$. The last equality holds by \cref{lem:mobius-zero-one}.

    Since it holds that $h^{(2)}_i(x) = h^{(2)}_{i,j}(x)$ for $j=x^{-1}(\classD)$, it holds that
    \begin{equation}
        h^{(2)}_i(x) = 
        \sumstack{z_1,z_2\preceq x\\(z_1)_D\lor_3 (z_2)_D = x_D}
        \Big[ \big|C_{(z_1)_{C}}\big| + \big|C_{(z_2)_{C}}\big| = i \Big] 
        \Big[ \big|D_{(z_1)_{D}}\big| + \big|D_{(z_2)_{D}}\big| = |D_x| \Big]
        f(z_1) \cdot g(z_2).
    \end{equation}
    We claim that $(z_1)_D\lor_3 (z_2)_D = x_D$ and $|D_{(z_1)_D}| + |D_{(z_2)_D}| = |D_x|$ hold, if and only if it holds that $(z_1)_D\statejoin (z_2)_D = x_D$, as the only way for $(z_1)_{\ell}\lor (z_2)_{\ell} = \stdisc$ but $(z_1)_{\ell}\simplejoin (z_2)_{\ell} \neq \stdisc$ is if $(z_1)_{\ell} = (z_2)_{\ell} = \stdisc$. But then there must exist another index $\ell'\in D$ with both $(z_1)_{\ell'}$ and $(z_2)_{\ell'}$ different from $\stnone$, which contradicts the assumption that $(z_1)_D\lor_3 (z_2)_D = x_D$. Hence, it must hold that $(z_1)_D\statejoin (z_2)_D = x_D$ and the lemma follows.
\end{proof}

\begin{lemma}\label{lem:conv-h1-form}
    It holds that
    \[
    h^{(1)}(x) = \sumstack{y,z\preceq x\\ y_D\statejoin z_D = x_D\\ y_A \statejoin z_A = x_A}f(y)\cdot g(z).
    \]
\end{lemma}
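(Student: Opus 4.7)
The proof will follow the same template as that of \cref{lem:conv-h2-form}, now applying the M\"obius transform $\mu_2$ to the expression for $h^{(2)}_i$ established in that lemma and then specialising at $i = |C_x|$. The first step is to prove the analogue of \eqref{eq:conv-h2-y-implies-x}: for every $z \preceq y \preceq_2 x$, I would show that $C_{z_{C_y}} = C_{z_{C_x}}$ (so that the indicator counting $\classC$-coordinates of $z_1,z_2$ is independent of an intermediate $y$). The key structural facts are that in $\leq_2$ the states $\stdisc$ and $\stdconn$ are isolated, which forces $y_{D_x} = x_{D_x}$ and $y_{C^*_x} = x_{C^*_x}$, and that on the $\classA$-coordinates the class $\classC$ behaves monotonically under $\leq_2$ in the required sense (if $a \leq_2 b$ and both $a$ and either endpoint lie in $\classC$, then the other also does). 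Both inclusions then follow exactly as in the proof of \eqref{eq:conv-h2-y-implies-x}.

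Next, substituting \cref{lem:conv-h2-form} into $h^{(1)}_i = \mu_2 h^{(2)}_i$ and swapping the order of summation gives
\begin{align*}
h^{(1)}_i(x)
&= \sum_{y \preceq_2 x} \mu_2(y,x) \sumstack{z_1, z_2 \preceq y \\ (z_1)_D \statejoin (z_2)_D = y_D} \Big[|C_{(z_1)_{C_y}}| + |C_{(z_2)_{C_y}}| = i\Big] f(z_1) g(z_2) \\
&= \sumstack{z_1,z_2 \preceq x \\ (z_1)_D \statejoin (z_2)_D = x_D} \Big[|C_{(z_1)_C}| + |C_{(z_2)_C}| = i\Big] f(z_1)g(z_2) \sum_{z_1 \lor z_2 \preceq_2 y \preceq_2 x} \mu_2(y,x).
\end{align*}
Here $y_D = x_D$ from the isolation of $\classD$ in $\leq_2$, and the invariance just proved removes the dependence of the indicator on $y$. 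Since $\leq_2$ is trivial outside of $A_x$, the inner M\"obius sum splits as a product of a trivial sum on $D_x \cup C^*_x$ and the sum $\sum_{(z_1)_A \lor_2 (z_2)_A \preceq_2 y_A \preceq_2 x_A} \mu_2(y_A, x_A)$, which equals $[(z_1)_A \lor_2 (z_2)_A = x_A]$ by \cref{lem:mobius-zero-one}.

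Finally I would set $i := |C_x|$ to obtain $h^{(1)}(x)$ and upgrade the resulting join condition $(z_1)_A \lor_2 (z_2)_A = x_A$ (paired with the count equality $|C_{(z_1)_C}| + |C_{(z_2)_C}| = |C_x|$) to $(z_1)_A \statejoin (z_2)_A = x_A$. The two operations coincide on $\classA$-pairs except when both coordinates lie in $\classC$, in which case $\lor_2$ stays in $\classA$ while $\statejoin$ jumps to $\stdconn$; the count constraint rules out exactly this collision, because each such double-$\classC$ position would appear in both $C_{(z_1)_C}$ and $C_{(z_2)_C}$ yet contribute only $0$ to $|C_x|$ (since $x_i = \stdconn \notin \classC$), making the count equality fail. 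Verifying this case analysis cleanly, together with the structural lemma on $\leq_2$ in the first step, is the main obstacle; once it is handled, combining everything yields the claimed closed form for $h^{(1)}(x)$.
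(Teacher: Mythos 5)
Your proposal is correct and follows essentially the same route as the paper's proof: the invariance $C_{z_{C_y}} = C_{z_{C_x}}$ for $z \preceq y \preceq_2 x$, the Möbius-inversion computation with the sum split over the $C^*$-, $A$-, and $D$-coordinates and collapsed via \cref{lem:mobius-zero-one}, and the final upgrade from $\lor_2$ to $\statejoin$ on the $\classA$-coordinates using the cardinality filter at $i = |C_x|$. The only minor imprecision is in your last step: at a double-$\classC$ position $\ell$ the lattice join $\lor_2$ keeps $x_\ell$ inside $\classC$ (it does not jump to $\stdconn$), so the count equality fails because $\ell$ contributes $2$ to $|C_{(z_1)_C}| + |C_{(z_2)_C}|$ but only $1$ to $|C_x|$ — the conclusion you draw is nevertheless the right one.
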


\begin{proof}
    We first prove the following equality: It holds for all $z\preceq y \preceq_2 x$ that
    \begin{equation}\label{eq:conv-h1-y-implies-x}
        \quad C_{z_{C_y}} = C_{z_C}.
    \end{equation}

    \noindent ``$\subseteq$'': It follows from $y\preceq_2 x$ that $C_y \subseteq C$ since $\stplus$ is maximal in $\preceq_2$, and $\stplus$ is the only state larger than $\stconn$ under $\preceq_2$. The claim follows.

    \noindent ``$\supseteq$'': Let $\ell\in x^{-1}(\classD) \cap z^{-1}(\classD)$ it follows from $z\preceq y$ that $y_{\ell} \in\{\stnone,\stdisc, \stdconn\}$. But then it follows from $y\preceq_2 x$ that $y_{\ell}\in\{\stdisc, \stconn\}$. Hence, $\ell \in C_y$. The claim follows.

    Now we prove the lemma. It holds that
    \begin{align*}
        &h^{(1)}_i(x) = \mu_2 h^{(2)}_i(x)\\
        &= \sum\limits_{y\preceq_2 x} \mu_2(y, x)\sumstack{z_1, z_2\preceq y\\ (z_1)_D\statejoin (z_2)_D = y_D} \Big[\big|C_{(z_1)_{C_y}}\big| + \big|C_{(z_2)_{C_y}}\big| = i\Big] f(z_1) \cdot g(z_2)\\
        &= \sum\limits_{y\preceq_2 x} \mu_2(y, x)
        \sumstack{z_1\lor z_2=y'\preceq y\\ (z_1)_D\statejoin (z_2)_D = x_D} 
        \Big[\big|C_{(z_1)_C}\big| + \big|C_{(z_2)_C}\big| = i\Big]
        f(z_1) \cdot g(z_2)\\
        &= \sum\limits_{y'\prec x}\sumstack{z_1\lor z_2=y'\\(z_1)_D\statejoin (z_2)_D = x_D}
        \Big[\big|C_{(z_1)_C}\big| + \big|C_{(z_2)_C}\big| = i\Big]
        f(z_1) \cdot g(z_2)
        \sumstack{y' \preceq y \preceq_2 x} 
        \mu_2(y, x)\\
        &
        \begin{multlined}
        =
        \sum\limits_{y'\prec x}\sumstack{z_1\lor z_2=y'\\(z_1)_D\statejoin (z_2)_D = x_D}
        \Big[\big|C_{(z_1)_C}\big| + \big|C_{(z_2)_C}\big| = i\Big]
        f(z_1) \cdot g(z_2)\\
        \sumstack{y'_D=y_D=x_D\\y'_{C^*}\leq y_{C^*}=x_{C^*}}
        \sumstack{y'_A \preceq_2 y_A \preceq_2 x_A} 
        \mu_2(y_A, x_A)
        \end{multlined}\\
        &= \sumstack{y'\prec x\\y'_A=x_A}\sumstack{z_1\lor z_2=y'\\(z_1)_D\statejoin (z_2)_D = x_D}
        \Big[\big|C_{(z_1)_C}\big| + \big|C_{(z_2)_C}\big| = i\Big]
        f(z_1) \cdot g(z_2)\\
        &= \sumstack{z_1,z_2\prec x\\(z_1)_A\lor (z_2)_A=x_A\\(z_1)_D\statejoin(z_2)_D = x_D}
        \Big[\big|C_{(z_1)_C}\big| + \big|C_{(z_2)_C}\big| = i\Big]
        f(z_1) \cdot g(z_2).
    \end{align*}
    The third equality holds by \cref{eq:conv-h1-y-implies-x}.
    In the fifth equality, we partition $y',y$ and $x$ 
    into their restrictions to the sets $C^*, A, D$, where $y'_D = y_D = x_D$ holds by definition, and $y_{C^*} = x_{C^*}$ holds for $y\preceq_2 x$.
    The equality then holds since it holds that $\mu_2(y, x) = \mu_2(y_A, x_A)$ for $y\preceq_2 x$. Finally, the last equality holds by \cref{lem:mobius-zero-one}.

    Since it holds that $h^{(1)}(x) = \mu_2 h^{(1)}_i(x)$ for $i=\big|x^{-1}(\classC)\big|$, it follows that
    \begin{equation}
        h^{(1)}(x) = 
        \sumstack{z_1,z_2\preceq x\\(z_1)_A\lor_3 (z_2)_A = x_A\\(z_1)_D\statejoin(z_2)_D = x_D}
        \Big[ \big|C_{(z_1)_{C}}\big| + \big|C_{(z_2)_{C}}\big| = |C_x| \Big] 
        f(z_1) \cdot g(z_2).
    \end{equation}
    We claim that $(z_1)_A\lor_2 (z_2)_A = x_A$ and $|C_{(z_1)_C}| + |C_{(z_2)_C}| = |x_C|$ hold, if and only if it holds that $(z_1)_A\statejoin (z_2)_A = x_A$, as the only way for $(z_1)_{\ell}\lor (z_2)_{\ell} \in \classC$ but $(z_1)_{\ell}\simplejoin (z_2)_{\ell} \notin \classC$ is if $(z_1)_{\ell},(z_2)_{\ell} \in \classC$. But then there must exist another index $\ell'\in D$ with $(z_1)_{\ell'}, (z_2)_{\ell'} \notin \classC$, which contradicts the assumption that $(z_1)_A\lor_3 (z_2)_A = x_A$. Hence, it must hold that $(z_1)_A\statejoin (z_2)_A = x_A$ and the lemma follows.
\end{proof}

The following result follows directly from the definitions of $\statejoin$, $h^{(1)}$ and $\zeta_1$ by \cref{cor:mobius-zeta}.

\begin{corollary}\label{cor:conv-h1-is-zeta}
    It holds that $h^{(1)} = \zeta_1 (f\stateconv g)$, and hence, $f\stateconv g = \mu_1 h^{(1)}$.
\end{corollary}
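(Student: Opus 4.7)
The plan is to establish the first identity $h^{(1)} = \zeta_1(f\stateconv g)$ by direct expansion, after which the second identity follows immediately upon applying $\mu_1$ and invoking Corollary~\ref{cor:mobius-zeta}, which states $\mu_1\circ\zeta_1 = \mathrm{id}$.

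For the first identity, I would expand the right-hand side as
\[
\zeta_1(f\stateconv g)(x)
= \sum_{w \preceq_1 x} (f\stateconv g)(w)
= \sum_{w \preceq_1 x}\ \sum_{y\statejoin z = w} f(y)\,g(z)
= \sum_{y,z \in \states:\, y\statejoin z \preceq_1 x} f(y)\,g(z),
\]
and then rewrite the single condition $y\statejoin z \preceq_1 x$ coordinate by coordinate along the three blocks $C^*_x$, $A_x$, $D_x$. The lattice $\leq_1$ from Definition~\ref{def:three-lattices} is designed so that on coordinates $\ell \in A_x \cup D_x$ the relation $\preceq_1$ collapses to equality, while on coordinates $\ell \in C^*_x$ (where $x_\ell = \stdconn$ sits at the top of the nontrivial part of $\leq_1$) the condition is vacuous. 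Hence the condition $y\statejoin z \preceq_1 x$ becomes exactly $y_A \statejoin z_A = x_A$ and $y_D\statejoin z_D = x_D$, with $y_{C^*_x}, z_{C^*_x}$ completely unconstrained.

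Next, I would observe that the ``$y,z \preceq x$'' requirement in Lemma~\ref{lem:conv-h1-form} is already implicit in these coordinate-wise conditions: on $A_x$ (resp.\ $D_x$) coordinates, the equality $y_A\statejoin z_A = x_A$ (resp.\ $y_D\statejoin z_D = x_D$) together with the fact that $\statejoin$ coordinate-wise dominates both operands in $\preceq_2$ (resp.\ $\preceq_3$) forces $y_{A_x}, z_{A_x} \preceq_2 x_{A_x}$ and $y_{D_x}, z_{D_x} \preceq_3 x_{D_x}$; on $C^*_x$ coordinates $y_{C^*_x}, z_{C^*_x} \preceq_1 x_{C^*_x}$ is automatic since $\stdconn$ is the $\leq_1$-top. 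Consequently the sum reduces to
\[
\zeta_1(f\stateconv g)(x)
= \sum_{\substack{y,z \preceq x \\ y_A\statejoin z_A = x_A \\ y_D\statejoin z_D = x_D}} f(y)\,g(z),
\]
which is precisely the expression for $h^{(1)}(x)$ furnished by Lemma~\ref{lem:conv-h1-form}. Finally, applying $\mu_1$ to both sides and using Corollary~\ref{cor:mobius-zeta} yields $f\stateconv g = \mu_1 h^{(1)}$.

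The main obstacle I anticipate is the coordinate-wise reconciliation of the $\preceq_1$-condition with the constraints of Lemma~\ref{lem:conv-h1-form}: one has to verify from the Hasse diagram of $\leq_1$ that $\preceq_1$ is trivial on $A$- and $D$-blocks and vacuous on the $C^*$-block, and simultaneously that the additional ``$y,z \preceq x$'' clause of Lemma~\ref{lem:conv-h1-form} adds nothing beyond what the join-equalities already impose. Once these bookkeeping points are checked, the identity $h^{(1)} = \zeta_1(f\stateconv g)$ is immediate and the second equality is a one-line consequence of Corollary~\ref{cor:mobius-zeta}.
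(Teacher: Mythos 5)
Your proposal is correct and follows exactly the route the paper intends: the paper states this corollary "follows directly from the definitions of $\statejoin$, $h^{(1)}$ and $\zeta_1$" via \cref{cor:mobius-zeta}, giving no further detail, and your expansion of $\zeta_1(f\stateconv g)(x)$ into the sum over pairs with $y\statejoin z\preceq_1 x$, the block-wise reconciliation with the formula of \cref{lem:conv-h1-form} (using that $\stdconn$ is the top of $\leq_1$ and that $\simplejoin$ dominates both operands, so the clause $y,z\preceq x$ is implied), and the final application of $\mu_1$ is precisely the omitted bookkeeping.
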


\begin{proof}[Proof of~\cref{lem:conv-time}]
    Given the mappings $f$ and $g$, we start by computing the mappings $f^{(1)}, f^{(1)}_i, f^{(2)}_i, f^{(2)}_{i,j}$ and $f^{(3)}_{i,j}$ as defined above for all values $i,j\in[k]$. Similarly, we compute the mappings $g^{(1)}, g^{(1)}_i, g^{(2)}_i, g^{(2)}_{i,j}$ and $g^{(3)}_{i,j}$. Each of these computations can be done in time $\ostar(6^k)$ by~\cref{lem:mobius-zeta-time}.
    We then compute the mapping $h=f\cdot g$ in $Oh(6^k)$ time by iterating over all states $x\in\states$ and computing $h(x)=f(x)\cdot g(x)$. Finally, we compute the mappings $h^{(2)}_{i,j}, h^{(2)}_i, h^{(1)}_i, h^{(1)}$ and $h$ for all values of $i,j\in[k]$ as defined in \cref{def:reverse-conv}. Each of these computations can be done in time $\ostar(6^k)$ by \cref{lem:mobius-zeta-time}. 
    Hence, the whole process runs in time $\ostar(6^k)$, and it holds by \cref{cor:conv-h1-is-zeta} that $h = f\stateconv g$. 
\end{proof}

\section{Lower bound}\label{sec:lb}

In this section, we prove \cref{theo:lower-bound}, showing that our upper bound is essentially tight. Our lower bound is conditioned on the Strong Exponential-Time Hypothesis (SETH) stated as follows:

\begin{conjecture}[SETH~\cite{DBLP:journals/jcss/ImpagliazzoP01,DBLP:journals/jcss/ImpagliazzoPZ01}]
    For any positive value $\delta$ there exists an integer $d$ such that the $d$-SAT problem cannot be solved in time $\ostar((2-\delta)^n)$, where $n$ denotes the number of the variables.
\end{conjecture}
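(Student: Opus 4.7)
The statement here is the Strong Exponential-Time Hypothesis itself, a famous unproven conjecture rather than a theorem that admits a proof in the usual sense. SETH sits strictly above $\mathsf{P}\neq\mathsf{NP}$ in strength: a polynomial-time SAT algorithm would in particular solve every $d$-SAT in time $\ostar((2-\delta)^n)$ for every $\delta<1$, so proving SETH would imply $\mathsf{P}\neq\mathsf{NP}$ and much more. No unconditional proof is known, and, with present techniques, none is realistically within reach; accordingly the paper uses SETH as an assumption to transfer conjectured hardness of $d$-SAT into concrete lower bounds for parameterised problems (as in \cref{theo:lower-bound}) rather than establishing it.

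If I nevertheless try to sketch what a proof would require, the natural outline is: fix an arbitrary $\delta>0$; then, for some $d=d(\delta)$ chosen sufficiently large, exhibit a family of $d$-SAT instances on $n$ variables together with an argument that no algorithm (in a fixed reasonable model, say a deterministic Turing machine or RAM with $2^{o(n)}$ preprocessing) can decide satisfiability in time $(2-\delta)^n$. Two broad avenues are conceivable. One is a direct circuit or branching-program lower bound strong enough to force essentially exhaustive search on wide clauses; the other is a fine-grained reduction from some problem already known to require $(2-o(1))^n$ time. Neither avenue is currently viable: the strongest unconditional circuit lower bounds for SAT-like problems are many orders of magnitude weaker than required, and no candidate ``seed'' problem with a proven $(2-o(1))^n$ lower bound is known.

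The central obstacle is therefore fundamental rather than technical: complexity theory presently lacks any method capable of ruling out moderately exponential algorithms for NP-hard problems, and even the weaker Exponential-Time Hypothesis remains open. Known algorithmic progress on $k$-SAT (Schöning's random walk, PPSZ, width-based resolution bounds, etc.) yields running times of the shape $\ostar\bigl((2-c/d)^n\bigr)$, which is consistent with SETH and actually motivates it, but gives no leverage towards a matching lower bound. For these reasons a genuine proof plan cannot be offered here; within the scope of this work SETH is adopted axiomatically, and all ``tightness'' statements in the paper should be read as conditional on it.
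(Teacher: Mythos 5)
You are right: this statement is the Strong Exponential-Time Hypothesis itself, which the paper states as a conjecture (with citations to Impagliazzo–Paturi and Impagliazzo–Paturi–Zane) and adopts purely as an assumption for its lower bounds; no proof is given or expected, so there is nothing to compare against and your treatment matches the paper's.
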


We prove our claim by a reduction from the $q$-CSP-$B$ problem based on the following results:

\begin{theorem}[{\cite[Theorem 3.1]{DBLP:journals/siamdm/Lampis20}}]
    \label{theo:lampis-csp-lb}
    For any $B \geq 2$, $\varepsilon >0$ we have the following: if the SETH is true, then there exists a $q$ such that $n$-variable $q$-CSP-$B$ cannot be solved in time $\ostar((B-\varepsilon)^n)$.
\end{theorem}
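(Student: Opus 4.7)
The plan is to prove this SETH-based lower bound for $q$-CSP-$B$ by a variable-packing reduction from $d$-SAT, choosing $d$ and $q$ as functions of $B$ and $\varepsilon$ so that a hypothetical $\ostar((B-\varepsilon)^n)$ algorithm for $q$-CSP-$B$ compresses into a subexponentially faster SAT algorithm, violating SETH. Throughout I may assume $B-\varepsilon>1$, since otherwise the statement is vacuous.

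Fix $B\geq 2$ and $\varepsilon\in(0,B-1)$, and set $\rho:=\log_2(B-\varepsilon)/\log_2 B$, so $0<\rho<1$. Choose $\delta>0$ with $\log_2(2-\delta)>\rho$; this is possible since $\log_2 2=1>\rho$. SETH gives an integer $d=d(\delta)$ such that $d$-SAT on $n$ variables admits no $\ostar((2-\delta)^n)$ algorithm. Next, pick an integer $L$ so large that $t:=\lceil L/\log_2 B\rceil$ satisfies $(t/L)\cdot\log_2(B-\varepsilon)<\log_2(2-\delta)$; this is achievable because $t/L\to 1/\log_2 B$ as $L\to\infty$ and $(1/\log_2 B)\log_2(B-\varepsilon)=\rho$ is strictly less than $\log_2(2-\delta)$. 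Finally, declare $q:=d\cdot t$, a constant depending only on $B$ and $\varepsilon$.

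The reduction goes as follows. Given a $d$-SAT formula $\phi$ on $n$ Boolean variables, I partition the variables into groups of $L$ consecutive variables (padding with at most $L-1$ unused dummy variables). For each group $g_i$ I introduce $t$ CSP variables $x^i_1,\ldots,x^i_t$ with domain $[B]$, and since $B^t\geq 2^{t\log_2 B}\geq 2^L$, I fix an injective encoding of the $2^L$ Boolean assignments of $g_i$ into $[B]^t$ and add a single constraint of arity $t$ on $(x^i_1,\ldots,x^i_t)$ permitting exactly the tuples in the image of this encoding. For each clause $C$ of $\phi$, whose at most $d$ literals sit in at most $d$ groups, I add a constraint of arity at most $d\cdot t=q$ on the corresponding CSP variables that is satisfied precisely when the decoded Boolean assignment satisfies $C$. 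The resulting $q$-CSP-$B$ instance uses $n'\leq (n/L)t + t$ variables and is equisatisfiable with $\phi$; the construction is polynomial time.

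A hypothetical $\ostar((B-\varepsilon)^{n'})$ algorithm for $q$-CSP-$B$ would then decide $\phi$ in time
\[
\ostar\bigl((B-\varepsilon)^{n(t/L)+t}\bigr)=\ostar\bigl(2^{(t/L)\log_2(B-\varepsilon)\cdot n}\bigr)=\ostar\bigl((2-\delta)^n\bigr),
\]
by the choice of $L$, contradicting SETH. The main obstacle is the parameter tuning: one must consistently chain $\varepsilon\rightarrow\rho\rightarrow\delta\rightarrow d\rightarrow L\rightarrow q$ so that the exponent inequality $(t/L)\log_2(B-\varepsilon)<\log_2(2-\delta)$ is strict while $q$ remains a constant independent of $n$. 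Everything else is elementary: the per-group encoding constraints and the per-clause compatibility constraints are explicit and of constant arity, and the padding contributes only an additive constant to $n'$ which disappears into the $\ostar$.
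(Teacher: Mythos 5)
Your reduction is correct, and it is essentially the same variable-grouping/domain-compression argument by which Lampis proves the cited Theorem 3.1; the paper itself gives no proof of this statement but only imports it, so your write-up matches the intended (external) proof, including the parameter chain $\varepsilon\to\delta\to d\to L\to t\to q$ and the equisatisfiable encoding of $L$ Boolean variables into $t=\lceil L/\log_2 B\rceil$ domain-$[B]$ variables. One cosmetic point: the case $B-\varepsilon\le 1$ is not vacuous; it simply follows from the case $B-\varepsilon>1$, since an $\ostar((B-\varepsilon)^n)$-time algorithm is a fortiori an $\ostar((B-\varepsilon')^n)$-time algorithm for any $\varepsilon'<\varepsilon$.
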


First, let us define the $n$-variable $q$-CSP-$B$ problem. Given is a set of variables $\var = \{v_1,\dots v_n\}$ with values in $[B]$ and a set $\mathscr{C} = \{C_1,\dots C_m\}$ of $q$-constraints, where a constraint $C_j = (V_j, S_j)$ is a pair consisting an ordered tuple of $q$ variables $V_j\in \var^q$ and a set of $q$-tuples of values $S_j\subseteq [B]^q$.

A partial assignment $\alpha$ is a mapping from a subset $S\subseteq \var$ to $[B]$. For a partial assignment $\alpha:S\rightarrow[B]$, let $\tilde{\alpha}:\var\rightarrow [B]\cup \{\perp\}$ be the extension of $\alpha$ to all variables in $\var$ with $\alpha(v) =\perp$ for $v\notin S$. Then we say that a partial assignment $\alpha$ satisfies a constraint $C_j$ if the tuple $\tilde{\alpha}(V_j)$ is in $S_j$. The goal of the $n$-variable $q$-CSP-$B$ problem is to find an assignment $\alpha:\var\rightarrow[B]$ satisfying all the constraints of $\mathscr{C}$ or to declare that no such assignment exists.

From now on, fix $B=6$.
Our lower bound follows similar tight lower bounds for structural parameters \cite{DBLP:journals/corr/bojikianfghs25, DBLP:journals/siamdm/Lampis20} based on SETH. We provide a parameterized reduction from the $q$-CSP-$B$ problem to the \Fvsp problem for each value of $q$, bounding the value of the parameter in the resulting graph by a constant above $n$. We show that an algorithm with running time $\ostar((6-\epsilon)^{\lcw})$ for any positive value $\epsilon$ implies an algorithm with running time $\ostar((6-\epsilon)^{n})$ for the $q$-CSP-$B$ problem for any value of $q$, which contradicts SETH. Hence, assuming SETH, the lower bound must hold.

\subsection{Construction of the graph}

Let $I$ be the given instance of the $q$-CSP-$B$ problem with $n$ variables and $m$ constraints.
A \emph{deletion edge} $\tri{u}{v}$ between two vertices $u$ and $v$ in a graph $G$ for two distinct vertices $u,v\in V(G)$ consists of the edge $\{u,v\}$, and a new vertex $w_{u,v}$ adjacent to $u$ and $v$ only.
We call the vertex $w_{u, v}$ a \emph{deletion vertex}. For a set of vertices $S\subseteq V$, we denote by $\inctri{S}$ the set consisting of $S$ and all deletion vertices introduced by deletion edges between the vertices in $S$.

The graph $G$ consists of copies of fixed components of constant sizes, called gadgets. We distinguish two kinds of gadgets in $G$, namely a \emph{path gadget} and a \emph{constraint gadget}. We distinguish a single vertex $r$ in $G$ called the \emph{root} vertex and two more vertices $g_1$ and $g_2$ in $G$, called \emph{guards}.

More concretely, $G$ consists of $n$ path-like structures called \emph{path sequences} (paths for short), each consisting of $c = (5n + 1)m$ consecutive path gadgets. Consecutive path gadgets are connected by adding bicliques between two vertices of each path gadget (its \emph{exit vertices}) and two vertices (called the \emph{entry vertices}) of the path gadget following it. These small cuts give a path sequence its narrow path-like structure. The entry vertices of the first path gadget of each path sequence are adjacent to $g_1$, while the exit vertices of the last path gadget are adjacent to $g_2$.

For a path sequence $P_i$, let $\pgad{i}{1},\dots \pgad{i}{c}$ be the path gadgets on $P_i$ from left to right. We divide $P_i$ into $(5n + 1)$ groups of size $m$ each, called \emph{segments}, where the $\ell$th segment $\seg{i}{\ell}$ consists of the path gadgets $\pgad{i}{(\ell-1)m + 1}, \dots, \pgad{i}{\ell\cdot m}$. For $\ell\in[5n + 1]$, we call the set consisting of the $j$th segment of each path sequences a \emph{section} $\sect{\ell}$. Finally, for $j\in [c]$, we call the set consisting of the $j$-th path gadget of each path sequence a \emph{column}, i.e., for $j\in[c]$ we have $\col{j} = \{\pgad{1}{j},\dots, \pgad{n}{j}\}$. See \cref{fig:lb-construction} for illustration.

\begin{figure}
    \centering
    \begin{subfigure}[b]{0.3\textwidth}
        \centering
        \includegraphics[width=\textwidth]{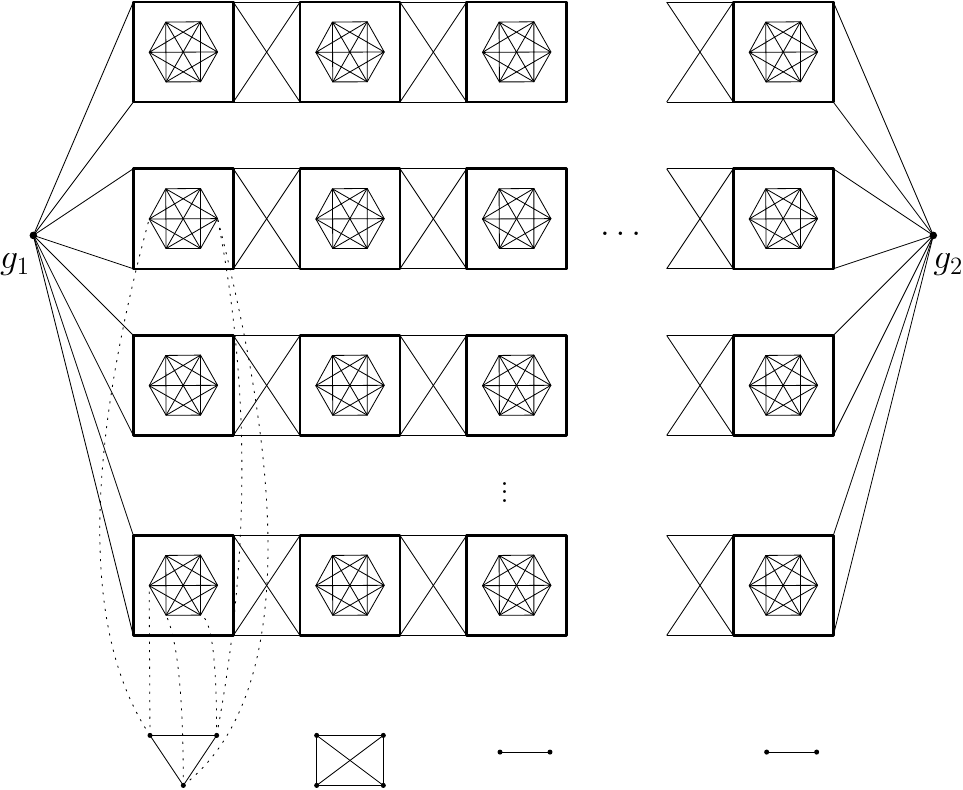}
        
    \end{subfigure}
    \hfill
    \begin{subfigure}[b]{0.3\textwidth}
        \centering
        \includegraphics[width=\textwidth]{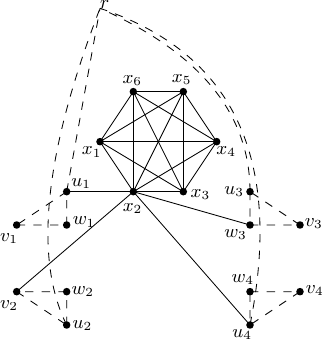}
    \end{subfigure}
    \caption{On the left, the graph $G$ consisting of $n$ path sequences. We depict a path gadget as a square with its clique vertices drawn inside of it. Each column corresponds to a constraint with a constraint gadget hanging to it. The deletion edges between the first constraint gadget and the first column are drawn as dotted lines, where the rest are omitted for clarity. From the drawn gadgets, it can be seen that this graph $G$ corresponds to an instance where the first constraint admits two satisfying assignments, the second has four, and the third has two. The variables underlying the first constraint are the second and the last variables, as the deletion edges are drawn accordingly.
    On the right, we depict a path gadget. Normal edges are depicted as dashed lines, while deletion edges are depicted as solid lines. The deletion edges are drawn between the cycles $C_i$ and only the clique vertex $x_2$ for clarity. Note that a deletion edge to $v_2$ means that if the path gadget has state $2$ then $v_2$ must be in the solution (and hence, state $\stnone$), a deletion edge to $u_1$ means that $v_1$ is disconnected from $r$ inside the gadget (and hence, state $\stdisc$), while a deletion edge to $w_3$ means that $v_3$ is connected to $r$ through $u_3$ (state $\stconn$).
    \label{fig:lb-construction}}
\end{figure}

We attach a constraint gadget to each column.
Let $\states := \{\stnone, \stdisc, \stddisc, \stconn, \stplus, \stdconn\}$ be the same set of states defined in the upper bound.
A solution defines a state in each path gadget, given by the intersection of the solution with its vertices. Each constraint gadget in a section corresponds to a different constraint, and it ensures that the states defined by a solution in its column corresponds to a satisfying assignment of this constraint. As noted above, each path consists of $5n+1$ segments. This ensures that for each solution $S$ there exists some section $\sect{\ell}$, where it holds for each segment of this section, that all path gadgets of this segment admit the same state under $S$. Such a state assignment corresponds to a satisfying assignment of the instance $I$. Therefore, each solution $S$ corresponds to a satisfying assignment of $I$.

\subparagraph{Path gadget $\pgadget$.} A path gadget $\pgadget$ consists of $6$ vertices $x_1,\dots, x_6$, called the clique $K$ and of four cycles of length $3$ each $C_1,\dots, C_4$.
All clique vertices are adjacent to the root $r$, and there is a deletion edge between each pair of them.
Each cycle $C_i$ consists of the vertices $v_i,u_i,w_i$, where $u_i$ is adjacent to $r$. The vertices $v_1$ and $v_2$ serve as the \emph{entry vertices} of $\pgadget$, and we call $C_1,C_2$ the entry cycles of $\pgadget$. Similarly, $v_3$ and $v_4$ serve as the \emph{exit vertices} and we call $C_3,C_4$ the exit cycles of $\pgadget$.

We define the transition mapping $t:[B]\rightarrow \{\stnone, \stdisc, \stconn\}^4$ given by the following mappings:

\begin{tabularx}{.8\textwidth}{XXX}
    $1\mapsto (\stnone, \stnone, \stconn, \stconn)$,&
    $2\mapsto (\stdisc, \stnone, \stconn, \stdisc)$,&
    $3\mapsto (\stdisc, \stdisc, \stconn, \stnone)$\\
    $4\mapsto (\stconn, \stnone, \stdisc, \stdisc)$,&
    $5\mapsto (\stconn, \stdisc, \stdisc, \stnone)$,&
    $6\mapsto (\stconn, \stconn, \stnone, \stnone)$.
\end{tabularx}

Intuitively, each state of $\states$ can be decomposed into two simple states in $\{\stnone, \stdisc, \stconn\}$. Hence, each element in the image of $t$ corresponds the transition between two states. A solution $S$ defines a simple state in each cycle $C_i$ of a path gadget $\pgadget$ given by the intersection of $S$ with the vertices of $C_i$.
We construct path gadgets in such a way that the states defined by any solution $S$ in the entry and exit cycles form a valid transition (i.e., an element in the image of $t$), and we call the corresponding index of this transition the \emph{state} of this path gadget defined by the solution $S$.

For each value $o\in[6]$ and each value $i\in [4]$, we add a deletion edge between the clique vertex $x_o$ and the cycle $C_i$ as follows: Let $Z := \big(t(o)\big)_i$ be the $i$th simple state in the $o$th transition. We add a deletion edge between $x_o$ and $v_i$ if $Z = \stnone$, between $x_o$ and $u_i$ if $Z=\stdisc$, and between $x_o$ and $w_i$ if $Z=\stconn$. This concludes the construction of a path gadget. For a specific path gadget $\hat{\pgadget}$, and an arbitrary path gadget vertex $v\in\pgadget$ defined above, we denote by $v(\hat{\pgadget})$ the copy of $v$ in the context of the gadget $\hat{\pgadget}$.

We add all edges between the exit vertices of each path gadget and the entry vertices of the following path gadget. We also add all edges between $g_1$ and the entry vertices of the first path gadget of each path sequence, and all edges between $g_2$ and the exit vertices of the last path gadget of each path sequence.

\subparagraph{constraint gadget \cgadget.}
For $j\in[c]$, we attache a constraint gadget $\cgad{j}$ to the $j$th column, corresponding to the $h$th constraint $C_{h}$ where $j = \ell m + h$ for some integer value $\ell$. Let $S_h = \{s_1,\dots s_k\}$. Then $\cgadget$ consists of the $k$ vertices $z_1\dots z_k$, called the \emph{clique vertices} as well, with a deletion edge between each pair of them. For each $\ell\in[k]$ and each $p\in [q]$, let $v_i := (V_h)_p$ be the $p$th variable in the constraint $C_h$, and let $y := (s_{\ell})_p$ be the value assigned to $v_i$ by $s_{\ell}$. Then we add a deletion edge between $z_{\ell}$ and $x_{y'}(\pgad{i}{j})$ for all $y'\in[B]\setminus\{y\}$, where $x_{y'}(\pgad{i}{j})$ is the copy of the clique vertex $x_{y'}$ in the path gadget $\pgad{i}{j}$. We call $k$ the \emph{order} of the constraint gadget $\cgadget$.
This completes the description of a constraint gadget, and with which the description of the graph $G$.

\subparagraph{Budget.}
Now we define the target value $\budget$ in the resulting instance $(G, \budget)$. We define a family of pairwise disjoint components $\lbfamily \subseteq 2^V$, and we define the mapping $\lbound:\lbfamily\rightarrow\mathbb{N}$ as a lower bound on the size of a feedback vertex set in each of these components. Finally, we define $\budget$ as the sum of $\lbound(X)$ for all $X\in \lbfamily$.
By the tightness of $\budget$, as we shall see in \cref{lem:lb-b-tight}, it holds that any solution of size $\budget$ must contain exactly $\lbound(X)$ vertices in each such component $X\in\lbfamily$, and no other vertices.

Now we describe the family $\lbfamily$. From each path gadget $\pgadget$ in $G$, the family $\lbfamily$ contains each cycle $\inctri{C_i}$ together with its deletion edges for $i\in[4]$, and of the clique vertices $\inctri{K}$ together with the deletion edges between them. The family $\lbfamily$ also contains each constraint gadget in $G$. We define the mapping $\lbound$ as follows: we set $\lbound(\inctri{C_{\ell}}) = 1$ for each cycle of a path gadget, and $\lbound(\inctri{K}) = 5$ for the clique vertices of a path gadget. For each constraint gadget $\cgadget$ of order $k$ in $G$, we define $\lbound(\cgadget) = k-1$.

\subsection{Proof of correctness}

\begin{lemma}\label{lem:lb-cw-bound}
    There exists a constant $k_0$ such that for $k=n+k_0$, a linear $k$-expression of $G$ can be constructed in polynomial time.
\end{lemma}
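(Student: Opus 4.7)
The plan is to construct a linear $k$-expression for $G$ with $k = n + k_0$, for some constant $k_0$ depending only on $q$, by a left-to-right column sweep combined with a top-to-bottom row sweep inside each column. I would reserve $n$ labels $\alpha_1,\ldots,\alpha_n$, one per row (path sequence $P_i$), to carry ``the exit vertices of the most recently built path gadget in row $i$''. The remaining constantly many labels serve as (i) a private label for the root $r$ (which must stay live throughout, since every clique vertex and every $u_\ell$ in every cycle must be joined to $r$), (ii) private labels for $g_1$ and $g_2$ during the short windows when they are live, (iii) a ``finalized'' sink label $\delta$ that receives every vertex whose adjacencies in $G$ are fully realized, and (iv) a small constant-size pool of working labels used for the vertex-by-vertex construction of the current path gadget and the current constraint gadget. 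The invariant I would maintain is that after column $j$ has been processed, $\alpha_i$ labels exactly the two exit vertices of $\pgad{i}{j}$, every other earlier vertex sits at $\delta$, and $r$ (plus each guard until its last use) sits on its own private label.

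Processing column $j$ consists of three phases. In the first phase I introduce the clique vertices $z_1,\ldots,z_k$ of $\cgad{j}$ together with their internal deletion vertices, each on a distinct working label, and install all internal edges of $\inctri{\{z_1,\ldots,z_k\}}$; here $k \leq B^q = 6^q$ is a constant because $q$ is fixed by the reduction, so this fits in constantly many labels. In the second phase, for each row $i$ in turn, I introduce the constantly many vertices of $\pgad{i}{j}$ (six clique vertices, four triangles, and their associated deletion vertices) on working labels, install all internal edges of the gadget together with the adjacencies to $r$ via joins, realize the biclique to the previous exit vertices held at $\alpha_i$ (or at $g_1$'s label in the first column) by a single join, and add the required deletion edges between each $z_\ell$ and the clique vertices $x_{y'}(\pgad{i}{j})$ by introducing the corresponding deletion vertex on a free working label and firing the appropriate joins. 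I then relabel the previous occupants of $\alpha_i$, along with every vertex of $\pgad{i}{j}$ other than its two exit vertices, to $\delta$, and finally relabel those two exit vertices to $\alpha_i$. In the third phase, once every row of column $j$ has been handled, I relabel $z_1,\ldots,z_k$ and their deletion vertices to $\delta$, thereby restoring the invariant.

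The first and last columns need only minor adjustments. For column $1$, the biclique at row $i$ is formed with $g_1$, which I introduce on its own label at the start and relabel to $\delta$ after the whole first column is done. For the final column, after the third phase I realize the edges from $g_2$ to the last-column exit vertices by successively relabeling $\alpha_1,\ldots,\alpha_n$ to a single common label and then firing one join to $g_2$. Each step above uses only a constant number of working labels in addition to the $n$ row labels, and every union attaches a single freshly introduced vertex to the part already built, so the resulting expression is linear of width $n + k_0$. Since $|V(G)|$ is polynomial in the input size and each vertex participates in only a constant number of operations, the entire construction runs in polynomial time. The only mildly delicate part is bookkeeping---ensuring that the row labels $\alpha_i$ and the working labels remain disjoint at every step---but this is easily arranged by fixing the label allocation once and for all at the outset.
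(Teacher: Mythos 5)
Your construction is correct and matches the paper's proof in all essentials: $n$ reserved labels carrying the current exit vertices of each path sequence, a constant pool of labels for the root, guards, a ``forget'' sink, the in-progress path/constraint gadgets (of constant size since $q$ is fixed), and deletion-edge vertices, processed in a column-by-column linear sweep. The only cosmetic differences (collapsing the $\alpha_i$ onto one label for the final join to $g_2$, and retiring $g_1$ early) do not change the argument.
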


\begin{proof}
    We describe a linear $k$-expression of $G$. Let $x_0, x_1, \dots$ be the nodes of $\syntaxtree$ building a simple path, and $x'_i$ be the private neighbor of $x_i$ (if exists) in the caterpillar structure of the syntax tree of a linear $k$-expression. Since we only use these nodes $x'_i$ to introduce a new vertex of some label $\ell$, and then unify it at $x_i$ with the labeled graph built so far (i.e.\ $G_{x_{i-1}}$), we will just ignore these nodes and say that at $x_i$ we add a vertex labeled $\ell$ to $G_{x_{i-1}}$. Hence, we consider $x_0, x_1, \dots$ as timestamps of building $G$, and denote $G_{x_i}$ by $G_i$. Similarly, let $\mu_i = \mu_{x_i}$, $V_i = V_{x_i}$, $E_i = E_{x_i}$, and $\lab_i = \lab_{x_i}$.

    We reserve the first $n$ labels for later, and use the label $\lforget := n+1$ to label vertices whose incident edges have all been introduced before the current timestamp. 
    When we say we forget a vertex $u$ at some timestamp $x_i$, we mean that we relabel (the label of) this vertex to $\lforget$, i.e.\ it holds that
    $\mu_i := \relabel{\lab_{i-1}(u)}{\lforget}(\mu_{i-1})$, where we only apply this operation when the vertex $u$ has a unique label in $G_{i-1}$.
    We also reserve the label $\ltri := n+2$ to create deletion edges. Let $x_i$ be some timestamp. For $u,v\in V_x$, let $i=\lab_x(u)$ and $j=\lab_x(v)$. When we say that we add a deletion edge between $u$ and $v$, we assume first that $\rlab_x(i) = \{u\}$ and $\rlab_x(j) = \{v\}$ (both have unique labels). In this case, we mean that we add a new vertex $w_{u,v}$ labeled $\ltri$, and we add all edges between $u$, $v$ and $w_{u,v}$, and then we relabel $w_{u,v}$ to $\lforget$. Since this is the only place where we use the label $\ltri$ in the construction, it holds at any given timestamp $x_i$ that $|\rlab_i(\ltri)|\leq 1$.

    Now we are ready to describe the linear $k$-expression of $G$. First, we introduce the vertices $r, g_1$ and $g_2$ with labels $\lroot = n'+3$, $\lgl = n'+4$, and $\lgr=n'+5$ respectively. These are the only vertices that will be assigned these labels at any timestamp, and will never be relabeled. Hence, for a vertex $u$ of a unique label at any timestamp, we can make it adjacent to any of them through a join operation without changing the adjacency of any other vertices of the graph.

    We divide the rest of the construction into $c$ phases. In each phase we add a new column to the construction, one by one, adding the $j$th column in the $j$th phase. Let $x_{h}$ be the timestamp at the end of phase $j$. We will preserve the following invariant: 
    It holds that $\lab_{h}(r) = \lroot$, $\lab_{h}(g_1) = \lgl$, $\lab_{h}(g_2) = \lgr$, and for each $i\in[n]$, that $\lab_{h}(v^3(X_i^j)) = \lab_{h}(v^4(X_i^j)) = i$.
    All other vertices in $V_{h}$ are labeled $\lforget$ at $x_h$.

    Now we describe the $j$th phase for each index $j\in [c]$. The phase starts by introducing the vertices of the constraint gadget $\cgad{j}$, each having its own unique label (starting from $n+6$). We add all deletion edges between all these vertices. After that we build the path gadgets of the current column one by one, where we build the $i$th path gadget of the current column as follows: First we create all vertices of the path gadget assigning a unique label to each vertex. We add the edges inside the path gadget, and between the path gadget and the root as described in the construction of the graph. We also add the deletion edges between the path gadget and the constraint gadget in its column. This is possible since all these vertices have unique labels. We also add all edges between the entry vertices of the path gadget and the exit vertices of the preceding path gadget (having label $i$), if this is not the first path gadget, or with $g_1$ otherwise. After that we forget the exit vertices of the preceding path gadget, and we relabel the exit vertices of the current path gadget to the label $i$, and forget all other vertices of the current path gadget.

    When we finish the construction of each column, we forget its constraint gadget. When we finish the construction of the last column, we join the label $\lgr$ of $g_2$ with all labels $i\in[n]$, introducing the edges between $g_2$ and the exit vertices of the last path gadget in each path sequence.
    Clearly, this results in the graph $G$ as described above. At each timestamp $x$, we use at most $n$ labels for the exit vertices, we also use the $5$ reserved labels $\lforget$, $\ltri$, $\lroot$, $\lgl$, $\lgr$, and we uniquely label the vertices of at most one constraint gadget and one path gadget, whose sizes are both constant.
\end{proof}

\begin{lemma} \label{lem:lb-sol-if-sat}
    If $I$ is satisfiable, then $G$ admits a feedback vertex set of size $\budget$.
\end{lemma}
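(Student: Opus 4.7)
The plan is to construct an explicit feedback vertex set $S$ of size $\budget$ directly from a satisfying assignment $\alpha\colon\var\to[B]$. For each variable $v_i$ with value $y_i:=\alpha(v_i)$, and each path gadget $\pgadget$ on the $i$th path sequence, I will put into $S$ the five clique vertices $x_o(\pgadget)$ for $o\neq y_i$ (keeping only $x_{y_i}(\pgadget)$), and for each cycle $C_j$ of $\pgadget$ the unique vertex of $C_j$ specified by the simple state $(t(y_i))_j$: namely $v_j$ if $(t(y_i))_j=\stnone$, $u_j$ if $(t(y_i))_j=\stdisc$, and $w_j$ if $(t(y_i))_j=\stconn$. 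For each column $j$, the corresponding constraint $C_h$ is satisfied by $\alpha$, so the tuple $\tilde\alpha(V_h)$ equals $s_{\ell^*}$ for some $\ell^*\in[k]$; I will then put into $S$ the clique vertices $z_{\ell'}$ of the constraint gadget for $\ell'\neq\ell^*$. Any guard vertices that are forced (namely $g_1,g_2$, if the specific boundary states require it) will be included as well, with the bookkeeping absorbed into the definition of $\budget$.

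The size computation is immediate from the definition of $\budget$: each path gadget contributes $5+4=9=\lbound(\inctri{K})+\sum_j\lbound(\inctri{C_j})$ vertices and each constraint gadget of order $k$ contributes $k-1=\lbound(\cgadget)$ vertices, so $|S|=\budget$. The acyclicity check has two easy layers and one careful layer. First, every deletion triangle $\tri{a}{b}$ is killed: between any two clique vertices of a path gadget exactly one stays out of $S$, and for each deletion edge from $x_{y_i}$ to its designated cycle vertex the cycle vertex itself is in $S$ (by construction of $t$), while the deletion edges from the chosen $z_{\ell^*}$ go to $x_{y'}$ with $y'\neq y_i$ which all lie in $S$. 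Second, each triangle $C_j$ loses one vertex (leaving an edge) and the clique $K$ loses five (leaving a star), so locally no gadget contains a cycle, and each constraint gadget becomes a star around the surviving $z_{\ell^*}$, disconnected from the rest of $G\setminus S$ since all of its neighbours in $G\setminus S$ are removed.

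The main obstacle, and the place where the states have to be chosen so carefully, is showing that the global component containing $r$ is a tree. I will analyse this path sequence by path sequence: within a single sequence assigned to state $y_i$, describe the subgraph induced on the surviving vertices as a ``caterpillar'' hanging off $r$ and check that each biclique cut between consecutive gadgets attaches cleanly. The key invariant, read off from the explicit table for $t$, is that for the self-transition of any state $o\in[6]$ no biclique edge has both endpoints in the $\stconn$-state simultaneously, so no biclique edge both closes a path through $r$ on the left gadget and a path through $r$ on the right gadget; all remaining combinations ($\stconn$--$\stdisc$, $\stdisc$--$\stdisc$, or anything involving $\stnone$) attach a pendant or a path without introducing a new $r$--$r$ route. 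Putting these per-sequence trees together at $r$ (together with the handling of the boundary via $g_1,g_2$ and the disjoint constraint-gadget stars) yields a forest, which completes the argument.
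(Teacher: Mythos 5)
Your solution set, the size accounting, and the treatment of deletion triangles and constraint gadgets coincide with the paper's proof; the problem lies in the global acyclicity argument at the cuts between consecutive path gadgets. The cut between a gadget $\pgadget$ and its successor $\pgadget'$ is a complete bipartite graph on $\{v_3(\pgadget),v_4(\pgadget)\}\times\{v_1(\pgadget'),v_2(\pgadget')\}$, so whenever all four of these boundary vertices survive, the four biclique edges by themselves form a $4$-cycle that never visits $r$. Your stated invariant only excludes biclique edges whose two endpoints both carry state $\stconn$, and you assert that every other combination ``attaches a pendant or a path''. That assertion is false in general: exit states $(\stdisc,\stdisc)$ meeting entry states $(\stdisc,\stdisc)$ would leave the entire $K_{2,2}$ intact and hence a cycle, even though no $r$--$r$ route is created. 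What actually saves the construction is a second property of the transition table that you never state or verify, namely that for every $o\in[6]$ at least one of the four coordinates of $t(o)$ equals $\stnone$, so at least one of the four boundary vertices across the cut is placed into the solution and the $4$-cycle is broken. The paper's proof rules this out as a separate case (cycles ``between exit and entry vertices only'') alongside the cycles through $r$; your write-up covers only the latter, so the argument as given is incomplete.

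A second, smaller issue: you cannot place $g_1$ or $g_2$ into the solution and ``absorb the bookkeeping into the definition of $\budget$''. The budget is already fixed as $\sum_{X\in\lbfamily}\lbound(X)$, the guards belong to no component of $\lbfamily$, and by \cref{lem:lb-b-tight} a feedback vertex set of size exactly $\budget$ contains no vertices outside those components. Whatever must be said about cycles through the guards has to be argued from the entry/exit states of the first and last gadgets of each path sequence, not by deleting $g_1,g_2$.
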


\begin{proof}
    Let $\alpha$ be a satisfying assignment of $I$. We define the solution $\sol \subseteq V$ of size $\budget$ as the set of the following vertices:
    From each path gadget $\pgad{i}{j}$ with $\alpha(v_i) = y$ we include in $\sol$ all vertices $x_{y'}$ for $y'\in[B]\setminus\{y\}$. For each $\ell\in[4]$ we also include $v_{\ell}$ if $t(y)_\ell = \stnone$, $u_{\ell}$ if $t(y)_\ell = \stdisc$, and $w_{\ell}$ if $t(y)_\ell = \stconn$.
    From each constraint gadget $\cgad{j}$ corresponding to a constraint $C_h$, we include all vertices $z_{\ell'}$ of the constraint gadget except for the vertex $z_{\ell}$ where $s_{\ell} = \alpha(V_h)$. The vertex $z_{\ell}$ exists, since $\alpha$ satisfies $C_h$.

    We claim that $\sol$ is a feedback vertex set of $G$ of size $\budget$. The size follows from the fact that we pack exactly $\lbound(X)$ vertices in each component $X \in \lbfamily$. Now we show that $G\setminus \sol$ is a forest.
    By the choice of $\sol$, it contains at least one endpoint of each deletion edge between the clique vertices, and a vertex of each cycle $C_i$ of each path gadget. It also contains a least one endpoint of each deletion edge between the vertices of a constraint gadget. Hence, cycles can only appear as deletion edges between a clique vertex and a vertex of a constraint gadget, or as a closed walk between passing through two consecutive path gadgets. Any other closed walk in the graph must contain a cycle of one of these forms as a subsequence.

    First, we exclude the first case. Assume there exists such a cycle, and let $x_y, z_{\ell}$ be the endpoints of the corresponding deletion edge, where $x_y$ is a clique-vertex of a path gadget $\pgad{i}{j}$ and $z_{\ell}$ is a vertex of the constraint gadget $\cgad{j}$ in the same column. Since $x_y$ is not in the solution, it means that $\alpha(v_i) = y$. But then it follows from the choice of $\sol$ that $(S_{\ell})_i = y$, and hence, by the definition of a constraint gadget, there doesn't exist a deletion edge between $x_y$ and $z_{\ell}$ which is a contradiction.

    Finally, let $C$ be a cycle of the latter form. Let $\pgadget, \pgadget'$ be two consecutive path gadgets that intersect $C$. Let $x_y, x_{y'}$ be the clique vertices of $\pgadget$ and $\pgadget'$ respectively, excluded from $\sol$. By the choice of $\sol$, there are no edges left between the cycles $C_i$ and the clique vertices. Hence, any minimal such cycle is either of the form $v_{3}(\pgadget), v_{1}(\pgadget'), v_{4}(\pgadget), v_{2}(\pgadget')$, i.e.\ between exit and entry vertices only, or is of the form $r,u_i(\pgadget), v_i(\pgadget), v_{i'}(\pgadget'), u_{i'}(\pgadget')$. Both cases can be excluded directly by inspecting the transitions mapping $t$, since the former requires a tuple of four $\stnone$, and the latter requires a tuple containing $\stconn$ both in its first two positions and in its latter two positions.
\end{proof}

Now we aim to prove the other direction of this lemma. Let $\sol$ be a fixed feedback vertex set of $G$ of size $\budget$. Let $\overline{\sol} = V\setminus \sol$ and $\hat{G} = G\setminus\sol$. For a set $X\subseteq V$, we denote $\sol_X = \sol \cap X$.

\begin{lemma}\label{lem:lb-b-tight}
    It holds that $\sol = \dot{\bigcup\limits_{X\in\lbfamily}}\sol_X$, and for each $X\in \lbfamily$ that $|\sol_X| = \lbound(X)$.
\end{lemma}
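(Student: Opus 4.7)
The plan is to establish a local lower bound $|\sol \cap X| \geq \lbound(X)$ on each member $X$ of $\lbfamily$, then combine these bounds with pairwise disjointness of $\lbfamily$ and the tight definition of $\budget$ to force equality everywhere and to rule out any vertices of $\sol$ outside $\bigcup \lbfamily$.

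First I would verify that the members of $\lbfamily$ are pairwise disjoint. This follows directly from the construction: the four cycles $\inctri{C_1},\ldots,\inctri{C_4}$ of a path gadget are built on disjoint vertex triples and carry no internal deletion vertices (each $C_i$ is a plain triangle $v_i u_i w_i$); the clique part $\inctri{K}$ consists of the six clique vertices together with the $\binom{6}{2}$ deletion vertices sitting on edges between them, and is therefore disjoint from every $\inctri{C_i}$; deletion vertices introduced by deletion edges between $K$ and a cycle $C_i$ (or between $K$ and a constraint gadget) have one endpoint outside $K$, hence do not appear in $\inctri{K}$, and symmetrically not in $\inctri{C_i}$ or in $\cgadget$; finally, different path gadgets and different constraint gadgets live on disjoint fresh vertex sets.

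Second I would establish the local lower bounds. For $X = \inctri{C_i}$, $X$ is a triangle and any feedback vertex set of $G$ must hit it, so $|\sol \cap X| \geq 1 = \lbound(X)$. For $X = \inctri{K}$ and for $X = \cgadget$ of order $k$, the induced subgraph on $X$ is a complete graph $K_t$ (with $t = 6$ resp.\ $t = k$) together with a deletion vertex on every edge. Let $a$ be the number of original clique vertices of $X$ that lie outside $\sol$. Since these $a$ vertices induce $K_a$ inside $G \setminus \sol$, acyclicity forces $a \leq 2$; and if $a = 2$ then the unique deletion vertex on the surviving edge creates a triangle unless it too belongs to $\sol$. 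A one-line case analysis on $a \in \{0,1,2\}$ then gives $|\sol \cap X| \geq t - 1 = \lbound(X)$ in all cases.

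Third I would combine the two ingredients. By disjointness and the local lower bounds,
\[
|\sol| \;\geq\; \sum_{X \in \lbfamily} |\sol \cap X| \;\geq\; \sum_{X \in \lbfamily} \lbound(X) \;=\; \budget \;=\; |\sol|,
\]
so equality must hold throughout. The middle equality forces $|\sol_X| = \lbound(X)$ for every $X \in \lbfamily$, and the outer equality forces $\sol \subseteq \bigcup_{X \in \lbfamily} X$, which together with disjointness gives $\sol = \dot{\bigcup}_{X \in \lbfamily} \sol_X$. I do not expect a real obstacle: the only non-routine step is computing the feedback vertex number of $\inctri{K_t}$, which the short case analysis on $a$ handles cleanly.
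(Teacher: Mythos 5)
Your proposal is correct and follows essentially the same route as the paper's (much terser) proof: pairwise disjointness of $\lbfamily$, local lower bounds from the triangles formed by the cycles $C_i$ and by the deletion edges within each clique, and the tightness of $\budget$ forcing equality everywhere and excluding vertices outside $\bigcup\lbfamily$. Your explicit case analysis on the number $a$ of surviving clique vertices correctly fills in the detail the paper leaves implicit.
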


\begin{proof}
    The lower bound $|\sol_X| \geq \lbound(X)$ holds for each family $X\in\lbfamily$, since deletion edges and the cycles $C_i$ are all triangles. If the solution contains an addition vertex in one of the component of $\lbfamily$, or a vertex from outside these components, then it must hold that $|\sol| > \budget$, which is a contradiction.
\end{proof}

If $\sol$ contains a vertex $w_{u,v}$ added by a deletion edge $\tri{u}{v}$, then the set $\big(\sol\setminus\{w_{u,v}\}\big)\cup \{u\}$ is a solution of at most the same size, since any cycle containing $w_{u,v}$ must also contain $u$. Hence, from now on, we can assume that $\sol$ does ot contain any vertices added by deletion edges. That means $\sol$ contains at least one endpoint of each deletion edge, all but at most one clique-vertex in each path gadget, and all but one vertex of the clique vertices of each constraint gadget.

\begin{definition}\label{def:lb-pgad-state}
    Let $\pgadget$ be some path gadget, and $z_s$ be the only clique vertex of $\pgadget$ not in $\sol$ for some value $s \in [6]$. Then we call $s$ the \emph{state} of $\pgadget$ defined by $\sol$ (denoted by $\statef_{\sol}(\pgadget)$). We omit $\sol$ when clear from context.
\end{definition}

\begin{lemma}\label{lem:lb-state-imply-v}
    For a path gadget $\pgadget$, we define $G_{\pgadget}^{\sol} = \hat{G}\big[\pgadget\cup\{r\}\big]$. Let $s= \statef(\pgadget)$. For $i\in[4]$ we define the following predicates:
    \[
    \phi_i:=(v_i\in\sol),
    \quad
    \chi_i:=\lnot \phi_i \land \big(v_i,r \text{ are not connected in } G_{\pgadget}^{\sol}\big),
    \quad \text{and}\quad 
    \xi_i:=\lnot \phi_i \land \lnot \chi_i.
    \]
    Then all following implications hold:
    \begin{itemize}
        \item $s = 1 \implies \phi_1 \land \phi_2 \land \xi_3 \land \xi_4$.
        \item $s = 2 \implies \chi_1 \land \phi_2 \land \xi_3 \land \chi_4$.
        \item $s = 3 \implies \chi_1 \land \chi_2 \land \xi_3 \land \phi_4$.
        \item $s = 4 \implies \xi_1 \land \phi_2 \land \chi_3 \land \chi_4$.
        \item $s = 5 \implies \xi_1 \land \chi_2 \land \chi_3 \land \phi_4$.
        \item $s = 6 \implies \xi_1 \land \xi_2 \land \phi_3 \land \phi_4$.
    \end{itemize}
\end{lemma}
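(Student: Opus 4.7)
The plan is to prove the six implications by a case analysis on $s = \statef(\pgadget) \in [6]$, using the deletion edges that witness $s$ together with \cref{lem:lb-b-tight}. By that lemma, $|\sol \cap C_i| = 1$ for every $i \in [4]$ and exactly five of the six clique vertices lie in $\sol$, so $x_s$ is the unique clique vertex of $\pgadget$ outside $\sol$. As argued just before \cref{def:lb-pgad-state}, we may further assume that $\sol$ contains no deletion vertex, and then for any deletion edge $\tri{u}{v}$ the third vertex of its triangle survives in $\hat G$, forcing at least one of $u,v$ into $\sol$.

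Next I would apply this observation to each deletion edge incident to $x_s$. By construction, for every cycle index $i\in[4]$ there is exactly one deletion edge from $x_s$ to $C_i$, and its cycle endpoint is $v_i$, $u_i$, or $w_i$ according to whether $t(s)_i$ equals $\stnone$, $\stdisc$, or $\stconn$. Since $x_s \notin \sol$, this endpoint $y_i$ must be in $\sol$, and by $|\sol \cap C_i|=1$ it is the unique vertex of $C_i$ in $\sol$. Reading off the explicit values of $t(s)$ for each $s\in[6]$ thus pins down $y_i$ in each of the four cycles; what remains is to translate the choice of $y_i$ into the corresponding predicate $\phi_i$, $\chi_i$, or $\xi_i$.

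For this last step I would analyse connectivity in $G_{\pgadget}^{\sol}$ directly. If $y_i = v_i$, then $\phi_i$ is immediate. If $y_i = w_i$, the surviving cycle vertices $v_i,u_i$ remain joined by the edge $\{v_i,u_i\}$, and the edge $\{u_i,r\}$ is present in $G_{\pgadget}^{\sol}$, yielding a $v_i$–$r$ path and hence $\xi_i$. If $y_i = u_i$, the surviving cycle vertices $v_i,w_i$ are joined only to each other inside $C_i$; the only other vertex of $\pgadget$ adjacent to $r$ in $\hat G$ is $x_s$, and the unique deletion edge from $x_s$ into $C_i$ hits $u_i \in \sol$, so $x_s$ has no neighbour in $\{v_i,w_i\}$ in $\hat G$. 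Consequently $v_i$ cannot reach $r$ in $G_{\pgadget}^{\sol}$, giving $\chi_i$.

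The only real subtlety is this last case: ruling out a $v_i$–$r$ detour through $x_s$. It relies on (i) the fact that every other clique vertex of $\pgadget$ lies in $\sol$, so $x_s$ has no clique neighbour left in $\hat G$, and (ii) the uniqueness of the deletion edge from $x_s$ into each cycle $C_i$, which by choice of $t(s)_i = \stdisc$ points at $u_i$ and not at $v_i$ or $w_i$. Once this routing-independence is in place, the six cases collapse to a table look-up against the definition of $t$, and the six implications of the lemma follow directly by pairing each $t(s)_i$ with the predicate determined above.
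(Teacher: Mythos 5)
Your proposal is correct and follows essentially the same route as the paper's proof: use the tightness of the budget (\cref{lem:lb-b-tight}) to pin down exactly one solution vertex per cycle $C_i$ and exactly one surviving clique vertex $x_s$, let the unique deletion edge from $x_s$ into $C_i$ force the vertex prescribed by $t(s)_i$ into $\sol$, and then identify $v_i\in\sol$, $u_i\in\sol$, $w_i\in\sol$ with $\phi_i$, $\chi_i$, $\xi_i$ respectively. Your write-up is somewhat more explicit than the paper's (which merely asserts the correspondence between the three predicates and the three possible solution vertices of $C_i$), and your careful handling of the case $u_i\in\sol$ — checking that $x_s$ cannot provide a detour from $v_i$ to $r$ — is exactly the point the paper leaves implicit.
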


\begin{proof}
    It holds by definition, that the three predicates $\phi_i$, $\chi_i$, and $\xi_i$ are pairwise disjoint for each value $i\in[4]$. The predicate $\phi_i$ holds if and only if a $v_i\in \sol$, $\chi_i$ holds if and only if $u_i \in \sol$, and $\xi_i$ holds if and only if $w_i \in \sol$. The lemma follows from the assumption, that $\sol$ contains one vertex of $C_i$, an endpoint of each deletion edge between $C_i$ and the clique-vertices, and all clique vertices of $\pgadget$ except for one vertex.
\end{proof}

\begin{lemma}\label{lem:lb-states-transition-direction}
    For each two consecutive path gadgets $\pgadget_1$ and $\pgadget_2$, let $s_1$ be the state of $\pgadget_1$ and $s_2$ be the state of $\pgadget_2$. Then it holds that $s_2 \leq s_1$.
\end{lemma}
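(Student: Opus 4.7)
My plan is a proof by contradiction: assume $s_2 > s_1$ and exhibit a cycle in $\hat G := G \setminus \sol$, contradicting the assumption that $\sol$ is a feedback vertex set. First, using \cref{lem:lb-state-imply-v} together with the transition mapping $t$, I tabulate for each $s \in [6]$ the exit configuration $(e_3, e_4)$ of $\pgadget$ and the entry configuration $(e_1, e_2)$, where each $e_i \in \{\phi, \chi, \xi\}$: for $s = 1,\dots,6$ the exits are $(\xi,\xi), (\xi,\chi), (\xi,\phi), (\chi,\chi), (\chi,\phi), (\phi,\phi)$ and the entries are $(\phi,\phi), (\chi,\phi), (\chi,\chi), (\xi,\phi), (\xi,\chi), (\xi,\xi)$. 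Note from \cref{lem:lb-b-tight} that $r \notin \sol$ (since $r$ belongs to no component of $\lbfamily$), and recall that the construction provides a biclique in $G$ between $\{v_3(\pgadget_1), v_4(\pgadget_1)\}$ and $\{v_1(\pgadget_2), v_2(\pgadget_2)\}$.

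Next, I would isolate three families of cycles available in $\hat G$, each living inside $\pgadget_1 \cup \pgadget_2 \cup \{r\}$. Family (I) is the biclique $4$-cycle $v_3(\pgadget_1) - v_1(\pgadget_2) - v_4(\pgadget_1) - v_2(\pgadget_2) - v_3(\pgadget_1)$, which survives in $\hat G$ whenever none of the four endpoints is in state $\phi$. Family (II) is a ``same-side'' cycle through $r$: if both exits of $\pgadget_1$ are in state $\xi$ (so $u_3, u_4 \in \overline{\sol}$) and at least one entry vertex $v_i(\pgadget_2)$ is non-$\phi$, then $v_3(\pgadget_1) - u_3 - r - u_4 - v_4(\pgadget_1) - v_i(\pgadget_2) - v_3(\pgadget_1)$ is a cycle, and the symmetric case through $\pgadget_2$ is analogous. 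Family (III) is a mixed $r$-cycle: whenever some $v_a(\pgadget_1)$ with $a \in \{3,4\}$ is in state $\xi$ and some $v_b(\pgadget_2)$ with $b \in \{1,2\}$ is in state $\xi$, the sequence $v_a(\pgadget_1) - u_a - r - u_b(\pgadget_2) - v_b(\pgadget_2) - v_a(\pgadget_1)$, closing via the biclique, is a cycle in $\hat G$.

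Finally, I would verify that at least one of (I), (II), (III) fires for each of the $15$ pairs $(s_1, s_2)$ with $s_1 < s_2$, grouped by $s_1$. For $s_1 = 1$ the exit is $(\xi,\xi)$ and every $s_2 \geq 2$ contributes a non-$\phi$ entry, so (II) fires. For $s_1 = 2$, the case $s_2 = 3$ makes all four biclique endpoints non-$\phi$ and triggers (I), while $s_2 \in \{4,5,6\}$ puts $v_1(\pgadget_2)$ in state $\xi$ and keeps $v_3(\pgadget_1)$ in state $\xi$, triggering (III). For $s_1 = 3$, each $s_2 \in \{4,5,6\}$ again combines these two $\xi$-positions for (III). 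For $s_1 = 4$, $s_2 = 5$ triggers (I) while $s_2 = 6$ puts both entries in state $\xi$ and leaves an exit non-$\phi$, firing the symmetric (II) through $\pgadget_2$; for $s_1 = 5$ only $s_2 = 6$ remains, which again fires the symmetric (II). In every case $\hat G$ contains a cycle, contradicting that $\sol$ is a feedback vertex set, so $s_2 \leq s_1$. The main obstacle is purely bookkeeping; the structural content reduces to the three local cycle templates above, all of which only use the biclique, the triangles $C_i$, and the edges from the $u_i$ vertices to $r$.
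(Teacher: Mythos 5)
Your proof is correct and follows essentially the same route as the paper: your three cycle templates (I)--(III) are exactly the paper's biclique $4$-cycle, its special same-side argument reserved for $s_1=1$, and its ``$\stconn$ on both exit and entry'' observation, respectively. In fact your explicit enumeration is slightly more careful than the paper's, which asserts that its two general observations cover every pair except $s_1=1$ but actually misses $(s_1,s_2)=(5,6)$ — there the exit states of $\pgadget_1$ are $\{\stdisc,\stnone\}$ and the entry states of $\pgadget_2$ are $\{\stconn\}$, so neither observation applies — and your symmetric version of family (II) through $\pgadget_2$ is precisely what is needed to close that remaining case.
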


\begin{proof}
    Assume this is not the case, and let $\pgadget_1$ and $\pgadget_2$ be two consecutive path gadgets with states $s_1$ and $s_2$ respectively, such that $s_2 > s_1$.
    Let $t(s_1) = (p_1,p_2,p_3,p_4)$ and $t(s_2) = (q_1,q_2,q_3,q_4)$ be the transition tuples of $s_1$ and $s_2$ respectively. Let $P = \{p_3,p_4\}$ and $Q=\{q_1,q_2\}$.
    Then it holds by \cref{lem:lb-state-imply-v}, and since the cut between $\pgadget_1$ and $\pgadget_2$ is a complete bipartite graph, that if $\stconn \in P\cap Q$, then $v_3(\pgadget_1)$ is connected to $r$ in $G_{\pgadget_1}^{\sol}$ and $v_1(\pgadget_2)$ is connected to $r$ in $G_{\pgadget_2}^{\sol}$, which together with the edge $\{v_3(\pgadget_1),v_1(\pgadget_2)\}$ creates a cycle. Moreover, if $\stnone \notin P\cup Q$, then we get the cycle $v_3(\pgadget_1),v_1(\pgadget_2),v_4(\pgadget_1),v_2(\pgadget_2)$ in $\solg$. These two observations already imply contradictions for all pairs $(s_1,s_2)$ with $s_2 > s_1$ except for the case where $s_1 = 1$. In this case it holds that $p_3=p_4 = \stconn$. Hence, both vertices $v_3$ and $v_4$ of $\pgadget_1$ exist in $\solg$ and are connected to $r$ through $u_3$ and $u_4$ respectively. Therefore, if $Q\neq \{\stnone\}$, then we get the cycle consisting of the two paths $v_3,u_3,r$ and $v_4,u_4,r$ in $\pgadget_1$, together with the path $v_3(\pgadget_1), v_1(\pgadget_2), v_4(\pgadget_1)$ leading to a contradiction.
\end{proof}

\begin{lemma}\label{lem:lb-sat-if-sol}
    If $G$ admits a feedback vertex set of size $\budget$, then $I$ is satisfiable.
\end{lemma}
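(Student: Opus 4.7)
The plan is to extract a satisfying assignment from any solution $\sol$ of size $\budget$ by exploiting the monotonicity of states along each path sequence together with the length $(5n+1)m$ of the sequences.

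First, I would clean up $\sol$: if $\sol$ contains a deletion vertex $w_{u,v}$, then replacing it by $u$ yields a solution of at most the same size (any cycle through $w_{u,v}$ passes through $u$), so after exhaustively applying this replacement I may assume $\sol$ contains no deletion vertices. By \cref{lem:lb-b-tight}, $\sol$ then contains exactly $\lbound(X)$ vertices in each $X\in\lbfamily$, hence exactly $5$ of the $6$ clique vertices of every path gadget and exactly $|S_h|-1$ of the $|S_h|$ clique vertices of every constraint gadget. This means every path gadget has a well-defined state $s\in[6]$ in the sense of \cref{def:lb-pgad-state}, and every constraint gadget $\cgad{j}$ has a unique remaining clique vertex $z_{\ell}$.

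Next, I apply the monotonicity \cref{lem:lb-states-transition-direction}: along each path sequence $P_i$ the sequence of states $\statef(\pgad{i}{1}),\dots,\statef(\pgad{i}{c})$ is non-increasing with values in $[6]$, so it strictly decreases at most $5$ times. Since each segment $\seg{i}{\ell}$ is a block of $m$ consecutive path gadgets, at most $5$ segments of $P_i$ can contain a state change inside them, and across all $n$ sequences at most $5n$ sections have some segment in which the state is not constant. Because there are $5n+1$ sections, there exists a section $\sect{\ell^*}$ such that for every $i\in[n]$ the segment $\seg{i}{\ell^*}$ has a single constant state $s_i\in[6]$. Define $\alpha(v_i):=s_i$.

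It remains to verify that $\alpha$ satisfies every constraint $C_h$. Consider the column $j^*=(\ell^*-1)m+h$, to which the constraint gadget $\cgad{j^*}$ (corresponding to $C_h$) is attached. Let $z_{\ell}\notin\sol$ be the remaining clique vertex of $\cgad{j^*}$. For each variable $v_i=(V_h)_p$ appearing in $C_h$, the construction places a deletion edge between $z_{\ell}$ and every $x_{y'}(\pgad{i}{j^*})$ with $y'\neq (s_{\ell})_p$. Since we have excluded deletion vertices from $\sol$, each such deletion edge forms a triangle in $\hat G$ whenever both $z_{\ell}$ and $x_{y'}(\pgad{i}{j^*})$ lie outside $\sol$. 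The unique clique vertex of $\pgad{i}{j^*}$ outside $\sol$ is $x_{s_i}(\pgad{i}{j^*})$, and to avoid such a triangle we must have $s_i=(s_{\ell})_p$ for every position $p$. Hence $\alpha(V_h)=s_{\ell}\in S_h$, so $C_h$ is satisfied. As $h$ was arbitrary, $\alpha$ is a satisfying assignment of $I$.

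The main obstacle is the pigeonhole step: one must observe that the non-increasing sequence of states of length $c=(5n+1)m$ can cross segment boundaries in at most $5$ positions per path sequence, so that an entire section $\sect{\ell^*}$ with per-sequence constant state exists; the remainder is a routine check of the constraint-gadget construction.
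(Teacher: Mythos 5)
Your proposal is correct and follows essentially the same route as the paper's proof: eliminate deletion vertices, invoke the tight budget (\cref{lem:lb-b-tight}) and monotonicity (\cref{lem:lb-states-transition-direction}), pigeonhole over the $5n+1$ sections to find one with constant per-segment states, and read off a satisfying assignment via the unique surviving constraint-gadget vertex and its deletion edges. No gaps.
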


\begin{proof}
    Let us fix some path sequence, and consider the path gadgets $\pgad{i}{1} \dots \pgad{i}{c}$ on this path sequence. Let $s_{i_1} \dots s_{i_c}$ be the states of these path gadgets defined by $\sol$ in order. Then by \cref{lem:lb-states-transition-direction} it holds that $i_{j+1} \leq i_{j}$ for all $j\in[c-1]$. Since we have exactly $6$ states, there can exist at most $5$ different indices $j$ with $i_j \neq i_{j+1}$. Therefore, over all $n$ path sequences, there can exist at most $5n$ columns, such that some path gadget on each of these columns has a different state than the following path gadget, which in turn, implies that at most $5n$ sections exist with some path gadget in each of these sections having a different state than the following path gadget. Since the graph $G$ contains $5n + 1$ sections, there must exist at least one section in $G$, where each path gadget in this section has the same state as the following path gadget.
    
    Let us fix one such section $\sect{r}$ for $r \in [5n+1]$. For each segment $\seg{i}{r}$ in this section ($i\in [n]$), let $s_i$ be the state of all path gadgets on this segment. Then we claim that $\alpha:\var\rightarrow[B]$ with $\alpha(v_i) = s_i$ for all $i\in[n]$ is a satisfying assignment of $I$. In order to see this, let $C_j$ be some constraint for $j\in [m]$. Let us consider the $j$th column of the $r$th section. Since $\sol$ is a solution, there must exist a vertex $z_{\ell}$ of the constraint gadget attached to this column in the solution. Let $s_{\ell}$ be the tuple corresponding to $z_{\ell}$. Let $k\in[q]$ be some index, and let $v_i = (V_j)_k$ and $y = (s_{\ell})_k$. Then it must hold that all clique vertices $x_{y'}$ for $y'\neq y$ are in $\sol$, since $\solg$ is acyclic. This implies that the path gadget in this column that belongs to the $i$th path has state $y$, and hence, $\alpha(v_i) = y$. Since this holds for each $k\in[q]$, it follows that $\alpha(V_j) = s_{\ell}$, and hence, $\alpha$ satisfies $C_j$. Since this holds for each constraint $C_j$, it follows that $\alpha$ is a satisfying assignment of $I$.
\end{proof}

\begin{proof}[Proof of~\cref{theo:lower-bound}]
    Assume that there exists an algorithm $A$ that solves the \Fvsp problem in time $\ostar((6-\varepsilon)^k)$ for some positive value $\varepsilon>0$, when a linear $k$-expression $\mu$ is given with the input graph $G$.
    Then given the instance $I$ of the $d$-CSP-$B$ problem over $n$ variables for $B=6$ and some positive integer $d$, one can construct the graph $G$ together with the linear $k$-expression $\mu$ of $G$ and choose the value $\budget$ as described above, where both $\mu$ and $k$ are given by \cref{lem:lb-cw-bound}, and run $A$ on the resulting instance. It holds by \cref{lem:lb-sol-if-sat} and \cref{lem:lb-sat-if-sol} that $G$ admits a feedback vertex set of size $\budget$ if and only if $I$ is satisfiable.
    It holds by \cref{lem:lb-cw-bound} that $\mu$ is a linear $k$-expression of $G$ that can be constructed in polynomial time, where $k = n + k_0$ for some constant $k_0$. Hence, the algorithm runs in time
    $\ostar((6-\varepsilon)^{n + k_0}) = \ostar((6-\varepsilon)^{n})$ which contradicts SETH by \cref{theo:lampis-csp-lb}.
\end{proof}

\section{Counting algorithm for treewidth}\label{sec:tw}

In this section we show that, informally speaking, by restricting our dynamic programming algorithm to the states that appear in a tree decomposition, we get an algorithm that counts (modulo $2$) the number of feedback vertex sets of a fixed size in time $\ostar(3^{\tw})$ given a tree decomposition of width $\tw$ with input. This matches the best possible running time for the decision version of this problem under SETH~\cite{DBLP:journals/talg/CyganNPPRW22}, and hence, it is also tight, since a better algorithm for counting (modulo $2$) would also break this bound for the decision version using the isolation lemma.
We formalize this idea through a dynamic programming algorithm on a tree decomposition, where we index the dynamic programming tables by a smaller (restricted) family of patterns.
We start by restating the definitions of tree decompositions and tree width.

\begin{definition}
    A \emph{tree decomposition} of a graph $G$ is a pair $\mathcal{B} = (T, B)$, where $T$ is a tree and $B:V(T) \rightarrow 2^{V(G)}$ is a mapping that assigns to each node $x\in V(T)$ a bag $B(x)\subseteq V(G)$ (also denoted $B_x$) such that:
    \begin{itemize}
        \item For every vertex $v\in V(G)$ there is some $x\in V(T)$ with $v\in B(x)$,
        \item for every edge $\{u,v\}\in E(G)$ there is some $x\in V(T)$ with $u,v\in B(x)$,
        \item for every $v\in V(G)$, the set $\{x\in V(T)\colon v\in B(x)\}$ induces a connected subtree of $T$.
    \end{itemize}
    The \emph{width} of a tree decomposition is defined by $\max_{x\in V(T)}|B(x)|-1$, and the \emph{tree‐width} of $G$ is the minimum width over all such decompositions.

    We call a tree decomposition $(T,B)$ \emph{nice} if $T$ is rooted at a node $r$, $B(r)=\emptyset$, and it holds for every node $x$ of $T$ that either
    \begin{itemize}
        \item $x$ is a leaf node and $B(x)=\emptyset$,
        \item $x$ has a single child $x'$ and $B(x) = B(x')\cup \{v\}$ for $v\notin B(x')$ (introduce vertex node),
        \item $x$ has a single child $x'$ and $B(x) = B(x')\setminus \{v\}$ for some $v\in B(x')$ (forget vertex node),
        \item $x$ has two children $x_1,x_2$ and $B(x) = B(x_1) = B(x_2)$ (join node).
    \end{itemize}
    Finally, we define a \emph{very nice tree decomposition} as a nice tree decomposition with additional nodes $x$ with a single child $x'$, called introduce edge node, with $B(x) = B(x')$, such that for each edge $e$ there exists a unique introduce edge node $x$ corresponding to $e$ with $e\subseteq B(x)$.
    We also assume that the node introducing an edge $e=\{u,v\}$ is an ancestor of all join nodes containing both $u$ and $v$ in their bags.

    We define the graph $G_x=(V_x, E_x)$ for a node $x$ as the subgraph of $G$ containing all vertices and edges introduced in the subtree rooted at $x$. Hence, it holds for each join node $x$ that $B(x)$ induces an independent set in $G_x$. 
\end{definition}

\begin{remark}\label{rem:tw:dec-to-very-nice}
It is well-known that for every tree decomposition $\mathcal{B}$ there exists a very nice tree decomposition $\mathcal{B}'$ of the same graph, having the same width. Moreover, such a transformation from $\mathcal{B}$ to $\mathcal{B}'$ can be done in polynomial time~\cite{DBLP:books/sp/CyganFKLMPPS15}.
\end{remark}

\subsection{Boundaried graphs and graph labeling}

From now on, let $G'=(V',E')$ be the input graph over $n$ vertices, and $\mathcal{B}'$ be a tree decomposition of $G'$ of width $\tw'$. Let $\target$ be the target size of the feedback vertex sets we aim to count. We can assume by \cref{rem:tw:dec-to-very-nice} that $\mathcal{B}'$ is a very nice tree decomposition of $G'$ of width $\tw'$. We define $k=\tw'+1$ as the maximum size of a bag of $\mathcal{B}'$. Let $\mathcal{B}=(T,B)$ be the very nice tree decomposition resulting from $\mathcal{B}'$ by adding a new vertex $v_0$ (not in $G'$) to all bags. This turns the leaf nodes of $\mathcal{B}'$ into nodes that introduce the vertex $v_0$. It holds that $\mathcal{B}$ has width $k:=\tw'+1$, since we increase the size of all bags exactly by one. Let $\nodes := V(T)$ and let $r\in\nodes$ be the root of $T$. Then it holds that $B_r = \{v_0\}$.

It holds that the graph $G=(V,E)$ that corresponds to $\mathcal{B}$ results from $G'$ by adding an additional isolated vertex $v_0$. We aim to count (modulo $2$) the number of feedback vertex sets of size $\target$ in $G$ that exclude the vertex $v_0$, which are exactly the feedback vertex sets of $G'$ of size $\target$, since $v_0$ is isolated in $G$. In fact, similar to clique-width, we will count (modulo $2$) the number of induced forests of size $\budget$ in $G$ that include the vertex $v_0$, for all values of $\budget \in [n]_0$. It holds that the number of feedback vertex sets of size $\target$ in $G$ that exclude $v_0$ is equal to the number of induced forests of size $n + 1 - \target$ in $G$ that include $v_0$.
Again, this motivates the following definition of a partial solution: A \emph{partial solution} $S\subseteq V_x$ at a node $x\in \nodes$ is a set of vertices that contains $v_0$, such that $S$ induces a forest in $G_x$.

\begin{definition}
    We define the mapping $\phi:V\rightarrow [k]_0$ such that the restriction of $\phi$ to any bag $B_{x}$ is injective as follows:
    Let $x_1,\dots x_{n}$ be all forget nodes of $T$ sorted in some top-down ordering over $T'$, and let $x_i$ be the forget node of a vertex $v_i$ for all $i\in[n]$. Then we fix the mapping $\phi$ by iterating over all nodes $x_i$ in increasing order of $i$, and for each value of $i$ we fix $\phi(v_i)$ as the smallest value $\ell\in\mathbb{N}$ not in $\phi\big(B_{x_i}\setminus\{v_i\}\big)$.
    For $v\in V$, we call $\phi(v)$ the label of a vertex $v$.
\end{definition}

\begin{observation}\label{obs:labels}
    It holds for all $i\in[n]$ that the forget node of each vertex in $B_{x_i}$ is an ancestor of $x_i$, and hence, the label of each vertex in this bag, except for $v_i$, is already fixed, and the set $\phi\big(B_{x_i}\setminus\{v_i\}\big)$ is well-defined.
    Moreover, it holds that $v_0$ is the only vertex labeled $0$, and that $\phi(v) \leq k$ for all $v\in V$ since $k$ is the maximum size of a bag in $\mathcal{B}'$. The restriction of $\phi$ to each bag $B_{x}$ is injective, since for any two vertices $u,v$ sharing a bag $B_{x}$, the bag corresponding to the forget node of one of these vertices contains the other, and hence, by the choice of $\phi$ it holds that $\phi(u)\neq\phi(v)$.
\end{observation}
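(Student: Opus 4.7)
The plan is to verify each assertion of the observation by combining the connected-subtree property of tree decompositions with an elementary pigeon-hole count, and by tracking carefully how the top-down ordering on forget nodes interacts with the ancestor relation in $T$.

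First, to establish that the forget node of every $v \in B_{x_i} \setminus \{v_i\}$ is an ancestor of $x_i$, I would argue by subtree-connectedness of the bags containing $v$: since $v$ is not forgotten at $x_i$ (only $v_i$ is), the topmost bag containing $v$ must sit strictly above $x_i$ in $T$. Because $x_1,\dots,x_n$ are processed top-down, $\phi(v)$ is already fixed by the time $x_i$ is reached, so $\phi(B_{x_i}\setminus\{v_i\})$ is a well-defined subset of $\mathbb{N}$ of size at most $k$.

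Next, for the label range, I would use that $v_0$ lies in every bag of $\mathcal{B}$ and carries the label $0$ by convention; hence for any $v_i \neq v_0$ we have $0 \in \phi(B_{x_i} \setminus \{v_i\})$, and the smallest-free-label rule must pick a value in $\{1,\dots,k\}$. The existence of such a value follows from pigeon-hole, since $|B_{x_i}\setminus\{v_i\}|\leq k$ while $|\{0,1,\dots,k\}|=k+1$, so at least one value in this range is unused. For injectivity on an arbitrary bag $B_x$, I would pick distinct $u,v\in B_x$ and, without loss of generality, assume the forget node $x_v$ of $v$ is a descendant (in $T$) of the forget node $x_u$ of $u$. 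Since the bags containing $u$ form a connected subtree including both $B_x$ and $B_{x_u}$, they must also contain $B_{x_v}$ on the path between them. Consequently $u \in B_{x_v}\setminus\{v\}$, so $\phi(u)\in\phi(B_{x_v}\setminus\{v\})$ at the moment $\phi(v)$ is chosen, and the smallest-free-label rule forces $\phi(v)\neq\phi(u)$.

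The only slightly delicate point is ensuring that the top-down processing order on forget nodes aligns with the ancestor-descendant relation invoked in each step, so that every label referenced as ``already assigned'' has genuinely been assigned. This is immediate from the definition of a top-down traversal, but it is the crux that all three arguments hinge on; everything else reduces to bag-size bookkeeping and a pigeon-hole count.
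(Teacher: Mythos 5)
Your proposal is correct and follows essentially the same route as the paper's (inline) justification: ancestry of forget nodes via subtree-connectedness of the bags containing a vertex, the pigeon-hole bound $|B_{x_i}\setminus\{v_i\}|\le k$ together with $\phi(v_0)=0$ for the range $\phi(v)\in[k]$, and for injectivity the fact that one of $u,v$ still lies in the bag where the other is forgotten. One cosmetic slip: the bag $B_{x_u}$ of the forget node of $u$ does \emph{not} contain $u$ (it equals $B_{x_u'}\setminus\{u\}$ for the child $x_u'$), so the connected subtree of bags containing $u$ reaches only up to $x_u'$; since $x_v$ is a strict descendant of $x_u$ and an ancestor of $x$ (both being on the path from $x$ to the root, hence comparable), it still lies on the path between $x$ and $x_u'$, so the conclusion $u\in B_{x_v}$ and hence $\phi(u)\neq\phi(v)$ is unaffected.
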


We will use the notion of graph \emph{boundary} to define representation over treewidth. Intuitively, a boundary of a graph is a special set of vertices $B$ that we use to extend this graph to a super graph, and hence, it is a separator in the super graph, that separates the original graph from the rest of the super graph. Mainly, in a tree decomposition, we will be interested in the boundary $B_{x}$ of the graph $G_x$ for each node $x$, as $G$ results from $G_x$ by adding vertices (and edges) separated from $G_x$ by $B_{x}$. It is well-known for many problems, that a boundary can carry enough information to correctly extend a partial solution to a solution in the whole graph. We will formalize this for the \Fvsp problem.

\begin{definition}\label{def:boundary}
    A \emph{boundaried graph} is a graph $G=(V,E)$ given together with a boundary set $B\subseteq V$ and an injective mapping $\phi:B\rightarrow[k]_0$, such that $B$ induces an independent set in $G$. A \emph{boundaried forest} is a boundaried graph that is a forest. We denote the boundaried graph $(G,B,\phi)$ by $G$ when $B$ and $\phi$ are clear from context.

    We say that two boundaried forests $(F_1, B_1, \phi_1)$ and $(F_2,B_2,\phi_2)$ are \emph{compatible} if it holds that $\phi_1$ and $\phi_2$ have equal images, and the following conditions hold: 
    Let $\pi$ be the unique label-preserving bijection from $B_1$ to $B_2$, i.e.\ $\phi_1(v) = \phi_2(\pi(v))$ for all $v\in B_1$, and let $H$ be the graph resulting from the union of $F_1$ and $F_2$ by identifying all pairs $(v,\pi(v))$ for $v\in B_1$. Then it must holds that $H$ is acyclic. We call the process of identifying these pairs of vertices the \emph{glueing} of $F_1$ and $F_2$. Note that the glueing operation might create double edges (cycles of length two) between two vertices.

    We say that two families of boundaried forests $\mathcal{F}_1$ and $\mathcal{F}_2$ are equivalent, if it holds for each boundaried forest $H$ that $H$ is compatible with an odd number of forests of $\mathcal{F}_1$ if and only if it is compatible with an odd number of forests of $\mathcal{F}_2$.
\end{definition}

\subsection{Treewidth patterns}

Now we characterize the needed conditions for acyclicity representation through the notion of ``patterns''. After that we define a more concise representation through a ``reduce'' operation resulting in a more compact set of patterns.

\begin{definition}\label{def:tw-pat}
    A \emph{(treewidth) pattern} $p$ is a partition of a subset $L_p$ of $[k]_0$ that contains the element $0$. We call $L_p$ the label set of $p$, and we call the set containing $0$ in $p$ the zero set of $p$ (denoted $Z_p$). Let $\TP$ be the family of all treewidth patterns. We also define one special pattern $\perp$ not in $\TP$ and call it the \emph{bad pattern}.
    We denote by $\nopat$ the pattern $\{\{0\}\}$ and call it the \emph{empty pattern}.
    We define the family $\CTP\subseteq \TP$ of all treewidth patterns, such that any set different from $Z_p$ is a singleton.
    
    For a pattern $p\in\TP$ and $i\in[k]$, we denote by $p\cup i$ the pattern $\perp$ if $i\in L_p$, or $p\cup\big\{\{i\}\big\}$ otherwise. We define $p\setminus i$ as $p$ itself, if $i\in L_p$, or as the pattern resulting from $p$ by removing $i$ from the set of $p$ that contains $i$. We remove the whole set $\{i\}$ if it is a singleton.

    Let $G=(V,E)$ be a boundaried graph with boundary $B\subseteq V$ containing a vertex $v_0\in B$. Let $\phi:B\rightarrow [k]_0$ be an injective mapping with $\phi(v_0)=0$. For a partial solution $S\subseteq V$ of $G$, we define the pattern $\pat_{G,B,\phi}(S)$ (omitting $G$, $B$ and $\phi$ when clear from context) of $G$ as follows: for each connected component $C$ of $G[S]$ that intersects $B$, we add the set $\phi(C\cap B)$ to $p$, i.e.\ $p$ contains a set for each connected component of $G\setminus S$, containing the labels that appear in the boundary of this component.

    It follows from the injectivity of $\phi$, that each label appears in at most one set of $\pat(S)$. Hence, $\pat(S)$ is a partition of a subset of $[k]_0$. Since $S$ is a partial solution, it holds that $v_0\in S$, and hence, there exists a set $X \in \pat(S)$ with $0\in X$. Hence, $\pat(S)$ is a pattern indeed. Given a boundaried forest $F$ with boundary $B$, we denote by $\pat_F$ the pattern $\pat_{F,B,\phi}(V(F))$.
    For a node $x\in\nodes$, we also denote $\pat_{G_x, B_{x}, \phi|_{B_{x}}}(S)$ by $\pat_x(S)$.
\end{definition}

\begin{definition}\label{def:tw-canonical-forest}
    Given a pattern $p\in\Pat$, we define the \emph{canonical (boundaried) forest} $(G_p=(V_p, E_p),B_p,\phi_p)$ of $p$, where for each $i\in L_p$ we add a vertex $v_i$, and for each set $S$ of $p$ we add another vertex $w_S$ and make it adjacent to all vertices $v_i$ with $i\in S$. Hence, $G_p$ is a disjoint union of stars, that contains one star for each set of $p$. We define $B := \{v_i\colon i\in L_p\}$ and $\phi(v_i) := i$ for all $i\in L_p$. It follows that $\pat(G_p) = p$.
\end{definition}

\begin{lemma}\label{lem:same-pattern-equiv-forests}
    It holds for any two labeled forests $(F_1,B_1,\phi_1)$ and $(F_2,B_2,\phi_2)$ with $\pat(F_1)=\pat(F_2)$ that $F_1$ is equivalent to $F_2$.
\end{lemma}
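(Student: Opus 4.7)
The plan is to prove the stronger statement that every boundaried forest $F$ is equivalent to the canonical forest $G_{\pat(F)}$ of its pattern. Applying this to both $F_1$ and $F_2$ (which share a pattern $p$) exhibits each as equivalent to the common canonical forest $G_p$, and the lemma then follows by transitivity of the equivalence relation from \cref{def:boundary}.

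Fix an arbitrary boundaried forest $(H, B_H, \phi_H)$ whose label image matches that of $F$, and let $p := \pat(F)$. The task reduces to showing that the gluing of $F$ with $H$ is acyclic iff the gluing of $G_p$ with $H$ is acyclic. First observe that any connected component of $F$ disjoint from $B$ is a tree that remains disconnected from everything else after gluing; such components neither create nor prevent cycles, and they are absent from $G_p$ by construction. Thus I focus on the components $C_1, \dots, C_m$ of $F$ that meet $B$, with $S_j := C_j \cap B$. By the definition of $\pat(F)$, the sets $\phi(S_1), \dots, \phi(S_m)$ are exactly the blocks of $p$, and in $G_p$ each block $X$ is realized by a star $T_X$ centered at $w_X$ with leaf set $\{v_i : i \in X\}$.

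The crux is a contraction argument. In the gluing of $F$ with $H$, contract each tree $C_j$ to a single super-vertex; in the gluing of $G_p$ with $H$, contract each star $T_X$ to a single super-vertex. The two resulting multigraphs are identical: each consists of $H$ together with, for every block $X$ of $p$, one fresh vertex adjacent to exactly the boundary vertices of $H$ whose labels lie in $X$. Since contracting an induced subgraph that is itself a tree neither creates nor destroys cycles in the ambient graph, the gluing of $F$ with $H$ is acyclic iff this common contracted multigraph is acyclic iff the gluing of $G_p$ with $H$ is acyclic. This yields that $F$ is equivalent to $G_p$, and hence $F_1$ is equivalent to $F_2$.

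The main point to be careful about is checking that the star $T_X$ attached to the boundary vertices labeled in $X$ faithfully captures the boundary-connectivity of the tree $C_j$ with $\phi(S_j) = X$. This is straightforward because both are single connected trees that meet the boundary precisely in the set $\{v_i : i \in X\}$, so the two contractions collapse to the same configuration; any cycle that could distinguish $F$ from $G_p$ must pass through a contracted tree, where both sides look identical after contraction. The only mild subtlety is verifying that the contracted multigraph is well-defined when the gluing operation produces parallel edges, which is handled uniformly by the argument since such parallel edges correspond to length-two cycles that are detected identically on both sides.
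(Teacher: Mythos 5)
Your proof is correct in substance but follows a genuinely different route from the paper. The paper argues directly: given a cycle in the gluing of $F_2$ with $H$, it decomposes the cycle into maximal segments lying inside connected components of $F_2$ and replaces each segment by a path inside the component of $F_1$ with the same boundary-label set (which exists since $\pat(F_1)=\pat(F_2)$), producing a cycle in the gluing of $F_1$ with $H$. You instead prove the stronger intermediate claim that every boundaried forest is equivalent to the canonical forest of its pattern, by contracting each tree component meeting the boundary to a single vertex and observing that both gluings collapse to the same quotient multigraph of $H$. Your route is more conceptual and isolates the reusable fact that only the boundary trace of each tree component matters (it essentially reproves \cref{lem:forest-equiv-pat} in the treewidth setting); the paper's segment-replacement is more hands-on but avoids having to justify that contraction of induced trees preserves the existence of cycles, a standard fact that you assert without proof and that does require a small case analysis (multi-edges at the contracted vertex, cycles meeting the tree in several segments).

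One inaccuracy to fix: since the boundary vertices of $C_j$ are identified with boundary vertices of $H$ by the gluing, contracting all of $C_j$ \emph{absorbs} those vertices of $B_H$ into the super-vertex; the contracted graph is therefore $H$ with each label class $\phi(S_j)$ of its boundary merged into one vertex, not ``$H$ together with a fresh vertex adjacent to those boundary vertices'' as you write. The literal graph you describe is not what the contraction produces. Fortunately the conclusion you need survives: the star $T_X$ in $G_p$ absorbs exactly the same set of $H$-boundary vertices (those with labels in $X$), with the same resulting edge multiplicities, so the two quotients still coincide up to the irrelevant isolated tree components of $F$, and the argument goes through once the description is corrected. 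You should also note explicitly that distinct components of $F$ meeting $B$ have disjoint, hence distinct, boundary-label sets (by injectivity of $\phi$ on $B$), so they are in bijection with the blocks of $p$.
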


\begin{proof}
    Let $(H,B,\phi)$ be some boundaried forest compatible with $F_1$. We show that $H$ is compatible with $F_2$ as well. The other direction would then hold by symmetry. Let $G_1$ be the glueing of $F_1$ and $H$. Let $\pi_1$ be the bijection from $B_1$ to $B$ and $\pi$ the bijection from $B_1$ to $B_2$ as defined in \cref{def:tw-pat}. We define $\pi_2:=\pi_1 \circ \pi^{-1}$. Then $\pi_2$ is a bijection from $B_2$ to $B$. Let $G_2$ be the glueing of $F_2$ and $H$ given by $\pi_2$. We claim that $G_2$ is acyclic. Assuming otherwise, let $C$ be a simple cycle of $G_2$.
    Let $s_1,\dots s_{\ell}$ be the maximal segments of $C$ that belong to $F_2$, and let $D_1,\dots,D_{\ell}$ be connected components of $F_2$ that $s_1,\dots s_{\ell}$ belong to. Since it holds that $\pat(F_1)=\pat(F_2)$, it holds for each connected component $D$ of $F_2$ and for $X=\phi_2(D\cap B_2)$, that there exists a connected component $D'$ of $F_1$ with $\phi_1(D'\cap B_1) = X$. Hence, we can replace each segment $s_i$ connecting two vertices $u_2$ and $v_2$ of $F_2$ with a path of $F_1$ connecting the unique vertices $\pi(u_2)$ and $\pi(v_2)$, turning $C$ into a cycle $C'$ in $G_1$, which contradicts the assumption that $H$ is compatible with $F_1$.
\end{proof}

Now we define pattern operations that correspond to the different manipulations of a pattern corresponding to a partial solution under the different treewidth operations. We use these operations to introduce our algorithm and prove its correctness.

\begin{definition}\label{def:tw-patadd}
    For two different labels $i,j\in[k]$, we define the operation $\patadd_{i,j}\colon\TP\rightarrow\TP\cup\{\perp\}$ that takes as input a pattern $p$ and outputs $\perp$ if $i$ and $j$ belong to the same set of $p$, it outputs $p$ itself if $i$ or $j$ doesn't appear in $p$, or outputs the pattern that results from $p$ by merging the sets that contain $i$ and $j$ otherwise.
\end{definition}

\begin{observation}\label{obs:tw-patadd}
    Let $F$ be a labeled forest, and $p=\pat(F)$. Let $F'$ be the graph resulting from $F$ by adding the edge between the vertices labeled $i$ and $j$ if both exist, and let $p' = \patadd_{i,j}(p)$. Then $p' =\perp$ if and only if $F'$ contains a cycle, and $p':=\pat_{F'}$ otherwise.
\end{observation}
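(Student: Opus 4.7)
The plan is to perform a case analysis on whether the labels $i$ and $j$ appear in $L_p$, mirroring the three branches in the definition of $\patadd_{i,j}$. The key fact to exploit is that $\phi$ is injective on the boundary $B$, so for each $\ell \in L_p$ there is a unique boundary vertex $v_\ell$ with $\phi(v_\ell) = \ell$, and $\ell \in L_p$ if and only if such a boundary vertex exists at all. Moreover, by the definition of $\pat_F$, two labels $\ell, \ell' \in L_p$ lie in the same set of $p$ exactly when the corresponding boundary vertices $v_\ell, v_{\ell'}$ belong to the same connected component of $F$.

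First I would dispatch the easy case where $i \notin L_p$ or $j \notin L_p$. Here at least one of the endpoints does not exist in $F$, so by assumption no edge is added and $F' = F$. The definition of $\patadd_{i,j}$ returns $p$ in this case, and trivially $\pat_{F'} = \pat_F = p$, as required.

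Next I would treat the case $\{i,j\} \subseteq L_p$, letting $u = v_i$ and $v = v_j$. If $i$ and $j$ lie in the same set of $p$, then $u$ and $v$ belong to the same component of $F$, so there is a path from $u$ to $v$ in $F$; the added edge $\{u,v\}$ closes this path into a cycle, so $F'$ is not acyclic and $p' = \perp$ matches. If $i$ and $j$ lie in different sets $S_i, S_j$ of $p$, then $u$ and $v$ belong to different components $C_i, C_j$ of $F$, and since $F$ is a forest, adding $\{u,v\}$ merges $C_i$ and $C_j$ into a single tree without creating a cycle. The resulting forest $F'$ therefore has the same connected components as $F$ except that $C_i$ and $C_j$ are replaced by their union, whose intersection with $B$ has label set $S_i \cup S_j$. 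Hence $\pat_{F'}$ is obtained from $p$ by merging $S_i$ and $S_j$, which is exactly $\patadd_{i,j}(p)$.

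No step presents a real obstacle; the only point worth emphasising is the converse in the cycle case, namely that a cycle in $F'$ \emph{forces} $i$ and $j$ to share a set of $p$. This holds because $F$ is a forest and $F'$ differs from $F$ by a single edge $\{u,v\}$, so any cycle in $F'$ must use this edge and consist of it together with a $u$-$v$ path in $F$; such a path exists only when $u$ and $v$ lie in the same component of $F$, i.e.\ when $i$ and $j$ are in the same set of $p$. Combined with the three cases above, this establishes both equivalences stated in the observation.
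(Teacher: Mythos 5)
Your proof is correct, and it is exactly the routine verification the paper leaves implicit: the statement is recorded only as an \emph{observation} with no proof, the intended justification being precisely your case analysis on whether $i,j\in L_p$ and whether they share a set of $p$, using injectivity of $\phi$ on $B$ and the fact that a single added edge in a forest creates a cycle iff its endpoints already lie in one component. All three branches of the definition of $\patadd_{i,j}$ are covered and the converse direction is handled, so nothing is missing.
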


\begin{definition}\label{def:tw-join}
    We define the \emph{join} operation over patterns $\join:\TP\times\TP\rightarrow \TP\cup\{\perp\}$, where for $p,q\in\TP$ with $L_p = L_q$, $p\join q$ is given by the following procedure:
    Let $p_0 = p$ and let $S_1,\dots S_{\ell}$ be the sets of $q$. For each $i$ in $[\ell]$, we define $p_i$ from $p_{i-1}$ as follows: if $p_{i-1} = \perp$, or if there exists a set of $p_{i-1}$ containing two elements of $S_i$, then $p_i:=\perp$. Otherwise, $p_i$ results from $p_{i-1}$ by combining all sets that contain an element of $S_i$. We define $p\join q = p_{\ell}$. Finally, we define $p\join q = \perp$ if $L_p\neq L_q$.
\end{definition}

\begin{observation}
    Let $(G_1, B_1, \phi_1)$ and $(G_2, B_2, \phi_2)$ be two labeled forests where the images of $\phi_1$ and $\phi_2$ are equal. Let $p_1 = \pat_{G_1}$ and $p_2 = \pat_{G_2}$. Then it holds that $p_1\join p_2 = \perp$ if and only if the glueing $H$ of $G_1$ and $G_2$ contains cycles. Moreover, if $H$ is acyclic, then it holds that $\pat_H = p_1\join p_2$.
\end{observation}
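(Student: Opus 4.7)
The plan is to prove both claims simultaneously by induction on the intermediate patterns $p_0,p_1,\dots,p_\ell$ from the definition of $p\join q$. First note that, since $\phi_2$ is injective and every boundary vertex of $G_2$ lies in some component of $G_2$ that meets $B_2$, one has $L_{p_2}=\phi_2(B_2)$ and the sets $S_1,\dots,S_\ell$ of $q=p_2$ correspond bijectively to the connected components $D_1,\dots,D_\ell$ of $G_2$ touching $B_2$, via $S_j = \phi_2(D_j\cap B_2)$. Let $H_i$ denote the graph obtained from $G_1$ by attaching $D_1,\dots,D_i$ and identifying each vertex of $D_j\cap B_2$ (for $j\le i$) with the unique vertex of $B_1$ sharing its label. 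Then $H_0 = G_1$, while $H_\ell$ differs from the full glueing $H$ only by the components of $G_2$ disjoint from $B_2$; these carry no boundary labels and, since $G_2$ is a forest, contribute no cycles, so $H$ is acyclic iff $H_\ell$ is and $\pat_H = \pat_{H_\ell}$.

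The inductive claim to establish is the following: either $p_i = \perp$ and $H_i$ contains a cycle, or $p_i\ne\perp$ and $p_i = \pat_{H_i}$. The base case $i=0$ is immediate from $p_0 = p_1 = \pat_{G_1} = \pat_{H_0}$ and acyclicity of $G_1$. For the inductive step, if $p_{i-1}=\perp$ then $p_i=\perp$ by definition and the cycle in $H_{i-1}$ persists in $H_i$; otherwise $H_{i-1}$ is a forest with $\pat_{H_{i-1}}=p_{i-1}$, and $H_i$ arises from $H_{i-1}$ by identifying each $v\in D_i\cap B_2$ with the vertex of $B_1$ of the same label. If two labels in $S_i$ share a set of $p_{i-1}$, then two boundary vertices of $D_i$ end up identified with vertices in a single tree-component of $H_{i-1}$, and the unique internal paths inside that component and inside the tree $D_i$ jointly form a cycle in $H_i$, exactly when $\join$ returns $\perp$. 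Otherwise the labels in $S_i$ lie in pairwise-distinct sets (or in none) of $p_{i-1}$, so attaching the tree $D_i$ merges those components through disjoint identification points and leaves $H_i$ a forest; the boundary-label set of the newly formed component is precisely $S_i$ together with all absorbed sets, matching the prescribed update of $\join$.

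The main obstacle, though technical rather than conceptual, is the bookkeeping in the inductive step: carefully justifying the bijection between sets of $p_2$ and boundary-touching components of $G_2$, and checking that attaching a tree along pairwise-distinct components of a forest preserves acyclicity while producing the merged label-set as described. Evaluating the induction at $i=\ell$ then yields $p_1\join p_2 = p_\ell$, which immediately gives both the biconditional $p_1\join p_2 = \perp \iff H$ contains a cycle, and the equality $\pat_H = p_1\join p_2$ whenever $H$ is acyclic.
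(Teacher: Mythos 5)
Your proof is correct. The paper states this as an observation without supplying a proof, and your induction along the iterative definition of $\join$ — pairing each set $S_i$ of $p_2$ with the corresponding boundary-touching tree component $D_i$ of $G_2$ and tracking that $p_i=\pat_{H_i}$ (with $H_i$ a forest) unless a set of $p_{i-1}$ meets $S_i$ twice, in which case the two internally disjoint connecting paths yield a cycle — is exactly the intended justification; the only cosmetic point is that your inductive invariant should explicitly include ``$H_i$ is a forest'' in the second disjunct, since the inductive step uses it.
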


\subsection{Pattern representation}

Now we introduce a special (restricted) family of patterns that represents all patterns, and show that this family suffices to represent acyclicity over a tree decomposition.

\begin{definition}\label{def:pat-compt-equiv}
    We say that two patterns $p,q$ are \emph{compatible} ($p\cmptb q$), if their canonical forests are. Given two families of treewidth patterns $S$ and $R$, we say that $R$ \emph{represents} $S$ ($R\sim S$), if their families of canonical forests are equivalent, i.e.\ if it holds for each pattern $r$ that the number of patterns of $S$ compatible with $r$ is congruent to the number of patterns of $R$ compatible with $r$.
\end{definition}

\begin{observation}\label{obs:tw-interwined-join-cmptb}
    It holds that representation is an equivalence relation. Moreover, it holds for three patterns $p,q,r \in \TP$ that $p\join q \cmptb r$ if and only if $p\cmptb q \join r$, since graph glueing is both commutative and associative.
\end{observation}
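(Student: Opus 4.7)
The first assertion, that representation is an equivalence relation, is immediate from the definition: $R \sim S$ means that for every pattern $r$, the number of patterns in $R$ compatible with $r$ and the number of patterns in $S$ compatible with $r$ are congruent modulo $2$, and this pointwise congruence trivially inherits reflexivity, symmetry, and transitivity from equality modulo $2$.

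For the intertwining claim, the plan is to reduce both sides to a single symmetric condition about a three-way glueing of canonical forests. Given $p, q, r \in \TP$ on a common label set, I would define $H_{p,q,r}$ as the graph obtained by taking the disjoint union of the canonical forests $F_p$, $F_q$, $F_r$ (as constructed in \cref{def:tw-canonical-forest}) and identifying, for each label $i$ in the shared label set, the three boundary vertices labeled $i$. Because glueing is performed via a label-preserving bijection, it is commutative and associative, so $H_{p,q,r}$ is well-defined and independent of the order of the identifications. I would then argue that both $p \join q \cmptb r$ and $p \cmptb q \join r$ are equivalent to the single condition that $H_{p,q,r}$ is acyclic.

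The pivotal step, which I expect to be the main obstacle, is the supporting lemma that whenever $p \join q \neq \perp$, the two-way glueing of $F_p$ and $F_q$ is itself a boundaried forest whose pattern equals $p \join q$. This should follow by unrolling the inductive definition of $\join$ from \cref{def:tw-join}: starting from $p$ and processing the sets of $q$ one at a time, each set $S \in q$ merges precisely those sets of the running pattern that share an element with $S$, which mirrors how the connected components of the partial glueing merge as the star centered at $w_S$ from $F_q$ is attached through its shared boundary vertices. Combined with \cref{lem:same-pattern-equiv-forests}, which states that compatibility depends only on the pattern, this yields that $F_{p \join q}$ is compatible with $F_r$ if and only if $H_{p,q,r}$ is acyclic. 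The $\perp$ case would be handled separately: if $p \join q = \perp$, then either the label sets of $p$ and $q$ already differ or the two-way glueing of $F_p$ and $F_q$ itself contains a cycle, so $p \join q \cmptb r$ is false by convention and $H_{p,q,r}$ is not acyclic either. The symmetric argument applied to the pair $q,r$ completes the equivalence, and once the pattern-correspondence lemma is in place the associativity and commutativity of glueing finish the proof almost mechanically.
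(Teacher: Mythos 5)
Your proof is correct and takes essentially the same approach the paper intends: the observation is stated with no proof beyond the appeal to commutativity and associativity of glueing, and your reduction of both sides to acyclicity of the single three-way glueing $H_{p,q,r}$ (with the $\perp$ and mismatched-label-set cases handled separately) is exactly the natural way to make that one-line justification precise. Note that your ``pivotal step'' --- that the glueing of $F_p$ and $F_q$ is a boundaried forest with pattern $p\join q$, and contains a cycle precisely when $p\join q=\perp$ --- is already recorded in the paper as the unnumbered observation immediately following \cref{def:tw-join}, so you can cite it rather than re-derive it.
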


\begin{lemma}\label{lem:tw-delta-preserves-equiv}
    Let $S_1,S_2,R_1,R_2\subseteq \TP$ be four families of treewidth patterns such that $R_1$ represents $S_1$ and $R_2$ represents $S_2$. Then it holds that $R_1\Delta R_2$ represents $S_1\Delta S_2$.
\end{lemma}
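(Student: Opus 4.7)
The plan is to unfold the definition of representation (Definition~\ref{def:pat-compt-equiv}) and observe that compatibility counting is additive modulo $2$ under symmetric difference of sets. Concretely, for any pattern $r \in \TP$ and any family $\mathcal{F}\subseteq \TP$, let
\[
f_r(\mathcal{F}) := \big|\{q\in \mathcal{F}\colon q\cmptb r\}\big|.
\]
The relation $\mathcal{F}\sim \mathcal{F}'$ then says precisely that $f_r(\mathcal{F})\equiv f_r(\mathcal{F}')\pmod{2}$ for every $r$.

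The key identity is that for any two families $\mathcal{A},\mathcal{B}\subseteq\TP$, inclusion-exclusion gives
\[
f_r(\mathcal{A}\Delta\mathcal{B}) = f_r(\mathcal{A}) + f_r(\mathcal{B}) - 2\,f_r(\mathcal{A}\cap\mathcal{B}) \equiv f_r(\mathcal{A}) + f_r(\mathcal{B)} \pmod{2}.
\]
This is just counting: every pattern compatible with $r$ that belongs to exactly one of $\mathcal{A},\mathcal{B}$ contributes once to $f_r(\mathcal{A}\Delta\mathcal{B})$ and once to the right-hand side, while every pattern compatible with $r$ in both or in neither contributes $0$ modulo $2$ to both sides.

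Applying this identity to $(R_1,R_2)$ and to $(S_1,S_2)$, and using the assumptions $f_r(R_1)\equiv f_r(S_1)$ and $f_r(R_2)\equiv f_r(S_2)$ modulo $2$, we get
\[
f_r(R_1\Delta R_2) \equiv f_r(R_1)+f_r(R_2) \equiv f_r(S_1)+f_r(S_2) \equiv f_r(S_1\Delta S_2) \pmod{2}
\]
for every pattern $r$, which is exactly $R_1\Delta R_2\sim S_1\Delta S_2$. There is no real obstacle here; the only subtlety is to make sure that symmetric difference is read as set-theoretic symmetric difference (consistent with $S_i,R_i\subseteq\TP$ being sets, not multisets), so that each pattern is counted with multiplicity $0$ or $1$ and the cancellation above is valid.
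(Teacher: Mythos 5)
Your proof is correct and follows essentially the same route as the paper: both arguments fix an arbitrary pattern $r$, use that the count of compatible patterns in a symmetric difference is congruent modulo $2$ to the sum of the counts in the two operands, and then combine this with the two representation hypotheses. Your version merely spells out the inclusion--exclusion cancellation a bit more explicitly than the paper does.
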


\begin{proof}
    Let $q\in\TP$ be some a pattern. Let $s_1,s_2,r_1,r_2$ be the numbers of patterns in $S_1,S_2,R_1,R_2$ compatible with $q$ respectively. Then it holds by definition of representation that $s_1\bquiv r_1$ and $s_2\bquiv r_2$, hence it holds that $s_1+s_2\bquiv r_1+r_2$, but $s_1+s_2$ is congruent (modulo $2$) to the number of patterns of $S_1\Delta S_2$ compatible with $q$, and $r_1+r_2$ is congruent to the number of patterns of $R_1\Delta R_2$ compatible with $q$. Hence, these two numbers are congruent as well, and $R_1 \Delta R_2$ represents $S_1\Delta S_2$.
\end{proof}

Now we show that for each family of patterns $S\subseteq \TP$, there exists a family of patterns $R\subseteq \CTP$ that represents $S$. We achieve this through a \emph{reduce} operation that follows the same logic as $\rednice$ procedure defined for clique-width, but restricted to treewidth patterns. Hence, the proofs of representation follow the same logic, but applied to treewidth glueing operations instead of clique-width extension compatibility.

\begin{lemma}\label{lem:tw-rep-by-swap}
    Let $p=\{S_1,\dots S_n\}$ be some pattern for $n\geq 3$, and let $X = S_1$, $Y=S_2$, and $Z=S_3$ be three different sets of $p$. Let $R = p\setminus\{X,Y,Z\}$ be the rest of $p$. We define the following four patterns:
    \begin{itemize}
        \item $p_1 = R\cup\{X \cup Y, Z\}$,
        \item $p_2 = R\cup\{X \cup Z, Y\}$,
        \item $p_3 = R\cup\{Y \cup Z, X\}$,
        \item $p_4 = R\cup\{X \cup Y \cup Z\}$.
    \end{itemize}
    Then it holds that $\{p_2, p_3, p_4\}$ represents $\{p_1\}$.
\end{lemma}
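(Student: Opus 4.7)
The plan is to mimic the structure of Corollary~\ref{cor:rep-vec} from the clique-width section, which is the exact analogue of this statement for patterns over vectors. Concretely, I would prove the lemma by fixing an arbitrary ``test'' pattern $q \in \TP$ (equivalently, its canonical boundaried forest $F_q$, which suffices by \cref{lem:same-pattern-equiv-forests}), and showing that the number of patterns in $\{p_1, p_2, p_3, p_4\}$ that are compatible with $q$ is always even. The symmetric-difference representation claim then follows immediately, since $[p_1 \cmptb q] \equiv [p_2 \cmptb q] + [p_3 \cmptb q] + [p_4 \cmptb q] \pmod 2$.

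To establish even parity, I would prove two claims, in direct analogy to Lemmas~\ref{lem:rep-vec-2-of-3} and~\ref{lem:rep-vec-4-weak}:
\textbf{(A)} if at least one of $p_1, p_2, p_3$ is compatible with $q$, then at least two of them are; and
\textbf{(B)} $p_4 \cmptb q$ if and only if all of $p_1, p_2, p_3$ are compatible with $q$.
These imply that $|\{i \in [4] \colon p_i \cmptb q\}| \in \{0, 2, 4\}$, giving even parity. To prove each claim, I would analyze cycles in the glued graph. The canonical forest of $p_i$ is a disjoint union of stars, one star $w_S$ per set $S \in p_i$ with leaves $\{v_\ell \colon \ell \in S\}$, and the glueing with $F_q$ identifies matching $v_\ell$'s. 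A convenient equivalent is the bipartite multigraph whose vertices are the sets of $p_i$ and the sets of $q$, with one edge per shared label; compatibility holds iff this multigraph is a forest. Under this view, $p_1, p_2, p_3$ differ from $p_4$ only by splitting the node for $X \cup Y \cup Z$ into two nodes ($\{X \cup Y, Z\}$, $\{X \cup Z, Y\}$, or $\{Y \cup Z, X\}$ respectively), while $R$ and the entire $q$-side remain identical. Cycles confined to $R$-nodes and $q$-nodes affect all four patterns equally and cancel out; only cycles that pass through the ``split'' region contribute to the parity, and a careful case analysis by tracking whether a cycle in some $G_i$ uses labels from $X$, from $Y$, and/or from $Z$ resolves both~(A) and~(B).

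The main obstacle, as for the clique-width case, is the cycle manipulation in~(A): if $G_2$ and $G_3$ both contain cycles while $G_1$ does not, I need to combine a cycle through $X \cup Z$ in $G_2$ with a cycle through $Y \cup Z$ in $G_3$ to produce a cycle in $G_1$ through $X \cup Y$ (and possibly $Z$). This is strictly simpler than the clique-width argument of \cref{lem:rep-vec-2-of-3} because treewidth patterns have no multiplicities, each label appears in exactly one set of $p_i$, and labels act as degree-$2$ vertices one can contract. I expect the argument to reduce to observing that the presence of a cycle in $G_2$ forces some $T \in q$ to contain labels from both $X$ and $Z$ (via a path in $F_q$ connecting them through the rest of $q$'s structure), and symmetrically for $G_3$; splicing these two ``$q$-paths'' together with the star $w_{X \cup Y}$ in $F_{p_1}$ and the star $w_Z$ yields the desired cycle in $G_1$.
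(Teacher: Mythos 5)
Your proposal matches the paper's proof: it establishes exactly the same two claims --- (A) compatibility with one of $p_1,p_2,p_3$ forces compatibility with at least two, and (B) compatibility with $p_4$ iff compatibility with all of $p_1,p_2,p_3$ --- and resolves (A) by the same splicing of two label-connecting paths through the stars of the split sets. The bipartite-multigraph reformulation is a clean but cosmetic repackaging of the paper's direct cycle analysis in the glued graphs, so this is essentially the same argument.
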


\begin{proof}
    We show for any pattern $q\in\TP$ that if $q$ is compatible with one of $p_1,p_2,p_3$, then $q$ is compatible with at least two of them. We also show that $q$ is compatible with $p_4$ if and only if $q$ is compatible with all $p_1,p_2,p_3$. It follows from both claims for each pattern $q\in\TP$, that $q$ is compatible with an even number of patterns out of $p_1,p_2,p_3,p_4$ which proves the lemma.

    Now we prove the first claim. Assume that $q$ is compatible with neither $p_1$ nor $p_2$. We show that, in this case, it cannot be compatible with $p_3$ either. The other two cases are symmetric. Let $G_1, G_2, G_3$ be the glueing of $G_{p_1}$, $G_{p_2}$ and $G_{p_3}$ with $G_q$ respectively, and let $C_1$ be a cycle in $G_1$ and $C_2$ be a cycle in $G_2$. We show that $G_3$ contains a cycle proving that $p_3$ is not compatible with $q$.

    If $C_1$ does not visit vertices of the connected component corresponding to $X\cup Y$, then $C_1$ corresponds to a cycle in $G_3$ by replacing each maximal segment $s$ of $C$ belonging to a connected component of $G_1$ by a path on $G_3$ whose endpoints have the same labels as the endpoints of $s$. This is possible, since each set of $p_1$ other than $X\cup Y$ is a subset of a set of $p_3$. Similarly, if $C_2$ does not contain vertices in the connected component corresponding to $X\cup Z$, then $C_2$ corresponds to a cycle in $G_3$. Now we assume that $C_1$ passes through vertices in the connected component corresponding to $X\cup Y$ and $C_2$ passes through vertices in the connected component corresponding to $X\cup Z$. Hence, we can assume that $G_1$ contains a path between two vertices of the component corresponding to $X\cup Y$ that intersects this component only in its endpoints. If there exists such a path where the labels of both endpoints belong to the same set $X$ or $Y$, then $C_1$ corresponds to a cycle in $G_3$ by the same argument above. Hence, we assume that there exists a path with one endpoint labeled in $X$ and the other in $Y$. Similarly, we can assume that $G_2$ contains a path between two vertices whose labels are in $X$ and $Z$, that intersect the component corresponding to $X\cup Z$ only in its endpoints. However, these two paths correspond to paths in $G_3$ ending in the same labels, since each set of $p_1$ different from $X\cup Y$, and each set of $p_2$ different from $X\cup Z$ is a subset of a set of $p_3$. Hence, we get the cycle consisting of the segments $s_1,s_2,s_3,s_4$, where $s_1$ is a path between a vertex $v$ whose label belongs to $X$ and a vertex $u$ whose label belongs to $Y$, $s_3$ is a segment between a vertex $w$ whose label belongs to $Z$ and a vertex $v'$ whose label belongs to $X$, $s_2$ is a path between $u$ and $w$ that lays completely in the star corresponding to $Y\cup Z$ in $G_3$, and $s_4$ is a path between $v$ and $v'$ that lays in the star corresponding to $X$ in $G_3$. Hence, $G_3$ contains a cycle, and $p_3$ is not compatible with $q$.

    Now we prove the second claim. Let $G_4$ be the glueing of $G_{p_4}$ and $G_q$. If $G_i$ contains a cycle for any value $i\in[3]$, then one can turn this cycle into a cycle in $G_4$, since each set of $p_i$ is a subset of a set in $p_4$. On the other hand, if $G_4$ contains a cycle $C$, then we can first assume that this cycle intersects the component $S$ corresponding to $X\cup Y\cup Z$ in a single segment by short-cutting the cycle as follows: First rotate the cycle so that it doesn't begin with a vertex of $S$. Let $v_1$ be the first vertex on $C$ that belongs to $S$ and $v_2$ be the last such vertex. Then we can replace the cycle with another cycle that intersects $S$ in a single segment by replacing the part between $v_1$ and $v_2$ by a simple path in $S$.
    
    Let $s$ be this segment, and let $v_1$ be the first vertex of this segment, and $v_2$ be the last. Let $i_1,i_2$ be the labels of $v_1$ and $v_2$ respectively. Then one can replace $C$ by a cycle in $G_1$ if $\{i_1,i_2\} \subseteq X\cup Y$, by a cycle in $G_2$ if $\{i_1,i_2\}\subseteq X\cup Z$, or by a cycle in $G_3$ if $\{i_1,i_2\}\subseteq Y\cup Z$, by replacing $s$, and each other maximal segment on the cycle belonging to a connected component of $G_4$, by a path between two vertices of the same labels. This is possible since all other sets of $p_4$ different from $X\cup Y\cup Z$ are sets of $p_i$ for all $i\in[3]$.
\end{proof}

\begin{procedure}
    Given a pattern $p\in\TP$ and $i\in [k]$, we define the mapping $\redind(p,i)$ that outputs a set of patterns as follows: if $i$ does not belong to $p$, or if the set containing $i$ in $p$ is $Z_p$ or is a singleton, then we define $\redind(p,i):= p$. Otherwise, let $X$ be the set that contains $i$ in $p$, let $\tilde{X} = X\setminus \{i\}$, and let $p_0 = p\setminus\{X, Z_p\}$.
    Then we define $\redind(p,X,i)$ as the set containing the three patterns:

    \begin{itemize}
        \item $p_2 = p_0\cup\big\{Z_p\cup \{i\}, \tilde{X}\big\}$,
        \item $p_3 = p_0\cup\big\{Z_p \cup \tilde{X}, \{i\}\big\}$,
        \item $p_4 = p_0\cup\big\{Z_p \cup \tilde{X} \cup \{i\}\big\}$.
    \end{itemize}

    For $p\in\Pat$, we define $\redpat(p)$ as follows: Let $P_0 = \{p\}$, and let $i_1,\dots i_{\ell}$ be the indices in $[k]$ such that $i_j$ appears in $p$ in a non-singleton set different from $Z_p$ for all $j\in[\ell]$. Then we define $P_{j} := \bigdelta\limits_{q\in P_{j-1}} \redind(q, i_j)$ for each $j\in[\ell]$, and $\redpat(p):=P_{\ell}$.
\end{procedure}

\begin{lemma}\label{lem:tw-red-equiv-and-time}
    It holds for any pattern $p\in\TP$ that $\redpat(p)$ runs in time $\ostar(3^{\ell})$, where $\ell$ is the number of labels of $p$ that belong to a non-singleton set different from $Z_p$, and outputs a family of at most $3^{\ell}$ $\CTP$ patterns that represents $p$.
\end{lemma}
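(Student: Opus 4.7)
The plan is to verify the four assertions of the lemma — $|P_\ell|\le 3^\ell$, runtime $\ostar(3^\ell)$, $P_\ell\subseteq\CTP$, and $P_\ell$ represents $p$ — simultaneously by induction on $j\in[\ell]_0$, maintaining three invariants on $P_j$: (i) $|P_j|\le 3^j$; (ii) $P_j$ represents $\{p\}$; and (iii) for every $q\in P_j$ and every $h\le j$, the label $i_h$ either is absent from $q$, belongs to $Z_q$, or appears in $q$ as a singleton set. The base case $j=0$ is immediate from $P_0=\{p\}$.

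For the inductive step, fix $q\in P_j$ and consider the call $\redind(q,i_{j+1})$. If $i_{j+1}$ is absent, already lies in $Z_q$, or is a singleton in $q$, then $\redind(q,i_{j+1})=\{q\}$ and all invariants extend trivially. Otherwise, let $S$ be the set of $q$ containing $i_{j+1}$, set $\tilde X:=S\setminus\{i_{j+1}\}$ and $p_0:=q\setminus\{S,Z_q\}$, and instantiate \cref{lem:tw-rep-by-swap} with $X:=\tilde X$, $Y:=\{i_{j+1}\}$, $Z:=Z_q$, $R:=p_0$. Under this choice, the $p_1$ of the lemma equals $q$, while $\{p_2,p_3,p_4\}$ coincides as a set with the three patterns produced by $\redind(q,i_{j+1})$. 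The lemma therefore gives that $\redind(q,i_{j+1})$ represents $\{q\}$. Invariant (i) is immediate from $|\redind(q,i_{j+1})|\le 3$. Invariant (iii) is preserved because $\redind$ only modifies $S$ and $Z_q$: in each of the three outputs, $i_{j+1}$ lands either inside the zero set or as the singleton $\{i_{j+1}\}$; and no previously processed label $i_h$ with $h\le j$ can lie in $S$, because by the inductive hypothesis $i_h$ is in $Z_q$ or in a singleton, while $S$ is neither.

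Iterating \cref{lem:tw-delta-preserves-equiv} across the symmetric difference $\bigdelta_{q\in P_j}\redind(q,i_{j+1})$ then yields that $P_{j+1}$ represents $P_j$, so invariant (ii) propagates to $P_{j+1}$ by transitivity of the representation relation (\cref{obs:tw-interwined-join-cmptb}). Each call to $\redind$ and each symmetric-difference computation runs in polynomial time, and the total work is bounded by $\sum_{j=1}^{\ell}\mathrm{poly}(n)\cdot |P_j|\le \mathrm{poly}(n)\cdot \sum_{j=1}^{\ell}3^j=\ostar(3^\ell)$. Finally, to deduce $P_\ell\subseteq\CTP$ from invariant (iii) at $j=\ell$, it remains to observe that any label of $[k]$ not among $\{i_1,\dots,i_\ell\}$ appears in $p$ either not at all, or in $Z_p$, or as a singleton; such labels are untouched by every call to $\redind$ (which only alters the set containing the currently processed label and the zero set), and hence remain resolved in every pattern of $P_\ell$.

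The main obstacle is establishing invariant (iii): the argument must rule out that some later call $\redind(\cdot,i_{j'})$ reintroduces an already-resolved label $i_h$ into a non-singleton non-zero set. This reduces to the observation that $\redind$ only ever enlarges $Z_q$ or preserves singletons, and only operates on the set that currently contains the processed label — which by the invariant is neither the zero set nor the singleton housing $i_h$.
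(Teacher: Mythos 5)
Your proof is correct and follows essentially the same route as the paper's: induction over the $\ell$ iterations, invoking \cref{lem:tw-rep-by-swap} (with $p_1=q$) to show each $\redind$ call represents its input, \cref{lem:tw-delta-preserves-equiv} to propagate representation through the symmetric differences, and the invariant that processed labels only ever land in the zero set or a singleton. You are in fact somewhat more explicit than the paper about why previously resolved labels cannot be disturbed and why the untouched labels of $p$ remain resolved, which is a welcome addition but not a different argument.
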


\begin{proof}
    The bound on the size of the resulting family and on the running time hold since the procedure runs in $\ell$ iterations, in each it replaces each pattern by at most $3$ patterns in polynomial time.

    It holds from \cref{lem:tw-rep-by-swap} that $\redind(p,i)$ represents $p$ for all $p\in\TP$ and all $i\in[k]$ by fixing $p_1 := p$.
    We prove by induction over $i\in[k]_0$, that $P_i$ represents $p$ and that for each $p\in P_i$ and each $j\leq i$, that $j$ appears in $p$ if and only if it appears in $p$, and that it can appear only in $Z_p$ or as a singleton.

    The claim is trivial for $j=0$ since $P_0 = \{p\}$. Now fix $j\in[\ell]$ and assume that the claim holds for $P_{j-1}$. Then it holds by \cref{lem:tw-delta-preserves-equiv} that $P_j$ represents $P_{j-1}$ and by transitivity of representation that $P_j$ represents $p$. Moreover, each pattern of $\redind(q,i_j)$ results from $q$ by moving $i_j$ either to the set $Z_q$, or to become a singleton, and possibly by merging the set containing $i_j$ with $Z_q$ which proves the claim.
\end{proof}

\subsection{The Algorithm}

Now we are ready to present our algorithm, as a bottom-up dynamic programming scheme over $T$ indexed by the family $\CTP$. Intuitively, the tables $P_x^{\budget}$ count (modulo $2$) the number of partial solutions of a fixed size at a node $x$, that are compatible with a pattern $p\in\CTP$.

\begin{algorithm}\label{alg:tw}
We define the tables $P_x^{\budget} \in \bin^{\CTP}$ for all nodes $x \in \nodes$, and all values $\budget\in[n]_0$ recursively as follows:
\begin{itemize}
    \item Introduce vertex $(v_0)$: We define $P_x^\budget[p] = 1$ if $\budget = 0$ and $p=\nopat$, and $0$ otherwise.
    
    \item Introduce vertex ($v\neq v_0$) with child $x'$: Let $i=\phi(v)$. We define $P_x^{\budget}[p] = P_{x'}^{\budget}[p\setminus i]$ if $\{i\}\in p$ or $i\notin L_p$, and $0$ otherwise.

    \item Forget vertex node ($v$) with child $x'$: Let $i=\phi(v)$. If $i \in L_p$, then we define $P_x^{\budget}[p] = 0$. Otherwise, we define
    \[
    P_x^{\budget}[p] = 
    \sumstack{p'\in\CTP\\p'\setminus i = p} P_{x'}^{\budget-[i\in L_{p'}]}[p'].
    \]

    \item Introduce edge ($\{u,v\}$) with child $x'$: Let $i=\phi(u)$ and $j=\phi(v)$. We define the tables $\tilde{P}_x^{\budget}\in \bin^{\TP}$ and $P_x^{\budget}\in\bin^{\CTP}$ with
    \[
    \tilde{P}_x^{\budget}[p] = \sumstack{q\in\CTP,\\\patadd_{i,j}(q) = p} P_{x'}^{\budget}[q]
    \quad\text{ and }\quad
    P_x^{\budget}[p] = \sumstack{q\in \TP,\\p\in\redpat(p)} \tilde{P}_x^{\budget}[q].
    \]

    \item Merge node with children $x_1, x_2$: For $p\in\CTP$ we define 
    \[
    P_x^{\budget}[p] = \sumstack{\budget_1+\budget_2=\budget}\sum\limits_{p_1\join p_2 = p} P_{x_1}^{\budget_1}[p_1] \cdot P_{x_2}^{\budget_2}[p_2].
    \]
\end{itemize}
\end{algorithm}

Now we show that the support of each table $P_x^{\budget}$ represents the family of all partial solutions of size $\budget$ at $x$. We prove this in two steps: First, we define the auxiliary tables $Q_x^{\budget}$ that count (modulo $2$) the number of partial solutions of size $\budget$ at $x$ having pattern $p$ for each pattern in $\TP$. In the second step, we show that the support of $P_x^{\budget}$ represents the support of $Q_x^{\budget}$ for all nodes $x\in\nodes$ and all values $\budget$.

\begin{definition}
    We define the tables $Q_x^{\budget}\in\bin^{\TP}$ for a nodes $x\in \nodes$ and $\budget\in[n]_0$ recursively as follows:
    \begin{itemize}
    \item Introduce vertex $(v_0)$: We set $Q_x^\budget[p] = 1$ if $\budget = 0$ and $p=\nopat$, and $0$ otherwise.
    
    \item Introduce vertex ($v\neq v_0$) with child $x'$: Let $i=\phi(v)$. We define $Q_x^{\budget}[p] = Q_{x'}^{\budget}[p\setminus i]$ if $\{i\}\in p$ or $i\notin L_p$, and $0$ otherwise.

    \item Forget vertex node ($v$) with child $x'$: Let $i=\phi(v)$.
    If $i\in L_p$, we define $Q_x^{\budget}[p] = 0$. Otherwise, we define
    \[
    Q_x^{\budget}[p] = 
    \sumstack{p'\in\TP\\p'\setminus i = p} Q_{x'}^{\budget-[i\in L_{p'}]}[p'].
    \]

    \item Introduce edge ($\{u,v\}$) with child $x'$: Let $i=\phi(u)$ and $j=\phi(v)$. We define
    \[
    Q_x^{\budget}[p] = \sumstack{q\in\TP,\\\patadd_{i,j}(q) = p} Q_{x'}^{\budget}[q].
    \]
    
    \item Merge node with children $x_1, x_2$: For $p\in\TP$ we define 
    \[Q_x^{\budget}[p] = \sumstack{\budget_1+\budget_2=\budget}\sum\limits_{p_1\join p_2 = p} Q_{x_1}^{\budget_1}[p_1] \cdot Q_{x_2}^{\budget_2}[p_2].\]
    \end{itemize}
\end{definition}

\begin{lemma}\label{lem:tw-Q-tables-count-sol}
    It holds for all nodes $x\in\nodes$, for all values $\budget$ and for all $p\in \TP$ that $Q_x^{\budget}[p]\bquiv 1$ if and only if there exists an odd number of partial solutions $X$, such that $|X\setminus B_x|=\budget$ and $\pat_x(X) = p$.
\end{lemma}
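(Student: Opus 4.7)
The plan is to proceed by bottom-up induction on $T$, showing that for every node $x$, budget $\budget$, and pattern $p\in\TP$, the entry $Q_x^{\budget}[p]$ equals, modulo $2$, the number of partial solutions $X$ with $|X\setminus B_x|=\budget$ and $\pat_x(X)=p$. The base case is the leaf introducing $v_0$, where the unique partial solution $\{v_0\}$ has $|X\setminus B_x|=0$ and $\pat_x(X)=\nopat$, which matches the definition.

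For the inductive step I would handle each node type in turn. At an introduce vertex node for $v\neq v_0$ with $i=\phi(v)$, I would split partial solutions $X$ at $x$ according to whether $v\in X$: if $v\notin X$, then $X$ is a partial solution at $x'$ with the same budget and pattern (and $i\notin L_p$); if $v\in X$, then $v$ is isolated in $G_x[X]$ since no edge incident to $v$ has been introduced yet, so $X\setminus\{v\}$ is a partial solution at $x'$ of the same budget (since $v\in B_x$), and $v$ appears as the singleton $\{i\}$ in $p$. In both cases the pattern at $x'$ equals $p\setminus i$, matching the recursion. At an introduce edge node $\{u,v\}$ with $i=\phi(u)$, $j=\phi(v)$, \cref{obs:tw-patadd} shows that the partial solutions at $x$ are exactly those at $x'$ whose pattern $q$ satisfies $\patadd_{i,j}(q)\neq\perp$, with updated pattern $\patadd_{i,j}(q)$ and unchanged budget, which is precisely the recursion.

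The forget node is the most delicate case, since both the pattern and the budget depend on whether the forgotten vertex $v$ (with $i=\phi(v)$) lies in $X$. If $i\in L_p$ the count is vacuously zero, matching the recursion: since $\phi$ is injective on $B_{x'}$, no vertex of $B_x=B_{x'}\setminus\{v\}$ has label $i$, so no pattern $\pat_x(X)$ can use label $i$. Otherwise, solutions with $v\notin X$ have unchanged budget and pattern, contributing $Q_{x'}^{\budget}[p]$ via the unique term $p'=p$ with $i\notin L_{p'}$; solutions with $v\in X$ have budget larger by $1$ at $x'$ (since $v\in B_{x'}$) and pattern $p'$ with $p'\setminus i=p$ and $i\in L_{p'}$, contributing $\sum_{p'\setminus i=p,\,i\in L_{p'}} Q_{x'}^{\budget-1}[p']$. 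The recursion combines both by using the indicator $[i\in L_{p'}]$ to adjust the budget.

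At a merge node with children $x_1,x_2$, every partial solution $X$ at $x$ decomposes uniquely as $X=X_1\cup X_2$ with $X_j=X\cap V_{x_j}$, each $X_j$ being a partial solution at $x_j$ (any cycle in $G_{x_j}[X_j]$ would be a cycle in $G_x[X]$). Sizes add because $V_{x_1}\setminus B_x$ and $V_{x_2}\setminus B_x$ are disjoint, and $X_1\cap B_x=X_2\cap B_x=X\cap B_x$. Conversely, any pair $(X_1,X_2)$ with matching boundary glues to a valid partial solution at $x$ iff the glueing is acyclic, which by the observation following \cref{def:tw-join} is captured exactly by $\pat_{x_1}(X_1)\join\pat_{x_2}(X_2)\neq\perp$; in the acyclic case the resulting pattern equals this join. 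Hence the recursion enumerates each partial solution at $x$ exactly once with the correct pattern. The main obstacle I expect is maintaining the faithful correspondence between the combinatorial operations on patterns ($p\setminus i$, $\patadd_{i,j}$, $\join$) and the underlying graph operations at each node type, in particular keeping clean track of how the budget accounting interacts with the differing bags $B_x$ and $B_{x'}$ at introduce and forget nodes.
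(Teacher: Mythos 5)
Your proposal is correct and follows essentially the same route as the paper's proof: a bottom-up induction over the decomposition with a case analysis per node type, using \cref{obs:tw-patadd} at introduce-edge nodes, the glueing/join observation at merge nodes, and the budget accounting $|X\setminus B_x|=|X\setminus B_{x'}|+[v\in X]$ at forget nodes. Your write-up is in fact somewhat more explicit than the paper's (notably for the introduce-vertex and forget cases), but the argument is the same.
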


\begin{proof}
    We prove the claim by induction over the nodes $x\in\nodes$. For a node introducing the vertex $v_0$, the only solution is the empty solution, with $\pat_x(\emptyset) = \nopat$ which corresponds to the definition of the tables $Q_x^{\budget}$ at this node.

    For an introduce vertex node $v$ with $\phi(v)=i$, and $x'$ the child of $x$, each partial solution $X'$ with $|X'\setminus B_x = \budget$ at $x'$ and $\pat_{x'}(X') = p'$ is a partial solution at $x$, where $\pat_{x}(X')$ results from $p'$ by adding $i$ as a singleton. Moreover, $X'$ can be extended to a partial solution $X$ such that $|X\setminus B_{x'}| = |X'\setminus B_x|$ by adding $v$ to the partial solution. In this case it holds that $\pat_{x'}(X') =\pat_x(X)$. Since these two cases cover all partial solutions at $x$, and no partial solution at $x$ has the label $i$ in a non-singleton set, the correctness of the formula follows by induction hypothesis at $x'$.

    For a forget vertex node $v$ labeled $i$, each partial solution $X$ at $x$ is a partial solution at $x'$ where we exclude $i$ from the pattern since it is not in the boundary at $x$ anymore. Moreover, it holds that $|X\setminus B_x| = |X\setminus B_{x'}| + [v\in X]$ since $B_{x'}\setminus B_x = \{v\}$. The claim follows by induction hypothesis.
    
    For an add edge node $\{u,v\}$: Let $i,j$ be the labels of $u,v$ respectively. The lemma follows from \cref{obs:tw-patadd} and the induction hypothesis at the child node $x'$, since each partial solution $X$ at $x$ is a partial solution at $x'$ with $\pat_{x}(X) = \patadd_{i,j}(\pat_{x'}(X)$) and vice versa.

    Finally, for a join node $x$ with children $x_1$ and $x_2$: each partial solution $X$ at $x$ with $|X\setminus B_x|=\budget$ is the union of two partial solutions $X_1,X_2$ at $x_1$ and $x_2$ respectively, such that $X_1\cap B_x = X_2\cap B_x$ and $(X_1\setminus B_x)\cap (X_2\setminus B_x)=\emptyset$. Let $\budget_1 = |X_1\setminus B_{x_1}|$ and $\budget_2 = |X_2\setminus B_{x_2}|$. Then it holds that $\budget_1+\budget_2=\budget$. Moreover, it holds from \cref{obs:tw-interwined-join-cmptb} that $\pat_x(X) = \pat_{x_1}(X_1)\join \pat_{x_2}(X_2)$. The claim follows by applying the induction hypothesis at $x_1$ and $x_2$.
\end{proof}

Now we aim to show that the support of $P_x^{\budget}$ represents the support of $Q_x^{\budget}$. In order to show this we first prove that all pattern operations preserve representation, i.e.\ if $R$ represents $S$, then $\op(R)$ represents $\op(S)$.
We start with a formal definition:

\begin{definition}\label{def:tw-pres-rep}
    Let $\op$ be some pattern operator of arity $r$. We define the \emph{exclusive} version $\exec$ of $\op$ as the mapping of the same parity that maps sets of patterns to sets of patterns, where for families of patterns $S_1,\dots S_r$ we define
    \[\exec(S_1,\dots,S_r) = \bigdelta\limits_{(p_1,\dots, p_r)\in S_1\times\dots\times S_r} \{\op(p_1,\dots,p_r)\},\]
    i.e., a pattern $p$ belongs to $\exec(S_1,\dots, S_r)$ if and only if there exists an odd number of tuples $(p_1,\dots, p_r)$ that map to $p$ with $p_i\in S_i$ for all $i\in[r]$.

    For $i,j\in[k]$ with $i\neq j$ we define $\exop{\cup} i$ as the exclusive version of $\cup i$, $\exop{\join}$ as the exclusive join $\join$, $\exop{\setminus}i$ as the exclusive version of $\setminus i$ and $\exop{\patadd}_{i,j}$ as the exclusive version of $\patadd_{i,j}$.

    Let $\op$ be an operation of parity $r$ over sets of patterns. Then we say that $\op$ \emph{preserves representation} if it holds for all sets of patterns $S_1,\dots S_r$ and $R_1,\dots, R_r$ where $R_i$ represents $S_i$ for all $i\in[k]$ that $\exec(R_1,\dots R_r)$ represents $\exec(S_1,\dots, S_r)$.
\end{definition}

\begin{lemma}\label{lem:ops-preserve-representation}
    All operations $\exop{\cup} i$, $\exop{\setminus} i$ and $\exop{\patadd}_{i,j}$ for all $i,j\in[k]$, $i\neq j$ preserve representation.
\end{lemma}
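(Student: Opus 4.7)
The plan is to reduce each case to a mod-$2$ \emph{duality} of the form
\[
[\op(p) \cmptb q] \bquiv \sum_{q' \in D_\op(q)} [p \cmptb q'] \pmod 2
\]
valid for every $p \in \TP$, where $D_\op(q) \subseteq \TP$ is a multiset depending only on $q$ and $\op$. Once this identity is in place, a simple double-counting gives
\[
\#\{p' \in \exec(S) : p' \cmptb q\} \bquiv \sum_{p \in S} [\op(p) \cmptb q] \bquiv \sum_{q' \in D_\op(q)} \#\{p \in S : p \cmptb q'\} \pmod{2},
\]
and by applying $R \sim S$ to each term on the right and reading the chain backwards for $R$, we obtain $\exec(R) \sim \exec(S)$, i.e.\ preservation of representation.

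The three duals I will prove are: $D_{\patadd_{i,j}}(q) = \{\patadd_{i,j}(q)\}$ when $\patadd_{i,j}(q) \neq \perp$ and $\emptyset$ otherwise; $D_{\cup i}(q) = \{q \setminus i\}$ when $i \in L_q$ and $\emptyset$ otherwise; and $D_{\setminus i}(q) = \{q,\, q \cup i\}$ when $i \notin L_q$ and $\emptyset$ otherwise. Each duality is proved by comparing the two glued graphs at stake. For $\patadd_{i,j}$ the key observation is that passing from $G_p$ to $G_{\patadd_{i,j}(p)}$ (when $i,j$ lie in different sets of $p$) amounts to identifying the two star centers $w_{S_p}^p$ and $w_{T_p}^p$ in the glueing $G_p \cup G_q$; identifying two vertices preserves acyclicity precisely when they lie in different components, and since $v_i$ is adjacent to $w_{S_p}^p$ and $v_j$ to $w_{T_p}^p$, this coincides with $v_i$ and $v_j$ lying in different components of $G_p \cup G_q$, which is also the condition under which adding the edge $(v_i,v_j)$ to $G_p \cup G_q$ yields a forest. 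This latter condition is manifestly symmetric in $p$ and $q$, producing the self-duality; the degenerate subcases where $\patadd_{i,j}(p)=\perp$ or $\{i,j\}\not\subseteq L_p$ reduce to the same condition. For $\cup i$ the analogous argument is cleaner still: $G_{p\cup i}$ is simply $G_p$ with a pendant attached at the newly added boundary vertex $v_i$, so removing the pendant and then peeling off the now degree-$1$ vertex $v_i$ in the glueing identifies the cycle structure of $G_{p\cup i}\cup G_q$ with that of $G_p\cup G_{q\setminus i}$.

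The main obstacle is the $\setminus i$ case, which is neither injective nor self-dual and calls for a two-term expansion. I will split on whether $i \in L_p$: if $i\notin L_p$ then $p\setminus i=p$ and compatibility coincides with $p\cmptb q$; if $i\in L_p$ then the very same pendant-absorption argument, now applied on the $q$-side, identifies the compatibility of $p\setminus i$ with $q$ with the compatibility of $p$ with $q\cup i$. The two regimes partition all patterns $p$ according to $[i\in L_p]$, and the label-set conditions built into $\cmptb$ force at most one of the two indicators $[p\cmptb q]$ and $[p\cmptb(q\cup i)]$ to be nonzero for any given $p$, so they add cleanly mod $2$ to produce $[(p\setminus i)\cmptb q]$. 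Finally, when $i\in L_q$ one sees that $[(p\setminus i)\cmptb q]=0$ for every $p$ (since $L_{p\setminus i}$ never contains $i$) while simultaneously $q\cup i=\perp$ and $L_p\neq L_q$ kill both candidate dual terms, so the convention $D_{\setminus i}(q)=\emptyset$ is consistent in this regime.
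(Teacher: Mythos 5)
Your proposal is correct and follows essentially the same route as the paper's proof: the same reduction of ``preserves representation'' to a per-operation mod-$2$ duality $[\op(p)\cmptb q]\bquiv\sum_{q'\in X}[p\cmptb q']$ with $X$ depending only on $q$, followed by double counting, and the same dual sets ($\{q\setminus i\}$ for $\cup i$, $\{q,\,q\cup i\}$ for $\setminus i$ with the label-set disjointness making the two terms exclusive, and the self-dual $\{\patadd_{i,j}(q)\}$). The only cosmetic difference is that for $\patadd_{i,j}$ you justify the self-duality by identifying star centers in the glued graph, whereas the paper factors $G_{\patadd_{i,j}(p)}$ as a glueing with an auxiliary two-element pattern and invokes associativity and commutativity of glueing; both arguments are sound.
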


\begin{proof}
    First, we make the following claim: Let $\op$ be some unary operation over patterns such that for each pattern $q$ there exists a set of patterns $X$, where $\op(p)$ is compatible with $q$ if and only if $p$ is compatible with an odd number of patterns of $X$. Then $\exec$ preserves representation.
    Let $S,R\subseteq \TP$ such that $R$ represents $S$, and let $q$ be an arbitrary pattern. Let $X$ be the family described above. Then it holds that
    \begin{alignat*}{2}
        &|\{p\in \exec(S)\colon p\cmptb q\}|
        &&\bquiv|\{p\in S\colon \op(p)\cmptb q\}|\\
        &\bquiv |\{p\in S, q' \in X \colon p\cmptb q'\}|
        &&\bquiv \sumstack{q'\in X}|\{p\in S\colon p\cmptb q'\}|\\
        &\bquiv \sumstack{q'\in X}|\{p\in R\colon p\cmptb q'\}|
        &&\bquiv |\{p\in R, q' \in X \colon p\cmptb q'\}|\\
        &\bquiv |\{p\in R\colon \op(p)\cmptb q\}|
        &&\bquiv |\{p\in \exec(R)\colon p\cmptb q\}|.
    \end{alignat*}

    Now we use this claim to prove the lemma. For $p\cup i$ and a pattern $q$, let $X:=\emptyset$ if $i \notin L_q$, and $X:=\{p\setminus i\}$ otherwise. In the former case, it holds that $p\cup i \not\cmptb q$ for all patterns $p$, and $p$ is not compatible with any pattern of $X$ (since $X$ is empty). For the latter case, if $i\in L_p$, then it holds that $p\cup i = \perp$.
    If this is the case, or if $L_p \neq L_q\setminus \{i\}$, then it holds that $p\cup i \not\cmptb q$ and $p\not\cmptb q\setminus i$.
    Otherwise, we assume that $i\notin L_p$ and $L_q = L_p \cup i$. 
    Since $G_{p\cup i}$ results from $G_p$ by adding the vertices $v_i,w_{i}$ with an edge between them as a new connected component, with $v_i$ labeled $i$, the glueing of $G_{p\cup i}$ and $G_q$ results from the glueing of $G_p$ and $G_{q\setminus i}$ by adding the edge $\{v_i, w\}$ and an edge between $v_i$ and the vertex $w_S$ where $S$ is the set containing $i$ in $q$, i.e., by hanging a path ($P_2$) at $w_S$. Hence, the resulting graph is acyclic if and only if the original graph is.

    For $p\setminus i$: if $i\in L_q$ then $p\setminus i \not\cmptb q$ for all patterns $p$. In this case, we define $X=\emptyset$.
    Otherwise, assume that $i\notin L_q$. We define $X=\{q, q\cup i\}$. Since $L_q\neq L_{q\cup i}$, it holds for each pattern $p$ that $p$ is compatible with at most one of $q$ and $q\cup i$. If $i\notin L_p$, then it holds that $p\setminus i = p$, and the claim is trivial in this case. Otherwise, it holds that $i\in L_p$ and hence, $p\not\cmptb q$. If $L_p \neq L_q \cup \{i\}$, then $p\setminus i \not\cmptb q$ and $p\not\cmptb q\cup i$. Otherwise,
    the glueing of $G_p$ and $G_{q\cup i}$ results from the glueing of $G_{p\setminus i}$ and $G_q$ by hanging a path ($P_2$) at the vertex $w_S$ corresponding to the set containing $i$ in $p$. Hence, the resulting graph is acyclic if and only if the original graph is.

    Finally, for $\patadd_{i,j}(p)$, let $p':=\patadd_{i,j}(p)$ and $q'=\patadd_{i,j}(q)$. We define $X:=\{q'\}$. If $L_p\neq L_q$ then it holds that $\patadd_{i,j}(p)\not\cmptb q$ and $p\not\cmptb q'$. Otherwise, if $\{i,j\}\not \subseteq L_p$ then it holds that $p'=p$ and $q' = q$. Hence, the claim is trivial in this case. Finally, assume that $i,j\in L_p = L_q$. Let $r := \big\{\{i,j\}\big\}\cup \{\{k\}\colon k\in L_p\setminus\{i,j\}\}$ be the pattern consisting out of the set $\{i,j\}$ and all other labels of $L_p$ as singletons. Then it holds that $G_{p'}$ is the glueing of $G_{r}$ and $G_{p}$, and that $G_{q'}$ is the glueing of $G_r$ and $G_q$. 
    Since the glueing process is both associative and commutative, it holds that the glueing of $G_{p'}$ and $G_q$ is the same as the glueing of $G_p$ and $G_{q'}$. Hence, $p'\cmptb q$ if and only if $p\cmptb q'$.
\end{proof}

\begin{lemma}\label{lem:tw-join-pres-rep}
    Let $S_1,S_2,R_1,R_2\subseteq \TP$ be set of patterns such that $R_1$ represents $S_1$ and $R_2$ represents $S_2$. Let $S:=S_1\exop{\join} S_2$ and $R:=R_1\exop{\join} R_2$. Then $R$ represents $S$.
\end{lemma}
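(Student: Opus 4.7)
The approach is to fix an arbitrary pattern $q \in \TP$ and show that
\[
\big|\{p \in S_1 \exop{\join} S_2 : p \cmptb q\}\big|
\ \equiv\
\big|\{p \in R_1 \exop{\join} R_2 : p \cmptb q\}\big| \pmod 2,
\]
by chaining two applications of the representation hypothesis for $(R_1,S_1)$ and $(R_2,S_2)$, bridged by the swap identity from \cref{obs:tw-interwined-join-cmptb}.

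First, I would unfold the exclusive join: by definition, a pattern $p$ lies in $S_1 \exop{\join} S_2$ iff an odd number of pairs $(p_1,p_2) \in S_1 \times S_2$ satisfy $p_1 \join p_2 = p$. Adopting the natural convention that $\perp \not\cmptb q$ for all $q$ (which is consistent with $p_1 \join p_2 = \perp$ meaning that the glueing of their canonical forests already contains a cycle), this gives
\[
\big|\{p \in S_1 \exop{\join} S_2 : p \cmptb q\}\big|
\ \equiv\
\sum_{p_1 \in S_1}\sum_{p_2 \in S_2} \big[(p_1 \join p_2) \cmptb q\big] \pmod 2,
\]
and analogously for $R_1 \exop{\join} R_2$.

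The key step is to apply \cref{obs:tw-interwined-join-cmptb} to rewrite $[(p_1 \join p_2) \cmptb q] = [p_1 \cmptb (p_2 \join q)]$, so that for each fixed $p_2 \in S_2$ the inner sum over $p_1$ counts patterns of $S_1$ compatible with the single pattern $p_2 \join q$. Since $R_1$ represents $S_1$, this inner count is congruent modulo $2$ to $|\{r_1 \in R_1 : r_1 \cmptb (p_2 \join q)\}|$. Swapping the order of summation and applying the identity in reverse (using commutativity of glueing, so that $r_1 \cmptb (p_2 \join q) \iff p_2 \cmptb (r_1 \join q)$) then leaves a sum in which $S_2$ appears in the same form on the right-hand side. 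A second appeal to the representation hypothesis replaces $S_2$ by $R_2$, and unwinding the rewrites yields $\sum_{r_1,r_2} [(r_1 \join r_2) \cmptb q]$, which is exactly the right-hand side.

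\textbf{Main obstacle.} The only subtlety is bookkeeping with the bad pattern $\perp$: the identity from \cref{obs:tw-interwined-join-cmptb} must be understood as saying that both sides of the equivalence are false whenever an intermediate join collapses to $\perp$. This is justified because associativity and commutativity of graph glueing imply that the three-way glueing of $G_{p_1}, G_{p_2}, G_q$ is acyclic if and only if every bracketing produces a valid pattern whose canonical forest is acyclic; any $\perp$ appearing in a subjoin witnesses a cycle in the three-way glueing and hence also falsifies the other bracketing under the convention $\perp \not\cmptb q$. Once this convention is fixed, each equality in the chain above is a straightforward rewriting.
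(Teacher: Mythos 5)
Your proof is correct and follows essentially the same route as the paper: unfold the exclusive join into a double sum of indicators, use \cref{obs:tw-interwined-join-cmptb} to move one operand across the compatibility relation, apply the representation hypothesis to each factor in turn with a swap of summation order in between, and unwind. Your explicit discussion of the convention $\perp \not\cmptb q$ is a careful touch the paper leaves implicit, but it does not change the argument.
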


\begin{proof}
    It holds for each pattern $r\in\TP$ 
    \begin{alignat*}{2}
        &|\{p\in R\colon p\cmptb r\}|
        &&\bquiv|\{(q_1,q_2)\in R_1\times R_2\colon q_1\join q_2 \cmptb r\}|\\
        &\bquiv \sum_{q_2\in R_2} |\{q_1\in R_1\colon q_1\join q_2 \cmptb r\}|
        &&\bquiv \sum_{q_2\in R_2} |\{q_1\in R_1\colon q_1\cmptb q_2\join r\}|\\
        &\bquiv \sum_{q_2\in R_2} |\{p_1\in S_1\colon p_1\cmptb q_2\join r\}|
        &&\bquiv \sum_{p_1\in S_1} |\{q_2\in R_2\colon p_1\cmptb q_2\join r\}|\\
        &\bquiv \sum_{p_1\in S_1} |\{q_2\in R_2\colon q_2\cmptb p_1\join r\}|
        &&\bquiv \sum_{p_1\in S_1} |\{p_2\in S_2\colon p_2\cmptb p_1\join r\}|\\
        &\bquiv \sum_{p_1\in S_1} |\{p_2\in S_2\colon p_1\join p_2\cmptb r\}|
        &&\bquiv|\{(p_1,p_2)\in S_1\times S_2\colon p_1\join p_2 \cmptb r\}|\\
        & &&\bquiv |\{p\in S\colon p\cmptb r\}|,
    \end{alignat*}
    where the third, the sixth and the seventh congruences follow from \cref{obs:tw-interwined-join-cmptb}.
\end{proof}

\begin{lemma}\label{lem:tw-P-Q-equiv}
    It holds for all nodes $x$ and all values $\budget$ that the support of $Q_x^{\budget}$ represents the support of $P_x^{\budget}$.
\end{lemma}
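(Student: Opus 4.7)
The proof is by induction on the nodes $x$ of $T$, processed from the leaves upward. The strategy is, for each node type, to express the support of $P_x^{\budget}$ (and analogously of $Q_x^{\budget}$) as the image of the children's supports under a pattern operation, and then to appeal to Lemmas~\ref{lem:ops-preserve-representation}, \ref{lem:tw-join-pres-rep}, \ref{lem:tw-delta-preserves-equiv}, and~\ref{lem:tw-red-equiv-and-time}, each of which asserts that one of these operations preserves the representation relation $\sim$. The base case is the introduce-$v_0$ node: both tables equal $1$ on $\nopat$ for $\budget=0$ and $0$ elsewhere, so their supports coincide and trivially represent each other.

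For an introduce-vertex node with $v \neq v_0$ and $i := \phi(v)$, note that $i \notin \phi(B_{x'})$, so every pattern $q$ in the child's support satisfies $i \notin L_q$. This gives a disjoint decomposition: the support of $T_x^{\budget}$ (for $T \in \{P,Q\}$) equals the support of $T_{x'}^{\budget}$ symmetric-differenced with its image under $\exop{\cup}i$. The induction hypothesis together with Lemmas~\ref{lem:ops-preserve-representation} and~\ref{lem:tw-delta-preserves-equiv} closes this step. The join-node case is analogous: the support of $T_x^{\budget}$ is the symmetric difference, over all splits $\budget=\budget_1+\budget_2$, of the image under $\exop{\join}$ of the children's supports, so Lemmas~\ref{lem:tw-join-pres-rep} and~\ref{lem:tw-delta-preserves-equiv} finish the case. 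For an introduce-edge node with $i=\phi(u)$, $j=\phi(v)$, I proceed in two phases: first, the support of $\tilde{P}_x^{\budget}$ is the image of the support of $P_{x'}^{\budget}$ under $\exop{\patadd}_{i,j}$ and, by Lemma~\ref{lem:ops-preserve-representation}, represents the analogously-defined image for $Q$, which equals the support of $Q_x^{\budget}$; second, the support of $P_x^{\budget}$ is $\bigdelta_{q}\redpat(q)$ over patterns $q$ in the support of $\tilde{P}_x^{\budget}$, and Lemma~\ref{lem:tw-red-equiv-and-time} applied componentwise together with Lemma~\ref{lem:tw-delta-preserves-equiv} yields the desired representation.

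The main obstacle is the forget-vertex node, where the budget offset $[i \in L_{p'}]$ in the recursion forces one to split the child's support according to whether $i$ belongs to the label set. I rewrite the recursion as the application of $\exop{\setminus i}$ to the support of an auxiliary table $\hat{T}^{\budget}$ defined by $\hat{T}^{\budget}[p] := T_{x'}^{\budget-[i\in L_p]}[p]$, so that Lemma~\ref{lem:ops-preserve-representation} reduces the task to showing that the support of $\hat{T}_P^{\budget}$ represents the support of $\hat{T}_Q^{\budget}$. This support decomposes as the disjoint union of the patterns in the support of $T_{x'}^{\budget}$ with $i\notin L_p$ and the patterns in the support of $T_{x'}^{\budget-1}$ with $i\in L_p$, so the remaining ingredient is the auxiliary observation that restricting a family of patterns to those whose label set lies in a prescribed collection $\mathcal{U}$ preserves $\sim$. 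This holds because compatibility $p\cmptb q$ requires $L_p=L_q$: whenever $L_q\in\mathcal{U}$, the modulo-$2$ count of patterns compatible with $q$ in the restricted family coincides with the unrestricted count, and otherwise both counts are zero. Combined with the induction hypothesis and Lemma~\ref{lem:tw-delta-preserves-equiv}, this finishes the forget-node case and the induction.
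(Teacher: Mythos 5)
Your proof is correct and follows the same inductive skeleton as the paper's: the same case split by node type, the same reduction of each recursion to the exclusive pattern operations, and the same appeals to \cref{lem:ops-preserve-representation}, \cref{lem:tw-join-pres-rep}, \cref{lem:tw-delta-preserves-equiv} and \cref{lem:tw-red-equiv-and-time}. The one place you genuinely go beyond the paper is the forget-vertex case: the paper disposes of it in one line as $I_x^{\budget} = I_{x'}^{\budget}\exop{\setminus} i$, silently ignoring that the child's budget index depends on whether $i\in L_{p'}$, whereas you split the child's support into the $i\notin L_{p'}$ part at budget $\budget$ and the $i\in L_{p'}$ part at budget $\budget-1$ and justify that this restriction by label set preserves $\sim$ (which is sound, since $p\cmptb q$ forces $L_p=L_q$). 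That extra observation is a correct and welcome tightening of an argument the paper leaves implicit.
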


\begin{proof}
    We prove the claim by induction over $\nodes$. The claim is trivial at nodes introducing the vertex $v_0$, as both tables have the same support.
    Let $I_x^{\budget}$ be the support of $P_x^{\budget}$, and $J_x^{\budget}$ be the support of $Q_x^{\budget}$.
    For an introduce vertex node with child $x'$, let $i$ be the label of the introduced vertex. It holds that $I_x^{\budget} = (I_{x'}^{\budget}\exop{\cup} i) \Delta I_{x'}^{\budget-1}$ and $J_x^{\budget} = (J_{x'}^{\budget}\exop{\cup} i) \Delta J_{x'}^{\budget-1}$. The claim follows from the fact that both $\Delta$ and $\cup i$ preserve representation.
    Similarly, for a forget vertex node with label $i$, it holds that $I_x^{\budget} = I_{x'}^{\budget}\exop{\setminus} i$ and $J_x^{\budget} = J_{x'}^{\budget}\exop{\setminus} i$. The claim follows from the fact that $\setminus i$ preserves representation.

    Now let $x$ be an introduce edge $\{u,v\}$ node, with $\phi(u)=i$ and $\phi(v)=j$.
    It holds that $J_x^{\budget} = \exop{\patadd}_{i,j}(J_{x'}^{\budget})$. Let $\tilde{Y} = \exop{\patadd}_{i,j}(I_{x'}^{\budget})$. It holds from \cref{lem:ops-preserve-representation} that $\tilde{Y}$ represents $J_x^{\budget}$. Moreover, it holds that $I_x^{\budget} = \redpat(\tilde{Y})$. It follows from \cref{lem:tw-red-equiv-and-time} that $I_x^{\budget}$ represents $\tilde{Y}$ and hence, by transitivity of representation, it represents $J_x^{\budget}$.
    Finally, let $x$ be a union node with children $x_1, x_2$. Let $\budget_1,\budget_2\in[n]_0$ with $\budget_1 + \budget_2=\budget$, and let $I^{b_1,b_2} = I_{x_1}^{\budget_1} \exop{\join} I_{x_2}^{\budget_2}$ and $J^{b_1,b_2} = J_{x_1}^{\budget_1} \exop{\join} J_{x_2}^{\budget_2}$. Then it holds by \cref{lem:tw-join-pres-rep} that $I^{b_1,b_2}$ represents $J^{b_1,b_2}$. Moreover, it holds by the definitions of $P$ and $Q$ that $I_x^{\budget} = \bigdelta\limits_{\budget_1+\budget_2=\budget} I^{b_1,b_2}$ and $J_x^{\budget} = \bigdelta\limits_{\budget_1+\budget_2=\budget} J^{b_1,b_2}$. It follows from \cref{lem:tw-delta-preserves-equiv} that $I_x^{\budget}$ represents $J_x^{\budget}$.
\end{proof}

\begin{corollary}\label{cor:tw-algo-correct}
    It holds for each value $\budget\in[n]_0$ that $P_r^{n-\budget}[\nopat] = 1$ if and only if there exists an odd number of feedback vertex sets of size $\budget$ in $G$.
\end{corollary}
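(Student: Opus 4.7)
The plan is to specialize Lemma~\ref{lem:tw-Q-tables-count-sol} and Lemma~\ref{lem:tw-P-Q-equiv} to the root node $r$, extract a single-coordinate congruence $P_r^{n-\budget}[\nopat] \bquiv Q_r^{n-\budget}[\nopat]$ from the latter, and then translate partial solutions into feedback vertex sets by complementation.

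The crux is the first reduction: Lemma~\ref{lem:tw-P-Q-equiv} only provides that the supports of $P_r^{n-\budget}$ and $Q_r^{n-\budget}$ represent each other, which is a statement about all test patterns at once. To read off a single entry at $\nopat$, I would observe that $\nopat$ is the unique pattern in $\TP$ compatible with $\nopat$. Indeed, the definition of compatibility requires that the two label sets coincide, so compatibility with $\nopat$ forces $L_p = \{0\}$, and the only partition of $\{0\}$ containing the part $\{0\}$ is $\nopat$ itself; a quick check that the glueing of $G_\nopat$ with itself is a path on three vertices confirms that $\nopat \cmptb \nopat$. Thus for any family $F \subseteq \TP$ the number of patterns of $F$ compatible with $\nopat$ equals $[\nopat \in F]$, and Lemma~\ref{lem:tw-P-Q-equiv} instantiated with the test pattern $\nopat$ gives the desired entrywise congruence.

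Unfolding the right-hand side via Lemma~\ref{lem:tw-Q-tables-count-sol}, $Q_r^{n-\budget}[\nopat] \bquiv 1$ iff the number of partial solutions $X$ at $r$ with $|X \setminus B_r| = n-\budget$ and $\pat_r(X) = \nopat$ is odd. Since $B_r = \{v_0\}$ and every partial solution contains $v_0$, the first condition becomes $|X| = n+1-\budget$; the second is automatic, because the only connected component of $G[X]$ intersecting $B_r$ is the one containing $v_0$, contributing the set $\phi(\{v_0\}) = \{0\}$ and thus producing the pattern $\{\{0\}\} = \nopat$.

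Finally, the complementation $X \mapsto V \setminus X$ is a bijection between induced forests of $G$ of size $n+1-\budget$ that contain $v_0$ and subsets $S \subseteq V \setminus \{v_0\}$ of size $\budget$ whose removal from $G$ leaves a forest. Because $v_0$ is isolated, these $S$ are precisely the feedback vertex sets of $G$ of size $\budget$ that avoid $v_0$, equivalently the feedback vertex sets of $G'$ of size $\budget$. Chaining the three congruences yields the corollary. The only non-routine piece is the first paragraph's reduction of a support-level representation statement to an entrywise identity; the rest is direct bookkeeping about sizes and the isolated vertex $v_0$.
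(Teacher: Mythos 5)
Your proposal is correct and follows the same route as the paper's proof: both reduce the support-level representation statement of Lemma~\ref{lem:tw-P-Q-equiv} to an entrywise congruence at $\nopat$ via the observation that $\nopat$ is the unique pattern compatible with $\nopat$, then apply Lemma~\ref{lem:tw-Q-tables-count-sol} and complementation. Your justification of the uniqueness of $\nopat$ is just a more detailed spelling-out of what the paper asserts in one line.
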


\begin{proof}
    The number of feedback vertex sets of size $\target$ in $G$ is equal to the number of feedback vertex sets of size $\target$ in $G_r$ that exclude the vertex $v_0$, which is equal to the number of partial solutions $S$ of size $n + 1 - \target$ at $r$ that contain the vertex $v_0$. Since $B_r = \{v_0\}$, it holds that $\pat_r(S) = \nopat$ for all such partial solutions $S$, and that $|S\setminus B_r| = n-\target$. It follows by \cref{lem:tw-Q-tables-count-sol} that the number of partial solutions of size $n+1 - \target$ at $r$ that contain the vertex $v_0$ is congruent (modulo $2$) to $Q_r^{n - \target}[\nopat]$.
    
    Since the pattern $\nopat$ is the only pattern compatible with $\nopat$, it holds by \cref{lem:tw-P-Q-equiv} that $P_r^{n- \target}[\nopat] \bquiv 1$ if and only if $Q_r^{\target}[\nopat] \bquiv 1$.
    It follows that $P_r^{n-\target}[\nopat] = 1$ if and only if there exists an odd number of feedback vertex sets of size $\target$ in $G$.
\end{proof}

\subsection{Fast join operation}

Now we define a set of states $\mathcal{Z}$, and a canonical bijection $\phi_{\mathcal{Z}}$ between $\CTP$ and $\mathcal{Z}$. We define a join operation $\twstjoin$ over $\mathcal{Z}$, such that $\phi$ is an isomorphism that preserves the join operations between $\CTP$ and $\mathcal{Z}$. We show how to compute the convolution over $\twstjoin$ in time $\ostar(3^k)$. As a result, we show that a single join node can be processed in time $\ostar(3^k)$.

\begin{definition}
    Let $Z_0 = \{\stnone, \stdisc, \stconn\}$ be a set of \emph{simple} states. We define the family of states $\mathcal{Z} := Z_0^{k}$, and the bijection $\phi_{\mathcal{Z}}$ between $\CTP$ and $\mathcal{Z}$, where we map each pattern $p\in\CTP$ to the state $z$ defined by assigning to each label $i\in[k]$ the state $\stnone$ if $i\notin L_p$, $\stdisc$ if $i\in L_p \setminus Z_p$, or $\stconn$ otherwise.
\end{definition}
    
\begin{observation}
    It holds by the bijection $\phi_{\mathcal{Z}}$ that $|\CTP| = 3^k$.
\end{observation}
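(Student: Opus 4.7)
The plan is to verify that the map $\phi_{\mathcal{Z}}$ defined just above is in fact a bijection between $\CTP$ and $\mathcal{Z}$, from which the cardinality claim follows at once, since $|\mathcal{Z}| = |Z_0|^{k} = 3^{k}$.

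The key observation to record is that, by the definition of $\CTP$, every block of a pattern $p \in \CTP$ other than the zero set $Z_p$ is a singleton contained in $[k]$. Consequently, $p$ is uniquely determined by specifying, for each label $i \in [k]$, exactly one of three mutually exclusive possibilities: either (i) $i \in Z_p$, or (ii) $\{i\}$ is a singleton block of $p$, or (iii) $i \notin L_p$. Conversely, any assignment of one of these three possibilities to every label in $[k]$ reconstructs a unique pattern in $\CTP$: its zero set is $\{0\} \cup \{i : i \text{ satisfies (i)}\}$, and its remaining blocks are exactly the singletons $\{i\}$ for the labels $i$ satisfying (ii).

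Reading the definition of $\phi_{\mathcal{Z}}$ label-by-label, one checks that the three possibilities (i), (ii), (iii) correspond precisely to assigning $\stconn$, $\stdisc$, $\stnone$ at coordinate $i$, respectively. Hence $\phi_{\mathcal{Z}}$ is well-defined and bijective, with inverse given by the reconstruction above. There is no real obstacle here; the statement is essentially a bookkeeping consequence of the definitions of $\CTP$ and $\phi_{\mathcal{Z}}$, and the count $|\CTP| = 3^{k}$ drops out immediately.
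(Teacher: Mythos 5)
Your proof is correct and matches the paper's intent: the paper states this observation without further argument, taking it as an immediate consequence of the bijection $\phi_{\mathcal{Z}}$, and your label-by-label verification that the three states $\stconn$, $\stdisc$, $\stnone$ correspond exactly to the three mutually exclusive possibilities for each $i \in [k]$ is precisely the bookkeeping the paper leaves implicit.
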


\begin{remark}
    Given a pattern $p\in\CTP$, the bijection $\phi_{\mathcal{Z}}(p)$ implicitly assigns a state to each label $i\in[k]$.
     The family $\mathcal{Z}$ is actually the restriction of $\states$ (\cref{def:csp-states-bijection}) to the simple states in $Z_0$. Intuitively, compared to clique-width, each label appears in a treewidth boundary at most once. This allows us to spare the so called ``composed'' states $\stddisc$, $\stplus$ and $\stdconn$. Accordingly, $\phi_{\mathcal{Z}}$ can be seen as the restriction of the bijection $\phi_{CS}$ from \cref{def:csp-states-bijection}.
\end{remark}

\begin{definition}
    We define the binary operation $\twsmpjoin:Z_0\times Z_0\rightarrow Z_0\cup\{\perp\}$ as given by the following table:
    \begin{center}
    \begin{tabular}{|c|c|c|c|}
    \hline
    \twsmpjoin &\stnone&\stdisc&\stconn\\
    \hline
    \stnone&\stnone&$\perp$&$\perp$\\
    \hline
    \stdisc&$\perp$&\stdisc&\stconn\\
    \hline
    \stconn&$\perp$&\stconn&$\perp$\\
    \hline
\end{tabular}
\end{center}
We define the operation $\twstjoin$ as the $k$th power of $\twsmpjoin$, where we replace any tuple in the output by $\perp$, if any of its indices is equal to $\perp$. Let $\twstconv$ be the convolution of $\twstjoin$ over $\bin$.

Let $(Z_0, \leq)$ be the ordering given by the single chain $\stdisc \leq \stconn$. We define the ordering $\preceq$ over $\CTP$ as the ordering the results from the $k$th power of $\leq$ by applying the bijection $\phi_{\mathcal{Z}}$, i.e.\ it holds that $p\preceq q$ if $p(i) \leq q(i)$ for all $i\in[k]$.
\end{definition}

\begin{observation}\label{obs:tw-join-isomorphism}
    The bijection $\phi_{\mathcal{Z}}$ is an isomorphism between the restriction of the operation $\join$ to $\CTP$ and the operation $\twstjoin$ over $\mathcal{Z}$.
\end{observation}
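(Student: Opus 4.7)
The plan is to verify that $\phi_{\mathcal{Z}}$ preserves the join operation, i.e., that for all $p, q \in \CTP$, either both $p \join q$ and $\phi_{\mathcal{Z}}(p) \twstjoin \phi_{\mathcal{Z}}(q)$ equal $\perp$, or neither does and $\phi_{\mathcal{Z}}(p \join q) = \phi_{\mathcal{Z}}(p) \twstjoin \phi_{\mathcal{Z}}(q)$. Since $\phi_{\mathcal{Z}}$ is already a bijection and $\twstjoin$ acts coordinate-wise via $\twsmpjoin$, the verification reduces to a label-by-label analysis exploiting the restricted structure of patterns in $\CTP$: apart from the zero set, every set is a singleton.

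I would first characterize when each side yields $\perp$. On the state side, $\phi_{\mathcal{Z}}(p) \twstjoin \phi_{\mathcal{Z}}(q) = \perp$ iff some label $i \in [k]$ falls into a $\perp$-entry of the $\twsmpjoin$ table, that is, either exactly one of the two coordinates equals $\stnone$ or both equal $\stconn$. The first case corresponds precisely to $L_p \neq L_q$ (by the definition of $\phi_{\mathcal{Z}}$), which by the definition of $\join$ gives $p \join q = \perp$. The second case means some $i \in [k]$ lies in $Z_p \cap Z_q$; when processing $Z_q$ in the join procedure, $Z_p$ then contains both $0$ and $i$, i.e., two elements of $Z_q$, which triggers $\perp$.

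Conversely, if $L_p = L_q$ and no label of $[k]$ appears in both zero sets, I would show $p \join q \neq \perp$. Since $q \in \CTP$, every set of $q$ other than $Z_q$ is a singleton and contains only one element, so it can never trigger the $\perp$ condition. When processing $Z_q$ against $p$, the only multi-element set of $p$ is $Z_p$, whose intersection with $Z_q$ equals $\{0\}$ by hypothesis; hence no set of $p$ contains two elements of $Z_q$, and this step also succeeds.

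It remains to verify the equality coordinate-wise when the result is not $\perp$. I would do a direct case analysis per label $i \in [k]$. The case of both coordinates being $\stnone$ is trivial since $i$ is untouched throughout the procedure. When both are $\stdisc$, $\{i\}$ appears as a singleton in both $p$ and $q$, and since $i \notin Z_q$, processing $Z_q$ leaves $\{i\}$ alone, while subsequently processing $\{i\}$ as a singleton of $q$ merges a single set and changes nothing, so $\{i\}$ remains a singleton in $p \join q$. In the mixed cases $\stdisc/\stconn$ or $\stconn/\stdisc$, either $\{i\} \in p$ with $i \in Z_q$, or $i \in Z_p$ with $\{i\} \in q$; in both sub-cases, processing $Z_q$ merges $Z_p$ (through the common $0$) together with any singleton of $p$ whose label lies in $Z_q$, so $i$ ends up in the merged big set and thus in the new zero set, yielding $\stconn$. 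No step is particularly delicate; the main care is to confirm that the processing order of the sets of $q$ does not affect the outcome, which holds because singletons of $q$ never trigger failure nor modify the pattern (they either fall inside the already-merged big set or correspond to an untouched singleton of $p$), so the final partition depends only on which pairs of labels are ultimately joined.
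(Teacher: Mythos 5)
Your verification is correct and is exactly the label-by-label check that the paper leaves implicit (the statement is given as an observation without proof, following directly from the definitions of $\join$, $\twstjoin$, and $\phi_{\mathcal{Z}}$). Your additional remarks on order-independence of processing the sets of $q$ and on the two sources of $\perp$ are the right points to be careful about, and they go through as you describe.
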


Now we show how to compute the convolution $\twstconv$ in time $\ostar(3^k)$ by utilizing a fast lattice convolution technique over power lattices from \cite{DBLP:conf/esa/HegerfeldK23}. We state the following lemma from Hegerfeld and Kratsch~\cite{DBLP:conf/esa/HegerfeldK23} that builds on the work of Björklund et al.~\cite{DBLP:journals/talg/BjorklundHKKNP16} for join lattices. We restate the lemma in terms that fit our paper. While the lemma in the original work is stated for lattices only, the proof also works for all orderings $(S, \preceq)$, where for any element $p \in S$ the set $\{q \in S \colon q \preceq p\}$ induces a join-semilattice. We refer to \cite{DBLP:conf/birthday/Rooij20} for a similar treatment.

\begin{lemma}[{\cite[Corollary A.10.]{DBLP:journals/corr/abs-2302-03627/HegerfeldK23}}]\label{lem:join-prod-time}
    Let $(S,\leq)$ be an ordering, such that $\{q\in S\colon q\leq p\}$ for each $p\in S$ is a join-semilattice, and let $(\mathcal{S},\preceq)$ be the $k$th power of this lattice. Let $T_1,T_2 \in \mathbb{F}^S$ be two tables for some ring $\mathbb{F}$. 
    Assuming both the join operation over $\preceq$ and basic math operations over $\mathbb{F}$ can be computed in polynomial time, then the join product $T_1\joinprod T_2$ can be computed in time $\ostar(|S|^k)$.
\end{lemma}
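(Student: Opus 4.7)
The plan is to adapt the classical Zeta/Mobius subset-convolution schema to the product order $(\mathcal{S}, \preceq)$. For $T \in \mathbb{F}^{\mathcal{S}}$, write $\zeta_{\preceq} T[p] = \sum_{q \preceq p} T[q]$ and let $\mu_{\preceq}$ denote its Mobius inverse. The central identity I would establish is
\[
\zeta_{\preceq}(T_1 \joinprod T_2)[p] \;=\; \zeta_{\preceq} T_1[p] \cdot \zeta_{\preceq} T_2[p] \quad \text{for all } p \in \mathcal{S},
\]
from which $T_1 \joinprod T_2 = \mu_{\preceq}\bigl(\zeta_{\preceq} T_1 \cdot \zeta_{\preceq} T_2\bigr)$ follows by applying $\mu_{\preceq}$ to both sides. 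Thus it suffices to compute $\zeta_{\preceq}$, $\mu_{\preceq}$, and a pointwise product, each in time $\ostar(|S|^k)$.

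To prove the identity, expand the right-hand side as $\sum_{a,b \preceq p} T_1[a]\, T_2[b]$. The hypothesis that $\{q \in S \colon q \leq p_0\}$ is a join-semilattice for every $p_0 \in S$ lifts to the $k$th power: the downset $\{r \in \mathcal{S} \colon r \preceq p\}$ decomposes as a product of downsets in $S$, and is therefore itself a join-semilattice under the coordinate-wise join. In particular, for every pair $a, b \preceq p$ the join $a \lor b$ exists and satisfies $a \lor b \preceq p$, so grouping pairs by their join $s = a \lor b$ rewrites the sum as $\sum_{s \preceq p}(T_1 \joinprod T_2)[s] = \zeta_{\preceq}(T_1 \joinprod T_2)[p]$, as required.

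For the running time, I would compute $\zeta_{\preceq}$ (and symmetrically $\mu_{\preceq}$) in the Yates / fast-SOS style, one coordinate at a time: in pass $i$, fix the values of all other coordinates and update each length-$|S|$ fiber by the one-dimensional Zeta over $(S, \leq)$ in $O(|S|^2)$ time. Over $|S|^{k-1}$ fibers and $k$ passes this totals $O(k \cdot |S|^{k+1})$ operations, which fits into $\ostar(|S|^k)$ since the extra factor $k \cdot |S|$ is polynomial in the input size. The main obstacle is justifying the coordinate-wise decomposition when $(S, \leq)$ is not globally a lattice but only a join-semilattice within each downset; the resolution is that every join encountered in the identity and in the one-dimensional passes lies inside a common downset of some element of $\mathcal{S}$, where the join is guaranteed to exist by hypothesis and factors coordinate-wise through the joins of the corresponding downsets of $S$.
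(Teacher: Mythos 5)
The paper does not prove this lemma itself---it is imported from Hegerfeld--Kratsch (Corollary A.10), which in turn builds on Bj\"orklund et al.\ and van Rooij---and your argument is precisely the standard one used in those sources: the identity $\zeta_{\preceq}(T_1\joinprod T_2)=\zeta_{\preceq}T_1\cdot\zeta_{\preceq}T_2$ (valid because any $a,b\preceq p$ have their join inside the downset of $p$, which is a join-semilattice by hypothesis, and that join is again $\preceq p$), followed by M\"obius inversion, with both transforms computed coordinate-by-coordinate in Yates style within $\ostar(|S|^k)$ time. Your proposal is correct and coincides with the cited proof.
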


\begin{lemma}\label{lem:tw-conv-time}
    The convolution $\twstconv$ over $\bin$ can be computed in time $\ostar(3^k)$.
\end{lemma}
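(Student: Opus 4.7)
The plan is to express $\twstconv$ as a rank-indexed sum of join products over $\preceq$, each computable by \cref{lem:join-prod-time} in time $\ostar(3^k)$, in the spirit of the fast subset convolution trick. The key observation is that the lattice join $\lor$ induced by $\preceq$ coincides coordinate-wise with $\twsmpjoin$ on every combination except $(\stconn,\stconn)$: one has $\stconn \lor \stconn = \stconn$, while $\stconn\twsmpjoin\stconn = \perp$. Consequently $T_1 \joinprod T_2$ overcounts relative to $T_1 \twstconv T_2$ precisely on pairs $(a,b)$ for which some coordinate $i$ satisfies $a_i = b_i = \stconn$.

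To separate these bad contributions, I will introduce the rank $r(z) := |\{i \in [k]\colon z_i = \stconn\}|$. A coordinate-wise case check over the five triples $(a_i, b_i, c_i)$ compatible with $a \lor b = c$ shows that $r(a) + r(b) \geq r(c)$ always, with equality if and only if no coordinate $i$ has $a_i = b_i = \stconn$; equivalently, $a \twstjoin b = c$ iff $a \lor b = c$ and $r(a) + r(b) = r(c)$. Splitting each input by rank via $T^{(j)}(z) := T(z) \cdot [r(z) = j]$ then yields the identity
\[
(T_1 \twstconv T_2)(c) \;=\; \sum_{j_1 + j_2 = r(c)} \bigl(T_1^{(j_1)} \joinprod T_2^{(j_2)}\bigr)(c).
\]

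Each of the $O(k^2)$ join products on the right is computable in time $\ostar(3^k)$ via \cref{lem:join-prod-time}: in the base ordering $(Z_0, \leq)$ with $\stdisc \leq \stconn$ and $\stnone$ isolated, every principal down-set is either a singleton or the two-element chain $\{\stdisc, \stconn\}$, hence a join-semilattice, so the lemma's hypothesis is satisfied. Summing these tables and then reading off, for each $c$, the value at rank $r(c)$ produces $T_1 \twstconv T_2$ in total time $\ostar(3^k)$. The main subtlety I foresee is the verification of the rank identity itself: one must argue that the only coordinate-wise source of slack in $r(a) + r(b) \geq r(c)$ comes from pairs $(a_i, b_i) = (\stconn, \stconn)$, which is exactly the case in which $\lor$ and $\twsmpjoin$ disagree, so that the rank split surgically cancels the unwanted contributions to $\joinprod$.
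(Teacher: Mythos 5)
Your proposal is correct and follows essentially the same route as the paper's proof: both decompose the inputs by the rank $|z^{-1}(\stconn)|$, establish that $a\twstjoin b = c$ holds iff $a\lor b = c$ and the ranks add up (the only source of slack being a coordinate with both entries $\stconn$, exactly where $\lor$ and $\twstjoin$ disagree), and then assemble $\twstconv$ from $O(k^2)$ join products, each computed in $\ostar(3^k)$ via \cref{lem:join-prod-time}. Your coordinate-wise case check of the five admissible triples is a slightly cleaner justification of the rank identity than the paper's, but the argument is the same.
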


\begin{proof}
    Let us fix two tables $A_1,A_2\in\bin^{\mathcal{Z}}$. Our goal is to compute $A_1\twstconv A_2$ in time $\ostar(3^k)$.
    For a state $s\in\mathcal{Z}$, let $c_p := |s^{-1}(\stconn)|$ be the number of labels having state $\stconn$ in $s$. For each state $s\in\mathcal{Z}$ and each value $i\in[k]_0$, we define the predicate $\chi_i(s) := (c_s = i)$, i.e.\ $\chi_i(s)$ holds if and only if $s$ has exactly $i$ indices with state $\stconn$.
    For each value $i\in[k]$, let $A_1^i,A_2^i\in\bin^{\mathcal{Z}}$ be the tables defined from $A_1$ and $A_2$ respectively, where $A^i_x[s]=[\chi_i(s)]\cdot A_x[s]$ for $x\in\{1,2\}$ and $s\in \mathcal{Z}$.

    Let $s_1,s_2\in\mathcal{Z}$, and let $s:=s_1\lor s_2$. We claim that $s_1\lor s_2 = s_1\twstjoin s_2$ if and only if $c_{s_1} + c_{s_2} = c_s$, and that $p_1\twstjoin p_2 = \perp$ otherwise. 
    The inequality $c_{s_1}+c_{s_2}\geq c_s$ holds, since for each index $i$ with $s_i = \stconn$ it must hold that at least on of $s_1$ and $s_2$ has state $\stconn$ at the index $i$. If the equality does not hold, then there must exist a position $i\in[k]$ with $s_1(i) = s_2(i) = \stconn$, and hence, $s_1\twsmpjoin s_2=\perp$. Now assume that the equality holds. If $s_1\twstjoin s_2\neq s$, then there exists $i\in[k]$ with $s_1(i)=s_2(i)=\stconn$, but then there must exist an index $i'$ with $s(i') =\stconn$ and both $s_1(i')$ and $s_2(i')$ different from $\stconn$, which contradicts the assumption that $s = s_1\lor s_2$, which proves the claim.

    It follows from this claim that
    \begin{align*}
        A_1\twstconv A_2[s] &= \sumstack{s_1,s_2\in\mathcal{Z}\\s_1\twstjoin s_2 = s} A_1[s_1] \cdot A_2[s_2]\\
        &= \sumstack{s_1,s_2\in\mathcal{Z}\\s_1\lor s_2 = s}[c_{s_1}+c_{s_2} = c_s] A_1[s_1] \cdot A_2[s_2]\\
        &= \sum\limits_{i=0}^k{c_s} \sumstack{s_1,s_2\in\mathcal{Z}\\s_1\lor s_2 = s} A_1^i[s_1] \cdot A_2^{c_s-i}[s_2]\\
        &= \sum\limits_{i=0}^k (A_1^i \joinprod A_2^{c_s-i})[s].
    \end{align*}

    Therefore, $A_1\twstconv A_2$ can be computed by computing all join products $A_1^i\joinprod A_2^j$ for $i,j\in[k]_0$. It follows from \cref{lem:join-prod-time} that $A_1\twstconv A_2$ can be computed in time $\ostar(3^k)$.
\end{proof}

\subsection{Running time}

\begin{lemma}\label{lem:tw-join-time}
    Let $x$ be a join node.
    Given all tables $P_{x'}^{\budget'}$ for all children $x'$ of $x$ and all values $\budget'$, the tables $P_x^{\budget}$ for all values of $\budget$ can be computed in time $\ostar(3^k)$.
\end{lemma}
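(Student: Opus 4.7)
The plan is to reduce the computation at a join node to a polynomial number of fast convolutions over $\twstjoin$. Fix any budget $\budget$ and split it as $\budget = \budget_1 + \budget_2$ for each valid pair $(\budget_1, \budget_2) \in [n]_0 \times [n]_0$. For each such split, the inner sum
\[
\sum_{p_1 \join p_2 = p} P_{x_1}^{\budget_1}[p_1] \cdot P_{x_2}^{\budget_2}[p_2]
\]
is precisely the convolution of $P_{x_1}^{\budget_1}$ and $P_{x_2}^{\budget_2}$ over the join operation $\join$ restricted to $\CTP$.

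By \cref{obs:tw-join-isomorphism}, the canonical bijection $\phi_{\mathcal{Z}}$ is an isomorphism between $\join$ restricted to $\CTP$ and $\twstjoin$ over $\mathcal{Z}$, and both $\phi_{\mathcal{Z}}$ and its inverse are computable in polynomial time. Hence, by \cref{obs:iso-same-time}, any convolution of two tables indexed by $\CTP$ with respect to $\join$ can be computed within the same asymptotic time (up to polynomial factors) as the corresponding convolution of tables indexed by $\mathcal{Z}$ with respect to $\twstjoin$. By \cref{lem:tw-conv-time}, this latter convolution can be computed in time $\ostar(3^k)$.

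Putting the pieces together, for each target $\budget \in [n]_0$ and each split $(\budget_1, \budget_2)$ with $\budget_1 + \budget_2 = \budget$, we translate $P_{x_1}^{\budget_1}$ and $P_{x_2}^{\budget_2}$ to tables over $\mathcal{Z}$ via $\phi_{\mathcal{Z}}$, compute their $\twstconv$-convolution in time $\ostar(3^k)$, translate back via $\phi_{\mathcal{Z}}^{-1}$, and add the result into $P_x^{\budget}$. The outer loop over $\budget$ and inner loop over splits produce $O(n^2)$ convolutions in total, each of cost $\ostar(3^k)$, yielding an overall running time of $\ostar(3^k)$. There is no real obstacle here since the isomorphism, the fast convolution, and the correctness of the recurrence have all been established in the preceding lemmas; the proof reduces to verifying that these ingredients can be composed in the straightforward way just described.
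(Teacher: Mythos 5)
Your proposal is correct and follows essentially the same route as the paper's proof: iterate over the polynomially many budget splits, recognize each inner sum as a convolution over $\join$ restricted to $\CTP$, transfer it via the isomorphism $\phi_{\mathcal{Z}}$ of \cref{obs:tw-join-isomorphism} and \cref{obs:iso-same-time} to the convolution over $\twstjoin$, and invoke \cref{lem:tw-conv-time} for the $\ostar(3^k)$ bound per convolution. No issues.
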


\begin{proof}
    We iterate over all values $\budget \in [n]_0$, and for each we compute the table $P_x^{\budget}$ as follows: we iterate over all pairs $\budget_1,\budget_2\in[n]_0$ with $\budget_1+\budget_2=\budget$, and we compute the convolution of the tables $P_{x_1}^{\budget_1}$ and $P_{x_2}^{\budget_2}$ over $\join$, where $x_1,x_2$ are the children of $x$. We add the resulting table to $P_x^{\budget}$. The correctness follows by the definition of the table $P_x^{\budget}$ at a join node (\cref{alg:tw}).

    It holds by \cref{obs:tw-join-isomorphism} that $\phi_{\mathcal{Z}}$ is an isomorphism between the operation $\join$ restricted to $\CTP$ and the operation $\twstjoin$ over $\mathcal{Z}$, and hence, by \cref{lem:tw-conv-time} and \cref{obs:iso-same-time} that the convolution over $\join$ (restricted to $\CTP$) can be computed in time $\ostar(3^{k})$.
    Since for each value $\budget$ we iterate over at most $n+1$ pairs
    $(\budget_1,\budget_2)$ with $\budget_1+\budget_2=\budget$, and since we iterate over $n+1$ values of $\budget$, the entire computation runs in time
    $\ostar(3^k)$.
\end{proof}

\begin{corollary}\label{cor:tw-total-time}
    All tables $P_x^{\budget}$ for all $x\in\nodes$ and $\budget\in[n]_0$ can be computed in time $\ostar(3^{\tw})$.
\end{corollary}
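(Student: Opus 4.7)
The plan is to run the bottom-up dynamic programming of \cref{alg:tw} on the very nice tree decomposition $\mathcal{B}=(T,B)$ and to bound, for each node type individually, the time needed to compute $\{P_x^{\budget}\}_{\budget\in[n]_0}$ from the tables already computed at the children. Since the number of nodes of $T$ can be assumed to be polynomial in $n$ and $k=\tw+1$, a per-node bound of $\ostar(3^k)$ will immediately yield the total bound $\ostar(3^{\tw})$.

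First, I would dispatch the easy cases. For a leaf (introduce of $v_0$), the table is defined directly on the single entry $(\budget,p)=(0,\nopat)$. For an introduce vertex node of $v\neq v_0$, for each $p\in\CTP$ and each $\budget$ we just copy a single entry from the child, so the cost is $\ostar(|\CTP|)=\ostar(3^k)$. For a forget vertex node with forgotten label $i$, the recursion sums $P_{x'}^{\budget'}[p']$ over all $p'\in\CTP$ with $p'\setminus i = p$; using that every non-zero set of $p'$ must be a singleton, these preimages are exactly $p$ itself (adding $i$ as a fresh singleton) and the pattern obtained from $p$ by inserting $i$ into $Z_p$, so the sum has $O(1)$ terms. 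Iterating over all $(p,\budget)$ gives $\ostar(3^k)$.

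The introduce edge node is the one case that needs a slightly sharper argument. The intermediate table $\tilde{P}_x^{\budget}$ can be built by iterating over $q\in\CTP$ and accumulating $P_{x'}^{\budget}[q]$ into the entry indexed by $\patadd_{i,j}(q)$, which costs $\ostar(3^k)$. To then compute $P_x^{\budget}[p]=\sum_{q:\,p\in\redpat(q)}\tilde{P}_x^{\budget}[q]$, I iterate over $q\in\TP$ appearing as some $\patadd_{i,j}(q')$ with $q'\in\CTP$; because $q'$ has only singleton non-zero sets, $\patadd_{i,j}$ can create at most one non-singleton non-zero set of size two. Hence the number of labels of $q$ lying in a non-singleton set different from $Z_q$ is at most $2$, so by \cref{lem:tw-red-equiv-and-time} each $\redpat(q)$ has constant size and is computed in polynomial time. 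Distributing contributions costs $\ostar(3^k)$ in total.

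Finally, for a join node the bound $\ostar(3^k)$ is exactly \cref{lem:tw-join-time}, which is really the heart of the running-time analysis (and the only step that needs the fast lattice convolution of \cref{lem:tw-conv-time}). Summing the per-node bounds over the polynomially many nodes of $T$ yields total time $\ostar(3^k)=\ostar(3^{\tw+1})=\ostar(3^{\tw})$, as claimed. The main (and essentially the only) obstacle is already absorbed into \cref{lem:tw-join-time}; the remaining steps are routine case inspections on the recursive definitions in \cref{alg:tw}.
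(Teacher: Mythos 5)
Your proposal is correct and follows essentially the same route as the paper: a per-node analysis of \cref{alg:tw}, with the two non-trivial points being that $\redpat$ applied after $\patadd_{i,j}$ at an introduce-edge node touches at most two labels in a non-singleton non-zero set (so its output has constant size by \cref{lem:tw-red-equiv-and-time}), and that join nodes are handled by \cref{lem:tw-join-time}. The only cosmetic difference is that you pull contributions from preimages at some node types where the paper pushes them forward, which does not affect the bound.
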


\begin{proof}
    We achieve this through a bottom up dynamic programming over $\nodes$. Since $|\nodes|$ is bounded polynomially in $n$, and since we iterate over $n+1$ values of $\budget$, it suffices to show that the table $T_x^{\budget}$ for a fixed node $x$ and a fixed value $\budget$ can be computed in time $\ostar(3^{\tw})$ given all tables $T_{x'}^{\budget'}$ for each child $x'$ of $x$ and each value $\budget'$. Also, since $|\CTP|=6^k$, we assume that one can iterate over all entries $T_{x'}^{\budget'}[p]$ for all $p\in\CTP$ in time $\ostar(3^k)$. 
    
    Now we prove the claim by distinguishing the type of the a node $x\in\nodes$.
    For a node $x$ introducing the vertex $v_0$, this is trivial since we only set constant values.
    For a node introducing a vertex $v\neq v_0$, this can be done by iterating over all patterns $p\in\CSP$ and by adding the value of $P_{x'}^{\budget}[p]$ to $P_{x}^{\budget}[p]$, and the value of $P_{x'}^{\budget}[p]$ to $P_{x}^{\budget}[p\cup i]$ if $p\cup i\neq \perp$.
    Similarly, for a node forgetting a vertex $v$ labeled $i$, we iterate over all patterns $p\in\CTP$. We add to $P_x^{\budget}[p\setminus i]$ the value of $P_{x'}^{\budget}[p]$ if $i \notin L_p$, or the value of $P_{x'}^{\budget-1}[p]$ otherwise.

    For a node introducing an edge $\{u,v\}$, we iterate over all patterns $p\in\TP$, we compute the pattern $q = \patadd_{i,j}^{-1}(p)$, and the set $S = \redpat(q)$. It holds by \cref{lem:tw-red-equiv-and-time} that $S$ can be computed in polynomial time, and has size at most $9$ patterns, since there are at most two labels that appear in a non-singleton set of $q$ different from $Z_q$, i.e.\ $\ell \leq 2$. We add $P_{x'}^{\budget}[p]$ to $P_x^{\budget}[s]$ for each $s\in S$.
    Finally, for a join node $x$, the claim follows from \cref{lem:tw-join-time}, since $k=\tw+1$.
\end{proof}

\begin{proof}[Proof of~\cref{theo:tw-count}]
    By \cref{rem:tw:dec-to-very-nice}, we can assume that the tree decomposition $\mathcal{B}'$ is very nice, and we can compute the decomposition $\mathcal{B}$ that results from $\mathcal{B}'$ by adding $v_0$ to each bag in polynomial time.
    It follows from \cref{cor:tw-total-time} that all tables $P_x^{\budget}$ for all $x\in\nodes$ and $\budget\in[n]_0$ can be computed in time $\ostar(3^{k}) = \ostar(3^{\tw})$, since $k=\tw+1$, and from \cref{cor:tw-algo-correct} that $P_r^{\budget}[\nopat]$ is congruent (modulo $2$) to the parity of the number of feedback vertex sets of size $\budget$ in $G$.
\end{proof}

\section{Connected Feedback Vertex Set}\label{sec:cfvs}

\subsection{Upper bound}

In this section, we show that our acyclicity representation combined with the ``isolating a representative'' technique of Bojikian and Kratsch~\cite{DBLP:conf/icalp/BojikianK24} result in a tight algorithm for the \Cfvsp with running time $\ostar(18^{\cw})$. We provide a SETH based matching lower bound. We start with some notation from \cite{DBLP:conf/icalp/BojikianK24}. In order to distinguish our notation, we will call the families $\Pat$ and $\CSP$ the families of acyclicity patterns and the family of very nice acyclicity patterns respectively.

Let $(G, \target)$ be the given instance. We can assume that $\target > 0$, since for $\target = 0$, we can check in polynomial time whether $G$ is a forest.
Let us fix a vertex $u_0\in V$. We aim to find a connected feedback vertex set of size $\budget$ and weight $\weight$, that contains the vertex $u_0$ (but excludes $v_0$ in $G_r$). In the final algorithm, we will iterate over all 
choices of vertices $u_0\in V$, and run the algorithm for each independently. Hence, if a solution $S$ of size $\target>0$ exists, then any choice of $u_0\in S$ will suffice. Therefore, we define a \emph{partial solution} at a node $x$ as a set $S\subseteq V_x$ that contains $u_0$ if $u_0 \in V_x$, but not $v_0$, such that $G_x-S$ is a forest.

\begin{definition}
    A \emph{connectivity pattern} $p$ is a set of subsets of $[k]_0$, such that exactly one set $Z_p$ of $p$ contains the element $0$ in it (called the zero set of $p$).
    We denote by $L_p := \bigcup_{S\in p}S\setminus\{0\}$ the set of labels of $p$. We call $\inc(p) := L_p\setminus Z_p$ the set of \emph{incomplete labels} in $p$.

    Given a partial solution $S\subseteq V_x$ at a node $x\in\nodes_x$, we define the connectivity pattern $\pat^C_x(S)$ induced by $S$ at $x$ as follows: for each connected component $C$ of $G_x[S]$, we add to $\pat^C_x(S)$ the set $\{\lab(v)\colon v\in C\}$, where we add $0$ to the set corresponding to the component that contains $u_0$ if $u_0 \in V_x$, or as a singleton otherwise.
    
    We call a connectivity pattern $p$ \emph{complete}, if $\inc(p)=\emptyset$, and we call a complete connectivity pattern $p$ \emph{nice}, if each set of $p$ different from $Z_p$ is a singleton. We denote by $\mathscr{C}$ the family of all connectivity patterns, by $\mathscr{C}_C$ the family of complete connectivity patterns, and by $\mathscr{C}_{CS}$ the family of nice connectivity patterns.
\end{definition}

Intuitively, a pattern is complete, if each label that appears in $p$ appears as a singleton as well. Connectivity patterns represent connectivity in partial solutions, where the zero set corresponds to the component containing $u_0$. This can be seen as an additional vertex of a special label, hanging to $u_0$ if $u_0\in V_x$ or as an isolated vertex otherwise. We use complete patterns to create existential representation of connectivity patterns, and nice pattern to create a representation of complete patterns in a counting sense (as defined in the previous sections of this paper).

We denote a pattern $\{\{i^1_1,\dots i^1_{r_1}\}, \dots \{i^t_1,\dots i^t_{r_t}\}\}$ by $[i^1_1i^1_2\dots i^1_{r_1},\dots, i^t_1i^t_2\dots i^t_{r_t}]$ when this does not cause confusion. We also sometimes omit the element $0$ when it is a singleton. For example, $[ij]$ denotes the patter $\{\{0\},\{i,j\}\}$, and $[0]$ denotes the pattern $\{\{0\}\}$.

\begin{observation}
    Let $S$ be a partial solution at a node $x \in \nodes$ with $u_0 \in S$. Then $G_x[S]$ is connected if and only if $\pat_x^C(S)$ is a singleton.
\end{observation}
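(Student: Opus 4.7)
The plan is to unpack the definition of $\pat_x^C(S)$ directly and argue both directions from the enumeration of connected components of $G_x[S]$. The key observation, on which both directions hinge, is that since $u_0 \in S \subseteq V_x$, there is exactly one connected component $C_0$ of $G_x[S]$ containing $u_0$, and this component contributes a set containing $0$ to $\pat_x^C(S)$, whereas every other component contributes a set in which $0$ does not appear.

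For the forward direction, I would argue: if $G_x[S]$ is connected, then it has a single connected component $C_0 = S$. By the definition of $\pat_x^C$, this produces exactly one set, namely $\{0\} \cup \{\lab(v) : v \in S\}$. Hence $\pat_x^C(S) = \bigl\{\{0\} \cup \lab(S)\bigr\}$ is a singleton.

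For the converse, suppose $\pat_x^C(S)$ is a singleton. Since $u_0 \in S$, the component $C_0$ containing $u_0$ yields a set $X_0 = \{0\} \cup \lab(C_0) \in \pat_x^C(S)$, and $X_0$ is the unique element of the pattern containing $0$. If $G_x[S]$ had another connected component $C' \neq C_0$, it would contribute a set $X' = \lab(C') \subseteq [k]$ that does not contain $0$, and therefore $X' \neq X_0$; this would force $|\pat_x^C(S)| \geq 2$, contradicting the assumption. So $C_0$ is the only component, and $G_x[S]$ is connected.

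Since each implication is essentially immediate from the definition, the only subtlety worth flagging is that $\pat_x^C(S)$ is a \emph{set} of subsets, so two distinct components whose label sets coincide collapse into a single element of the pattern. The argument above is robust to this collapse precisely because the $u_0$-component is always distinguishable: its contribution is the unique set containing $0$, so any additional component necessarily increases the cardinality of the pattern. No additional technical lemmas are needed; the statement is purely a bookkeeping consequence of \cref{def:pattern}'s analogue for connectivity patterns and the hypothesis $u_0 \in S$.
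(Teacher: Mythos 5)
Your proof is correct and matches the paper's treatment: the paper states this as an immediate observation from the definition of $\pat_x^C$, and your direct unpacking—in particular noting that the $u_0$-component is the unique set containing $0$, so any further component forces a second, distinct element of the pattern—is exactly the intended reasoning.
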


Now we recall some of the pattern operations defined in \cite{DBLP:conf/icalp/BojikianK24}, that allow to extend patterns corresponding to all partial solutions of a specific size correctly over $\syntaxtree$.

\begin{definition}
    For two patterns $p,q\in\mathscr{C}$ and for $i,j\in[k]$ with $i\neq j$, we define the operations $p_{i\rightarrow j}$, $\patadd_{i,j}(p)$ and $p\punion q$ over connectivity patterns, where $(p)_{i\rightarrow_j}$ is the pattern that results from $p$ by replacing $i$ with $j$ in each set that contains $i$ in $p$, and $\patadd_{i,j}(p)$ is the pattern resulting from $p$ be combining all sets that contain $i$ or $j$, if both labels appear in $p$, or is the pattern $p$ itself otherwise. Finally, $p\punion q$ is the pattern resulting from the union of $p$ and $q$ by replacing their zero sets with the union of these two sets.
\end{definition}

In order to prove the correctness of our algorithm, we will follow the two step scheme used along this paper, by defining larger tables as a middle step, that count all partial solutions. We then show that the dynamic programming tables of our algorithm represent these larger tables. However, in order to keep simplicity, we will start by defining these larger tables (over complete patterns) first, and prove their correctness.

In \cite{DBLP:conf/icalp/BojikianK24}, the authors define the concept of \emph{actions}, as operations over patterns, that produce existential representation of a partial solution using complete patterns only. They introduced the notion of an \emph{action sequence} as the sequence of actions taken over $\syntaxtree$ to produce each pattern of such a representation. We will use these notions as a black box, and cite the corresponding results from the original paper.

\begin{definition}
    An \emph{action} is a mapping that takes as input a pattern $p$ and an integer $i\in[4]$ and outputs a complete pattern, or $\perp$. An \emph{action sequence} $\pi$ at a node $x\in\nodes$ is a mapping that assigns to each node $y\in\nodes_x$ a value in $[4]$. Given a partial solution $S\subseteq V_x$, an action sequence $\pi$ generates a complete pattern $p$ from $S$ or $\perp$.
\end{definition}

\begin{lemma}[{\cite[Lemma 29]{DBLP:conf/icalp/BojikianK24}}]
    \label{lem:cfvs-sol-if-action-seq}
    Given a partial solution $S$ at a node $x\in\nodes$ that contains the vertex $u_0$, the set $S$ induces a connected subgraph of $G$, if and only if there exists an action sequence that generates the pattern $[0]$ from $S$.
\end{lemma}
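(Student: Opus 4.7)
The plan is to establish both directions of the equivalence by induction on the syntax tree $\syntaxtree_x$, exploiting the intended semantics of the four actions. Informally, the four action choices at each node encode a local decision for every relevant label about whether its class will (i) eventually merge into the zero-set, (ii) be kept as an independent singleton to carry a future merge, (iii) be discarded because it will no longer be needed, or (iv) be folded into the zero-set immediately. Reading the action sequence bottom-up produces, at every node $y$, a complete pattern that should coincide with $\pat_y^C(S \cap V_y)$ whenever the actions are chosen consistently with the actual connectivity of $G_x[S]$.

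For the ``only if'' direction, assume $G_x[S]$ is connected, so that $u_0$ lies in the unique non-trivial component. Fix a spanning tree $T_S$ of $G_x[S]$ and observe that each edge of $T_S$ is created at a unique join (or union) node of $\syntaxtree_x$. Using $T_S$ as an oracle, I would define the action sequence $\pi$ by picking, at every node $y \in \nodes_x$, the action index that realizes the decision prescribed by $T_S$ for the labels present in $\pat_y^C(S \cap V_y)$: a label that still has to transport a connection through $V_y$ is kept as a singleton, whereas a label whose entire pending role in $T_S$ is already fulfilled is merged into the zero-set. A routine case analysis over introduce, relabel, join, and union nodes would confirm that the selected action is always well-defined (i.e., not $\perp$) and preserves the invariant ``the generated pattern at $y$ equals $\pat_y^C(S \cap V_y)$ with exactly the labels still used by $T_S$ outside $V_y$ marked as incomplete singletons.'' At the root, $T_S$ is fully realized and every label has been absorbed into the zero-set, so the generated pattern is $[0]$.

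For the ``if'' direction, assume some action sequence $\pi$ generates $[0]$ from $S$. I would again induct over $\syntaxtree_x$ and prove the soundness invariant that for every $y$, the pattern produced by $\pi$ at $y$ is a \emph{coarsening} of $\pat_y^C(S \cap V_y)$, and moreover any label placed in the zero-set of the generated pattern corresponds to a component already connected to $u_0$ in $G_y[S \cap V_y]$ (once $u_0 \in V_y$). Because each of the four actions can only merge two classes when either the underlying clique-operation or the current $\punion$ has genuinely created an adjacency between their corresponding components, the invariant is preserved locally at every node. Applied at the root, generating the singleton pattern $[0] = \{\{0\}\}$ forces $\pat_r^C(S)$ to consist of a single class containing $u_0$, i.e., $G_x[S]$ is connected.

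The main obstacle is the bookkeeping at join and union nodes: there, two locally chosen actions must jointly reflect a global decision from $T_S$, and the four available actions must be rich enough to cover all combinations that can arise (forward direction) while never being able to fake a merge (backward direction). Verifying this coverage-without-overreach property is exactly what fixes the constant $4$ in the definition of actions, and is the heart of the ``isolating a representative'' framework of~\cite{DBLP:conf/icalp/BojikianK24}; a self-contained proof would therefore spend most of its length on this case analysis.
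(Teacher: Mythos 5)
This lemma is not proved in the paper at all: it is imported verbatim as Lemma~29 of the cited work of Bojikian and Kratsch and used strictly as a black box (the surrounding text says explicitly that actions and action sequences are used ``as a black box'' with the corresponding results cited from the original paper). Any self-contained argument therefore has to reconstruct that machinery, starting with the concrete definition of the four actions --- which your sketch never supplies. Without it, neither your forward-direction claim that a suitable action is ``always well-defined (i.e., not $\perp$)'' nor your backward-direction claim that actions ``can only merge two classes when \dots an adjacency \dots has genuinely been created'' can be checked; you have deferred precisely the content of the lemma, and you acknowledge as much in your final paragraph.

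Moreover, the soundness invariant you propose for the ``if'' direction is not the right one for this framework and would fail as stated. You assert that any label placed in the zero-set of the generated pattern corresponds to a component \emph{already} connected to $u_0$ in $G_y[S\cap V_y]$. But the point of the actions (as described in this paper's technique overview) is that a label in state $\stconn$ may be \emph{optimistically} added to the zero-set as a singleton (state $\stplus$) on the promise that it ``will be used in the future to connect different components'' --- i.e., exactly when the corresponding component is \emph{not} yet connected to $u_0$. Dually, a label may be dropped entirely (state $\stnone$), so the generated pattern is not a coarsening of $\pat^C_y(S\cap V_y)$ either. The real argument must show that every optimistic insertion is eventually discharged by a genuine edge at a later join and that a wrong guess can never still produce $[0]$ at the root; that global bookkeeping is the substance of Lemma~29 in the original paper and is absent here. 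Your spanning-tree oracle for the ``only if'' direction is the right intuition and matches the original approach, but it too cannot be completed without the explicit action table.
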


Let us fix a value $D$, and a weight function $\weightf'\colon \nodes\times[4]\rightarrow[D]$, that both will be chosen later, where $D$ is bounded polynomially in $n$. The function $\weightf'$ assigns a weight to each action sequence $\pi$ at a node $x$, defined as $\weightf'(\pi) := \sum_{y\in\nodes_x} \weightf'(y, \pi(y))$.

We restate the recursive formulas used to count action sequences from~\cite{DBLP:conf/icalp/BojikianK24}. We use wording that suits our presentation, and choose the set of terminals $\{u_0\}$, i.e.\ we do not force any vertex (other than $u_0$) into the solution.

\begin{definition}
    We define the tables $D_x^{\budget,\weight, d} \in \bin^{\mathscr{C}_C}$ recursively over $\syntaxtree$ as follows:
\begin{itemize}
    \item Introduce vertex $v$ ($\mu_x = i(v)$): If $v\neq u_0$, we set $D_x^{1,\weightf(v),0}[[0,i]]$ and $D_x^{0,0,0}[[0]]$ to $1$. Otherwise, let $p = [0i]$, and for $\ell\in[2]$, let $p_{\ell} = \action(p, \ell)$. Then we set $D^{1,\weightf(v), \weightf'(x, \ell)}[p_{\ell}]$ to $1$.
    We set all the other values of $D_x^{\budget, \weight, d}[p]$ to $0$.
    
    \item Relabel node $\mu_x = \relabel{i}{j}(\mu_{x'})$: We define 
    $D_x^{\budget,\weight, d}[p] = \sumstack{p' \in \mathscr{C}_C\\ p'_{i\rightarrow j} = p} D_{x'}^{\budget,\weight,d}[p']$.

    \item Join node $\mu_x = \clqadd{i}{j}(\mu_{x'})$: We define
    $D_x^{\budget,\weight, d}[p] = \sumstack{p'\in\Pat, \ell \in [4]\\\action(p',\ell)=p} \sumstack{p'' \in \mathscr{C}_C\\ \patadd_{i,j}(p'') = p'} D_{x'}^{\budget,\weight, d - \weightf'(x, \ell)}[p'']$.

    \item Union node $\mu_x = \mu_{x_1} \clqunion \mu_{x_\ell}$: We define
    \[
    D_x^{\budget,\weight, d}[p] = \sumstack{\budget_1+\budget_2 = \budget\\ \weight_1+\weight_2 = \weight\\ d_1 + d_2 = d}\sumstack{p_1,p_2\in\mathscr{C}_C\\ p = p_1\punion p_2} D_{x_1}^{\budget_1,\weight_1, d_1}[p_1] \cdot D_{x_2}^{\budget_2,\weight_2, d_2}[p_2].
    \]
\end{itemize}
\end{definition}

We cite the following result:

\begin{lemma}[{\cite[Lemma 31]{DBLP:conf/icalp/BojikianK24}}]
    It holds that $D_x^{\budget, \weight, d}[p] =1$ if and only if there exists an odd number of pairs $(S,\pi)$, with $S\subseteq V_x$ such that $v_0\notin S$, $u_0\subseteq S$, $|S|=\budget$, $\weightf(S) = \weight$, and $\pi$ is an action sequence of weight $d$ at $x$ producing the pattern $p$ from $S$.
\end{lemma}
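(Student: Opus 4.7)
The plan is to proceed by structural induction on the syntax tree $\syntaxtree$, establishing at every node $x\in\nodes$ the congruence $D_x^{\budget,\weight,d}[p] \bquiv N_x^{\budget,\weight,d}[p]$, where $N_x^{\budget,\weight,d}[p]$ denotes the number of admissible pairs $(S,\pi)$ described in the statement. I would first unfold the definition of ``$\pi$ produces $p$ from $S$'' node-by-node: walking bottom-up along $\syntaxtree_x$, each node $y\in\nodes_x$ transforms the running connectivity pattern according to the clique-expression operation at $y$, and at introduce-vertex and join nodes an action $\pi(y)\in[4]$ is additionally applied, yielding the final pattern $p$ at $x$.

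For an introduce vertex node $\mu_x = i(v)$, the admissible sets are $\emptyset$ and (when $v\neq v_0$) $\{v\}$, constrained to contain $u_0$ whenever $u_0\in V_x$; an action sequence at a leaf reduces to a single choice $\pi(x)\in[4]$ contributing weight $\weightf'(x,\pi(x))$, and the nonzero table entries in the recursion enumerate precisely these pairs. For a relabel node $\mu_x = \relabel{i}{j}(\mu_{x'})$, partial solutions at $x$ and $x'$ coincide and $\pat^C_x(S)$ equals $\pat^C_{x'}(S)_{i\rightarrow j}$; since no action occurs at relabel nodes, action sequences and their weights carry over verbatim, and $N_x^{\budget,\weight,d}[p]$ rewrites as $\sum_{p'_{i\rightarrow j}=p} N_{x'}^{\budget,\weight,d}[p']$.

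For a join node $\mu_x = \clqadd{i}{j}(\mu_{x'})$, each partial solution $S$ at $x$ is still a partial solution at $x'$, its pre-action pattern at $x$ is $p' = \patadd_{i,j}(\pat^C_{x'}(S))$, and the value $\ell = \pi(x)$ produces $p = \action(p',\ell)$. Thus each pair $(S,\pi)$ at $x$ decomposes uniquely into $(S, \pi|_{\nodes_{x'}}, \ell)$, and summing with the weight shift $d\mapsto d - \weightf'(x,\ell)$ yields the claimed double sum. For a union node, $S$ decomposes uniquely as a disjoint union $S_1 \cup S_2$ with $S_i\subseteq V_{x_i}$ (disjointness holds since $v_0\notin S$), $\pi$ restricts to $(\pi_1,\pi_2)$ over the two subtrees (no action occurs at the union node itself), sizes and weights add, and $\pat^C_x(S) = \pat^C_{x_1}(S_1) \punion \pat^C_{x_2}(S_2)$, so $N_x^{\budget,\weight,d}$ convolves cleanly into the product form of the recursion.

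The main obstacle is the bookkeeping at join nodes, where $\action$ can collapse distinct pre-action patterns $p'$ onto the same post-action pattern $p$. The parity argument works only because each $(S,\pi)$ corresponds to exactly one triple $(\pat^C_{x'}(S), \ell, \pi|_{\nodes_{x'}})$, and I would verify this uniqueness by inspecting the definition of $\action$ from \cite{DBLP:conf/icalp/BojikianK24}. Since the lemma is explicitly cited as \cite[Lemma 31]{DBLP:conf/icalp/BojikianK24}, the detailed proof can be adapted from there; the sketch above shows that it is a routine but careful bottom-up induction tracking pattern transformations and weight shifts at each node type.
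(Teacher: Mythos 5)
The paper does not actually prove this lemma---it is imported verbatim (modulo notation) from \cite[Lemma 31]{DBLP:conf/icalp/BojikianK24}---but your bottom-up induction over $\syntaxtree$, matching each table recurrence to the unique decomposition of a pair $(S,\pi)$ at that node type, is exactly the argument used in that source and in the paper's analogous proofs for the tables $\tilde{T}_x^{\budget,\weight}$ and $A_x^{\budget,\weight,d}$. Your sketch is correct in approach; the one detail to tighten is that actions are applied (and action weights accrued) only at join nodes (with $\ell\in[4]$) and at the single leaf introducing $u_0$ (with $\ell\in[2]$), not at every introduce-vertex node, but you rightly defer this bookkeeping to the cited definition of actions.
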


Our goal is to count such pairs $(S,\pi)$, where additionally it holds that $G_x \setminus S$ is a forest. In order to do so, we combine connectivity patterns with acyclicity patterns in our states.

\begin{definition}
    Let $\mathcal{I} := \Pat\times \mathscr{C}_C$ be the set of all combinations of an acyclicity pattern and a complete connectivity pattern. We define the tables $A_x^{\budget,\weight,d} \in \bin^{\mathcal{I}}$ recursively as follows:
    \begin{itemize}
    \item Introduce vertex $v$ ($\mu_x = i(v)$): If $v\neq u_0$, we define $A_x^{1,\weightf(v),0}[\langle\idv_0\rangle,[0,i]] = 1$ and $A_x^{0,0,0}[\langle \idv_0,\idv_i\rangle,[0]] = 1$. Otherwise, let $p = [0i]$, and for $\ell\in[2]$, let $p_{\ell} = \action(p, \ell)$. Then we set $D^{1,\weightf(v), \weightf'(x, \ell)}[\langle\idv_0\rangle,p_{\ell}]$ to $1$.
    We set all other values $A_x^{\budget, \weight, d}[p,q]$ to $0$.
    
    \item Relabel node $\mu_x = \relabel{i}{j}(\mu_{x'})$: We define 
    $A_x^{\budget,\weight, d}[p,q] = \sumstack{p' \in \Pat, q' \in \mathscr{C}_C\\ p'_{i\rightarrow j} = p, q'_{i\rightarrow j} = q} A_{x'}^{\budget,\weight,d}[p', q']$.

    \item Join node $\mu_x = \clqadd{i}{j}(\mu_{x'})$: We define
    \[A_x^{\budget,\weight, d}[p,q] = \sumstack{p'\in\Pat\\\patadd_{i,j}p' = p}\sum\limits_{\ell \in [4]}\sumstack{q'\in\mathscr{C}\\\action(q',\ell)=q} \sumstack{q'' \in \mathscr{C}_C\\ \patadd_{i,j}(q'') = q'} A_{x'}^{\budget,\weight, d - \weightf'(x, \ell)}[p',q''].\]

    \item Union node $\mu_x = \mu_{x_1} \clqunion \mu_{x_\ell}$: We define
    \[
    A_x^{\budget,\weight, d}[p,q] = \sumstack{\budget_1+\budget_2 = \budget\\ \weight_1+\weight_2 = \weight\\ d_1 + d_2 = d}\sumstack{p_1,p_2\in\Pat\\ p = p_1\patunion p_2}\sumstack{q_1,q_2\in\mathscr{C}_C\\ q = q_1\punion q_2} A_{x_1}^{\budget_1,\weight_1, d_1}[p_1, q_1] \cdot A_{x_2}^{\budget_2,\weight_2, d_2}[p_2, q_2].
    \]
    \end{itemize}
\end{definition}

\begin{lemma}\label{lem:cfvs-ub-a-count-ac}
    It holds for all $x\in\nodes$, for all values of $\budget$, $\weight$ and $d$, and for all $(p,q) \in I$ that $A_x^{\budget,\weight,d}[(p,q)] = 1$ if and only if there exists an odd number of pairs $(S,\pi)$ with $S\subseteq V_x$ such that $v_0\notin S$, $u_0\in S$ if $u_0\in V_x$, $|S|=\budget$, $\weightf(S) = \weight$, $\pat(G_x-S) = p$ and $\pi$ is an action sequence at $x$ producing the pattern $q$ from $S$.
\end{lemma}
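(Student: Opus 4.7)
The proof proceeds by structural induction over $\syntaxtree$, combining the arguments used in \cref{lem:alg-counts-sol} (which counts partial forests indexed by their acyclicity pattern) with those underlying the $D$-table counting result cited from \cite{DBLP:conf/icalp/BojikianK24} (which counts pairs of partial solutions and action sequences indexed by a complete connectivity pattern). The central observation is that the state $(\pat(G_x\setminus S),\pat^C_x(S))$ factors: under every clique operation, the acyclicity pattern of the forest $G_x\setminus S$ is updated by rules acting only on the first coordinate, while the connectivity pattern of $S$ together with the chosen action are updated by rules acting only on the second coordinate. The recursive definition of $A_x^{\budget,\weight,d}[p,q]$ is therefore the coordinate-wise product of the two recursions, and the invariant to prove is the product of the two invariants already established in the cited results.

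For the base case of an introduce vertex $i(v)$, one enumerates the two admissible partial solutions $S=\emptyset$ and $S=\{v\}$ when $v\neq u_0$, and the unique admissible partial solution $S=\{v\}$ when $v=u_0$. In each subcase the acyclicity and connectivity patterns are computed directly from the definitions, and the initial action (with its weight $\weightf'(x,\ell)$) for the $v=u_0$ subcase is read off in the same way as in the $D$-table base case.

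The inductive step is case-by-case. For a relabel node, the partial solutions at $x$ and $x'$ coincide, and the two patterns are updated independently via the $p_{i\rightarrow j}$ operators on acyclicity and connectivity patterns respectively, invoking \cref{obs:pat-union-relabel} for the first coordinate and its connectivity analogue for the second. For a join node, the valid partial solutions are those at $x'$ for which $\clqadd{i}{j}(G_{x'}\setminus S)$ remains acyclic; the acyclicity coordinate is then updated by $\patadd_{i,j}$ (\cref{lem:pat-join-acyclic}), while the connectivity coordinate is updated by $\patadd_{i,j}$ followed by an action $\ell\in[4]$ whose weight $\weightf'(x,\ell)$ is charged to the $d$-index, matching exactly the $D$-recursion. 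For a union node, since $v_0\notin S$ and $v_0$ is the unique vertex shared between $V_{x_1}$ and $V_{x_2}$, a partial solution $S$ decomposes uniquely as $S_1\sqcup S_2$ with sizes and weights adding, and with $u_0$ (if it belongs to $V_x$) lying on exactly one side; the acyclicity coordinate factors via $\patunion$ (\cref{obs:pat-union-relabel}), the connectivity coordinate factors via $\punion$ (by direct inspection, since $\pat^C_{x_1}(S_1)$ and $\pat^C_{x_2}(S_2)$ share only their zero sets), and an action sequence $\pi$ at $x$ splits as $\pi=\pi_1\sqcup\pi_2$ with additive weight $\weightf'(\pi)=\weightf'(\pi_1)+\weightf'(\pi_2)$. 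The triple convolution in $(\budget,\weight,d)$ combined with the two pattern convolutions then yields the claim modulo $2$ by the inductive hypothesis applied to $x_1$ and $x_2$.

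The main subtlety lies at the join node, where an inner sum over connectivity pre-images is wrapped in an outer sum over action choices, and one must verify that no pair $(S,\pi)$ is counted more than once and that the weight $\weightf'(x,\ell)$ is charged at exactly one place per action sequence. This is handled cleanly by the $D$-table recursion of \cite{DBLP:conf/icalp/BojikianK24}, and adding the independent acyclicity coordinate does not interfere, since its update under $\patadd_{i,j}$ depends only on the clique operation and not on the action taken on the connectivity side.
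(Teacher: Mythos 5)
Your proposal is correct and follows essentially the same route as the paper: a bottom-up induction over $\syntaxtree$ in which the acyclicity coordinate is handled by the argument of \cref{lem:alg-counts-sol} and the connectivity/action-sequence coordinate by the cited $D$-table result of \cite{DBLP:conf/icalp/BojikianK24}, with the two updates acting independently on the two components of the index and the action choice charged at join nodes. Your write-up is in fact more detailed than the paper's own short proof, but the underlying argument is the same.
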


\begin{proof}
    We follow the same structure as in the proof of \cref{lem:alg-counts-sol} using a bottom up induction over $\syntaxtree$, where the correctness for an introduce node is straightforward. The correctness for the other operations follow from the proof of \cite[Lemma~31]{DBLP:conf/icalp/BojikianK24} and \cref{lem:alg-counts-sol}, where for each pair $(S,\pi)$ with $\pat(G_x\setminus S) = p$ and $\pi$ produces $q$ from $S$, it holds that the given formulas correctly modify the patterns $p$ and $q$ according to the given clique-operation. For a join node we also count for the extensions of $\pi$ by assigning the different values in $[4]$ to $x$.
\end{proof}

In \cite{DBLP:conf/icalp/BojikianK24}, the authors define a notion of compatibility over connectivity patterns. Based on this notion, they deduce a notion of representation similar to the equivalence relation defined in \cref{def:partial-compat-equiv}. They call it (parity) representation. More formally, a family of connectivity patterns $R$ represents another family $S$ over the family $\mathscr{C}_C$, if for each pattern $p\in\mathscr{C}_C$ it holds that the $R$ contains an odd number of pattern compatible with with $p$ if and only if $S$ contains an odd number of patterns compatible with $p$. 

Essentially, they prove that the operations $\Delta$, $\patadd_{i,j}$, $p_{i\rightarrow j}$, $\patunion$ and $\action(p, i)$ preserve representation. Moreover, they define the operation $\parrep$, that takes as input a complete pattern $p$ and outputs a family of $CS$-patterns that represents $p$. They prove the following result:

\begin{lemma}[{\cite[Corollary 45]{DBLP:conf/icalp/BojikianK24}}]
    Given a complete connectivity pattern $p\in\mathscr{C}_C$, the operation $\parrep(p)$ produces a set of nice connectivity patterns that represents $p$. Moreover, if there exists a pattern $q_0\in \mathscr{C}_{CS}$, and values $i,j\in[k]$ and $\ell\in[4]$ such that for $q = \patadd_{i,j}(q_0)$ it holds that $p = \action(q, \ell)$, then $\parrep(p)$ can be computed in polynomial time, and produces a set of at most $4$ patterns.
\end{lemma}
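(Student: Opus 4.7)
The plan is to follow the same two-stage template we used for $\CSP$ in \cref{sec:ub}, adapted to the connectivity-compatibility relation of~\cite{DBLP:conf/icalp/BojikianK24}. First I would establish a count-preserving swap lemma on complete connectivity patterns---the analogue of \cref{cor:rep-vec} and \cref{lem:tw-rep-by-swap}---and then use it as the engine of $\parrep$, applying it exhaustively until every non-zero set of every resulting pattern is a singleton, at which point the result lies in $\mathscr{C}_{CS}$ by definition. Since representation is transitive and symmetric differences preserve parity representation, correctness of $\parrep$ follows immediately once the swap lemma is in place.

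Concretely, for a complete pattern $p$ with some non-singleton set $X \in p$ with $X \neq Z_p$ and any $i \in X$, writing $R = p \setminus \{X, Z_p\}$ and $X' = X \setminus \{i\}$, the swap step would output the three patterns
\[
p_A = R \cup \{X',\, Z_p \cup \{i\}\},\quad
p_B = R \cup \{X',\, \{i\},\, Z_p\},\quad
p_C = R \cup \{X' \cup Z_p,\, \{i\}\},
\]
and I would show that $\{p_A, p_B, p_C\}$ represents $\{p\}$ by a case analysis on a candidate pattern $r$: in every configuration, the number of the three $p_A, p_B, p_C$ compatible with $r$ has the same parity as $[p \cmptb r]$. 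Since each swap strictly decreases the quantity $\sum_{X \in p \setminus \{Z_p\}}(|X|-1)$, exhaustive application terminates within $\mathscr{C}_{CS}$.

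For the size bound, the hypothesis is exactly what makes the recursion shallow. If $q_0 \in \mathscr{C}_{CS}$ then every non-zero set of $q_0$ is a singleton; the join $\patadd_{i,j}$ only merges sets meeting $\{i,j\}$, namely $Z_{q_0}$ and the singletons $\{i\}, \{j\}$ (if present), so $\patadd_{i,j}(q_0)$ is still in $\mathscr{C}_{CS}$. Any subsequent $\action(\cdot,\ell)$ is a local operation on the labels $i$ or $j$ that can introduce at most one additional non-singleton set outside $Z_p$, of size at most two. Thus $p$ differs from a $\mathscr{C}_{CS}$-pattern by at most a single element $i'$ sitting in a two-element set, so the swap has to be invoked only once and on a single element, producing at most three swap-outputs plus possibly the input itself when the action is trivial---for a total of at most $4$ patterns---and clearly in polynomial time.

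The main obstacle is justifying the swap lemma in the connectivity setting. Unlike the acyclicity case, where a ``minimal bad cycle'' rerouting argument (\cref{lem:ub-cycle-structure}) drives the case analysis, connectivity compatibility from~\cite{DBLP:conf/icalp/BojikianK24} is characterised through how the glueing propagates the zero-component across the canonical graph of $r$. The case analysis therefore has to track, for each component of $r$ attaching to the ex-$X$ region, whether its attachment label sits in $X'$, in $\{i\}$, or in $Z_p$. The decisive case is when $r$ attaches exactly once to $\{X', \{i\}, Z_p\}$: there the three outputs $p_A, p_B, p_C$ cover the three attachment targets mutually exclusively while $p$ always realises the ``combined'' attachment, which is precisely what enforces matching parity.
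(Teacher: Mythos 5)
First, a contextual point: the paper does not prove this statement at all. It is imported verbatim as Corollary~45 of Bojikian and Kratsch~\cite{DBLP:conf/icalp/BojikianK24}, and the surrounding text explicitly says the notions of actions, action sequences and $\parrep$ are used ``as a black box''. So there is no in-paper proof to measure your argument against; you are reconstructing a result of a different paper. That said, your template is the natural one and mirrors what this paper does elsewhere: a $3$-for-$1$ count-preserving swap applied exhaustively, with correctness propagated by transitivity of representation and closure under symmetric difference, exactly as in \cref{cor:rep-vec} and \cref{lem:tw-rep-by-swap}. In the partition semantics you implicitly assume, your swap identity $\{X'\cup\{i\},Z_p\}\mapsto\{X',Z_p\cup\{i\}\},\{X',\{i\},Z_p\},\{X'\cup Z_p,\{i\}\}$ is the standard mod-$2$ cut identity for connectivity, and the parity check does go through by the block-merging argument (the glueing is connected iff the auxiliary component graph is, and merging two of the three atoms $X'$, $\{i\}$, $Z_p$ joins at most two of its components).

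The genuine gap is that your swap does not type-check against the definitions actually in force here. A connectivity pattern is a \emph{set of subsets} of $[k]_0$, not a partition, and a \emph{complete} pattern is by definition one with $\inc(p)=L_p\setminus Z_p=\emptyset$, i.e.\ every label occurring anywhere in $p$ already occurs in $Z_p$. Hence every non-zero set $X$ of $p\in\mathscr{C}_C$ already satisfies $X\subseteq Z_p$, so ``moving $i$ into $Z_p$'' versus ``keeping $i$ outside $Z_p$'' are not the available alternatives; the reduction of \cite{DBLP:conf/icalp/BojikianK24} instead decides, per element of a non-singleton non-zero set, whether it survives as a singleton, and its compatibility relation is tied to the join of two patterns collapsing to $[0]$ rather than to connectivity of a glueing of disjoint canonical forests. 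Your accounting for the bound of $4$ is also off: the three swap outputs \emph{replace} the input pattern (as a symmetric difference), so ``three outputs plus possibly the input itself'' is not a valid way to reach $4$; the constant in Corollary~45 comes from tracing $\patadd_{i,j}$ followed by $\action(\cdot,\ell)$ on a nice pattern through the concrete definitions of those operations, which neither you nor this paper reproduces. None of this can be repaired without opening the black box, which is presumably why the authors cite rather than prove the statement.
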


We use these notions as a black box in our algorithm.
Using the operation $\parrep$, we aim to reduce the set of indices $\mathcal{I}$ to a smaller set---namely $\CSP \times \mathscr{C}_{CS}$. We define the tables $T_x^{\budget, \weight, d}$ indexed by the latter set, that represent the tables $A_x^{\budget, \weight, d}$. In order to achieve this, we need to combine both notions of representation: connectivity- and acyclicity-representation. We call two pairs $(p,q), (p',q') \in \mathcal{I}$ \emph{compatible}, if it holds that $p\cmptb p'$ and $q\cmptb q'$. We say that a set of pairs $R\subseteq \mathcal{I}$ \emph{represents} another set $S\subseteq \mathcal{I}$, if for each pair $(p',q')\in \mathcal{I}$ it holds that $R$ contains an odd number of pairs compatible with $(p',q')$ if and only if $S$ contains an odd number of pairs compatible with $(p',q')$.

\begin{algorithm}
    Let $\mathscr{I}_{CS} = \CSP \times \mathscr{C}_{CS}$ be a set of indices.
    We define the tables $T_x^{\budget,\weight} \in \bin^{\mathscr{I}_{CS}}$ recursively over $\syntaxtree$ as follows:
    \begin{itemize}
    \item Introduce vertex $v$ ($\mu_x = i(v)$): If $v\neq u_0$, we define $T_x^{1,\weightf(v),0}[\langle\idv_0\rangle,[0,i]] = 1$ and $T_x^{0,0,0}[\langle \idv_0,\idv_i\rangle,[0]] = 1$. Otherwise, let $p = [0i]$, and for $\ell\in[2]$, let $p_{\ell} = \action(p, \ell)$. Then we set $D^{1,\weightf(v), \weightf'(x, \ell)}[\langle\idv_0\rangle,p_{\ell}]$ to $1$.
    We set all other values $T_x^{\budget, \weight, d}$ to $0$.
    
    \item Relabel node $\mu_x = \relabel{i}{j}(\mu_{x'})$: We define 
    $T_x^{\budget,\weight, d}[p,q] = \sumstack{p' \in \Pat, q' \in \mathscr{C}_C\\ p'_{i\rightarrow j} = p, q'_{i\rightarrow j} = q} T_{x'}^{\budget,\weight,d}[p', q']$.

    \item Join node $\mu_x = \clqadd{i}{j}(\mu_{x'})$: We define
    \[T_x^{\budget,\weight, d}[p,q] = \sumstack{p'\in \Pat\\p\in\redpat(p')}\sumstack{p''\in\CSP\\\patadd_{i,j}p'' = p'}\sumstack{q^*\in \mathscr{C}_C\\q\in \parrep(q^*)}\sum\limits_{\ell \in [4]}\sumstack{q'\in\mathscr{C}\\\action(q',\ell)=q^*} \sumstack{q'' \in \mathscr{C}_{CS}\\ \patadd_{i,j}(q'') = q'} T_{x'}^{\budget,\weight, d - \weightf'(x, \ell)}[p'',q''].\]

    \item Union node $\mu_x = \mu_{x_1} \clqunion \mu_{x_\ell}$: We define
    \[
    T_x^{\budget,\weight, d}[p,q] = \sumstack{\budget_1+\budget_2 = \budget\\ \weight_1+\weight_2 = \weight\\ d_1 + d_2 = d}\sumstack{p_1,p_2\in\Pat\\ p = p_1\patunion p_2}\sumstack{q_1,q_2\in\mathscr{C}_C\\ q = q_1\punion q_2} T_{x_1}^{\budget_1,\weight_1, d_1}[p_1, q_1] \cdot T_{x_2}^{\budget_2,\weight_2, d_2}[p_2, q_2].
    \]
\end{itemize}
\end{algorithm}

\begin{lemma}\label{lem:cfvs-ub-t-rep-a}
    It holds for all $x\in\nodes$, and all values $\budget,\weight,d$ and for all $(p,q)\in \mathscr{I}$ that 
    \[
    \sumstack{(p,q)\in \mathscr{I}_{CS}\\p\cmptb p', q\cmptb q'} A_x^{\budget,\weight,d}[p,q] \bquiv\sumstack{(p,q)\in \mathscr{I}_{CS}\\p\cmptb p', q\cmptb q'} T_x^{\budget,\weight,d}[p,q].
    \]
\end{lemma}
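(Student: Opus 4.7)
The plan is to prove the lemma by bottom-up induction on $\syntaxtree$, mirroring the structure of the proof of \cref{lem:alg-counts-ext} but now tracking two interacting layers of representation simultaneously: an acyclicity representation (reducing from $\Pat$ to $\CSP$ via \cref{cor:reduce}) and a connectivity representation (reducing from $\mathscr{C}_C$ to $\mathscr{C}_{CS}$ via $\parrep$ and the action machinery of \cite{DBLP:conf/icalp/BojikianK24}). The key invariant is that, at every node $x$ and for every admissible triple $(\budget,\weight,d)$, the table $T_x^{\budget,\weight,d}$ represents $A_x^{\budget,\weight,d}$ under the \emph{product} compatibility relation on $\mathcal{I}$, i.e., componentwise acyclicity- and connectivity-compatibility.

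For the base case (introduce vertex), the tables $T_x^{\budget,\weight,d}$ and $A_x^{\budget,\weight,d}$ agree entry-wise on $\mathscr{I}_{CS}$, and all nonzero entries of $A$ lie in $\mathscr{I}_{CS}$, so the claim is immediate. For a relabel node, both $p \mapsto p_{i\to j}$ and $q \mapsto q_{i\to j}$ act as simple index-renaming operations that preserve their respective notions of representation; combined with the induction hypothesis at $x'$, the equality of the two sums follows exactly as in the relabel case of \cref{lem:alg-counts-ext}. For a join node, the acyclicity side is handled by $\patadd_{i,j}$ (which preserves acyclicity-representation by \cref{lem:pat-join-acyclic}) followed by $\redpat$, which reduces back into $\CSP$ while preserving representation by \cref{cor:reduce}; simultaneously, the connectivity side applies $\patadd_{i,j}$ on $\mathscr{C}_C$, then the sum over actions $\ell \in [4]$ weighted by $\weightf'(x,\ell)$, and finally $\parrep$ to return to $\mathscr{C}_{CS}$, each of these steps being representation-preserving by the corresponding lemmas cited from \cite{DBLP:conf/icalp/BojikianK24}. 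Since the two reductions act on disjoint coordinates of the index $(p,q)$, the product compatibility relation factorizes and the induction goes through.

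The main obstacle is the union node, where one must carry out the analogue of the long alignment block in the proof of \cref{lem:alg-counts-ext}, but now with two representation layers entangled inside the same double sum. The strategy will be to fix the $x_2$-contribution and swap between the $T$- and the full $A$-form of the $x_1$-contribution via the induction hypothesis, then symmetrically swap the roles of $x_1$ and $x_2$. The key technical observation is that the operations $p_2 \patunion \cdot$ and $q_2 \punion \cdot$ can be reinterpreted, respectively, as an ``extended'' clique extension and an ``extended'' action sequence acting on partial solutions at $x_1$, so that compatibility with the fixed $(p_2,q_2)$-side factors into an instance of the original notions of compatibility for each coordinate individually. This factorization is what allows the scalar argument of \cref{lem:alg-counts-ext} to be applied coordinate-by-coordinate while the product compatibility predicate is maintained throughout the rewrite chain.

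Finally, a short bookkeeping step accounts for the extra index $d$: since the summation in the union recursion is a straightforward convolution over the additive parameter $d$ (just as for $\budget$ and $\weight$), the induction hypothesis applies separately in each summand and no additional work is required to handle the action-weight coordinate.
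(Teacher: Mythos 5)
Your proposal is correct and follows essentially the same route as the paper's proof: a bottom-up induction establishing that the support of $T_x^{\budget,\weight,d}$ represents the support of $A_x^{\budget,\weight,d}$ under the product compatibility relation, using the observation that representation-preserving operations on each coordinate extend to the product, with the join node split into the forward operations ($\patadd_{i,j}$, $\action$) followed by the reductions ($\redpat$, $\parrep$) and the union node handled by the fix-one-side-and-swap argument underlying \cref{lem:alg-counts-ext}. The paper merely compresses the union-node step into the assertion that $\patunion$ and $\punion$ preserve representation, which is exactly what your more explicit alignment argument establishes.
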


\begin{proof}
    We claim that the support of $T_x^{\budget,\weight,d}$ represents the support of $A_x^{\budget,\weight,d}$ which directly implies the lemma. First observe that for two families $R,S\subseteq \mathcal{I}$ with $R$ representing $S$, and an operation over acyclicity patterns $f$ that preserves representation, the extension of $f$ to $I$ by preserving the second index preserves representation. Similarly, for an operation over connectivity patterns $g$ that preserves representation, the extension of $g$ to $I$ by preserving the first index preserves representation.
    
    Now we prove the claim by induction, where for an introduce node, the two tables have the same support by definition. For a relabel node, the support of the tables at $x$ result from applying the operations $p_{i\rightarrow j}$ and $q_{i\rightarrow j}$ on the support of the children nodes,
    and for union nodes, by applying $\patunion$ and $\punion$ on the supports of the children nodes,
    where we replace the outer sum with $\Delta$. The claim then holds since all these operations preserve representation.
    
    For a join node, we first define the tables 
    \[\tilde{T}_x^{\budget,\weight, d} = \sumstack{p'\in\Pat\\\patadd_{i,j}p' = p}\sum\limits_{j \in [4]}\sumstack{q'\in\mathscr{C}\\\action(p',j)=p} \sumstack{p'' \in \mathscr{C}_C\\ \patadd_{i,j}(p'') = p'} A_{x'}^{\budget,\weight, d - \weightf'(x, j)}[q',p''].\]
    Then it holds that the support of $\tilde{T}_x^{\budget, \weight, d}$ represents the support of $A_x^{\budget, \weight, d}$ for all values of $\budget$, $\weight$, $d$, since the operations $\patadd_{i,j}$ and $\action(q,\ell)$ preserve representation. It also holds that the support of $T_x^{\budget, \weight, d}$ represents the support of $\tilde{T}_x^{\budget, \weight, d}$, since the operations $\redpat$ and $\parrep$ preserve representation. It follows by transitivity of representation that $T_x^{\budget, \weight, d}$ represents $A_x^{\budget, \weight, d}$.
\end{proof}

It is observed in \cite{DBLP:conf/icalp/BojikianK24} that the pattern $[0]$ is the only complete connectivity pattern compatible with the pattern $[0]$ itself. Given this, the following corollary follows directly from \cref{lem:cfvs-ub-a-count-ac} and \cref{lem:cfvs-ub-t-rep-a}:

\begin{corollary}\label{cor:cfvs-ub-t0-counts-ac0}
    There exists an odd number of pairs $(S,\pi)$, with $S\subseteq V_r$ such that $u_0 \in S$, $v_0\notin S$, $S$ has size $\budget$ and weight $\weight$, $G_r-S$ is a forest, and $\pi$ generates the pattern $[0]$ from $S$, if and only if it holds that $\sumstack{p \in \CSP}T_x^{\budget, \weight, d}[p, [0]]\bquiv 1$.
\end{corollary}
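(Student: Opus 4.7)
The plan is to deduce \cref{cor:cfvs-ub-t0-counts-ac0} by combining \cref{lem:cfvs-ub-a-count-ac} and \cref{lem:cfvs-ub-t-rep-a} with the cited fact that $[0]$ is the only complete connectivity pattern compatible with itself, mirroring the ``terminal extension'' argument used for \Fvsp in \cref{cor:alg-counts-fvs}.

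First, I would apply \cref{lem:cfvs-ub-a-count-ac} at the root $r$ and sum the resulting congruences over all acyclicity patterns $p \in \Pat$. Since every pair $(S, \pi)$ satisfying the hypotheses of the corollary has a unique acyclicity pattern $p = \pat(G_r - S)$, the total parity of the number of such pairs is congruent modulo $2$ to $\sum_{p \in \Pat} A_r^{\budget, \weight, d}[(p, [0])]$.

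The main step is then to pass from $A$ to $T$ via \cref{lem:cfvs-ub-t-rep-a} by choosing a test pair $(p_0', q_0')$ whose coordinate-wise compatibility class recovers exactly the entries we wish to sum. For the connectivity coordinate, the cited observation forces $q_0' = [0]$, since $[0]$ is the unique complete (hence also the unique nice complete) connectivity pattern compatible with $[0]$. For the acyclicity coordinate, I would choose $p_0' = \langle \idv_0 \rangle$, corresponding to the trivial partial solution $\{v_0\}$; as already exploited in \cref{cor:alg-counts-fvs}, every pattern $p \in \Pat$ (and in particular every $p \in \CSP$) is compatible with $p_0'$ under the identity clique extension, so this coordinate imposes no restriction. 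Because compatibility on $\mathcal{I}$ is defined as the conjunction of the two coordinate-wise compatibilities, \cref{lem:cfvs-ub-t-rep-a} then yields
\[
\sum_{p \in \Pat} A_r^{\budget, \weight, d}[(p, [0])] \;\equiv\; \sum_{p \in \CSP} T_r^{\budget, \weight, d}[(p, [0])] \pmod{2},
\]
which chains with the first step to give the corollary.

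The main subtlety I anticipate is bookkeeping the two distinct notions of compatibility at play in this section (acyclicity compatibility via the identity clique extension, and connectivity compatibility as defined in~\cite{DBLP:conf/icalp/BojikianK24}), and explicitly noting that the product structure of $\cmptb$ on $\mathcal{I}$ lets the two independent observations --- universality of $p_0' = \langle \idv_0 \rangle$ on the acyclicity side, and uniqueness of $[0]$ on the connectivity side --- combine cleanly. Both facts are immediate from the definitions, but a brief reminder sentence keeps the argument self-contained.
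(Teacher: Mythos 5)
Your proposal is correct and matches the paper's intended argument: the paper states the corollary "follows directly" from \cref{lem:cfvs-ub-a-count-ac} and \cref{lem:cfvs-ub-t-rep-a} together with the observation that $[0]$ is the only complete connectivity pattern compatible with $[0]$, and your write-up simply makes explicit the two test choices involved — $[0]$ on the connectivity coordinate and the universally compatible $\langle\idv_0\rangle$ on the acyclicity coordinate, exactly as in \cref{cor:alg-counts-fvs}. The only (shared) loose end is that the pairs $(S,\pi)$ should also be restricted to action sequences $\pi$ of weight $d$, a bookkeeping detail the paper's own statement likewise omits.
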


Now we aim to bound the running time of the algorithm.
As a bottleneck, we need to process union nodes in time $\ostar(18^k)$.
In order to do so, we define the operation $\sqcup$ over $\mathscr{I}_{CS}$ defined by applying $\patunion$ on the first index, and $\punion$ on the second, i.e.\ $(p_1, q_1)\sqcup(p_1,q_2)=(p,q)$, where $p=p_1\patunion p_2$ and $q = q_1\punion q_2$. We show that the convolution $\cfvsconv$ of $\sqcup$ over $\bin$ can be computed in time $\ostar(18^k)$, which will allow us to process union nodes in the desired time.

\begin{lemma}\label{lem:cfvs-conv}
    The convolution of $\sqcup$ over $\bin$ can be computed in time $\ostar(18^k)$.
\end{lemma}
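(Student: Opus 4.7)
The plan is to extend the multilayered Zeta/Mobius transform approach of \cref{lem:conv-time} by adding a fourth layer that handles the connectivity dimension. Since $\sqcup$ acts componentwise on $\mathscr{I}_{CS} = \CSP \times \mathscr{C}_{CS}$, and since $\patunion$ on $\CSP$ and $\punion$ on $\mathscr{C}_{CS}$ each act locally on the labels (by \cref{obs:csp-states-iso} in the first coordinate, and similarly in the second), the overall operation is the $k$th power of a local $18$-state join $\tilde{\simplejoin}$ on an $18$-element set $\tilde{S}_0$ obtained as the product of $\simpstates$ and the $3$-element connectivity state set.

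First, I would fix an isomorphism $\phi$ from $\mathscr{I}_{CS}$ onto $\tilde{S}_0^k$ that sends $\sqcup$ to the $k$th power $\tilde{\statejoin}$ of $\tilde{\simplejoin}$ (analogous to $\phi_{CS}$ in \cref{obs:csp-states-iso}). By \cref{obs:iso-same-time} it suffices to compute the convolution over $\tilde{\statejoin}$. Following the pattern of \cref{def:three-lattices}, I would then define a base lattice $\tilde{\leq}$ on $\tilde{S}_0$ whose join agrees with $\tilde{\simplejoin}$ except on the "problematic" pairs that require filtering, and decompose it into four layered sub-lattices $\tilde{\leq}_1, \tilde{\leq}_2, \tilde{\leq}_3, \tilde{\leq}_4$: the first three reproduce the acyclicity decomposition of \cref{def:three-lattices} on the acyclicity coordinate, while $\tilde{\leq}_4$ handles the connectivity coordinate, where the join pair $(\stconn, \stconn)$ (and any analogous collisions in the connectivity coordinate) plays the same role that $\classC$ pairs play for acyclicity.

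The algorithm then interleaves four Zeta transforms with three count-and-filter steps: after each transform, we record for each relevant state class (the acyclicity classes $\classC, \classD$ as in \cref{lem:conv-zeta-form}, together with the new connectivity class) the number of labels currently in that class, restricting the table to only those indices whose counts match. Each Zeta (or subsequent Mobius) transform over the $k$th power of a lattice on $\tilde{S}_0$ runs in $\ostar(18^k)$ by \cref{lem:mobius-zeta-time}, and the intermediate pointwise product of two tables is also $\ostar(18^k)$. After the product, we invert the four transforms in reverse order with matching filters, yielding the final convolution.

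The hard part will be showing that this reversed chain recovers exactly $\tilde{\statejoin}$, extending \cref{lem:conv-h2-form} and \cref{lem:conv-h1-form} to a four-layer setting. Concretely, we need analogues of \cref{eq:conv-h2-y-implies-x-2} for the connectivity filter class, asserting that for $z\preceq y\preceq_i x$ the class membership is preserved between $y$ and $x$, so that the telescoping of Mobius sums via \cref{lem:mobius-zero-one} collapses correctly at each step. Because the connectivity dimension is independent of the acyclicity dimension on every label and its filter mirrors the acyclicity analysis on a smaller state set, each such identity extends cleanly; the delicate point is verifying that the four filter conditions, together with the lattice joins, exactly characterize $\tilde{\statejoin}$-compatible pairs (as in the closing argument of \cref{lem:conv-h1-form}). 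Once this is in place, \cref{cor:conv-h1-is-zeta} generalizes directly, giving $\cfvsconv$ in time $\ostar(18^k)$.
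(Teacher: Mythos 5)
Your overall architecture is the right one and matches the paper's in spirit: treat $\sqcup$ as the $k$th power of a local $18$-state join, add one more Zeta/Mobius layer for the connectivity coordinate, and invoke \cref{lem:mobius-zeta-time} and \cref{obs:iso-same-time} to stay within $\ostar(18^k)$. The concrete gap is your treatment of that new layer. You assert that the connectivity coordinate contains a ``problematic'' join pair analogous to $(\stconn,\stconn)$ in the acyclicity coordinate and accordingly add a count-and-filter step for a connectivity class. But $\punion$ restricted to nice connectivity patterns acts on each label as the join of a three-element chain (label absent, label in the zero set, label in the zero set and present as a singleton): merging two zero sets or two identical singletons is always legal and never changes the resulting state. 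So no pair in the connectivity coordinate disagrees with the lattice join, no filter is needed, and imposing one is actively wrong: a cardinality filter of the form $i_1+i_2=i$ on a connectivity class would discard valid pairs in which both operands place the same label in the zero set, so your reversed chain would not recover $\cfvsconv$.

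The paper's proof avoids this entirely and is also structurally simpler than your interleaved four-layer chain: it defines a plain ordering $\preceq_C$ on $\mathscr{C}_{CS}$ (a chain per label), applies $\zeta_C$ to \emph{both} input tables up front, then runs the three-level acyclicity convolution of \cref{lem:conv-time} completely unchanged (with each $\zeta_\ell,\mu_\ell$ extended to act trivially on the connectivity coordinate), and applies a single $\mu_C$ at the very end. Correctness reduces to the observation that every acyclicity transform and both filters depend only on the first coordinate, so the outer sum $\sum_{x'\preceq_C x}$ introduced by $\zeta_C$ passes through all of them untouched; one then reads off that the output of the middle stage is exactly $\zeta_C(T_1\cfvsconv T_2)$. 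If you drop the spurious connectivity filter and place the connectivity Zeta outermost so that this commutation argument applies, your plan collapses to the paper's proof; as written, the extra filter makes the computed table differ from $T_1\cfvsconv T_2$.
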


\begin{proof}
    Let us define the ordering $\preceq_C$ over $\mathscr{C}_{CS}$, where $p\preceq_C q$, if for each label $i \in [k]$ it holds that $i\in L_p$ if and only if $i\in L_q$, and $i\in Z_p$ implies that $i\in Z_q$. For a single label $i\in[k]$, we can define the state of $i$ in $p$ as $\stnone$, if $i\notin L_P$, $\stS$ if $i\in L_p\setminus Z_p$, and $\stR$ if $i\in Z_p$. 
    Then it holds that $\preceq_C$ is the $k$th power of the ordering over single labels, defined by the chain $\stS \leq \stR$.
    Let $\zeta_C$ and $\mu_C$ be the Zeta and Mobius transforms of $\preceq_C$ respectively.

    Let $\zeta_1, \zeta_2, \zeta_3$ and $\mu_1, \mu_2, \mu_3$ be the transforms defined in \cref{def:three-lattices}. We define the extensions of $\zeta_C$, $\mu_C$, and of all transforms $\zeta_{\ell}$ and $\mu_{\ell}$ for all $\ell\in[3]$ to $\mathscr{I}_{CS}$, by extending the ordering with the empty ordering on the other side, i.e. we extend $\preceq_C$ to $\mathscr{I}_{CS}$ by defining $(p_1,q_1) \preceq_C (p_2,q_2)$ if and only if $p_1 = p_2$ and $q_1 \preceq_C q_2$, and we extend $\preceq_i$ by defining $(p_1,q_1) \preceq_i (p_2,q_2)$ if and only if $q_1 = q_2$ and $p_1 \preceq_i p_2$. Then each of these transforms can be computed in time $\ostar(18^k)$ by \cref{lem:mobius-zeta-time}.

    Given two tables $T_1, T_2 \in \bin^{\mathscr{I}_{CS}}$, we aim to compute the table $T_1\cfvsconv T_2$. We first compute $\hat{T_1} = \zeta_C(T_1)$ and $\hat{T_2} = \zeta_C(T_2)$. We then compute the convolution $T = \hat{T_1} \stateconv \hat{T_2}$ as described in the proof of \cref{lem:conv-time} using the extensions of $\zeta_i$ and $\mu_i$. We claim that the resulting table equals $\zeta_C(T)$, and hence the table $T$ can be computed by applying $\mu_C$ to the resulting table.
    
    Now we prove the claim. For $\ell \in [2]$, it holds by \cref{lem:conv-zeta-form} that
    \[(\hat{T}_{\ell}^{(3)})_{i,j}(x) = \sum\limits_{x'\preceq_C x}\sum\limits_{y\preceq x'} \Big[\big|C_{y_{C_x}}\big| =i\Big]\Big[\big|D_{y_{D_x}}\big| = j\Big] T_i(y).\]
    It follows (similar to \cref{obs:conv-prod-form}) that
    \[T^{(3)}_{i,j}(x) = \sum\limits_{x'\preceq_C x}\sum\limits_{y,z\preceq x'} 
    \Big[\big|C_{y_{C_x}}\big|+\big|C_{z_{C_x}}\big| = i\Big]
    \Big[\big|D_{y_{D_x}}\big|+\big|D_{z_{D_x}}\big| = j\Big] 
     f(y)\cdot g(z).\]
     The claim follows directly from this, since the table $T$ results from the tables $T^{(3)}_{i,j}$ by applying the transforms $\mu_1, \mu_2, \mu_3$, and by canceling the indices $i$ and $j$, which all depend only on the second index of a pair in $\mathcal{I}_{CS}$. Hence, the outer sum $\sum\limits_{x'\preceq_C x}$ stays as it is after each step.
\end{proof}

\begin{lemma}\label{lem:cfvs-alg-time}
    The tables $T_x^{\budget,\weight, d}$ for all nodes $x\in\nodes$ and all values $\budget,\weight, d$ can be computed in time $\ostar(18^k)$.
\end{lemma}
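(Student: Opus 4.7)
The plan is to mirror the proof of \cref{lem:time-nodes} for the base algorithm, adapting it to the product state space $\mathscr{I}_{CS} = \CSP \times \mathscr{C}_{CS}$ of size $6^k \cdot 3^k = 18^k$, and using the convolution bound of \cref{lem:cfvs-conv} as a black box for union nodes. We compute the tables by a bottom-up dynamic programming over $\syntaxtree$. Since $|\nodes|$ is polynomial in $n$ and the ranges of $\budget$, $\weight$, and $d$ are polynomial in $n$ (because both $\weightf$ and $\weightf'$ take values in polynomially bounded sets), it suffices to prove that, for a fixed node $x$ and a fixed triple $(\budget,\weight,d)$, the table $T_x^{\budget,\weight,d}$ can be computed in time $\ostar(18^k)$ assuming that all children tables are already computed.

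For introduce, relabel, and join nodes, the work is handled by iterating over the source index set. An introduce node sets a constant number of entries. For a relabel node $\mu_x = \relabel{i}{j}(\mu_{x'})$, we iterate over all $(p',q') \in \mathscr{I}_{CS}$, compute the image $(p'_{i\rightarrow j}, q'_{i\rightarrow j})$ in polynomial time, and accumulate $T_{x'}^{\budget,\weight,d}[p',q']$ into the corresponding entry of $T_x^{\budget,\weight,d}$. For a join node $\mu_x = \clqadd{i}{j}(\mu_{x'})$, we iterate over all sources $(p'',q'') \in \mathscr{I}_{CS}$ and all $\ell \in [4]$; for each, we compute $p' = \patadd_{i,j}(p'')$, $q' = \patadd_{i,j}(q'')$, $q^* = \action(q',\ell)$, and then the sets $\redpat(p')$ and $\parrep(q^*)$. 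By \cref{lem:pat-join-time}, since $p''$ is very nice, $\redpat(p')$ runs in polynomial time and yields at most $7$ patterns, and $\parrep(q^*)$ yields at most $4$ patterns by the cited result from~\cite{DBLP:conf/icalp/BojikianK24}. Hence each source produces $O(1)$ target contributions in polynomial time, giving total work $\ostar(18^k)$.

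The main obstacle is the union node, where the straightforward evaluation of the recurrence ranges over pairs of sources and thus would take $\ostar(18^{2k})$ time. We overcome this by rewriting the recurrence as a convolution: for fixed splits $(\budget_1,\weight_1,d_1) + (\budget_2,\weight_2,d_2) = (\budget,\weight,d)$, the inner sum over $(p_1,q_1)\sqcup(p_2,q_2) = (p,q)$ is exactly the convolution $\cfvsconv$ of the two child tables. By \cref{lem:cfvs-conv}, each such convolution is computed in time $\ostar(18^k)$, and we accumulate the results into $T_x^{\budget,\weight,d}$ over the polynomially many valid splits. Summing over the polynomially many triples $(\budget,\weight,d)$ and over all nodes yields the claimed overall running time of $\ostar(18^k)$.
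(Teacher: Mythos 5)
Your proposal is correct and follows essentially the same approach as the paper's proof: per-node, per-$(\budget,\weight,d)$ analysis, forward iteration over source indices for introduce/relabel/join nodes with the $O(1)$ output bounds from $\redpat$ and $\parrep$, and invoking \cref{lem:cfvs-conv} for union nodes over the polynomially many splits. No gaps.
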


\begin{proof}
    Since it holds that $\nodes$ is bounded polynomially in $n$, and we only iterate over a polynomial number of value $\budget, \weight, d$, it suffices to bound the running time of computing $T_x^{\budget, \weight, d}$ for a each fixed node $x$ and fixed values $\budget, \weight, d$. For an introduce node, this is trivial, since we set all values to constants. For a relabel node, we iterate ove all indices $(p,q)\in\mathscr{I}_{CS}$ and we add $T_{x'}^{\budget, \weight, d}[p,q]$ to $T_{x}^{\budget,\weight,d}[p_{i\rightarrow j},q_{i\rightarrow j}]$.

    For a join node, we iterate ove all indices $(p_0,q_0)\in\mathscr{I}_{CS}$, an we compute $p := \patadd_{i,j}(p_0)$, and $q=\patadd_{i,j}(q_0)$. Let $P = \redpat(p)$. For each $\ell\in[4]$, let $q^{\ell} = \action(q, \ell)$, and $Q^{\ell} = \parrep(q^{\ell})$. Then we add $T_x^{\budget,\weight, d- \weightf'(x, \ell)}[p,q]$ to $T_x^{\budget, \weight, d}[p',q']$ for each pair $(p',q')\in P\times Q^{\ell}$.

    Finally, for a union node, we iterate over all values $\budget_1,\budget_2$, $\weight_1,\weight_2$, $d_1,d_2$, such that $\budget_1+\budget_2=\budget$, $\weight_1+\weight_2=\weight$, and $d_1+d_2=d$. For each we compute the convolution $T_{x_1}^{\budget_1,\weight_1,d_1}\cfvsconv T_{x_2}^{\budget_2,\weight_2,d_2}$, and add the resulting table to $T_x^{\budget,\weight,d}$. This can be done by \cref{lem:cfvs-conv} in tme $\ostar(18^k)$ and corresponds exactly to the definition of $T_x^{\budget, \weight, d}$.
\end{proof}

So far we have shown that the number of pairs $(S, \pi)$ generating the pattern $[0]$ can be counted efficiently. We use the isolation lemma to isolate a single such pair---and hence, decide the existence of a connected feedback vertex set of a fixed size---with high probability.

\begin{lemma}\label{lem:cfvs-ub-iso}
    Fix $\W = (2+\sqrt{2})n$ and $D = (2+\sqrt{2})4|\nodes|$, and choose $\weightf$ and $\weightf'$ by choosing the image of each element independently und uniformly at random. Let $(S, \pi)$ be a pair, such that $S$ is a minimum weight connected feedback vertex set of size $\budget$ and $\pi$ a minimum weight actions sequence generating $[0]$ from $S$. Then $(S, \pi)$ is unique with probability at least $1/2$.
\end{lemma}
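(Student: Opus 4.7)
The plan is to invoke the isolation lemma (\cref{lem:iso}) twice, exploiting the independence of the two random weight functions $\weightf$ and $\weightf'$: first to isolate a unique minimum-$\weightf$-weight connected feedback vertex set $S$ of size $\budget$, and then, conditioning on that event, to isolate a unique minimum-$\weightf'$-weight action sequence $\pi$ generating $[0]$ from $S$. The constants $\W = (2+\sqrt{2})n$ and $D = (2+\sqrt{2})\cdot 4|\nodes|$ are tuned so that each isolation succeeds with probability at least $1 - 1/(2+\sqrt{2}) = 1/\sqrt{2}$, making the product exactly $1/2$.

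Concretely, I would set $\mathcal{F}_1 \subseteq 2^V$ to be the family of all connected feedback vertex sets of $G$ of size $\budget$, which I assume nonempty (otherwise the claim is vacuous). Since $\weightf$ assigns each of the $n$ vertices an independent uniform weight in $[\W]$ and $\W > n = |V|$, \cref{lem:iso} yields
\[
\Pr\bigl[\weightf \text{ isolates } \mathcal{F}_1\bigr] \;\geq\; 1 - \frac{n}{\W} \;=\; \frac{1}{\sqrt{2}}.
\]
On this event, let $S$ denote the unique minimum-$\weightf$-weight element of $\mathcal{F}_1$. For the second step, I identify each action sequence $\pi$ with its graph $\{(y,\pi(y)) : y \in \nodes\}$, a subset of the universe $U := \nodes\times[4]$, so that the weight $\weightf'(\pi)$ defined in the paper matches the subset-weight notion required by \cref{lem:iso}. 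Let $\mathcal{F}_2(S)\subseteq 2^U$ be the family of action sequences at the root that generate $[0]$ from $S$; this is nonempty by \cref{lem:cfvs-sol-if-action-seq}, since $G[S]$ is connected.

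Because $\weightf'$ is drawn independently of $\weightf$ with uniform values in $[D]$ on each of the $|U|=4|\nodes|$ elements and $D > |U|$, conditioning on any realization of $\weightf$ (and hence on any fixed $S$ and $\mathcal{F}_2(S)$) leaves the distribution of $\weightf'$ unchanged. A second application of \cref{lem:iso}, now to $\mathcal{F}_2(S)$ under this conditioning, gives
\[
\Pr\bigl[\weightf' \text{ isolates } \mathcal{F}_2(S) \,\big|\, \weightf\bigr] \;\geq\; 1 - \frac{4|\nodes|}{D} \;=\; \frac{1}{\sqrt{2}}.
\]
Taking expectations over $\weightf$ and multiplying the two bounds yields $\Pr[(S,\pi)\text{ is unique}] \geq \tfrac{1}{\sqrt{2}}\cdot \tfrac{1}{\sqrt{2}} = \tfrac{1}{2}$. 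The only real subtlety in the argument is that the family $\mathcal{F}_2(S)$ depends on the random set $S$, but this is handled cleanly by the independence of $\weightf$ and $\weightf'$: after conditioning on $\weightf$, the family $\mathcal{F}_2(S)$ is deterministic while $\weightf'$ is still a fresh uniform sample, so the hypotheses of the isolation lemma are genuinely met at the moment of its second invocation.
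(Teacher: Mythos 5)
Your proposal is correct and follows essentially the same two-step application of the isolation lemma as the paper: first isolating the minimum-weight connected feedback vertex set via $\weightf$, then isolating the action sequence via $\weightf'$ over the universe $\nodes\times[4]$, and multiplying the two $1/\sqrt{2}$ bounds. Your explicit conditioning argument to handle the dependence of the second family on the random $S$ is a slightly more careful rendering of what the paper states as the second isolation holding ``independent of the choice of $S$''.
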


\begin{proof}
    Let $\mathcal{F}$ be the family of all connected feedback vertex sets of $G$. Then it follows from \cref{lem:iso} that $\weightf$ isolates $\mathcal{F}$ with probability at least $1-1/(2+\sqrt 2)$.

    Let $U = \nodes \times [4]$. Then there is a canonical injective mapping that assigns to each action sequence a subset of $U$, namely the set $\{(x,\pi(x))\colon x\in \nodes\}$. For a set $S\subseteq V$, let $\mathcal{F}_S$ be the family of subsets of $U$ corresponding to action sequences that generate the pattern $[0]$ from $S$. Then it follows from \cref{lem:iso} that $\weightf'$ isolates $\mathcal{F}_U$ with probability at least $1-1/(2 + \sqrt 2)$ and independent of the choice of $S$.
    It follows that $(S, \pi)$ is unique with probability at least $\left(1-\frac{1}{2 + \sqrt 2}\right)^2 = \frac{1}{2}$.
\end{proof}

\begin{proof}[Proof of \cref{theo:cfvs-ub}]
    We fix $\W = (2+\sqrt 2)n$ and $D = (2+\sqrt 2)4|\nodes|$ and choose $\weightf$ and $\weightf'$ randomly as described in \cref{lem:cfvs-ub-iso}.
    The algorithm iterates over all choices $u_0\in V$, and for each,
    it builds the tables $T_x^{\budget, \weight, d}$. The algorithm accepts if there exist values $\weight$, $d$, such that $\sumstack{p\in\CSP}T_x^{\target, \weight, d}[(p,[0])] \bquiv1$, an rejects otherwise.

    It holds by \cref{lem:cfvs-alg-time} that the tables $T$ can be computed in time $\ostar(18^k)$.
    Since it holds by \cref{lem:cfvs-sol-if-action-seq}, that a partial solution $S$ containing the vertex $u_0$ induces a connected subgraph of $G_r$ if and only if there exists an action sequence generating $[0]$ from $S$, it holds by \cref{cor:cfvs-ub-t0-counts-ac0} that, if no solution exists, then for all pairs $(S, \pi)$, $\pi$ does not generate $[0]$ from $S$, and hence, $\sumstack{p\in\CSP}T_x^{\target, \weight, d}[(p,[0])] \bquiv 0$ for all values $\weight$ and $d$.
    
    On the other hand, if a solution $S$ containing some vertex $u_0$ exists, then there exists an action sequence $\pi$ generating $[0]$ from $S$.
    It holds then by \cref{lem:cfvs-ub-iso} that with probability at least $1/2$, there exist values $\weight$ and $d$ such that there exists a unique such pair $(S, \pi)$ where $S$ has weight $\weight$ and $\pi$ has weight $d$. It holds by \cref{cor:cfvs-ub-t0-counts-ac0} that this is the case, if and only if the the sum $\sumstack{p\in\CSP}T_x^{\target, \weight, d}[(p,[0])]$ is congruent to $1$, and hence, with probability at least one half, the algorithm outputs $1$ for this choice of $u_0$. Since we iterate over all choice of $u_0\in V$, and since we assume that the solution is not empty, the algorithm finds the solution with probability at least $1/2$.
\end{proof}

\subsection{Lower bound}
Similar to \cref{sec:lb}, we also prove the lower bound through a reduction from the $B$-CSP-$q$ problem for each value of $q$. However, here we fix $B=18$ instead of $6$.
We follow the same graph construction $G$ from \cref{sec:lb} with some modifications. First, we define the number of path gadgets on each path sequence $c$ to be equal $(17n + 1)m$. We also add a vertex $r_C$, and call it the connectivity root, and add two private neighbors adjacent to each other and to $r_C$ only.
We add an edge between $r_C$ and the entry vertices of the first path gadget of each path sequence, and between $r_C$ and the exit vertices of the last path gadget of each path sequence.
We also add an edge between each vertex of a constraint gadget, and the vertex $r_C$. We also modify a path gadget as follows:

\subparagraph{Path gadget.} In a path gadget $X$, we replace the $6$ clique vertices with $18$ vertices $x_1,\dots x_{18}$, all adjacent to $r_C$, and we replace the two entry and the two exit cycles with three \emph{entry structures} $C_1, C_2, C_3$, and three \emph{exit structures} $C_4, C_5, C_6$.
Each of these structures $C_i$ consists of four vertices $v_i, v'_i, u_i, u'_i$, where $u_i$ and $u'_i$ are adjacent to both $v_i$ and $v'_i$ building a bipartite clique $K_{2,2}$. We also make $u_i$ adjacent to both $r$ and $r_C$, and $v'_i$ adjacent to $r_C$. We add a deletion edge between $v_i$ and $v'_i$, and one between $u_i$ and $u'_i$. See \cref{fig:cfvs-cycle} for graphical depiction.

\begin{figure}[ht]
    \centering
    \includegraphics[width=.2\textwidth]{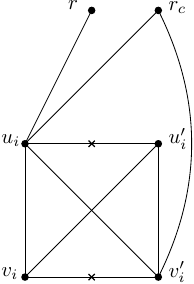}
    \caption{\label{fig:cfvs-cycle}An entry/exit structure of a path gadget. The vertex $v_i$ is the \emph{boundary} vertex.}
\end{figure}

We replace the $2$-to-$2$ transition mapping $t$ with a $3$-to-$3$ transition mapping defined as follows: We define the family of simple states $Z = \{\stdisc, \stconn, \stS, \stR\}$, where $\stdisc$ stands for $v_i$ being excluded from the solution, and not connected to $r$ through $X$, $\stconn$ stands for $v_i$ being excluded from the solution and connected to $r$ through $X$, $\stS$ stands for $v_i$ being in the solution but not connected to $r_C$ through $X$, and $\stR$ stands for $v_i$ being in the solution and connected to $r_C$ through $X$. We define $t_C\colon [18]\rightarrow Z^3\times Z^3$ with $t_C(i) = (t_i, t_{19-i})$ where the values $t_i$ are give in the following list:

\begin{tabularx}{.8\textwidth}{XXX}
    $t_{1} = (\stR,\stR,\stR)$,&
    $t_{2} = (\stR,\stS,\stS)$,&
    $t_{3} = (\stS,\stS,\stS)$,\\
    $t_{4} = (\stD,\stR,\stR)$,&
    $t_{5} = (\stD,\stR,\stS)$,&
    $t_{6} = (\stD,\stS,\stS)$,\\
    $t_{7} = (\stC,\stR,\stR)$,&
    $t_{8} = (\stC,\stR,\stS)$,&
    $t_{9} = (\stC,\stS,\stS)$,\\
    $t_{10} = (\stD,\stD,\stR)$,&
    $t_{11} = (\stD,\stD,\stS)$,&
    $t_{12} = (\stD,\stD,\stD)$,\\
    $t_{13} = (\stC,\stD,\stR)$,&
    $t_{14} = (\stC,\stD,\stD)$,&
    $t_{15} = (\stC,\stD,\stD)$,\\
    $t_{16} = (\stC,\stC,\stR)$,&
    $t_{17} = (\stC,\stC,\stS)$,&
    $t_{18} = (\stC,\stC,\stC)$.\\
\end{tabularx}

Therefore, for each $i\in[18]$, the image $t_C(i)$ can be seen as a $6$-tuple of states. We add the deletion edges between $x_i$ and $C_j$ for $i\in[18]$ and $j\in[6]$ as follows: Let $X = t_C(i)_j$, we add a deletion edge between $x_i$ and $v_j$ if $X \in \{\stR,\stS\}$, or between $x_i$ and $v'_i$ otherwise. We also add a deletion edge between $x_i$ and $u_j$ if $X \in\{\stR, \stD\}$, or between $x_i$ and $u'_j$ otherwise. Intuitively, the first deletion edge decides whether the boundary vertex $v_i$ is in the solution, where the second decides its connectivity under the state defined by $t_C(i)$.

We redefine the target size of a connected feedback vertex set $\budget$ by setting $\lbound(\inctri{K}) = 15$ for the clique-vertices $K$ together with their deletion edges, and by setting $\lbound(\inctri{C_j}) = 2$. Clearly, a solution must pack at least two vertices in each structure $C_j$, given the two deletion edges between $v_i$ and $v'_i$ and between $u_i$ and $u'_i$. We add $r_C$ together with the two vertices added to create a triangle at $r_C$ to $\lbfamily$, and define $\lbound(\inctri{r_C}) = 1$.
Hence, it holds analogously, that a solution of size $\budget$ must again pack exactly $\lbound(X)$ vertices in each component $X\in\lbfamily$. It also holds that each solution must contain an endpoint of each deletion edge, and the vertex $r_C$.

\begin{lemma}\label{lem:cfvs-lb-cw-bound}
    There exists a constant $k_0$, such that for $k=n+k_0$, a linear $k$-expression $\mu$ of $G$ can be constructed in polynomial time.
\end{lemma}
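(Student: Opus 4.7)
The plan is to mirror the construction used in the proof of~\cref{lem:lb-cw-bound}, with minor bookkeeping adjustments that reflect the added vertex $r_C$, the enlarged clique of $18$ vertices, and the replacement of the four triangles by six entry/exit structures. The overall strategy is still to reserve $n$ labels, one per path sequence, to hold the ``current'' exit vertices of that row, plus a constant pool of auxiliary labels.

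First I would reserve the labels as follows: labels $1,\dots,n$ are the \emph{row labels}, where at the end of phase $j$ the label $i$ will carry precisely the three exit vertices of $\pgad{i}{j}$; label $\lforget$ for vertices whose incident edges are all already introduced; label $\ltri$ to create a single deletion vertex at a time; and dedicated labels $\lroot,\lgl,\lgr,\lrc$ for $r$, $g_1$, $g_2$, $r_C$ respectively (the vertex $r_C$ and the two extra triangle-vertices attached to it are introduced and joined to each other at the very beginning, after which the two private triangle-vertices are immediately forgotten and only $r_C$ keeps the label $\lrc$). Since a single path gadget has a constant number of vertices ($18$ clique vertices plus $6\cdot 4=24$ structure vertices) and a single constraint gadget has at most $18^q$ vertices, processing one gadget at a time requires only a constant number of fresh labels in addition to the ones above, yielding some constant $k_0$ such that the total never exceeds $n+k_0$.

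Second, I would build the graph column by column. In phase $j$, first introduce the constraint gadget $\cgad{j}$, assigning each of its vertices a unique label, add all deletion edges among them, and join every clique vertex of $\cgad{j}$ with $\lrc$ (realizing the edges from $r_C$ to the constraint gadget) before these vertices are forgotten at the end of the phase. Then, for each row $i\in[n]$, introduce the vertices of $\pgad{i}{j}$ one by one under unique fresh labels; add all internal edges (inside the entry/exit structures $C_1,\dots,C_6$); add the edges and deletion edges to $r$ (via $\lroot$) and the relevant edges/deletion edges to $r_C$ (via $\lrc$) as prescribed by the construction of a path gadget; add the deletion edges between the $18$ clique vertices and the structures $C_1,\dots,C_6$ as dictated by $t_C$; add the deletion edges between the clique vertices of $\pgad{i}{j}$ and $\cgad{j}$; and finally add the biclique between the three exit vertices of $\pgad{i}{j-1}$ (currently holding label $i$, or $g_1$ at $\lgl$ in the first phase) and the three entry vertices of $\pgad{i}{j}$ by joining their labels. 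After that, forget the old exit vertices by relabeling them to $\lforget$, relabel the three new exit vertices to the row label $i$, and forget every other interior vertex of $\pgad{i}{j}$. After phase $c$, add the biclique from $g_2$ (label $\lgr$) to each row by joining $\lgr$ with every row label $i\in[n]$.

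Third, the invariant to be maintained after each phase is exactly the one used in~\cref{lem:lb-cw-bound}: $\lab(r)=\lroot$, $\lab(g_1)=\lgl$, $\lab(g_2)=\lgr$, $\lab(r_C)=\lrc$, for each $i\in[n]$ the three exit vertices of $\pgad{i}{j}$ are labeled $i$ and are the only vertices of that label, and every other vertex carries $\lforget$. The verification that at no point during a phase do we exceed $n+k_0$ distinct labels is an entirely local check: between the long-lived labels $\lroot,\lgl,\lgr,\lrc,\lforget,\ltri$ and the $n$ row labels, only constantly many more are in use, namely those temporarily assigned to the vertices of the single constraint gadget currently being constructed and the single path gadget currently being processed. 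I do not expect a genuine obstacle here; the only point that requires a little care is that the three exit vertices of a row share a single row label throughout, so that the transition biclique to the three entry vertices of the next gadget can be realized by a single join, exactly as in the original two-vertex cut, and that $r_C$ must receive its dedicated label before any path or constraint gadget is built so that its incidences can always be added by a single join.
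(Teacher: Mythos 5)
Your proposal is correct and follows essentially the same route as the paper, which simply defers to the construction of \cref{lem:lb-cw-bound} and notes that one extra dedicated label for $r_C$ (introduced at the start together with its triangle) and the constant-size growth of each gadget suffice. Your write-up is in fact more detailed than the paper's own two-sentence argument, and the details you supply (row labels holding the current exit vertices, the constant pool of auxiliary labels, joining constraint-gadget vertices to $\lrc$ before forgetting them) are all consistent with the intended construction.
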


We build $\mu$ following the same steps of \cref{lem:lb-cw-bound}, where we add one label for $r_C$, and start by creating $r_C$ together with the triangle added at $r_C$. The bound follows from the fact, that we only increase the size of a path gadget by a constant.

\begin{lemma} \label{lem:cfvs-lb-sol-if-sat}
    If $I$ is satisfiable, then $G$ admits a connected feedback vertex set of size $\budget$.
\end{lemma}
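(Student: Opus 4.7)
Given a satisfying assignment $\alpha$ of $I$, my plan is to construct an explicit set $\sol\subseteq V$ of size $\budget$ and verify both acyclicity of $G\setminus\sol$ and connectivity of $G[\sol]$. The construction will include: (a)~the connectivity root $r_C$ (but not its two private neighbors); (b)~from each path gadget with $\alpha(v_i)=y$, the $15$ clique vertices forced in by the deletion edges incident to the kept clique vertex $x_y$, together with the two vertices of each structure $C_\ell$ dictated by the simple state $t_C(y)_\ell$---namely $\{v_\ell,u_\ell\}$ for $\stR$, $\{v_\ell,u'_\ell\}$ for $\stS$, $\{v'_\ell,u_\ell\}$ for $\stdisc$, and $\{v'_\ell,u'_\ell\}$ for $\stconn$; and (c)~from each constraint gadget $\cgad{j}$ attached to $C_h$, all its clique vertices except the unique $z_\ell$ with $s_\ell=\alpha(V_h)$, which exists because $\alpha$ satisfies $C_h$. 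The size equality $|\sol|=\budget$ follows directly from packing $\lbound(X)$ vertices into each component $X\in\lbfamily$.

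Acyclicity of $G\setminus\sol$ will follow the template of \cref{lem:lb-sol-if-sat}. All deletion-edge triangles have at least one endpoint in $\sol$ by construction, and $r_C\in\sol$ destroys every cycle through $r_C$. The only remaining cycle types are (i)~deletion-edge triangles $\tri{x_y}{z_\ell}$ linking a non-solution clique vertex to a non-solution constraint-gadget vertex, which are excluded because $\alpha(v_i)=y$ matches $s_\ell$ on $v_i$ and hence no such deletion edge exists by the construction of $\cgad{j}$; and (ii)~closed walks traversing the biclique cut between two consecutive path gadgets, which I will rule out via a case analysis on the six-coordinate images of $t_C$, analogous to the non-connected case.

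The novel ingredient is connectivity of $G[\sol]$, which I plan to prove by exhibiting a path to $r_C$ from every vertex of $\sol$. All clique vertices in $\sol$, all constraint-gadget vertices in $\sol$, and all structure vertices of type $u_\ell$ or $v'_\ell$ in $\sol$ are directly adjacent to $r_C$; for structures in state $\stR$, $\stdisc$, or $\stconn$, the two solution vertices include at least one such direct neighbor of $r_C$, and the other is adjacent to it inside the structure's $K_{2,2}$. The delicate case is state $\stS$, where $\sol\cap C_\ell=\{v_\ell,u'_\ell\}$ and neither vertex is adjacent to $r_C$. Here $v_\ell$ is a boundary vertex, and the biclique cut connects it to every boundary vertex of the adjacent path gadget. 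Since the construction assigns the same state $y$ to every gadget in row $i$, one verifies by inspection of $t_C$ that every image tuple contains at least one $\stR$-coordinate, providing a vertex in $\sol$ whose $u$-vertex is in $\sol$ and adjacent to $r_C$; propagating through consecutive bicliques along the row, and using the direct adjacencies of the first gadget's entry vertices and the last gadget's exit vertices with $r_C$, carries every $\stS$-component of $\sol$ to $r_C$. The main obstacle will be verifying this propagation in the subcases where neither $t_y$ nor $t_{19-y}$ contains an $\stR$-coordinate on the same side as an $\stS$-coordinate, which forces the rescue to route through the opposite side of the gadget via the biclique chain to a row endpoint.
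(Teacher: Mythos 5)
Your construction of $\sol$ and the size/acyclicity parts match the paper's proof, but the connectivity argument for the $\stS$ case has a genuine gap. The criterion you propose to verify---that every image tuple of $t_C$ contains at least one $\stR$-coordinate---is true but not sufficient. An entry vertex $v_i(\pgadget)$ in state $\stS$ has, inside $G[\sol]$, only the neighbors $u'_i(\pgadget)$ (a dead end) and whichever exit vertices $v_j(\pgadget')$ of the \emph{preceding} gadget lie in $\sol$; an $\stR$-coordinate sitting at an exit position of $\pgadget$ itself is unreachable from $v_i$, because entry and exit structures of the same gadget are not adjacent in $G[\sol]$ (they meet only through $r\notin\sol$, through $r_C$, or through clique vertices, and the latter two would already settle the question). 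What is actually needed, and what the paper verifies, is the sharper table property: for every $x$ with $\stS\in t_x$, either $\stR\in t_{19-x}$ (giving the path $v_i\to v_j(\pgadget')\to u_j(\pgadget')\to r_C$), or $\stS\in t_{19-x}$ \emph{and} $\stR\in t_x$ (giving $v_i\to v_j(\pgadget')\to v_{i'}(\pgadget)\to u_{i'}(\pgadget)\to r_C$). A hypothetical tuple such as $(\stR,\stS,\stS,\stC,\stC,\stC)$ would pass your test yet strand $v_2$ and $v_3$, since every exit vertex of the preceding gadget would be excluded from $\sol$.

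Relatedly, your fallback of ``propagating through consecutive bicliques along the row to a row endpoint'' cannot work: the biclique edges only join exit vertices of gadget $j$ to entry vertices of gadget $j+1$, so a walk in $G[\sol]$ that avoids $r_C$ merely bounces back and forth across a single cut and never advances to the next cut, let alone to the first or last gadget of the row. The rescue has to be local to one cut, which is exactly why the two-case table property above is the right thing to check; the mere presence of an $\stR$ somewhere in the six coordinates does not imply it.
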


\begin{proof}
    Let $\alpha$ be a satisfying assignment. We describe a solution $\sol$ of size $\budget$ following the proof of \cref{lem:lb-sol-if-sat}, where we add the vertex $r_C$, and for a path gadget $\pgadget$ on the $i$th path sequence, we add all clique vertices of $\pgadget$ to the solution, except the vertex $x_{\ell}$ for $\ell = \alpha(i)$. From each structure $C_i$, we add the two vertices of $C_i$ to $\sol$ sharing deletion edges with $x_{\ell}$.
    The bound on the size of $\sol$ follows, since we add $\lbound(Z)$ vertices to $\sol$ from each structure $Z\in \lbfamily$, and no other vertices.

    Now we show that $\sol$ is a valid solution. First, prove that $\sol$ is a feedback vertex set. This follows from the same arguments in \cref{lem:lb-sol-if-sat}, where for two consecutive path gadgets $\pgadget$ and $\pgadget'$, it holds by the choice of the mapping $t_C$, that if at least two exit vertices of $\pgadget$ are excluded from the solution, then at most one entry vertex of $\pgadget'$ is excluded, and if there exists an exit structure, with both $v_i$ and $u_i$ excluded from the solution, then it holds for each entry structure $C_j$ of $\pgadget'$, that if $v_j$ is excluded from the solution, then $u_j$ is included in the solution. Finally, it also holds that if two exit structures of $\pgadget$ have both the vertices $v_i$ and $u_i$ excluded from the solution (corresponding to two states $\stC$ in the image of $t_C$), then all entry vertices $v_j$ of $\pgadget'$ are included in the solution.

    Now we prove the connectivity of the solution. Since we assume that $r_C \in \sol$, it suffices to show that every vertex of $\sol$ is connected to $r_C$ in $G[\sol]$. This holds trivially for the vertex of each constraint gadget that belongs to $\sol$, since all these vertices are adjacent to $r_C$. For a path gadget $\pgadget$ belonging to the $\ell$th path sequence, this also holds for all clique-vertices of $\pgadget$ and all vertices $u_i$ and $v'_i$ that belong to $\sol$, since all these vertices are also adjacent to $r_C$. So we only need to argue about vertices $u'_i$ and $v_i$. Since $u'_i$ is adjacent to both $v_i$ and $v'_i$, and at least one of them belongs to $\sol$, it suffices to argue the connectivity of $v_i$ to $r_C$. If $u_i\in \sol$, then this is trivial due to the edge between $v_i$ and $u_i$. Therefore, we can assume that $v_i$ belong to $\sol$ but not $u_i$. In this case, it must hold that the clique vertex not in $\sol$ has deletion edges to both $v_i$ and $u'_i$, and by construction of $\pgadget$ that $\alpha(\ell) = x$ for some state $x\in[18]$ with $t_C(x)_i = \stS$.
    
    If $v_i$ is an entry vertices of the first path gadget, or an exit vertex of the last path gadget in its path sequence, then it is adjacent to $r_C$ by definition. Otherwise, we can assume that $v_i$ is an entry vertex of $\pgadget$, i.e.\ $i\in[3]$, since the other case is symmetric. Let $\pgadget'$ be the path gadget preceding $\pgadget$.
    Since it holds for each state $t_x$ containing the state $\stS$, and for $y= 18 - x + 1$ that $t_y$ contains $\stR$ or $t_y$ contains $\stS$ and $t_x$ contains $\stR$, it follows by the choice of $\sol$ that there exists $j\in\{4,5,6\}$ where $v_j(\pgadget')$ belongs to $\sol$, and either $u_j(\pgadget')$ belongs to $\sol$, or there exist $i'$ where $v_{i'}$ and $u_{i'}$ of $\pgadget$ are both in $\sol$. By the construction of $G$, it holds that $v_i$ and $v_{i'}$ are both adjacent to $u_j(\pgadget')$, and hence, $v_i$ is connected to $r_C$ in $G[\sol]$.
\end{proof}

Now we aim to prove the other direction of the lemma. From now on, let $\sol$ be a fixed solution of size $\budget$.
We define the state of a path gadget $\pgadget$ defined by $\sol$ (denoted $\statef(\pgadget)$) as the unique value $i\in[18]$ such that $x_i \notin \sol$. We prove a corresponding variant of \cref{lem:lb-state-imply-v}, and use it to show that the states defined by $\sol$ along consecutive path gadgets can only increase (analogous to \cref{lem:lb-states-transition-direction}).

\begin{lemma}\label{lem:cfvs-state-imply-v}
    Let $\operatorname{REACH}_G(x,y)$ be the predicate, that decides whether the graph $G$ contains a path from $x$ to $y$.
    For a path gadget $\pgadget$, let $G_{\pgadget}^{\sol} = G\big[(\pgadget\cup\{r_C\}\big)\cap S]$ and $G_{\pgadget}^{\overline{\sol}} = \hat{G}\big[(\pgadget\cup\{r\})\setminus S\big]$. We define the following predicates for each $i\in[6]$:
    \begin{align*}
    \phi^{\stS}_i:=(v_i\in\sol \land \lnot \reachsol(v_i, r_C)),\qquad&
    \phi^{\stR}_i:=(v_i\in\sol\land \reachsol(v_i, r_C)),\\
    \phi^{\stD}_i:=(v_i\notin\sol\land \lnot \reachrest(v_i, r_C)),\qquad&
    \phi^{\stC}_i:=(v_i\notin\sol\land \reachrest(v_i, r_C)).
    \end{align*}
    Then it holds for $i\in[6]$, $s:= \statef(\pgadget)$ and $X\in Z$ that $\phi^{X}_i$ holds if and only if $t_C(s)_i = X$, i.e.\ if $X$ is the $i$th states in the transition $s$.
\end{lemma}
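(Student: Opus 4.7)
I would prove the lemma by case analysis on $X\in Z$. The four predicates $\phi^{\stR}_i,\phi^{\stS}_i,\phi^{\stD}_i,\phi^{\stC}_i$ are pairwise disjoint and exhaustive (partitioning on $v_i\in\sol$ versus $v_i\notin\sol$, then on reachability), while $t_C(s)_i$ attains exactly one value in $Z$; hence it suffices to prove the forward implication $t_C(s)_i=X\Rightarrow\phi^{X}_i$ for each $X$. The preliminary step is to invoke the budget-tightness analog of \cref{lem:lb-b-tight} for the modified construction. Since the two internal deletion edges $\tri{v_i}{v'_i}$ and $\tri{u_i}{u'_i}$ of $C_i$ are disjoint and $\lbound(\inctri{C_i})=2$, the set $\sol\cap C_i$ consists of exactly one vertex from each pair; moreover exactly one clique vertex $x_s$ lies outside $\sol$, and its deletion edges to $C_i$ force their other endpoints into $\sol$. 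Reading off the construction, this yields $\sol\cap C_i=\{v_i,u_i\}$ if $t_C(s)_i=\stR$, $\{v_i,u'_i\}$ if $\stS$, $\{v'_i,u_i\}$ if $\stD$, and $\{v'_i,u'_i\}$ if $\stC$. In particular $v_i\in\sol$ iff $t_C(s)_i\in\{\stR,\stS\}$, matching the first conjunct of each $\phi^{X}_i$.

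For the reachability, the two ``connected'' cases are straightforward: if $t_C(s)_i=\stR$, then $u_i\in\sol$ and the edge $u_i-r_C$ from the construction of $C_i$ together with $v_i-u_i$ witnesses $\reachsol(v_i,r_C)$; symmetrically, if $t_C(s)_i=\stC$, then $u_i\notin\sol$ and the edge $u_i-r$ together with $v_i-u_i$ witnesses $\reachrest(v_i,r)$. For the ``disconnected'' case $t_C(s)_i=\stD$, I would show that the connected component of $v_i$ in $G_{\pgadget}^{\overline{\sol}}$ equals $\{v_i,u'_i\}$: within $C_i$ the other two vertices $u_i,v'_i$ are in $\sol$ and therefore absent from $G_{\pgadget}^{\overline{\sol}}$, and the unique clique vertex $x_s$ outside $\sol$ has no deletion edge to either $v_i$ or $u'_i$, because $t_C(s)_i=\stD$ routes its two deletion edges to $v'_i$ and $u_i$ instead. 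Thus $r$ is not in this component and $\lnot\reachrest(v_i,r)$ holds. The case $t_C(s)_i=\stS$ is the symmetric statement inside $G_{\pgadget}^{\sol}$, asserting that the $\sol$-component of $v_i$ equals $\{v_i,u'_i\}$ and does not contain $r_C$.

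The main obstacle lies in the $\stS$ case, because the clique vertices $x_j$ with $j\neq s$ are all in $\sol$ and adjacent to $r_C$, threatening a shortcut of the form $v_i-x_j-r_C$ (or $u'_i-x_j-r_C$) inside $G_{\pgadget}^{\sol}$. Ruling this out requires a careful inspection of the deletion edges between $x_j$ and $C_i$, which are governed by $t_C(j)_i$, and an appeal to the specific design of the mapping $t_C$ together with the pattern of $\sol\cap C_i$ established in the first step. This delicate interplay is the essential novelty compared with \cref{lem:lb-state-imply-v}, where connectivity of $\sol$ did not have to be tracked.
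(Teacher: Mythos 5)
Your overall strategy matches the paper's: the four predicates are pairwise disjoint and exhaustive, so only the implication $t_C(s)_i=X\Rightarrow\phi^{X}_i$ needs proving, and the membership part is pinned down exactly as you describe, since the two deletion edges leaving $x_s$ into $C_i$ force their endpoints into $\sol$ and the budget $\lbound(\inctri{C_i})=2$ leaves no room for anything else. Your treatment of the cases $\stR$, $\stC$ and $\stD$ is correct and considerably more explicit than the paper's one-sentence argument, which simply asserts that every neighbour of $v_i$ other than $u_i,u'_i$ is a deletion vertex and that $u'_i$ has no non-deletion-edge neighbours besides $v_i,v'_i$.

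However, you leave the $\stS$ case open, and this is a genuine gap rather than a routine verification; worse, the resolution you hope for (``an appeal to the specific design of $t_C$'') is not available under the construction as literally defined. A deletion edge $\tri{u}{v}$ contains the direct edge $\{u,v\}$, and for every $s$ with $t_C(s)_i=\stS$ there exists $o\neq s$ with $t_C(o)_i\in\{\stR,\stS\}$ (for $i\le 3$ take $o=1$ with $t_1=(\stR,\stR,\stR)$; for $i\ge 4$ take $o=18$). Then $x_o\in\sol$, $v_i\in\sol$, the edge $\{x_o,v_i\}$ lies in $G_{\pgadget}^{\sol}$, and $x_o$ is adjacent to $r_C$, so $\reachsol(v_i,r_C)$ holds and $\phi^{\stS}_i$ fails. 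The paper's proof avoids this only by implicitly treating a deletion edge as the two-edge path through $w_{u,v}$ without the direct edge $\{u,v\}$ --- which contradicts the stated definition and would in turn undermine the acyclicity arguments, where the triangle is precisely what forces an endpoint into $\sol$. So your instinct that $\stS$ is the crux is right, but the case cannot be closed by inspecting $t_C$; it requires reinterpreting the deletion edges or modifying the gadget, and your proposal (like the paper's proof) does not supply that.
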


\begin{proof}
    It holds clearly, that the four predicates are pairwise disjoint. Hence, it suffices to show that $t_C(s)_i = X$ implies $\phi^X_i$ for all path gadgets $X$ and all values $i\in[6]$, but this follows directly from the deletion edges between the clique vertices and the two vertices $v_i$ and $u_i$, since all adjacencies of $v_i \in\sol$ other than $u_i$ and $u'_i$ are deletion vertices, and since $u'_i$ is only adjacent to $v_i$ and $v'_i$ by non-deletion edges.
\end{proof}

\begin{lemma}\label{lem:cfvs-lb-states-direction}
    Let $\pgadget$ and $\pgadget'$ be two consecutive path gadgets on the same path sequence. Then it holds that $\statef(\pgadget')\leq \statef(\pgadget)$.
\end{lemma}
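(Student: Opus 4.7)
The plan is to adapt the proof of \cref{lem:lb-states-transition-direction} (the analogous statement for the FVS lower bound) to the richer state set $\{\stR, \stS, \stD, \stC\}$ and to the additional requirement that $G[\sol]$ be connected. I would begin by assuming for contradiction that $s_1 < s_2$ and naming the boundary triples: by \cref{lem:cfvs-state-imply-v}, the exit side of $\pgadget$ realises $T_1 := t_{19-s_1}$ on the structures $C_4, C_5, C_6$, while the entry side of $\pgadget'$ realises $T_2 := t_{s_2}$ on $C_1, C_2, C_3$. Writing $\hat G := G \setminus \sol$, the task is to use the local adjacencies (the complete bipartite cut on $\{v_4, v_5, v_6\}(\pgadget) \times \{v_1, v_2, v_3\}(\pgadget')$, the adjacency of every $u_\ell$ to both $r$ and $r_C$, and the adjacency of every clique vertex to $r_C$) to derive either a cycle in $\hat G$ or a $\sol$-vertex severed from $r_C$ in $G[\sol]$.

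Next I would isolate a small catalogue of local obstructions on $(T_1, T_2)$, each of which immediately produces such a contradiction. The first (O1) is that if some $X_i = \stC$ and some $Y_j = \stC$, then $v_{i+3}, u_{i+3}, r, u_j, v_j$ all lie in $\hat G$ and, together with the cut edge $\{v_{i+3}(\pgadget), v_j(\pgadget')\}$, close a $5$-cycle in $\hat G$. The second (O2) is that when $X_i = X_{i'} = \stC$ at distinct $i, i'$ and some $Y_j \in \{\stD, \stC\}$, the two internal $\hat G$-paths through $r$ combined with the two cut edges to $v_j(\pgadget')$ yield a cycle; the symmetric statement with the roles of $T_1, T_2$ swapped is analogous. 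The third (O3) is the bipartite $4$-cycle: if at least two entries of $T_1$ lie in $\{\stD, \stC\}$ and at least two entries of $T_2$ lie in $\{\stD, \stC\}$, the four corresponding $v$-vertices lie in $\hat G$ and close a $4$-cycle through the complete bipartite cut. The fourth (O4) is the connectivity violation: if $Y_j = \stS$ and no $X_i \in \{\stR, \stS\}$, then $v_j(\pgadget') \in \sol$ is internally separated from $r_C$ in $\pgadget'$ by the definition of $\stS$, and all of its $\sol$-neighbours across the cut are absent, so $v_j$ is severed from $r_C$ in $G[\sol]$; the symmetric version handles the case $X_i = \stS$ without matching support in $T_2$.

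The main obstacle is then the finite case analysis verifying that at least one of (O1)--(O4) fires for every pair $(s_1, s_2) \in [18]^2$ with $s_2 > s_1$. I would organise it around the structure of the enumeration $t_1, \ldots, t_{18}$, which the construction lays out in blocks of three according to the counts of $\stC$'s and $\stD$'s, so that advancing along the index weakly increases these counts while the ``dual'' index $19 - s_1$ weakly decreases them on the exit side. The extreme rows $t_1 = (\stR, \stR, \stR)$ and $t_{18} = (\stC, \stC, \stC)$ make the boundary cases $s_1 = 1$ and $s_1 = 17$ immediate, paralleling the special case $s_1 = 1$ in the proof of \cref{lem:lb-states-transition-direction}. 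For intermediate $s_1$ the forbidden transitions split into three regimes---``$\stC$ appears on both sides'' (covered by (O1) or (O2)), ``both sides carry at least two non-$\sol$ positions'' (covered by (O3)), and ``an $\stS$ appears with no matching $\stR/\stS$ on the other side'' (covered by (O4))---and the enumeration is designed so that these three patterns jointly exhaust the pairs with $s_2 > s_1$. Tracking this partition carefully across the rows is the bulk of the work, after which (O1)--(O4) close the proof.
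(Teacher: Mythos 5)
Your overall strategy is the same as the paper's: the acyclicity obstructions (O1)--(O3) are exactly the cycle arguments inherited from \cref{lem:lb-states-transition-direction}, and they do dispose of all pairs whose states lie in different blocks of three (the blocks being determined by the multiset of $\stC$'s and $\stD$'s), while (O4) is the new connectivity ingredient needed for pairs inside a block. The gap is in (O4): as stated it is too weak, and the case analysis you defer to cannot close with the catalogue (O1)--(O4).

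Concretely, (O4) only fires when an $\stS$-vertex has \emph{no} solution neighbour across the cut. It does not exclude the situation where $\stS$-vertices on the two sides support each other but none of them ever reaches $r_C$: the solution vertices among the six boundary vertices of the cut span a connected piece of $G[\sol]$ whose only access to $r_C$ goes through a vertex of state $\stR$, so one additionally needs that some boundary vertex on one of the two sides has state $\stR$ whenever any of them has state $\stS$. This is exactly the condition ``$\stR\in\mathcal{X}\cup\mathcal{Y}$'' in the paper's proof, and it is indispensable: for the pairs $\big(\statef(\pgadget),\statef(\pgadget')\big)\in\{(2,3),(8,9),(10,11),(16,17)\}$ none of (O1)--(O4) fires. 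For instance, for $(2,3)$ the exit triple of $\pgadget$ is $t_{17}=(\stC,\stC,\stS)$ and the entry triple of $\pgadget'$ is $t_{3}=(\stS,\stS,\stS)$: the entry side has no $\stC$ and no vertex outside $\sol$, so (O1)--(O3) all fail, and every $\stS$-vertex has an $\stS$-partner across the cut, so (O4) fails; yet the four $\stS$-vertices together form a component of $G[\sol]$ that misses $r_C$. Adding the global obstruction ``if $\stS$ occurs among the six boundary states then so does $\stR$'' repairs the catalogue, after which your block-by-block case analysis goes through along the same lines as the paper's.
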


\begin{proof}
    Let $x = \statef(\pgadget)$ and $y = \statef(\pgadget')$, and for the sake of contradiction, assume that $y > x$.
    Let us partition the states $t_i$ into six groups ot size $3$ each. If $x$ and $y$ belong to two different groups, then a contradiction follows by the same arguments from the proof of \cref{lem:lb-states-transition-direction}, where the state $\stdconn$ corresponds to at least two occurrences of $\stC$, $\stddisc$ to at least two occurrences of $\stD$, and $\stplus$ to an occurrence of $\stC$ and at least one occurrence of $\stD$.

    Now assume that both $x$ and $y$ belong to the same group. Let $t_C(x) = (X_1,\dots X_6)$ and $t_C(y) = (Y_1,\dots Y_6)$. We define the sets $\mathcal{X} = \{X_4, X_5, X_6\}$ and $\mathcal{Y} = \{Y_1, Y_2, Y_3\}$. If $\stS$ belongs to one of the sets $\mathcal{X}, \mathcal{Y}$, then the other set must contain $\stS$ or $\stR$ and it must also hold in this case that $\stR \in \mathcal{X}\cup \mathcal{Y}$, since otherwise, it holds by \cref{lem:cfvs-state-imply-v} that there exist a vertex $v_i\in \sol$ where $v_i$ is not connected to $r_C$ in $G[S]$ which contradicts the connectivity of $S$.

    Hence, it holds that either $y = 3k$ for some value $k\in[6]$, or that $y = 3k-1$ and $x= 3k-2$. Let us now distinguish two cases, if $x$ and $y$ belong to one of the first three groups ($k\in[3]$), then it holds in case of $y=3k$ that $\stS\in\mathcal{Y}$ and $\stR\notin\mathcal{Y}$, and hence, it must hold that $x=3k$ as well, since in this case, $\stR$ appears in $t_{19-x}$ only when $x = 3k$, contradicting the assumption that $y>x$. If $y=3k-1$ and $x=3k-2$, then $\stS \in \mathcal{Y}$, but $\mathcal{X}$ does not contain $\stS$ or $\stR$ contradicting our argument in the previous paragraph.
    Now assume that that $x$ and $y$ belong to one of the latter three groups (i.e., $k\in\{4,5,6\})$, if $y = 3k$, then it holds that both $\stS$ and $\stR$ are not contained in $\mathcal{Y}$, and hence it must hold that $x=3k$ since for all other vales of $x$ it holds that $\stS \in t_{19-x}$. Finally, for $y=3k-1$ and $x = 3k-2$, it holds that $\stS\in\mathcal{X}$, but $\stR \notin \mathcal{X}\cup \mathcal{Y}$ which also contradicts the argument in the previous paragraph.
\end{proof}

\begin{lemma}\label{lem:cfvs-lb-sat-if-sol}
    If $G$ contains a connected feedback vertex set of size $\budget$, then $I$ is satisfiable.
\end{lemma}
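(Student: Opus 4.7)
The plan is to mirror the proof of \cref{lem:lb-sat-if-sol}, using the monotonicity of gadget states established in \cref{lem:cfvs-lb-states-direction} together with the fact that each path sequence is subdivided into $17n+1$ sections. Fix a connected feedback vertex set $\sol$ of size $\budget$. By the tightness of the budget (analogous to \cref{lem:lb-b-tight}) and our choice of $\lbound$, $\sol$ contains exactly $\lbound(X)$ vertices in every component $X \in \lbfamily$, hence exactly one clique vertex $x_i \notin \sol$ in each path gadget and exactly one vertex $z_\ell \notin \sol$ in each constraint gadget.

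First I would argue the existence of a ``good'' section. For every path sequence, \cref{lem:cfvs-lb-states-direction} shows that the sequence of states $\statef(\pgad{i}{1}), \ldots, \statef(\pgad{i}{c})$ is non-increasing in $[18]$, so it changes at most $17$ times. Summing over all $n$ path sequences, there are at most $17n$ columns $j$ at which some path gadget on that column differs in state from the path gadget in column $j+1$. Consequently, at most $17n$ of the $17n+1$ sections contain a column witnessing any state change, so there is at least one section $\sect{r}$ in which every segment $\seg{i}{r}$ consists of $m$ path gadgets of a single common state $s_i \in [18]$. Define the assignment $\alpha : \var \to [18]$ by $\alpha(v_i) = s_i$.

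Next I would show that $\alpha$ satisfies every constraint. Fix a constraint $C_h$ with $h \in [m]$, and let $j = (r-1)m + h$ so that $\cgad{j}$ is attached to column $j$ in section $\sect{r}$. Let $z_\ell$ be the unique vertex of $\cgad{j}$ not in $\sol$; I claim $\tilde{\alpha}(V_h) = s_\ell$. Fix $p \in [q]$, set $v_i := (V_h)_p$ and $y := (s_\ell)_p$. The path gadget $\pgad{i}{j}$ lies in $\seg{i}{r}$ and so has state $s_i$, meaning $x_{s_i}(\pgad{i}{j}) \notin \sol$. For every $y' \in [18] \setminus \{y\}$, the construction places a deletion edge between $z_\ell$ and $x_{y'}(\pgad{i}{j})$; since $z_\ell$ together with the corresponding deletion vertex already forms a triangle with $x_{y'}(\pgad{i}{j})$, and since $\sol$ is a feedback vertex set that excludes both $z_\ell$ and the deletion vertex, we must have $x_{y'}(\pgad{i}{j}) \in \sol$. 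Therefore the unique clique vertex outside $\sol$ is $x_y(\pgad{i}{j})$, giving $s_i = y = (s_\ell)_p$. Since this holds for every $p \in [q]$, we get $\alpha(V_h) = s_\ell \in S_h$ and the constraint is satisfied.

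The main obstacle I foresee is not the combinatorial counting argument (which is essentially identical to the feedback vertex set case) but rather a bookkeeping subtlety: one must ensure that the deletion-edge arguments above continue to work even in the presence of the new connectivity root $r_C$ and the enlarged entry/exit structures. This should follow from the observation that $r_C$ and its private triangle contribute vertices only inside their own component of $\lbfamily$, and the deletion edges connecting $z_\ell$ to $x_{y'}(\pgad{i}{j})$ remain vertex-disjoint from everything relevant to the above argument, so the triangle-forcing conclusion $x_{y'}(\pgad{i}{j}) \in \sol$ is unaffected. With that verification, combining the three steps above yields the desired satisfying assignment.
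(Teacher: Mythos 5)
Your proof is correct and follows essentially the same route as the paper's: monotonicity of states from \cref{lem:cfvs-lb-states-direction}, pigeonhole over the $17n+1$ sections to find a section where every segment is constant, and the deletion-edge triangles between the surviving constraint-gadget vertex and the clique vertices to read off a satisfying assignment. The paper's own proof is just a sketch deferring to \cref{lem:lb-sat-if-sol}; yours fills in the same details, including the (correct) observation that $r_C$ and the enlarged entry/exit structures do not interfere with the triangle-forcing argument.
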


The proof follows \cref{lem:lb-sat-if-sol}, where it holds by \cref{lem:cfvs-lb-states-direction} that on each path sequence at most $17$ state changes occur. Since the graph $G$ contains $17n+1$ sections, there must exist a section $j\in [17n+1]$, where all path gadgets of each segment on this section have the same state. Then a satisfying assignment $\pi$ can be built by assigning to each variable $v_i$ the state assigned by $\sol$ to all path gadgets on the $j$th segment of the $i$th path sequence. The resulting assignment is then satisfying, since for each constraint of $I$, there exists a constraint gadget corresponding to this constraint, attached to some column of this section, and that the deletion edges between this constraint gadget and the path gadgets in its column ensure that the assigned states satisfy this constraint.

\begin{proof}[Proof of \cref{theo:cfvs-lb}]
    Assume that there exists an algorithm that solves the \Cfvsp problem in time $\ostar((18-\varepsilon)^k)$ for some value $\varepsilon > 0$, given a linear $k$-expression of the input graph $G$ with input. Given an instance $I$ of the $q$-CSP-$B$ problem over $n$ variables for $B=18$ and some value $q\in\mathbb{N}$, we start by building the graph $G$ and the linear $k$-expression of $G$ as described in \cref{lem:cfvs-lb-cw-bound}. We run the algorithm on the instance $(G, \budget)$ as defined above.
    
    It holds by \cref{lem:cfvs-lb-sol-if-sat} and \cref{lem:cfvs-lb-sat-if-sol} that $G$ admits a connected feedback vertex set of size $\budget$ if and only if $I$ is satisfiable. Since it holds by \cref{lem:cfvs-lb-cw-bound} that $\mu$ is a linear $k$-expression of $G$ that can be constructed in polynomial time for $k = n +k_0$ for some constant $k_0$, it holds that the algorithm runs in time $\ostar((18-\varepsilon)^n)$ which contradicts SETH by \cref{theo:lampis-csp-lb}.
\end{proof}

\section{Conclusion}\label{sec:conclusion}
In this work, we presented a one-sided error Monte-Carlo algorithm for the \Fvsp problem parameterized by clique-width with running time $\ostar(6^{\cw})$, when parameterized by the clique-width of the graph, based on a new notion of acyclicity representation in labeled graphs, and we provided a matching SETH-based lower bound. Based on this representation, we also presented an algorithm, that counts (modulo $2$) the number of solutions in time $\ostar(3^{\tw})$ in graphs of treewidth $\tw$.
This matches the SETH-tight running time of the decision version of this problem when parameterized by treewidth.

As an additional consequence of this representation, we presented an $\ostar(18^{\cw})$ algorithm for the \textsc{Connected Feedback Vertex Set} problem parameterized by clique-width. We proved the tightness of our running time by providing a matching SETH-based lower bound. In addition to acyclicity representation, we achieved the claimed running times for both problems by developing an involved fast convolution technique.

As mentioned in the introduction, this leaves the tight complexity of only one problem (relative to clique-width) unresolved among those whose tight complexity relative to treewidth was determined in the original work of Cygan et al.~\cite{DBLP:journals/talg/CyganNPPRW22}, namely the \textsc{Exact $k$-Leaf Spanning Tree} problem. While the problem is $W[1]$-hard, we believe that our technique could be useful in achieving a single-exponential $XP$ running time for this problem, as the same challenge of counting edges and connected components arises in this problem, when trying to adapt the \cnc technique to clique-width.

A further intriguing question is whether our convolution technique can be extended to arbitrary depth and any number of filters, yielding a general framework that encompasses all known acyclic convolution techniques for structural parameters.

\bibliography{ref}

\end{document}